\newtheorem{theorem}{Theorem}
\newtheorem*{theorem*}{Theorem}
\newtheorem{lemma}[theorem]{Lemma}
\newtheorem*{lemma*}{Lemma}
\newtheorem{corollary}[theorem]{Corollary}
\newtheorem*{corollary*}{Corollary}
\newtheorem{proposition}[theorem]{Proposition}
\newtheorem*{proposition*}{Proposition}
\newtheorem{definition}[theorem]{Definition}
\newtheorem*{definition*}{Definition}
\newtheorem{remark}[theorem]{Remark}
\newtheorem*{remark*}{Remark}
\newtheorem*{example*}{Example}
\DeclareMathOperator*{\lexmin}{lex\,min}
\DeclareMathOperator*{\arglexmin}{arg\,lex\,min}
\begin{document}

\title{Prioritized Inverse Kinematics: Nonsmoothness, Trajectory Existence, Task Convergence, Stability}

\author{Sang-ik An and Dongheui Lee
	\thanks{This work was supported in part by Technical University of Munich - Institute for Advanced Study, funded by the German Excellence Initiative. {\it(Corresponding Author: Dongheui Lee.)}}
	\thanks{The authors are with the Human-Centered Assistive Robotics, Department of Electrical and Computer Engineering, Technical University of Munich, D-80333 Munich, Germany. D. Lee is also with Institute of Robotics and Mechatronics, German Aerospace Center (DLR) (email: sangik.an@tum.de; dhlee@tum.de).}%
}

\maketitle

\begin{abstract}
In this paper, we study various theoretical properties of a class of prioritized inverse kinematics (PIK) solutions that can be considered as a class of (output regulation or tracking) control laws of a dynamical system with prioritized multiple outputs.
We first develop tools to investigate nonsmoothness of PIK solutions and find a sufficient condition for nonsmoothness.
It implies that existence and uniqueness of a joint trajectory satisfying a PIK solution cannot be guaranteed by the classical theorems.
So, we construct an alternative existence and uniqueness theorem that uses structural information of PIK solutions.
Then, we narrow the class of PIK solutions down to the case that all tasks are designed to follow some desired task trajectories and discover a few properties related to task convergence.
The study goes further to analyze stability of equilibrium points of the differential equation whose right hand side is a PIK solution when all tasks are designed to reach some desired task positions.
Finally, we furnish an example with a two-link manipulator that shows how our findings can be used to analyze the behavior of a joint trajectory generated from a PIK solution.
\end{abstract}

\begin{IEEEkeywords}
Nonlinear systems, constrained control, robotics, optimization, prioritized inverse kinematics.
\end{IEEEkeywords}

\IEEEpeerreviewmaketitle

\section{Introduction}

\IEEEPARstart{P}{riority} is a strategy to distribute a limited resource to multiple tasks. 
In the context of the prioritized inverse kinematics (PIK), the resource is the available degrees of freedom (DOF) of a mechanism and the distribution is carried out by the consecutive projections of the joint velocity to the null spaces of the higher priority tasks. 
The PIK problem has been studied intensively for decades in the robotic society and the study has been used and expanded in many areas such as constrained PIK \cite{Mansard2009}\cite{Kanoun2012}\cite{Escande2014}, task switching \cite{Lee2012}\cite{Petric2013}\cite{An2015}, prioritized control \cite{Khatib1987}\cite{Hsu1989}\cite{Sadeghian2014}\cite{Ott2015}, prioritized optimal control \cite{Nakamura1987a}\cite{Geisert2017}, learning prioritized tasks \cite{Saveriano2015}\cite{Calinon2018}\cite{Silverio2018}, etc.
Despite the large amount of studies on this topic, only few studies are found that reveal the theoretical aspects of the PIK problem.
Antonelli \cite{Antonelli2009} analyzed convergence of the task trajectories when all tasks are designed to reach some desired task positions.
Bouyarmane and Kheddar \cite{Bouyarmane2018} showed that the PIK solution found by the multi-objective optimization with the lexicographical ordering can be approximated in any accuracy by the multi-objective optimization with the weighted-sum scalarization and found some stability properties of the approximated PIK solution.
An and Lee proposed a generalization \cite{An2019} of the PIK problem by specifying three properties (dependence, uniqueness, and representation) for an objective function to be proper for the PIK problem.

In this paper, we study nonsmoothness, trajectory existence, task convergence, and stability of the PIK problem.
In Section \ref{sec:preliminary}, we define mathematical notations and prove some basic lemmas that will be used throughout the paper.
In Section \ref{sec:prioritized_inverse_kinematics}, we extend the definition of the PIK problem found in \cite{An2019} by relaxing the representation property of a proper objective function and define a class of PIK solutions of interest that can be considered as a class of (regulation or output tracking) control laws of a dynamical system that has prioritized multiple outputs.
In Section \ref{sec:nonsmoothness_of_pik_solutions}, we develop tools to investigate nonsmoothness of PIK solutions and show that existence and uniqueness of a joint trajectory satisfying a PIK solution cannot be guaranteed by the classical theorems such as the Peano's existence theorem or the contraction mapping theorem.
It motivates us to construct an alternative existence and uniqueness theorem that uses structural information of PIK solutions in Section \ref{sec:trajectory_existence}.
In many practical cases, the goal of prioritized tasks is to follow some desired task trajectories.
In Section \ref{sec:task_convergence}, we narrow the class of PIK solutions down to those practical cases and discover a few task convergence properties.
Our findings are better than the one Antonelli \cite{Antonelli2009} showed in the sense that they hold for every positive feedback gains and does not assume that the desired task trajectory is a constant function.
In Section \ref{sec:stability}, we analyze stability of equilibrium points of the differential equation whose right hand side is a PIK solution when the goal of the prioritized tasks is to reach some desired task positions.
In Section \ref{sec:example}, we provide an example with a two-link manipulator that shows how the properties we found can be used to analyze the behavior of the joint trajectory.
Finally, we give the concluding remarks in Section \ref{sec:conclusion}.

\section{Preliminary}
\label{sec:preliminary}

Let $X$ and $Y$ be finite dimensional Banach spaces over the field $\mathbb{R}$. 
$B_X$ is the closed unit ball in $X$. 
A function $f:D\subset X\to Y$ is said to be pointwise Lipschitz at $x_0\in D$ if there exist $r>0$ and $L\ge0$ such that $\|f(x)-f(x_0)\| \le L\|x-x_0\|$ provided $x \in x_0 + r B_X = \{x_0+rx'\mid x'\in B_X\}\subset D$. 
$0/L_p/L/1_p/1$-continuity represents continuity/pointwise Lipschitz continuity/local Lipschitz continuity/differentiability/continuous differentiability. 
Let $\bullet\in\mathcal{I} = \{0,L_p,L,1_p,1\}$. 
$\bullet$-discontinuity is the negation of $\bullet$-continuity. 
$C_\Omega^\bullet(D,Y)$, or simply $C_\Omega^\bullet$ if there is no confusion, is the set of all functions from $D$ to $Y$ that are $\bullet$-continuous at every $x\in\Omega\subset D$, equivalently on $\Omega$. 
Obvious inclusions are $C_\Omega^1 \subset C_\Omega^{1_p}\cap C_\Omega^L \subset C_\Omega^{1_p}\cup C_\Omega^L \subset C_\Omega^{L_p} \subset C_\Omega^0 \subset Y^D$ where $Y^D$ is the set of all functions from $D$ to $Y$. 
We also write $C_{\Omega}^\bullet(D)$, or simply $C_\Omega^\bullet$ if there is no confusion, to denote the set of all functions from $D$ to a finite dimensional Banach space that are $\bullet$-continuous on $\Omega\subset D$. 
For the sake of convenience, $C_{\{x\}}^\bullet = C_x^\bullet$ and $C_D^\bullet = C^\bullet$. 
The derivative of $f$ is denoted as $\mathsf{D}f$. 
If the variable of $X = \mathbb{R}$ is time, then we also write $\mathsf{D}f = \dot{f}$.
We recall that $f:[a,b]\subset\mathbb{R}\to Y$ is said to be absolutely continuous on $[a,b]$ if for every $\epsilon>0$ there exists $\delta>0$ such that if any finite collection of pairwise disjoint subintervals $\{[t_i',t_i'']\subset [a,b]\mid i\in I\}$ satisfies $\sum_{i\in I}|t_i''-t_i'|<\delta$, then $\sum_{i\in I}\|f(t_i'') - f(t_i')\|<\epsilon$.
$f$ is absolutely continuous on $[a,b]$ if and only if $f$ is differentiable almost everywhere on $[a,b]$ in the sense of Lebesgue, $\dot{f}$ is Lebesgue integrable on $[a,b]$, and $f(t) = f(a) + \int_a^t\dot{f}(s)ds$ for all $t\in[a,b]$.
All measures and integrals are Lebesgue's, so we do not mention it afterwards.
$\mathrm{AC}([a,b],Y)$ is the set of all absolutely continuous functions from $[a,b]$ to $Y$.
$L^1([a,b],Y)$ is the set of all integrable functions from $[a,b]$ to $Y$.
$\mathrm{AC}([a,\infty),Y)$ is the set all functions from $[a,\infty)$ to $Y$ that are absolutely continuous for every compact subinterval of $[a,\infty)$.
The fact that the countable union of sets of measure zero has measure zero implies that $f\in\mathrm{AC}([a,\infty),Y)$ if and only if $f$ is differentiable almost everywhere on $[a,\infty)$, $\dot{f}$ is integrable on every compact subinterval of $[a,\infty)$, and $f(t) = f(a) + \int_a^t\dot{f}(s)ds$ for all $t\in[a,\infty)$.
For $f:X\to\mathbb{R}$, we recall that $\limsup_{x\to x_0}f(x) = \lim_{\delta\to0} \sup \{f(x)\mid x\in (x_0+\delta B_X)\setminus\{x_0\}\}$ always exists in the extended real number system $[-\infty,\infty] = \mathbb{R}\cup\{+\infty,-\infty\}$. 
By convention, $-\infty < y < \infty$ for all $y\in\mathbb{R}$. 
If $y\in Y$, we denote $y:D\to Y$ to represent the constant function $f(x)\equiv y$. 
$\mathrm{int}(\Omega)$, $\mathrm{cl}(\Omega)$, and $\mathrm{bd}(\Omega)$ are interior, closure, and boundary of $\Omega$, respectively. 
For $a,b\in\mathbb{N}$, we use the shorthand notation $\overline{a,b} = \mathbb{N}\cap[a,b]$. 
For a set $S$, $2^S$ is the power set and $|S|$ is the cardinal number. 
The distance function $d:X\times2^X\to[0,\infty]$ is defined as $d(x,\Omega) = \inf\{\|x-x'\|\mid x'\in\Omega\}$ if $\Omega\neq\emptyset$ and $d(x,\emptyset) = \infty$.
$0^0 = 1$ by convention.

\begin{lemma}
	\label{lem:relation_between_pointwise_and_local_Lipschitz_continuity}
	If $f:D\to Y$ is pointwise Lipschitz on a convex set $\Omega\subset D$ with a uniform Lipschitz constant $L\ge0$, then $f$ is Lipschitz on $\Omega$ with $L$. 
\end{lemma}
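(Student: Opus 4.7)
My plan is to use a continuity argument along the segment joining any two points of $\Omega$. Fix $x,y\in\Omega$ and, since $\Omega$ is convex, parameterize the segment as $\gamma(t)=(1-t)x+ty\in\Omega$ for $t\in[0,1]$. Introduce the set
\[
T = \{\,t\in[0,1] \mid \|f(\gamma(s))-f(x)\|\le L\|\gamma(s)-x\|\text{ for every }s\in[0,t]\,\},
\]
which is downward-closed in $[0,1]$ and contains $0$. The lemma will follow from $1\in T$, so the strategy is to let $t^{\ast}=\sup T$ and argue $t^{\ast}=1$.

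First I would verify that $t^{\ast}\in T$, i.e.\ that $T$ is closed. For any $s<t^{\ast}$ the required bound is inherited from any $t_n\in T$ with $s\le t_n$. For $s=t^{\ast}$ itself I invoke the pointwise Lipschitz hypothesis at $\gamma(t^{\ast})\in\Omega$: there exists $r^{\ast}>0$ such that $\gamma(t^{\ast})+r^{\ast}B_X\subset D$ and $\|f(w)-f(\gamma(t^{\ast}))\|\le L\|w-\gamma(t^{\ast})\|$ on that ball, with the same constant $L$ by the uniformity assumption. Taking $t_n\in T$ with $t_n\to t^{\ast}$, eventually $\gamma(t_n)$ lies in this ball, so $f(\gamma(t_n))\to f(\gamma(t^{\ast}))$; then the collinearity identity $\|\gamma(t^{\ast})-x\|=\|\gamma(t^{\ast})-\gamma(t_n)\|+\|\gamma(t_n)-x\|$ together with the triangle inequality lets me pass the bound at $t_n$ to a bound at $s=t^{\ast}$.

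Assuming $t^{\ast}<1$ for contradiction, I would pick $t'\in(t^{\ast},1]$ close enough that $\|\gamma(t')-\gamma(t^{\ast})\|\le r^{\ast}$; then for every $s\in[t^{\ast},t']$ the pointwise estimate at $\gamma(t^{\ast})$ yields $\|f(\gamma(s))-f(\gamma(t^{\ast}))\|\le L\|\gamma(s)-\gamma(t^{\ast})\|$. Combining this with the bound at $s=t^{\ast}$ (already established, since $t^{\ast}\in T$) and using that $x,\gamma(t^{\ast}),\gamma(s)$ lie in this order on the segment, I obtain $\|f(\gamma(s))-f(x)\|\le L\|\gamma(s)-x\|$ for every such $s$, so $t'\in T$, contradicting $t^{\ast}=\sup T$. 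Hence $t^{\ast}=1$, and the case $s=1$ of membership in $T$ is exactly $\|f(y)-f(x)\|\le L\|y-x\|$.

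The main obstacle is that the radius $r$ in the pointwise Lipschitz definition depends on the base point and is not assumed to vary in any controlled way, so a naive uniform cover or Lebesgue-number argument on the segment is not automatic; the supremum/connectedness argument above sidesteps this by invoking the pointwise condition at only one base point $\gamma(t^{\ast})$ at a time, while collinearity ensures that the two local estimates concatenate exactly, without losing a factor.
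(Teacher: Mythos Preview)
Your argument is correct. It is a clean ``real induction'' (supremum/connectedness) proof: the set $T$ is a downward-closed interval in $[0,1]$, you show $t^{\ast}=\sup T$ lies in $T$ via the pointwise Lipschitz condition at $\gamma(t^{\ast})$, and then extend past $t^{\ast}$ using the same local estimate. Collinearity is exactly what makes the two pieces add without loss, so the final constant is $L$ rather than $2L$.

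The paper takes a different route: it covers $[0,1]$ by the pullbacks $I_t=(t-\delta_t,t+\delta_t)$ of the pointwise Lipschitz balls, extracts a finite subcover $\{I_{t_1},\dots,I_{t_N}\}$ by compactness, orders the centers, picks overlap points $\tau_i\in I_{t_i}\cap I_{t_{i+1}}$, and telescopes the triangle inequality along $x_1,x(t_1),x(\tau_1),\dots,x(t_N),x_2$. Collinearity again collapses the sum of distances to $\|x_1-x_2\|$.

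Your closing remark slightly undersells the cover approach: the radii \emph{do} vary uncontrolled, but compactness of $[0,1]$ makes the finite-subcover step automatic regardless; no Lebesgue number is needed. What your approach buys is that you only ever invoke the pointwise hypothesis at a single base point at a time and never have to manage the combinatorics of overlapping intervals, which makes the write-up shorter. What the paper's approach buys is a more tangible decomposition of the segment into finitely many pieces, which some readers find more transparent. Both are standard and equally valid here.
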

\begin{proof}
	This is a simplified version of the proof found in \cite[Lemma 2.3]{Durand-Cartagena2010}.
	Let $x_1,x_2\in\Omega$. Define $x:\mathbb{R}\to X$ as $x(t) = x_1 + t(x_2-x_1)$. $t\in[0,1]$ implies $x(t) \in \Omega$ because $\Omega$ is convex. Thus, for each $t\in[0,1]$ there exists $r_t>0$ such that $\|f(x')-f(x(t))\|\le L\|x'-x(t)\|$ provided $x'\in x(t)+r_tB_X$. Also, for each $t\in[0,1]$ there exists $\delta_t>0$ such that $I_t = (t-\delta_t,t+\delta_t) \subset x^{-1}(x(t)+r_tB_X)$. Since $\{I_t\}_{t\in[0,1]}$ is an open cover of the compact set $[0,1]$, there is a finite subcover $\{I_{t_i}\}_{i=1}^N$. Without loss of generality, $I_{t_i}\not\subset I_{t_j}$ if $i\neq j$. Rearrange $\{t_i\}$ and choose $\tau_i\in I_{t_i}\cap I_{t_{i+1}}$ for $i\in\overline{1,N-1}$ such that $0\le t_1 \le \tau_1 \le \cdots \le \tau_{N-1} \le t_N \le 1$. Then, $\|f(x_1)-f(x_2)\| \le \|f(x(0)) - f(x(t_1))\| + \|f(x(t_1)) - f(x(\tau_1))\| + \cdots + \|f(x(t_N)) - f(x(1))\| \le L\|x_1-x_2\|$.
\end{proof}

\begin{lemma}[Gronwall's Inequality]
	\label{lem:gronwall_inequality}
	Let $-\infty<t_0<t_1<\infty$, $a,b\in L^1([t_0,t_1],\mathbb{R})$, and $\phi\in\mathrm{AC}([t_0,t_1],\mathbb{R})$. If $\dot{\phi}(t) \le a(t)\phi(t) + b(t)$ for almost all $t\in[t_0,t_1]$, then $\phi(t) \le \phi(t_0)e^{\int_{t_0}^t a(s)ds} + \int_{t_0}^t b(s)e^{\int_s^ta(r)dr}ds$
	for all $t\in[t_0,t_1]$.
\end{lemma}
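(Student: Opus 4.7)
The plan is the classical integrating-factor argument, adapted to the absolutely continuous setting. Define $A(t) = \int_{t_0}^t a(s)\,ds$ and $\psi(t) = \phi(t)\,e^{-A(t)}$. Since $a\in L^1([t_0,t_1],\mathbb{R})$, the fundamental theorem of calculus for Lebesgue integration puts $A$ in $\mathrm{AC}([t_0,t_1],\mathbb{R})$ with $\dot{A}(t) = a(t)$ almost everywhere; consequently $e^{-A}$ is continuous, and, as a composition of the smooth function $u\mapsto e^{-u}$ with an AC function on a compact interval, it is itself absolutely continuous with $\tfrac{d}{dt}e^{-A(t)} = -a(t)e^{-A(t)}$ a.e.

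Next I would invoke the fact that the product of two AC functions on a compact interval is AC and that the usual product rule holds almost everywhere (both are standard consequences of boundedness plus the integration-by-parts formula for AC functions). This gives $\psi\in\mathrm{AC}([t_0,t_1],\mathbb{R})$ with
\begin{equation*}
\dot\psi(t) \;=\; \dot\phi(t)\,e^{-A(t)} - \phi(t)\,a(t)\,e^{-A(t)}\quad \text{a.e.}
\end{equation*}
Using the hypothesis $\dot\phi(t) \le a(t)\phi(t) + b(t)$ a.e., the $a(t)\phi(t)$ terms cancel and we obtain $\dot\psi(t) \le b(t)\,e^{-A(t)}$ almost everywhere.

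Integrating this inequality from $t_0$ to $t$ using the characterization of absolute continuity recalled in the preliminary section, we get
\begin{equation*}
\phi(t)e^{-A(t)} - \phi(t_0) \;=\; \psi(t) - \psi(t_0) \;\le\; \int_{t_0}^t b(s)\,e^{-A(s)}\,ds.
\end{equation*}
Multiplying through by $e^{A(t)}$ and noting that $e^{A(t)}e^{-A(s)} = e^{\int_s^t a(r)\,dr}$ yields the claimed bound.

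The only nonroutine step is the justification of the product rule a.e. for $\phi\cdot e^{-A}$; everything else is algebra and direct use of the absolute continuity definition given earlier in the preliminary. Once this is in hand, the argument is a few lines. I would either cite the standard result on products of AC functions or, if a self-contained argument is wanted, prove AC of the product from boundedness and the $\epsilon$-$\delta$ definition and recover the product rule at points where both factors are differentiable, which by hypothesis is a.e.
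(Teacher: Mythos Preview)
Your argument is correct and is the standard integrating-factor proof of the differential form of Gronwall's inequality in the absolutely continuous setting. The paper, however, states this lemma without proof, treating it as a well-known preliminary result; there is nothing to compare against.
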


Let $d,m,n\in\mathbb{N}$. $\mathbb{R}^d = \mathbb{R}^{d\times1}$ and $\mathbb{R}^{1\times d}$ are the sets of column vectors and row vectors, respectively. 
We assume the standard basis and do not distinguish matrices and linear transformations. 
For $\mathbf{x} = (x_1,\dots,x_d)\in\mathbb{R}^d$, $\mathbf{y} \in \mathbb{R}^{1\times d}$, $\mathbf{A} = [a_{ij}]\in \mathbb{R}^{m\times n}$, and $\mathbf{M} = [m_{ijk}] = ([m_{ij1}],\dots,[m_{ijd}]) \in \mathbb{R}^{m\times n}\times\cdots\times\mathbb{R}^{m\times n} = \mathbb{R}^{m\times n\times d}$, we define norms as $\|\mathbf{x}\| = \langle \mathbf{x},\mathbf{x}\rangle^{1/2} = (\mathbf{x}^T\mathbf{x})^{1/2}$, $\|\mathbf{y}\| = \langle \mathbf{y}, \mathbf{y}\rangle^{1/2} = (\mathbf{y}\mathbf{y}^T)^{1/2}$, $\|\mathbf{A}\| = \sup_{\|\mathbf{z}\|=1}\|\mathbf{A}\mathbf{z}\|$, $\|\mathbf{A}\|_F = (\sum_{i,j}a_{ij}^2)^{1/2}$, $\|\mathbf{M}\| = (\sum_{k=1}^d\|[m_{ijk}]\|^2)^{1/2}$, and $\|\mathbf{M}\|_F = (\sum_{i,j,k}m_{ijk}^2)^{1/2}$. $\mathbf{M}^T = [m_{jik}] = ([m_{ij1}]^T,\dots,[m_{ijd}]^T) \in \mathbb{R}^{n\times m\times d}$. 
$\mathbf{M}\mathbf{x} = \sum_{k=1}^d[m_{ijk}]x_k \in \mathbb{R}^{m\times n}$. 
For $\mathbf{B}\in\mathbb{R}^{a\times m}$ and $\mathbf{C}\in\mathbb{R}^{n\times b}$, $\mathbf{B}\mathbf{M}\mathbf{C} = (\mathbf{B}[m_{ij1}]\mathbf{C},\dots,\mathbf{B}[m_{ijd}]\mathbf{C})\in\mathbb{R}^{a\times b\times d}$. $B_d$ is the closed unit ball in $\mathbb{R}^d$. 
$\mathbf{I}_d \in\mathbb{R}^{d\times d}$ is the identity matrix. 
$\mathcal{R}(\mathbf{A}) = \{\mathbf{A}\mathbf{b}\mid\mathbf{b}\in\mathbb{R}^n\}$ and $\mathcal{N}(\mathbf{A}) = \{\mathbf{b}\in\mathbb{R}^n\mid\mathbf{A}\mathbf{b}=\mathbf{0}\}$. 
We recall that $\mathcal{N}(\mathbf{A})^\perp = \{\mathbf{c}\in\mathbb{R}^n\mid \mathbf{c}^T\mathbf{b} = 0,\,\mathbf{b}\in\mathcal{N}(\mathbf{A})\} = \mathcal{R}(\mathbf{A}^T)$.
If $\mathbf{A}$ is square ($m = n$), then $\det(\mathbf{A}) = |\mathbf{A}|$ is the determinant of $\mathbf{A}$.
$\mathbf{A}^+\in\mathbb{R}^{n\times m}$ is the (Moore-Penrose) pseudoinverse of $\mathbf{A}$. 
For $\lambda\in[0,\infty]$, we define the (extended) damped pseudoinverse of $\mathbf{A}$ with the damping constant $\lambda$ as
\begin{equation*}
\mathbf{A}^{*(\lambda)} = \begin{dcases*}
\mathbf{A}^T(\mathbf{A}\mathbf{A}^T + \lambda^2\mathbf{I}_m)^+, & $\lambda \in [0,\infty)$ \\
\mathbf{0}, & $\lambda = \infty$
\end{dcases*}
\end{equation*}
where $\mathbf{A}^{*(0)} = \mathbf{A}^+$ has zero damping and $\mathbf{A}^{*(\infty)} = \mathbf{0}$ has infinite damping.
We may simply write $\mathbf{A}^*$ for $\mathbf{A}^{*(\lambda)}$ if $\lambda$ is defined ahead.
The next Lemma can be easily proven from $\|\mathbf{A}\| \le \|\mathbf{A}\|_F \le \sqrt{\mathrm{rank}(\mathbf{A})}\|\mathbf{A}\|$ and $\|\mathbf{M}\| \le \|\mathbf{M}\|_F \le \sqrt{\min\{m,n\}}\|\mathbf{M}\|$.

\begin{lemma}
	\label{lem:derivative_of_matrix_and_its_entries}
	$\mathbf{A} = [a_{ij}] \in C_{\mathbf{x}_0}^\bullet(\mathbb{R}^d,\mathbb{R}^{m\times n})$ if and only if $a_{ij}\in C_{\mathbf{x}_0}^\bullet(\mathbb{R}^d,\mathbb{R})$ for all $i\in\overline{1,m}$ and $j\in\overline{1,n}$.
\end{lemma}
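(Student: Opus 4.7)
The plan is to reduce everything to the two norm equivalences stated just before the lemma: on $\mathbb{R}^{m\times n}$,
\[
|a_{ij}| \le \|\mathbf{A}\| \le \|\mathbf{A}\|_F \le \sqrt{mn}\max_{i',j'}|a_{i'j'}|,
\]
and on $\mathbb{R}^{m\times n\times d}$, $\|\mathbf{M}\| \le \|\mathbf{M}\|_F \le \sqrt{\min\{m,n\}}\|\mathbf{M}\|$, together with the analogous entrywise comparison $|m_{ijk}| \le \|\mathbf{M}\|_F \le \sqrt{mnd}\max_{i,j,k}|m_{ijk}|$. Applied to $\mathbf{A}(\mathbf{x}) - \mathbf{A}(\mathbf{x}_0)$, the first chain yields
\[
|a_{ij}(\mathbf{x}) - a_{ij}(\mathbf{x}_0)| \le \|\mathbf{A}(\mathbf{x}) - \mathbf{A}(\mathbf{x}_0)\| \le \sqrt{mn}\max_{i',j'}|a_{i'j'}(\mathbf{x}) - a_{i'j'}(\mathbf{x}_0)|,
\]
which is the engine for the five cases.

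I would then dispatch $\bullet\in\mathcal{I}$ one at a time, using this double inequality. For $\bullet = 0$ it directly gives equivalence of the $\varepsilon$-$\delta$ conditions. For $\bullet = L_p$, any pointwise Lipschitz constant $L$ and radius $r$ for $\mathbf{A}$ at $\mathbf{x}_0$ immediately serve as pointwise Lipschitz data for every $a_{ij}$; conversely, if each $a_{ij}$ is pointwise Lipschitz at $\mathbf{x}_0$ with constant $L_{ij}$ on radius $r_{ij}$, then $\mathbf{A}$ is pointwise Lipschitz at $\mathbf{x}_0$ with constant $\sqrt{mn}\max_{ij} L_{ij}$ on radius $\min_{ij} r_{ij}$, both finite since $m,n$ are finite. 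For $\bullet = L$, apply the same transfer at each point of a neighborhood (or equivalently to Lemma~\ref{lem:relation_between_pointwise_and_local_Lipschitz_continuity} on a convex neighborhood).

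For $\bullet = 1_p$, the key is to interpret the Fr\'echet derivative $\mathsf{D}\mathbf{A}(\mathbf{x}_0)$ as the tensor in $\mathbb{R}^{m\times n\times d}$ whose $(i,j,\cdot)$ slice equals $\mathsf{D}a_{ij}(\mathbf{x}_0)\in\mathbb{R}^{1\times d}$. If $\mathbf{A}$ is differentiable at $\mathbf{x}_0$, then inserting the remainder $\mathbf{R}(\mathbf{x}) = \mathbf{A}(\mathbf{x}) - \mathbf{A}(\mathbf{x}_0) - \mathsf{D}\mathbf{A}(\mathbf{x}_0)(\mathbf{x}-\mathbf{x}_0)$ into the entrywise bound gives $|R_{ij}(\mathbf{x})|/\|\mathbf{x}-\mathbf{x}_0\| \le \|\mathbf{R}(\mathbf{x})\|/\|\mathbf{x}-\mathbf{x}_0\|\to 0$, so each $a_{ij}$ is differentiable with the expected slice. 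Conversely, assembling entrywise derivatives into such a tensor and using the upper bound $\|\mathbf{R}(\mathbf{x})\|\le\sqrt{mn}\max_{ij}|R_{ij}(\mathbf{x})|$ yields Fr\'echet differentiability of $\mathbf{A}$. Finally, $\bullet = 1$ follows by applying the $\bullet = 0$ case to the derivative tensor $\mathsf{D}\mathbf{A}:\mathbb{R}^d\to\mathbb{R}^{m\times n\times d}$ via the $\mathbf{M}$-inequalities.

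There is no real obstacle here; the only point requiring a little care is the $1_p$ case, where one must verify that the candidate derivative tensor built from entrywise derivatives satisfies the joint Fr\'echet condition, but this is an immediate consequence of the finite-dimensional norm equivalence on $\mathbb{R}^{m\times n}$.
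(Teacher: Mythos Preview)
Your proposal is correct and follows essentially the same approach as the paper: the paper does not give a detailed proof of this lemma but simply remarks that it is an easy consequence of the norm equivalences $\|\mathbf{A}\|\le\|\mathbf{A}\|_F\le\sqrt{\mathrm{rank}(\mathbf{A})}\|\mathbf{A}\|$ and $\|\mathbf{M}\|\le\|\mathbf{M}\|_F\le\sqrt{\min\{m,n\}}\|\mathbf{M}\|$. Your argument fleshes out exactly this remark, case by case over $\bullet\in\mathcal{I}$, using the entrywise comparison these norm equivalences imply.
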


\begin{lemma}
	\label{lem:ContinuityOfPseudoinverse-modified}
	Let $\mathbf{A}\in C_{\mathbf{x}_0}^\bullet(\mathbb{R}^d,\mathbb{R}^{m\times n})$. Then, $\mathbf{A}^+\in C_{\mathbf{x}_0}^\bullet(\mathbb{R}^d,\mathbb{R}^{n\times m})$ if and only if $\lim_{\mathbf{x}\to \mathbf{x}_0}\mathrm{rank}(\mathbf{A}(\mathbf{x})) = \mathrm{rank}(\mathbf{A}(\mathbf{x}_0))$.
\end{lemma}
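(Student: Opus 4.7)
The plan is to establish necessity via a singular-value argument and sufficiency via a $\bullet$-continuous rank factorization based on a persistent nonsingular $r\times r$ submatrix.

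For necessity, the inclusions of Section \ref{sec:preliminary} let me reduce to ordinary continuity: $\bullet$-continuity of both $\mathbf{A}$ and $\mathbf{A}^+$ implies $0$-continuity of both. If the rank limit were to fail, there would exist a sequence $\mathbf{x}_k\to\mathbf{x}_0$ with $\mathrm{rank}(\mathbf{A}(\mathbf{x}_k))\ge r_0+1$, where $r_0=\mathrm{rank}(\mathbf{A}(\mathbf{x}_0))$. Because singular values depend continuously on the matrix entries and rank counts the nonzero singular values, I would have $\sigma_{r_0+1}(\mathbf{A}(\mathbf{x}_k))>0$ while $\sigma_{r_0+1}(\mathbf{A}(\mathbf{x}_k))\to\sigma_{r_0+1}(\mathbf{A}(\mathbf{x}_0))=0$; consequently $\|\mathbf{A}^+(\mathbf{x}_k)\|\ge 1/\sigma_{r_0+1}(\mathbf{A}(\mathbf{x}_k))\to\infty$, contradicting continuity of $\mathbf{A}^+$ at $\mathbf{x}_0$.

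For sufficiency, the limit condition combined with integer-valuedness of $\mathrm{rank}$ forces $\mathrm{rank}(\mathbf{A}(\mathbf{x}))=r:=\mathrm{rank}(\mathbf{A}(\mathbf{x}_0))$ on some neighborhood $U$ of $\mathbf{x}_0$. I would pick an $r\times r$ submatrix of $\mathbf{A}(\mathbf{x}_0)$ with nonzero determinant; continuity of the determinant keeps that submatrix nonsingular on a smaller neighborhood. After permuting rows and columns (which preserves $\bullet$-continuity and only permutes the pseudoinverse accordingly), the Schur complement $\mathbf{A}_{22}-\mathbf{A}_{21}\mathbf{A}_{11}^{-1}\mathbf{A}_{12}$ must vanish because $\mathrm{rank}(\mathbf{A})=r=\mathrm{rank}(\mathbf{A}_{11})$, so I obtain a rank factorization $\mathbf{A}=\mathbf{E}\mathbf{F}$ with
\[
\mathbf{E}=\begin{pmatrix}\mathbf{I}_r\\ \mathbf{A}_{21}\mathbf{A}_{11}^{-1}\end{pmatrix},\qquad \mathbf{F}=\bigl(\mathbf{A}_{11}\ \ \mathbf{A}_{12}\bigr),
\]
where $\mathbf{E}$ has full column rank and $\mathbf{F}$ has full row rank on that neighborhood. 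The standard identity for full-rank factors then yields $\mathbf{A}^+=\mathbf{F}^T(\mathbf{F}\mathbf{F}^T)^{-1}(\mathbf{E}^T\mathbf{E})^{-1}\mathbf{E}^T$, and I would argue that each factor in this expression inherits $\bullet$-continuity from $\mathbf{A}$.

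The main obstacle is the bookkeeping verification that $\bullet$-continuity is preserved under all the algebraic operations appearing above --- sums, products, transposition, and inversion of square matrices whose determinant is bounded away from zero --- uniformly for every $\bullet\in\{0,L_p,L,1_p,1\}$. The cases $\bullet\in\{0,1_p,1\}$ are classical; for $\bullet\in\{L_p,L\}$ I would combine local Lipschitz bounds on the factors with the cofactor formula for the matrix inverse and Lemma \ref{lem:derivative_of_matrix_and_its_entries}, appealing to Lemma \ref{lem:relation_between_pointwise_and_local_Lipschitz_continuity} whenever a uniform pointwise Lipschitz bound on a convex neighborhood needs to be upgraded to a genuine local Lipschitz bound.
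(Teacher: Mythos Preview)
Your proposal is correct but follows a different route from the paper. For necessity, both you and the paper argue via unbounded $\|\mathbf{A}^+(\mathbf{x}_k)\|$ along a sequence of higher rank; you phrase it in terms of the $(r_0+1)$-st singular value, while the paper invokes Stewart's bound $\|\mathbf{A}^+(\mathbf{x}_i)\|\ge\|\mathbf{A}(\mathbf{x}_i)-\mathbf{A}_0\|^{-1}$ directly. The real divergence is in the sufficiency direction. The paper treats the five regularity levels separately: for $\bullet\in\{0,L_p\}$ it applies Stewart's perturbation estimate $\|\mathbf{A}^+(\mathbf{x})-\mathbf{A}_0^+\|\le\frac{c+3}{c^2}\|\mathbf{A}_0^+\|^2\|\mathbf{A}(\mathbf{x})-\mathbf{A}_0\|$ under constant rank; for $\bullet=L$ it bootstraps from the $L_p$ case via Lemma~\ref{lem:relation_between_pointwise_and_local_Lipschitz_continuity}; and for $\bullet\in\{1_p,1\}$ it quotes the Golub--Pereyra formula for $\mathsf{D}\mathbf{A}^+$. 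Your rank-factorization argument $\mathbf{A}^+=\mathbf{F}^T(\mathbf{F}\mathbf{F}^T)^{-1}(\mathbf{E}^T\mathbf{E})^{-1}\mathbf{E}^T$ is more uniform and more elementary---it handles all five $\bullet$ at once without external perturbation theory, reducing everything to the stability of ordinary matrix inversion. The trade-off is that the paper's approach yields, as a by-product, an explicit quantitative Lipschitz constant and an explicit expression for $\mathsf{D}\mathbf{A}^+(\mathbf{x}_0)$, both of which are reused later (notably in the proof of Lemma~\ref{lem:Lipschitz_continuity_of_linear_approximation}); your approach would require re-deriving that derivative formula separately when it is needed downstream.
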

\begin{proof}
	Fix $c \in (0,1)$ and define $\mathbf{A}_0 = \mathbf{A}(\mathbf{x}_0)$, $\Delta = \mathbf{A} - \mathbf{A}_0$, and
	\begin{align*}
	\Delta_1 &= (\mathbf{A}_0\mathbf{A}_0^+)\Delta(\mathbf{A}_0^+ \mathbf{A}_0) \\
	\Delta_2 &= (\mathbf{A}_0\mathbf{A}_0^+)\Delta(\mathbf{I}_n - \mathbf{A}_0^+ \mathbf{A}_0) \\
	\Delta_3 &= (\mathbf{I}_m - \mathbf{A}_0\mathbf{A}_0^+)\Delta(\mathbf{A}_0^+ \mathbf{A}_0) \\
	\Delta_4 &= (\mathbf{I}_m - \mathbf{A}_0\mathbf{A}_0^+)\Delta(\mathbf{I}_n - \mathbf{A}_0^+ \mathbf{A}_0) \\
	\Omega_c &= \{\mathbf{x} \in \mathbb{R}^d \mid \|\mathbf{A}_0^+\|\|\Delta_1(\mathbf{x})\| < 1 - c\}.
	\end{align*}
	
	$\impliedby$: There exists $r>0$ such that $\mathrm{rank}(\mathbf{A}(\mathbf{x})) = \mathrm{rank}(\mathbf{A}_0)$ for all $\mathbf{x} \in \mathbf{x}_0+r B_d \subset \Omega_c$. We can prove the case $\bullet\in\{0,L_p\}$ from
	\begin{align*}
		\|\mathbf{A}^+(\mathbf{x}) - \mathbf{A}_0^+\| &\le \|\mathbf{A}_0^+\|\left( \beta_1 + \gamma\sum_{i=2}^4\sqrt{\frac{\beta_i^2}{1 + \beta_i^2}} \right)(\mathbf{x}) \\
			&\le \|\mathbf{A}_0^+\| \left(\beta_1 + \gamma \sum_{i=2}^4 \beta_i\right)(\mathbf{x}) \\
			&\le \frac{c+3}{c^2}\|\mathbf{A}_0^+\|^2\|\mathbf{A}(\mathbf{x})-\mathbf{A}_0\|
	\end{align*}
	for all $\mathbf{x}\in \mathbf{x}_0 + r B_d$ where $\gamma = (1 - \|\mathbf{A}_0^+\|\|\Delta_1\|)^{-1}$ and $\beta_i = \gamma\|\mathbf{A}_0^+\|\|\Delta_i\|$ \cite{Stewart1969}. If $\bullet = L$, then there exist $0<r_1\le r$ and $L_1\ge0$ such that $\|\mathbf{A}(\mathbf{x}_1) - \mathbf{A}(\mathbf{x}_2)\| \le L_1 \|\mathbf{x}_1-\mathbf{x}_2\|$ for all $\mathbf{x}_1,\mathbf{x}_2 \in \mathbf{x}_0 + r_1 B_d$. Let $0<r_2<r_1$. Since $\lim_{\mathbf{x'}\to\mathbf{x}}\mathrm{rank}(\mathbf{A}(\mathbf{x'})) = \mathrm{rank}(\mathbf{A}(\mathbf{x}))$ for all $\mathbf{x}\in\mathbf{x}_0+r_2B_d$, $\mathbf{A}^+ \in C_{\mathbf{x}_0 + r_2B_d}^{L_p}$ by the case $\bullet\in\{0,L_p\}$. Since $\mathbf{x}\mapsto\|\mathbf{A}^+(\mathbf{x})\|$ is continuous on the compact set $\mathbf{x}_0+r_2B_d$, $L = \frac{c+3}{c^2}L_1\max_{\mathbf{x}\in\mathbf{x}_0+r_2B_d}\|\mathbf{A}^+(\mathbf{x})\|\in[0,\infty)$ exists such that $\mathbf{A}^+\in C_{\mathbf{x}_0+r_2B_d}^{L_p}$ with the uniform Lipschitz constant $L$. Then, $\mathbf{A}^+$ is Lipschitz on $\mathbf{x}_0+r_2B_d$ with $L$ by Lemma \ref{lem:relation_between_pointwise_and_local_Lipschitz_continuity}. If $\bullet = 1_p$, then $\mathsf{D}\mathbf{A}_0 = \mathsf{D}\mathbf{A}(\mathbf{x}_0) = (\mathsf{D}_1\mathbf{A}(\mathbf{x}_0),\dots,\mathsf{D}_d\mathbf{A}(\mathbf{x}_0)) \in\mathbb{R}^{m\times n\times d}$ exists where $\mathsf{D}_i\mathbf{A} = \partial\mathbf{A}/\partial x_i$ is the partial derivative of $\mathbf{A}$ with respect to the $i$-th component of $\mathbf{x} = (x_1,\dots,x_d)$. By \cite{Golub1973}, we have $\mathsf{D}\mathbf{A}^+(\mathbf{x}_0) = -\mathbf{A}_0^+ \mathsf{D}\mathbf{A}_0 \mathbf{A}_0^+ + \mathbf{A}_0^+ \mathbf{A}_0^{+ T} \mathsf{D}\mathbf{A}_0^T (\mathbf{I}_m - \mathbf{A}_0\mathbf{A}_0^+) + (\mathbf{I}_n - \mathbf{A}_0^+ \mathbf{A}_0) \mathsf{D}\mathbf{A}_0^T \mathbf{A}_0^{+ T}\mathbf{A}_0^+$.
	If $\bullet = 1$, then there exists $0<r_3<r$ such that $\mathsf{D}\mathbf{A}(\mathbf{x})$ exists for all $\mathbf{x} \in \mathbf{x}_0+r_3B_d$ and is continuous at $\mathbf{x}_0$. Since $\lim_{\mathbf{x'}\to\mathbf{x}}\mathrm{rank}(\mathbf{A}(\mathbf{x'})) = \mathrm{rank}(\mathbf{A}(\mathbf{x}))$ for all $\mathbf{x}\in\mathbf{x}_0+r_3B_d$, $\mathsf{D}\mathbf{A}^+(\mathbf{x})$ exists for all $\mathbf{x}\in\mathbf{x}_0+r_3B_d$ by the case $\bullet = 1_p$. $\mathbf{A}$, $\mathbf{A}^+$, and $\mathsf{D}\mathbf{A}$ are all continuous at $\mathbf{x}_0$ and so is $\mathsf{D}\mathbf{A}^+$. 
	
	$\implies$: If there exists a sequence $\{\mathbf{x}_i\}$ converging to $\mathbf{x}_0$ with $\mathrm{rank}(\mathbf{A}(\mathbf{x}_i)) \neq \mathrm{rank}(\mathbf{A}_0)$ for all $i\in\mathbb{N}$, then $\|\mathbf{A}^+(\mathbf{x}_i) - \mathbf{A}_0^+\| \ge \|\mathbf{A}^+(\mathbf{x}_i)\| - \|\mathbf{A}_0^+\| \ge \|\Delta(\mathbf{x}_i)\|^{-1} - \|\mathbf{A}_0^+\| \to \infty$ as $i\to\infty$ \cite{Stewart1969}.
\end{proof}

\begin{lemma}
	\label{lem:Af_continuous_iff_f_continuous}
	Let $\mathbf{f}:\mathbb{R}^d\to\mathbb{R}^n$, $\mathbf{A}\in C_{\mathbf{x}_0}^\bullet(\mathbb{R}^d,\mathbb{R}^{n\times n})$, and $\lim_{\mathbf{x}\to \mathbf{x}_0}\mathrm{rank}(\mathbf{A}(\mathbf{x})) = \mathrm{rank}(\mathbf{A}(\mathbf{x}_0)) = n$. Then, $\mathbf{A}\mathbf{f}\in C_{\mathbf{x}_0}^\bullet$ if and only if $\mathbf{f}\in C_{\mathbf{x}_0}^\bullet$.
\end{lemma}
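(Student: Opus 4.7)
The plan is to reduce the statement to a closure property of $C_{\mathbf{x}_0}^\bullet$ under products (and sums), combined with the pseudoinverse result just established in Lemma~\ref{lem:ContinuityOfPseudoinverse-modified}. Since $\mathrm{rank}(\mathbf{A}(\mathbf{x}_0)) = n$, $\mathbf{A}(\mathbf{x}_0)$ is invertible, so $\mathbf{A}^+(\mathbf{x}_0) = \mathbf{A}(\mathbf{x}_0)^{-1}$, and the hypothesis $\lim_{\mathbf{x}\to\mathbf{x}_0}\mathrm{rank}(\mathbf{A}(\mathbf{x})) = n$ gives us that $\mathbf{A}(\mathbf{x})$ is invertible on some neighborhood of $\mathbf{x}_0$. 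Lemma~\ref{lem:ContinuityOfPseudoinverse-modified} then yields $\mathbf{A}^+ \in C_{\mathbf{x}_0}^\bullet$. With this in hand, the key identity $\mathbf{f} = \mathbf{A}^+(\mathbf{A}\mathbf{f})$ on a neighborhood of $\mathbf{x}_0$ turns each direction of the equivalence into a multiplication problem.

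First I would handle the ``$\Leftarrow$'' direction: assuming $\mathbf{f}\in C_{\mathbf{x}_0}^\bullet$, each entry $a_{ij}$ of $\mathbf{A}$ and each component $f_j$ of $\mathbf{f}$ is scalar-valued and $\bullet$-continuous at $\mathbf{x}_0$ by Lemma~\ref{lem:derivative_of_matrix_and_its_entries} (applied componentwise to $\mathbf{f}$ viewed as an $n\times 1$ matrix). Then each component $(\mathbf{A}\mathbf{f})_i = \sum_j a_{ij} f_j$ is a sum of products of scalar $\bullet$-continuous functions, hence $\bullet$-continuous, and a final application of Lemma~\ref{lem:derivative_of_matrix_and_its_entries} packages this back into $\mathbf{A}\mathbf{f}\in C_{\mathbf{x}_0}^\bullet$. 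For the ``$\Rightarrow$'' direction, since $\mathbf{A}^+\in C_{\mathbf{x}_0}^\bullet$ by the preceding paragraph and $\mathbf{A}\mathbf{f}\in C_{\mathbf{x}_0}^\bullet$ by hypothesis, the same scalar-entries argument applied to the product $\mathbf{f} = \mathbf{A}^+(\mathbf{A}\mathbf{f})$ gives $\mathbf{f}\in C_{\mathbf{x}_0}^\bullet$.

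The main obstacle, and really the only nontrivial verification, is the closure of $C_{\mathbf{x}_0}^\bullet$ under sums and products of scalar functions for each of the five classes $\bullet\in\{0,L_p,L,1_p,1\}$. The cases $\bullet\in\{0,L,1,1_p\}$ are standard (continuity, local Lipschitz, $C^1$, and pointwise differentiability via the product rule). The pointwise Lipschitz case $\bullet = L_p$ requires a short argument: if $g,h$ are pointwise Lipschitz at $\mathbf{x}_0$ with constants $L_g,L_h$ on $\mathbf{x}_0 + rB_d$, then $|g(\mathbf{x})| \le |g(\mathbf{x}_0)| + L_g r$ on this ball, and
\begin{equation*}
|g(\mathbf{x})h(\mathbf{x}) - g(\mathbf{x}_0)h(\mathbf{x}_0)| \le |g(\mathbf{x})|\,|h(\mathbf{x}) - h(\mathbf{x}_0)| + |h(\mathbf{x}_0)|\,|g(\mathbf{x}) - g(\mathbf{x}_0)|
\end{equation*}
gives a pointwise Lipschitz bound at $\mathbf{x}_0$.

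I would close by noting that the rank hypothesis is used in exactly one place, namely to invoke Lemma~\ref{lem:ContinuityOfPseudoinverse-modified} and to guarantee that $\mathbf{A}^+\mathbf{A} = \mathbf{I}_n$ holds on a neighborhood of $\mathbf{x}_0$ so that the identity $\mathbf{f} = \mathbf{A}^+(\mathbf{A}\mathbf{f})$ is valid pointwise on that neighborhood; without it the second direction fails, which will be relevant later when PIK solutions are analyzed at rank-drop points.
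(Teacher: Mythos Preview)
Your proposal is correct and follows essentially the same route as the paper: the ``if'' direction via closure of $C_{\mathbf{x}_0}^\bullet$ under products (the paper just writes ``obvious by Lemma~\ref{lem:derivative_of_matrix_and_its_entries}''), and the ``only if'' direction by invoking Lemma~\ref{lem:ContinuityOfPseudoinverse-modified} to get $\mathbf{A}^+\in C_{\mathbf{x}_0}^\bullet$ and then writing $\mathbf{f} = \mathbf{A}^+(\mathbf{A}\mathbf{f})$ on a neighborhood. Your explicit verification of the $L_p$ closure under products is a nice addition that the paper leaves implicit.
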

\begin{proof}
	The `if' part is obvious by Lemma \ref{lem:derivative_of_matrix_and_its_entries}. Let $\mathbf{A}\mathbf{f}\in C_{\mathbf{x}_0}^\bullet$. There exists $r>0$ such that $\mathrm{rank}(\mathbf{A}(\mathbf{x})) = n$ and $\mathbf{A}^+(\mathbf{x}) = \mathbf{A}^{-1}(\mathbf{x})$ provided $\mathbf{x}\in \mathbf{x}_0+rB_d$. $\mathbf{A}^+\in C_{\mathbf{x}_0}^\bullet$ by Lemma \ref{lem:ContinuityOfPseudoinverse-modified}. Since $\mathbf{f}(\mathbf{x}) = (\mathbf{A}^+\mathbf{A}\mathbf{f})(\mathbf{x})$ for all $\mathbf{x}\in\mathbf{x}_0+rB_d$, $\mathbf{f} \in C_{\mathbf{x}_0}^\bullet$.
\end{proof}

\begin{lemma}
	\label{lem:orthogonalization_of_J}
	Let $\mathbf{J}\in\mathbb{R}^{m\times n}$ with $m\le n$. There exist a lower triangular matrix $\mathbf{C}_e = [c_{ij}] \in\mathbb{R}^{m\times n}$ and an orthogonal matrix $\hat{\mathbf{J}}_e \in \mathbb{R}^{n\times n}$ such that $\mathbf{J} = \mathbf{C}_e\hat{\mathbf{J}}_e$; $c_{aa} \ge 0$ for $a\in\overline{1,m}$; and $c_{ab} = 0$ for $a\in\overline{1,m}$ if $c_{bb} = 0$.
\end{lemma}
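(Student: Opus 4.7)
The plan is to build the rows $\mathbf{q}_1^T,\dots,\mathbf{q}_n^T$ of $\hat{\mathbf{J}}_e$ iteratively by a modified Gram--Schmidt procedure applied to the rows $\mathbf{j}_1^T,\dots,\mathbf{j}_m^T$ of $\mathbf{J}$, and then to read the entries of $\mathbf{C}_e$ off as $c_{ij}=\mathbf{j}_i^T\mathbf{q}_j$. The twist over classical Gram--Schmidt is that rank-deficient steps must be handled with a little foresight so as to enforce both structural conditions on $\mathbf{C}_e$.

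First I would process $i\in\overline{1,m}$ inductively. With $\mathbf{q}_1,\dots,\mathbf{q}_{i-1}$ already orthonormal, form the residual $\mathbf{v}_i=\mathbf{j}_i-\sum_{k=1}^{i-1}(\mathbf{j}_i^T\mathbf{q}_k)\mathbf{q}_k$. In the non-degenerate case $\mathbf{v}_i\neq\mathbf{0}$, I would set $\mathbf{q}_i=\mathbf{v}_i/\|\mathbf{v}_i\|$ as in classical Gram--Schmidt, so that $c_{ii}=\|\mathbf{v}_i\|>0$. In the degenerate case $\mathbf{v}_i=\mathbf{0}$, I would instead take $\mathbf{q}_i$ to be any unit vector in the orthogonal complement of $\mathrm{span}(\mathbf{q}_1,\dots,\mathbf{q}_{i-1},\mathbf{j}_{i+1},\dots,\mathbf{j}_m)$, which exists because this subspace has dimension at most $(i-1)+(m-i)=m-1<n$; this yields $c_{ii}=0$. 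After step $m$ I would extend $\{\mathbf{q}_i\}_{i=1}^m$ to an orthonormal basis $\{\mathbf{q}_i\}_{i=1}^n$ of $\mathbb{R}^n$, let $\hat{\mathbf{J}}_e$ be the orthogonal matrix with those rows, and define $c_{ij}=\mathbf{j}_i^T\mathbf{q}_j$ throughout.

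The verification would then break into four short pieces. The factorization $\mathbf{J}=\mathbf{C}_e\hat{\mathbf{J}}_e$ is just the orthonormal expansion $\mathbf{j}_i=\sum_{j=1}^n(\mathbf{j}_i^T\mathbf{q}_j)\mathbf{q}_j$, and $c_{aa}\ge 0$ is immediate from the construction. For lower triangularity, in both the non-degenerate and the degenerate cases $\mathbf{j}_i$ lies in $\mathrm{span}(\mathbf{q}_1,\dots,\mathbf{q}_i)$, which is orthogonal to $\mathbf{q}_j$ for $j>i$, forcing $c_{ij}=0$. Finally, if $c_{bb}=0$, then for $a<b$ lower triangularity already gives $c_{ab}=0$, whereas for $a>b$ the way $\mathbf{q}_b$ was chosen at the degenerate step makes it orthogonal to $\mathbf{j}_a$, so $c_{ab}=\mathbf{j}_a^T\mathbf{q}_b=0$.

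The place where I expect real thought is the degenerate step: $\mathbf{q}_i$ there has to simultaneously kill the inner products with the already-built $\mathbf{q}_1,\dots,\mathbf{q}_{i-1}$ and with the still-unprocessed rows $\mathbf{j}_{i+1},\dots,\mathbf{j}_m$, and one must check that later degenerate steps do not spoil the orthogonality conditions imposed at earlier ones. The hypothesis $m\le n$ enters exactly through the codimension bound $m-1<n$, which guarantees that a feasible $\mathbf{q}_i$ always exists; and the earlier-step constraints survive because at step $i_1<i_2$ the vector $\mathbf{q}_{i_1}$ was deliberately chosen orthogonal to $\mathbf{j}_{i_2}$, so no retroactive repair of the $\mathbf{q}_{i_1}$'s is ever required.
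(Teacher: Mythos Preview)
Your proof is correct and follows essentially the same Gram--Schmidt strategy as the paper. The only organizational difference is that the paper defers the choice of the degenerate-step vectors: it runs the algorithm with $\mathbf{q}_b=\mathbf{0}$ as a placeholder whenever $r_{bb}=0$ (so the entire $b$-th row of $\mathbf{R}$, hence the $b$-th column of $\mathbf{C}_e$, stays zero automatically), and only afterwards replaces all zero columns at once by an orthonormal basis of $\mathcal{N}(\mathbf{J})$, whereas you pick each such $\mathbf{q}_b$ on the fly, orthogonal to the previous $\mathbf{q}$'s and to the future rows $\mathbf{j}_{b+1},\dots,\mathbf{j}_m$. Since in your construction such a $\mathbf{q}_b$ is in fact orthogonal to all of $\mathbf{j}_1,\dots,\mathbf{j}_m$ (the earlier ones lie in $\mathrm{span}(\mathbf{q}_1,\dots,\mathbf{q}_{b-1})$), the two constructions produce the same $\mathbf{C}_e$ and differ only in the bookkeeping of when the null-space vectors are inserted.
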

\begin{proof}
	This proof can be found in \cite{An2019}.
	Run the next algorithm.
	\begin{algorithmic}[1]
		\State $\mathbf{V} = \begin{bmatrix} \mathbf{v}_1 & \cdots & \mathbf{v}_m \end{bmatrix} \gets \mathbf{J}^T$
		\State $\tilde{\mathbf{Q}}_r = \begin{bmatrix} \mathbf{q}_1 & \cdots & \mathbf{q}_m \end{bmatrix} \gets \mathbf{0} \in \mathbb{R}^{n\times m}$
		\State $\mathbf{R}_r = [r_{ij}] \gets \mathbf{0}\in\mathbb{R}^{m\times m}$
		\For{$a = 1$ \textbf{to} $m$}
		\State $r_{aa} = \|\mathbf{v}_a\|$
		\If{$r_{aa} > 0$}
		\State $\mathbf{q}_a = \mathbf{v}_a / r_{aa}$
		\If{$a<m$}
		\For{$b = a + 1$ \textbf{to} $m$}
		\State $r_{ab} = \langle\mathbf{q}_a, \mathbf{v}_b\rangle$
		\State $\mathbf{v}_b = \mathbf{v}_b - r_{ab}\mathbf{q}_a$
		\EndFor
		\EndIf
		\EndIf
		\EndFor
	\end{algorithmic}
	Then, we get the decomposition $\mathbf{J}^T = \tilde{\mathbf{Q}}_r\mathbf{R}_r$ where $\mathbf{R}_r$ is upper triangular; $r_{aa}\ge0$ for $a\in\overline{1,m}$; $r_{ab} = 0$ for $b\in\overline{1,m}$ if $r_{aa} = 0$; $\mathbf{q}_a = \mathbf{0}$ if and only if $r_{aa} = 0$; and nonzero columns of $\tilde{\mathbf{Q}}_r$ are orthonormal. Define $\tilde{\mathbf{Q}}\in\mathbb{R}^{n\times n}$ and $\mathbf{R}\in\mathbb{R}^{n\times m}$ as $\tilde{\mathbf{Q}} = \tilde{\mathbf{Q}}_r$ and $\mathbf{R} = \mathbf{R}_r$ if $m = n$ and $\tilde{\mathbf{Q}} = \begin{bmatrix} \tilde{\mathbf{Q}}_r & \mathbf{0}\end{bmatrix}$ and $\mathbf{R} = \begin{bmatrix} \mathbf{R}_r^T & \mathbf{0} \end{bmatrix}^T$ if $m < n$. There are only $r = \mathrm{rank}(\mathbf{J})$ nonzero columns in $\tilde{\mathbf{Q}}$. Let $\{\mathbf{p}_1,\dots,\mathbf{p}_{n-r}\}\subset\mathbb{R}^n$ be an orthonormal basis of $\mathcal{N}(\mathbf{J})$. Construct $\mathbf{Q}\in\mathbb{R}^{n\times n}$ by replacing zero columns of $\tilde{\mathbf{Q}}$ with $\mathbf{p}_1,\dots,\mathbf{p}_{n-r}$. Since $\mathcal{R}(\tilde{\mathbf{Q}}) = \mathcal{R}(\mathbf{J}^T)\perp\mathcal{N}(\mathbf{J})$, $\mathbf{Q}$ is orthogonal and we get the full QR decomposition $\mathbf{J}^T = \mathbf{Q}\mathbf{R} = \hat{\mathbf{J}}_e^T\mathbf{C}_e^T$.
\end{proof}

\begin{lemma}
	\label{lem:bound_of_extended_damped_pseudoinverse}
	Let $\mathbf{A}\in\mathbb{R}^{m\times n}$, $l = \min\{m,n\}$, $r = \mathrm{rank}(\mathbf{A})$, $\mu,\nu\in[0,\infty)$, and
	\begin{equation*}
	\lambda^2 = \begin{dcases*}
	0, & $\mu = 0$ \\
	\frac{\mu^{2}}{|\mathbf{A}\mathbf{A}^T|^\nu}, & $\mu>0$, $|\mathbf{A}\mathbf{A}^T|^\nu > 0$ \\
	\infty, & $\mu>0$, $|\mathbf{A}\mathbf{A}^T|^\nu = 0$.
	\end{dcases*}
	\end{equation*}
	Then, $\|\mathbf{A}^{*(\lambda)}\| \le \min\{M_1,M_2\}$
	where
	\begin{align*}
	M_1 &= \begin{dcases*} 0, & $\mathbf{A} = \mathbf{0}$ \\ \frac{1}{\sigma_r}, & $\mathbf{A} \neq \mathbf{0}$, \end{dcases*}
	&M_2 &= \begin{dcases*} 
	\infty, & $\mu = 0$ \\
	\frac{1}{2\mu}\prod_{i=1}^l\sigma_i^\nu, & $\mu>0$,
	\end{dcases*}
	\end{align*}
	and $\sigma_1 \ge \cdots \ge \sigma_l\ge0$ are singular values of $\mathbf{A}$. 
	If $\mu>0$, then $\|\mathbf{A}^{*(\lambda)}\| \le \frac{1}{2\mu}\|\mathbf{A}\|^{\nu l}$.
\end{lemma}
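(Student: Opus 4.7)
The plan is to reduce everything to a singular value decomposition. Write $\mathbf{A} = \mathbf{U}\mathbf{\Sigma}\mathbf{V}^T$ with $\mathbf{\Sigma}\in\mathbb{R}^{m\times n}$ having diagonal entries $\sigma_1\ge\cdots\ge\sigma_l\ge0$. The pseudoinverse $\mathbf{A}^+$ has singular values equal to the reciprocals of the $r$ positive $\sigma_i$, so $\|\mathbf{A}^+\| = 1/\sigma_r$ when $\mathbf{A}\neq\mathbf{0}$ and $\mathbf{0}$ otherwise. This immediately disposes of the $\mu=0$ branch (where $\lambda=0$ and $M_2=\infty$) and gives the $M_1$ bound for free.

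For the principal case $\mu>0$ with $|\mathbf{A}\mathbf{A}^T|^\nu>0$, the damping constant $\lambda$ is finite and positive, so $\mathbf{A}\mathbf{A}^T+\lambda^2\mathbf{I}_m$ is positive definite and $(\cdot)^+ = (\cdot)^{-1}$. Substituting the SVD I would compute
\begin{equation*}
\mathbf{A}^{*(\lambda)} = \mathbf{V}\mathbf{\Sigma}^T(\mathbf{\Sigma}\mathbf{\Sigma}^T+\lambda^2\mathbf{I}_m)^{-1}\mathbf{U}^T,
\end{equation*}
whose nonzero singular values are $\sigma_i/(\sigma_i^2+\lambda^2)$ for $i=1,\dots,l$. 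Two elementary inequalities then finish this case: AM--GM $\sigma_i^2+\lambda^2\ge 2\sigma_i\lambda$ gives $\|\mathbf{A}^{*(\lambda)}\|\le 1/(2\lambda)$, while $\sigma_i/(\sigma_i^2+\lambda^2)\le 1/\sigma_i\le 1/\sigma_r$ for $\sigma_i>0$ keeps the $M_1$ bound valid here as well.

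The main bookkeeping step is to convert $1/(2\lambda) = |\mathbf{A}\mathbf{A}^T|^{\nu/2}/(2\mu)$ into the stated form $M_2 = \prod_{i=1}^l\sigma_i^\nu/(2\mu)$. Since the eigenvalues of $\mathbf{A}\mathbf{A}^T$ are $\sigma_1^2,\dots,\sigma_l^2$ padded with $m-l$ zeros, when $m\le n$ (so $l=m$) we have $|\mathbf{A}\mathbf{A}^T|^{\nu/2} = \prod_{i=1}^l\sigma_i^\nu$ directly; when $m>n$ we have $|\mathbf{A}\mathbf{A}^T|=0$, so the hypothesis $|\mathbf{A}\mathbf{A}^T|^\nu>0$ forces $\nu=0$ through the convention $0^0=1$, and then $\prod_{i=1}^l\sigma_i^\nu=1$ and $\lambda=\mu$, so $1/(2\lambda) = M_2$ matches as well. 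I expect this degenerate $m>n$ subcase to be the only subtle point.

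The remaining branch $\mu>0$ with $|\mathbf{A}\mathbf{A}^T|^\nu=0$ is trivial: $\lambda=\infty$ and $\mathbf{A}^{*(\infty)} = \mathbf{0}$, so both $M_1$ and $M_2$ (each nonnegative) dominate. Finally, the closing inequality when $\mu>0$ is immediate from monotonicity, $\prod_{i=1}^l\sigma_i^\nu \le \sigma_1^{\nu l} = \|\mathbf{A}\|^{\nu l}$, applied to the $M_2$ bound already established.
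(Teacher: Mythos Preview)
Your proposal is correct and follows essentially the same approach as the paper: both compute the singular values of $\mathbf{A}^{*(\lambda)}$ as $\sigma_i/(\sigma_i^2+\lambda^2)$ and bound them by $1/\sigma_r$ and $1/(2\lambda)$ via the same elementary inequalities, then unwind $1/(2\lambda)$ into $M_2$. Your case split on $|\mathbf{A}\mathbf{A}^T|^\nu>0$ versus $=0$ is a slight reorganization of the paper's split on $\nu=0$ versus $\nu>0$, and your explicit treatment of the $m>n$ subcase (forcing $\nu=0$ via $0^0=1$) makes precise a point the paper leaves implicit when it writes $|\mathbf{A}\mathbf{A}^T|=\prod_{i=1}^l\sigma_i^2$.
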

\begin{proof}
	Let $M = \min\{M_1,M_2\}$. If $\mathbf{A} = \mathbf{0}$, then $\mathbf{A}^{*(\lambda)} = \mathbf{0}$ and $\|\mathbf{A}^{*(\lambda)}\| = 0 = M_1 = M$. Let $\mathbf{A} \neq \mathbf{0}$. If $\mu = 0$, then $\mathbf{A}^{*(\lambda)} = \mathbf{A}^+$ and $\|\mathbf{A}^{*(\lambda)}\| = \sigma_{\max}(\mathbf{A}^+) = 1/\sigma_r = M_1 = M$. Let $\mu>0$. If $\nu = 0$, then $\mathbf{A}^{*(\lambda)} = \mathbf{A}^T(\mathbf{A}\mathbf{A}^T + \mu^2\mathbf{I}_m)^{-1}$ and $\|\mathbf{A}^{*(\lambda)}\| = \sigma_{\max}(\mathbf{A}^T(\mathbf{A}\mathbf{A}^T + \mu^2\mathbf{I}_m)^{-1}) = \max\{\sigma_1/(\sigma_1^2+\mu^2),\dots,\sigma_r/(\sigma_r^2+\mu^2)\}.$ Since $\sigma_i/(\sigma_i^2+\mu^2) < \sigma_i/\sigma_i^2 = 1/\sigma_i \le 1/\sigma_r = M_1$ and $\sigma_i/(\sigma_i^2+\mu^2) \le \sup\{\sigma/(\sigma^2+\mu^2)\mid\sigma\in(0,\infty)\} = 1/(2\mu) = M_2$ for $i\in\overline{1,r}$, we have $\|\mathbf{A}^{*(\lambda)}\| \le M$. Let $\nu>0$. If $|\mathbf{A}\mathbf{A}^T| = 0$, then $\mathbf{A}^{*(\lambda)} = \mathbf{0}$ and $\|\mathbf{A}^{*(\lambda)}\| = 0 = M_2 = M$. If $|\mathbf{A}\mathbf{A}^T| = \prod_{i=1}^l\sigma_i^2>0$, then $\lambda^2 = \mu^{2}/\prod_{i=1}^l\sigma_i^{2\nu}$, $\mathbf{A}^{*(\lambda)} = \mathbf{A}^T(\mathbf{A}\mathbf{A}^T + \lambda^2\mathbf{I}_m)^{-1}$, and $\|\mathbf{A}^{*(\lambda)}\| = \max\{\sigma_1/(\sigma_1^2+\lambda^2),\dots,\sigma_l/(\sigma_l^2+\lambda^2)\} \le \min\{1/\sigma_l,1/(2\lambda)\} = M$. If $\mu>0$, then $\|\mathbf{A}^{*(\lambda)}\| \le M_2 \le 1/(2\mu)\prod_{i=1}^l\sigma_1^\nu = 1/(2\mu)\|\mathbf{A}\|^{\nu l}$.
\end{proof}
\begin{lemma}
	\label{lem:smoothness_of_extended_damped_pseudoinverse}
	Let $\mathbf{x}_0\in\mathbb{R}^d$, $\mathbf{A}\in C_{\mathbf{x}_0}^\bullet(\mathbb{R}^d,\mathbb{R}^{m\times n})$, $\mu\in(0,\infty)$, and $\nu\in\mathbb{N}\cup\{0\}$. Define $\lambda:\mathbb{R}^n\to(0,\infty]$ as
	\begin{equation*}
	\lambda^2(\mathbf{x}) = \begin{dcases*}
	\frac{\mu^{2}}{|(\mathbf{A}\mathbf{A}^T)(\mathbf{x})|^\nu}, & $|(\mathbf{A}\mathbf{A}^T)(\mathbf{x})|^\nu > 0$ \\
	\infty, & $|(\mathbf{A}\mathbf{A}^T)(\mathbf{x})|^\nu = 0$.
	\end{dcases*}
	\end{equation*}
	Then, $\mathbf{A}^{*(\lambda)} \in C_{\mathbf{x}_0}^\bullet(\mathbb{R}^d,\mathbb{R}^{n\times m})$.
\end{lemma}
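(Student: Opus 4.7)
The plan is to avoid handling the singular case $|\mathbf{A}\mathbf{A}^T(\mathbf{x})|=0$ (where $\lambda=\infty$) separately by rewriting $\mathbf{A}^{*(\lambda)}$ in a form that uniformly covers both cases. Setting $d(\mathbf{x}) = |\mathbf{A}\mathbf{A}^T(\mathbf{x})|^\nu$, when $d(\mathbf{x})>0$ I would multiply inside the inverse by $d/d$ to obtain the identity
\[
\mathbf{A}^{*(\lambda)}(\mathbf{x}) = \mathbf{A}^T\!\bigl(\mathbf{A}\mathbf{A}^T + \tfrac{\mu^2}{d}\mathbf{I}_m\bigr)^{-1} = d(\mathbf{x})\,\mathbf{A}^T(\mathbf{x})\bigl(d(\mathbf{x})\mathbf{A}\mathbf{A}^T(\mathbf{x}) + \mu^2\mathbf{I}_m\bigr)^{-1},
\]
and when $d(\mathbf{x})=0$ both sides reduce to $\mathbf{0}$ (the prefactor $d$ on the right kills everything while $\mu^2\mathbf{I}_m$ keeps the inverse well-defined), so this single formula holds for every $\mathbf{x}\in\mathbb{R}^d$.

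With a uniform expression in hand, I would verify that each factor inherits $\bullet$-continuity at $\mathbf{x}_0$. Lemma \ref{lem:derivative_of_matrix_and_its_entries} gives $\mathbf{A}^T\in C^\bullet_{\mathbf{x}_0}$; since sums and products of $\bullet$-continuous matrix-valued maps are $\bullet$-continuous for every $\bullet\in\mathcal{I}$, the matrix $\mathbf{A}\mathbf{A}^T$ and the scalar $|\mathbf{A}\mathbf{A}^T|$ (a polynomial in the entries of $\mathbf{A}$) lie in $C^\bullet_{\mathbf{x}_0}$, and, using $\nu\in\mathbb{N}\cup\{0\}$, so does $d=|\mathbf{A}\mathbf{A}^T|^\nu$. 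Hence $d\mathbf{A}\mathbf{A}^T+\mu^2\mathbf{I}_m\in C^\bullet_{\mathbf{x}_0}$.

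The crucial observation is that $d(\mathbf{x})\mathbf{A}\mathbf{A}^T(\mathbf{x})$ is positive semidefinite, so the eigenvalues of $d\mathbf{A}\mathbf{A}^T+\mu^2\mathbf{I}_m$ are uniformly bounded below by $\mu^2>0$, making this matrix positive definite and of constant full rank $m$ on all of $\mathbb{R}^d$. That constant-rank condition is exactly the hypothesis of Lemma \ref{lem:ContinuityOfPseudoinverse-modified}, which then yields $(d\mathbf{A}\mathbf{A}^T+\mu^2\mathbf{I}_m)^{-1}\in C^\bullet_{\mathbf{x}_0}$. Multiplying the $\bullet$-continuous scalar $d$, the matrix $\mathbf{A}^T$, and this inverse gives the conclusion $\mathbf{A}^{*(\lambda)}\in C^\bullet_{\mathbf{x}_0}$.

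The main obstacle I anticipate is purely bookkeeping: the rewriting step, while algebraically trivial when $d>0$, needs to be justified at points where $d=0$, since there the original formula uses the convention $\mathbf{A}^{*(\lambda)}=\mathbf{0}$ rather than an inverse at all. Once that consistency is verified, the remainder reduces to closure of each $\bullet$-continuity class under polynomial combinations together with the single invocation of Lemma \ref{lem:ContinuityOfPseudoinverse-modified}.
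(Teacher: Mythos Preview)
Your proposal is correct and essentially identical to the paper's proof: the paper sets $f=|\mathbf{A}\mathbf{A}^T|^\nu$ and $\mathbf{G}=f\mathbf{A}\mathbf{A}^T+\mu^2\mathbf{I}_m$, observes $f,\mathbf{G}\in C_{\mathbf{x}_0}^\bullet$ via the polynomial form of the determinant, invokes Lemma~\ref{lem:ContinuityOfPseudoinverse-modified} using $\mathrm{rank}(\mathbf{G})\equiv m$, and concludes $\mathbf{A}^{*(\lambda)}=f\mathbf{A}^T\mathbf{G}^{-1}\in C_{\mathbf{x}_0}^\bullet$. Your treatment is slightly more explicit about why the single formula also covers the $d=0$ case, which the paper leaves implicit.
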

\begin{proof}
	Denote $f = |\mathbf{A}\mathbf{A}^T|^\nu$ and $\mathbf{G} = f\mathbf{A}\mathbf{A}^T + \mu^2\mathbf{I}_m$. We recall that the determinant of $(\mathbf{A}\mathbf{A}^T)(\mathbf{x}) = [m_{ij}]$ can be written as 
	\begin{equation*}
	|(\mathbf{A}\mathbf{A}^T)(\mathbf{x})| = \sum_{(j_1,\dots,j_m)}\prod_{\alpha<\beta}\mathrm{sgn}(j_\beta-j_\alpha)\prod_{i=1}^mm_{ij_i}
	\end{equation*}
	where $\mathrm{sgn}(x) = 1$ if $x\in(0,\infty)$, $\mathrm{sgn}(x) = -1$ if $x\in(-\infty,0)$, $\mathrm{sgn}(0) = 0$, and the sum extends over all ordered $m$-tuple of integers $(j_1,\dots,j_m)$ with $1\le j_\alpha \le m$. It is immediate that $f,\mathbf{G}\in C_{\mathbf{x}_0}^\bullet$. Since $\mathrm{rank}(\mathbf{G}(\mathbf{x})) = m$ for all $\mathbf{x}\in\mathbb{R}^d$, $\mathbf{G}^{-1}\in C_{\mathbf{x}_0}^\bullet$ by Lemma \ref{lem:ContinuityOfPseudoinverse-modified}. Therefore, $\mathbf{A}^{*(\lambda)} = f\mathbf{A}^T\mathbf{G}^{-1} \in C_{\mathbf{x}_0}^\bullet$.
\end{proof}

\section{Prioritized Inverse Kinematics}
\label{sec:prioritized_inverse_kinematics}

A preconditioned kinematic system with multiple tasks or a kinematic system for short is a 6-tuple $\mathsf{S} = (l,\mathbf{m},n,\mathbf{F},\mathbf{R},\mathbf{r})$ defined as follows:
\begin{itemize}
	\item $l\in\mathbb{N}\setminus\{1\}$ is the number of tasks;
	\item $\mathbf{m} = (m_1,\dots,m_l)\in\mathbb{N}^l$ where $m_a\in\mathbb{N}$ is the dimension of the $a$-th task space;
	\item $n\in\mathbb{N}$ is the dimension of the joint space and $m = m_1 + \cdots + m_l \le n$ is assumed;
	\item $\mathbf{x} = (t,\mathbf{q}) \in \mathbb{R}\times\mathbb{R}^n = X$ where $X$ is the domain of $\mathsf{S}$;
	\item $\mathbf{F} = \begin{bmatrix} \mathbf{f}_t & \mathbf{F}_q\end{bmatrix} = \begin{bmatrix} \mathbf{f}_{t1} & \mathbf{F}_{q1} \\ \vdots & \vdots \\ \mathbf{f}_{tl} & \mathbf{F}_{ql} \end{bmatrix} = \begin{bmatrix} \mathbf{F}_1 \\ \vdots \\ \mathbf{F}_l\end{bmatrix} : X\to\mathbb{R}^{m\times(n+1)}$ where $\mathbf{F}_a:X\to\mathbb{R}^{m_a\times(n+1)}$ is the $a$-th velocity mapping function with $\mathbf{f}_{ta}:X\to\mathbb{R}^{m_a}$ and $\mathbf{F}_{qa}:X\to\mathbb{R}^{m_a\times n}$ that maps the joint velocity $\dot{\mathbf{q}}$ into the $a$-th task velocity $\mathbf{f}_{ta} + \mathbf{F}_{qa}\dot{\mathbf{q}}$;
	\item $\mathbf{R}:X\to\{\mathbf{M}\in\mathbb{R}^{n\times n}\mid \det(\mathbf{M})\neq 0\}$ is the (right) preconditioner function;
	\item $\mathbf{r} = (\mathbf{r}_1,\dots,\mathbf{r}_l):X\to\mathbb{R}^m$ where $\mathbf{r}_a:X\to\mathbb{R}^{m_a}$ is the $a$-th reference function.
\end{itemize}
The assumption $m \le n$ is not that restrictive because if $m > n$, then we can redefine $\mathsf{S}$ to $\tilde{\mathsf{S}} = (l,\mathbf{m},\tilde{n},\tilde{\mathbf{F}},\tilde{\mathbf{R}},\tilde{\mathbf{r}})$ by introducing dummy variables or virtual joints $q_{n+1},\dots,q_m$ such that $\tilde{\mathbf{x}} = (t,q_1,\dots,q_n,q_{n+1},\dots,q_m)$, $\tilde{\mathbf{F}}(\tilde{\mathbf{x}}) = \begin{bmatrix} \mathbf{F}_t(\mathbf{x}) & \mathbf{F}_q(\mathbf{x}) & \mathbf{0} \end{bmatrix}$, $\tilde{\mathbf{R}}(\tilde{\mathbf{x}}) = \begin{bmatrix} \mathbf{R}(\mathbf{x}) & \mathbf{0} \\ \mathbf{0} & \mathbf{I}_{m-n} \end{bmatrix}$, and $\tilde{\mathbf{r}}(\tilde{\mathbf{x}}) = \mathbf{r}(\mathbf{x})$.
In a special case that there exists the $a$-th forward kinematic function $\mathbf{f}_a\in C^{1_p}(X,\mathbb{R}^{m_a})$ satisfying $\mathsf{D}\mathbf{f}_a = \begin{bmatrix} \frac{\partial\mathbf{f}_a}{\partial t} & \frac{\partial\mathbf{f}_a}{\partial\mathbf{q}}\end{bmatrix} = \begin{bmatrix} \mathsf{D}_t\mathbf{f}_a & \mathsf{D}_q\mathbf{f}_a \end{bmatrix} = \begin{bmatrix} \mathbf{f}_{ta} & \mathbf{F}_{qa}\end{bmatrix} = \mathbf{F}_a$, we can write the $a$-th task velocity as $\dot{\mathbf{f}}_a = \mathsf{D}_t\mathbf{f}_a + \mathsf{D}_q\mathbf{f}_a\dot{\mathbf{q}}$.
Let $\mathbf{r}_a' = \mathbf{r}_a - \mathbf{f}_{ta}$, $\mathbf{r}' = \mathbf{r} - \mathbf{f}_t$, $\mathbf{J}_a = \mathbf{F}_{qa}\mathbf{R}^{-1}$, and $\mathbf{J} = \mathbf{F}_q\mathbf{R}^{-1}$. 
$\mathbf{R}$ is introduced in consideration of the preconditioning of $\mathbf{F}_{q1},\dots,\mathbf{F}_{ql}$. 
A specific choice of $\mathbf{R}$ and its effect is discussed in \cite{An2014}. 
One may let $\mathbf{R} = \mathbf{I}_n$ to ignore this part. 
We say that $\mathsf{S}$ is $\bullet$-continuous if $\mathbf{F},\mathbf{R},\mathbf{r}\in C^\bullet$. 
We define $\mathbb{S}$ as the set of all kinematic systems and $\mathbb{S}^\bullet = \{\mathsf{S}\in\mathbb{S}\mid \textrm{$\mathsf{S}$ is $\bullet$-continuous}\}$. 

The $a$-th task of $\mathsf{S}$ is the 2-tuple $\mathsf{T}_a = (\mathbf{r}_a,\mathbf{F}_a)$ where $\mathbf{r}_a$ represents the desired behavior of the task velocity $\mathbf{f}_{ta} + \mathbf{F}_{qa}\dot{\mathbf{q}}$. 
Thus, the goal of $\mathsf{T}_a$ is to find the joint velocity $\dot{\mathbf{q}}^*$ that minimizes the $a$-th residual $\mathbf{e}_a^\mathrm{res} = \mathbf{r}_a - \mathbf{f}_{ta} - \mathbf{F}_{qa}\dot{\mathbf{q}} = \mathbf{r}_a' - \mathbf{J}_a\mathbf{R}\dot{\mathbf{q}}$ in some sense to be explained later.
$\mathrm{rank}(\mathbf{J}_a(\mathbf{x})) = \mathrm{rank}(\mathbf{F}_{qa}(\mathbf{x})) \le m_a$ is the maximum available DOF for $\mathsf{T}_a$ that is needed to achieve the goal of $\mathsf{T}_a$ at $\mathbf{x}$. 
In total, there are $\mathrm{rank}(\mathbf{J}(\mathbf{x})) = \mathrm{rank}(\mathbf{F}_q(\mathbf{x}))$ available DOF for $\mathsf{T}_1,\dots,\mathsf{T}_l$ and $\mathrm{rank}(\mathbf{J}(\mathbf{x})) \le \sum_{a=1}^l\mathrm{rank}(\mathbf{J}_a(\mathbf{x})) \le m$ by singularity. 
So, the available DOF is the limited common resource necessary for all tasks and we need a strategy how to distribute it. 
We assign priority to tasks $\mathsf{T}_1,\dots,\mathsf{T}_l$ to make prioritized tasks $\mathsf{T}_1\prec\cdots\prec\mathsf{T}_l$ by demanding the next two properties:
\begin{description}
	\item[(P1)] $\mathsf{T}_a$ does not influence $\mathsf{T}_1,\dots,\mathsf{T}_{a-1}$;
	\item[(P2)] $\mathsf{T}_a$ uses the maximum available DOF needed to achieve the goal of $\mathsf{T}_a$ under (P1).
\end{description}
(P2) claims that doing nothing or unnecessary things does not preserve priority. 
The goal of $\mathsf{T}_1\prec\cdots\prec\mathsf{T}_l$ is to find $\dot{\mathbf{q}}^*$ that minimizes $\mathbf{e}_a^\mathrm{res}$ for $a\in\overline{1,l}$ in some sense under (P1) and (P2).  
Then, the PIK problem can be considered as a problem to find a control law that regulates multiple prioritized outputs of the dynamical system
\begin{align*}
	\dot{\mathbf{q}} &= \mathbf{u} \\
	\mathbf{e}_a^\mathrm{res} &= \mathbf{r}_a'(t,\mathbf{q}) - \mathbf{J}_a(t,\mathbf{q})\mathbf{R}(t,\mathbf{q})\mathbf{u},\quad a\in\overline{1,l} \\
	\mathsf{T}_1 &\prec \cdots \prec \mathsf{T}_l
\end{align*}
where $\mathbf{q}\in\mathbb{R}^n$ is the state, $\mathbf{u}\in\mathbb{R}^n$ is the control input, and $\mathsf{T}_1\prec\cdots\prec\mathsf{T}_l$ represents the priority relations between multiple outputs $\mathbf{e}_a^\mathrm{res}\in\mathbb{R}^{m_a}$ for $a\in\overline{1,l}$ in this case.

We define equivalence relations on $\mathbb{S}$ and $\mathbb{S}^\bullet$. 
Let $\mathsf{S} = (l,\mathbf{m},n,\mathbf{F},\mathbf{R},\mathbf{r})\in\mathbb{S}$. 
$\mathbf{F}$ is defined by a mechanism and an environment, $\mathbf{R}$ is constructed from $\mathbf{F}_q$, and $\mathbf{r}$ is designed by a scenario. 
Usually, multiple scenarios are applied for a mechanism in an environment, so we need to consider various $\mathbf{r}$ given $\mathbf{F}$ and $\mathbf{R}$. 
We say that $\tilde{\mathsf{S}} = (\tilde{l},\tilde{\mathbf{m}},\tilde{n},\tilde{\mathbf{F}},\tilde{\mathbf{R}},\tilde{\mathbf{r}})\in\mathbb{S}$ is equivalent to $\mathsf{S}$ on $\mathbb{S}$ and denote $\mathsf{S}\sim\tilde{\mathsf{S}}$ if $(l,\mathbf{m},n,\mathbf{F},\mathbf{R}) = (\tilde{l},\tilde{\mathbf{m}},\tilde{n},\tilde{\mathbf{F}},\tilde{\mathbf{R}})$. 
The equivalence class of $\mathsf{S}$ in $\mathbb{S}$ is denoted as $[\mathsf{S}] = \{\tilde{\mathsf{S}}\in\mathbb{S}\mid\mathsf{S}\sim\tilde{\mathsf{S}}\}$. 
The equivalence relation $\stackrel{\bullet}{\sim}$ on $\mathbb{S}^\bullet$ and the equivalence class $[\mathsf{S}]^\bullet$ of $\mathsf{S}$ in $\mathbb{S}^\bullet$ are defined similarly.
Note that $[\mathsf{S}]^\bullet\subset[\mathsf{S}]$ for $\mathsf{S}\in\mathbb{S}^\bullet$. 
Let $\mathbb{S}_e\subset\mathbb{S}$ be an equivalence class in $\mathbb{S}$ and $\mathsf{S}\in\mathbb{S}_e$. 
Obviously, $\mathbb{S}_e = [\mathsf{S}]$. 
So, we write $\mathsf{S}\in[\mathsf{S}]\subset\mathbb{S}$ to denote an arbitrary equivalence class $[\mathsf{S}]$ in $\mathbb{S}$ and an arbitrary kinematic system $\mathsf{S}$ of $[\mathsf{S}]$. 
$\mathsf{S}\in[\mathsf{S}]^\bullet\subset\mathbb{S}^\bullet$ has the similar meaning. 
Every member of $[\mathsf{S}]$ or $[\mathsf{S}]^\bullet$ shares the same $l$, $\mathbf{m}$, $n$, $\mathbf{F}$, $\mathbf{R}$, and $\mathbf{J}$. 
We orthogonalize rows of $\mathbf{J}$ by performing the full QR decomposition of $\mathbf{J}^T(\mathbf{x})$ at each $\mathbf{x}\in X$ as in Lemma \ref{lem:orthogonalization_of_J}
\begin{equation*}
\underbrace{\begin{bmatrix} \mathbf{J}_1 \\ \vdots \\ \mathbf{J}_l \end{bmatrix}}_{\mathbf{J}(\mathbf{x})\in\mathbb{R}^{m\times n}} = \underbrace{\begin{bmatrix} \mathbf{C}_{11} & \cdots & \mathbf{0} & \mathbf{0} \\ \vdots & \ddots & \vdots & \vdots \\ \mathbf{C}_{l1} & \cdots & \mathbf{C}_{ll} & \mathbf{0} \end{bmatrix}}_{\mathbf{C}_e(\mathbf{x}) = [\mathbf{C}_{ij}(\mathbf{x})] \in \mathbb{R}^{m\times n}} \underbrace{\begin{bmatrix} \hat{\mathbf{J}}_1 \\ \vdots \\ \hat{\mathbf{J}}_{l+1} \end{bmatrix}}_{\hat{\mathbf{J}}_e(\mathbf{x})\in\mathbb{R}^{n\times n}}.
\end{equation*}
Define orthogonal-projector-valued functions $\mathbf{P}_a:X\to\mathbb{R}^{n\times n}$ for $a\in\overline{1,l}$ as
\begin{equation*}
\mathbf{P}_a(\mathbf{x}) = ((\mathbf{C}_{aa}\hat{\mathbf{J}}_a)^+(\mathbf{C}_{aa}\hat{\mathbf{J}}_a))(\mathbf{x}) = (\hat{\mathbf{J}}_a^T\mathbf{C}_{aa}^+\mathbf{C}_{aa}\hat{\mathbf{J}}_a)(\mathbf{x}).
\end{equation*}
Then, $\mathbf{C}_{ab}\hat{\mathbf{J}}_b = \mathbf{J}_a\mathbf{P}_b$ by Lemma \ref{lem:orthogonalization_of_J} and the $a$-th residual can be written as
\begin{equation*}
\mathbf{e}_a^\mathrm{res} = \mathbf{r}_a' - \mathbf{J}_a\mathbf{R}\dot{\mathbf{q}} = \mathbf{r}_a' - \mathbf{J}_a\sum_{b=1}^a\mathbf{P}_b\mathbf{R}\dot{\mathbf{q}}.
\end{equation*}

We may represent a goal of a task of a kinematic system as an optimization problem. 
Since we are considering various references given $\mathbf{F}$ and $\mathbf{R}$, the optimization problem should be defined for all equivalent kinematic systems. 
Let $\pi_1,\dots,\pi_l:X\times\mathbb{R}^n\times[\mathsf{S}]\to[0,\infty]$ be objective functions that describe how to minimize $\mathbf{e}_1^\mathrm{res},\dots,\mathbf{e}_l^\mathrm{res}$ for each $\mathbf{x}\in X$ and $\mathsf{S}\in[\mathsf{S}]$ such that the goal of $\mathsf{T}_a$ at $\mathbf{x}$ is to find $\dot{\mathbf{q}}^*$ that minimizes $\pi_a(\mathbf{x},\mathbf{R}(\mathbf{x})\dot{\mathbf{q}},\mathsf{S})$ with respect to $\dot{\mathbf{q}}$. 
Let $\mathbf{y} = \mathbf{R}(\mathbf{x})\dot{\mathbf{q}}$. 
Since $\mathbf{R}$ is invertible everywhere, to find such $\dot{\mathbf{q}}^*$ is equivalent to find $\mathbf{y}^* = \mathbf{R}(\mathbf{x})\dot{\mathbf{q}}^*$ that minimizes $\pi_a(\mathbf{x},\mathbf{y},\mathsf{S})$ with respect to $\mathbf{y}$. 
Then, the goal of $\mathsf{T}_1\prec\cdots\prec\mathsf{T}_l$ at $\mathbf{x}$ is to find $\mathbf{y}^*$ that minimizes $\pi_a(\mathbf{x},\mathbf{y},\mathsf{S})$ with respect to $\mathbf{y}$ for $a\in\overline{1,l}$ under (P1) and (P2). 
Not every objective function is proper in the context of the PIK.
For example, if $\pi_a(\mathbf{x},\mathbf{y},\mathsf{S}) = \|\mathbf{y}\|$ for all $a\in\overline{1,l}$, then we have a trivial solution $\dot{\mathbf{q}}^* = \mathbf{R}^{-1}(\mathbf{x})\mathbf{y}^* = \mathbf{0}$ for all $(\mathbf{x},\mathsf{S})$ that is not appropriate.
An and Lee \cite{An2019} proposed three properties for a vector-valued objective function $\boldsymbol{\pi} = (\pi_1,\dots,\pi_l):X\times\mathbb{R}^n\times[\mathsf{S}]\to[0,\infty]^l$ to be proper for the PIK problem:
\begin{description}
	\item[(O1)] $\forall(a,\mathbf{x},\mathbf{y},\mathsf{S})\in\overline{1,l}\times X\times\mathbb{R}^n\times[\mathsf{S}]$, $\pi_a(\mathbf{x},\mathbf{y},\mathsf{S}) = \pi_a(\mathbf{x},\sum_{b=1}^a\mathbf{P}_b(\mathbf{x})\mathbf{y},\mathbf{r}_a(\mathbf{x}))$;
	\item[(O2)] $\forall(a,\mathbf{x},\mathsf{S})\in\overline{1,l}\times X\times[\mathsf{S}]$, there exists a unique minimizer $\mathbf{y}_a^*$ of $\pi_a(\mathbf{x},\sum_{b=1}^{a-1}\mathbf{y}_b^* + \mathbf{y},\mathbf{r}_a(\mathbf{x}))$ subject to $\mathbf{y}\in\mathcal{R}(\mathbf{P}_a(\mathbf{x}))$;
	\item[(O3)] $\forall(a,\mathbf{x})\in\overline{1,l}\times X$, the mapping $\mathbf{r}_a(\mathbf{x})\mapsto\mathbf{y}_a^*$ of $\mathcal{R}((\mathbf{C}_{aa}\hat{\mathbf{J}}_a)(\mathbf{x}))$ into $\mathcal{R}(\mathbf{P}_a(\mathbf{x}))$ is one-to-one and onto.
\end{description}
We say that $\boldsymbol{\pi}$ is {\it strongly proper} for $[\mathsf{S}]$ if $\boldsymbol{\pi}$ has properties (O1) to (O3); {\it weakly proper} for $[\mathsf{S}]$ if $\boldsymbol{\pi}$ has properties (O1) and (O2) only; and {\it proper} for $[\mathsf{S}]$ if it is either strongly proper or weakly proper.
We also say that $\boldsymbol{\pi}$ is (strongly or weakly) proper for $[\mathsf{S}]^\bullet$ if the domain of $\boldsymbol{\pi}$ is restricted to $X\times\mathbb{R}^n\times[\mathsf{S}]^\bullet$.

The minimization of a proper objective function under (P1) and (P2) can be written as the multi-objective optimization with the lexicographical ordering \cite{An2019}. 
Consider multiple objective functions $\phi_a:\mathbb{R}^n\to[0,\infty]$ for $a\in\overline{1,l}$ and a constraint set $\Omega\subset\mathbb{R}^n$. 
The problem
\begin{equation*}
\lexmin_{\mathbf{y}\in\Omega}(\phi_1(\mathbf{y}),\dots,\phi_l(\mathbf{y}))
\end{equation*}
is to find an optimal solution $\mathbf{y}^*\in\Omega$ satisfying
\begin{align*}
\phi_a(\mathbf{y}^*) = \min\{&\phi_a(\mathbf{y})\mid\mathbf{y}\in\Omega \textrm{ and } \\
&\phi_b(\mathbf{y}) = \phi_b(\mathbf{y}^*) \textrm{ for } b \in\overline{1,a-1}\}
\end{align*}
for $a\in\overline{1,l}$. 
We say that a map $\mathbf{u}:X\times[\mathsf{S}]\to\mathbb{R}^n$ is a strongly-prioritized / weakly-prioritized / prioritized inverse kinematics (SPIK / WPIK / PIK) solution of $[\mathsf{S}]$ if there exists a strongly-proper / weakly-proper / proper objective function $\boldsymbol{\pi}$ for $[\mathsf{S}]$ satisfying
\begin{align*}
\mathbf{u}(\mathbf{x},\mathsf{S}) &= \mathbf{R}^{-1}(\mathbf{x})\mathbf{v}(\mathbf{x},\mathsf{S}) \\
\mathbf{v}(\mathbf{x},\mathsf{S}) &= \arglexmin_{\mathbf{y}\in\mathbb{R}^n}(\pi_1(\mathbf{x},\mathbf{y},\mathsf{S}),\cdots,\pi_l(\mathbf{x},\mathbf{y},\mathsf{S}),\|\mathbf{y}\|^2/2)
\end{align*}
for every $(\mathbf{x},\mathsf{S})$. 
We also say that $\mathbf{u}$ is a SPIK or WPIK or PIK solution of $[\mathsf{S}]^\bullet$ if the domain of $\mathbf{u}$ is restricted to $X\times[\mathsf{S}]^\bullet$.
The $\boldsymbol{\pi}$-PIK solution of $[\mathsf{S}]$ or $[\mathsf{S}]^\bullet$ is the PIK solution determined by the proper objective function $\boldsymbol{\pi}$ for $[\mathsf{S}]$ or $[\mathsf{S}]^\bullet$.
In this paper, we study a class of PIK solutions of $[\mathsf{S}]$ that can be written as
\begin{equation}
	\label{eqn:class_of_pik_solutions}
	\mathbf{u} = \mathbf{R}^{-1}\hat{\mathbf{J}}^T\mathbf{C}_D^T\underbrace{\begin{bmatrix} \mathbf{L}_{11} & \mathbf{0} & \cdots & \mathbf{0} \\ \mathbf{L}_{21} & \mathbf{L}_{22} & \cdots & \mathbf{0} \\ \vdots & \vdots & \ddots & \vdots \\ \mathbf{L}_{l1} & \mathbf{L}_{l2} & \cdots & \mathbf{L}_{ll} \end{bmatrix}}_{\mathbf{L} = [\mathbf{L}_{ij}]:X\to\mathbb{R}^{m\times m}}\mathbf{r}'
\end{equation}
where $\hat{\mathbf{J}}:X\to\mathbb{R}^{m\times n}$ is the top ($m\times n$) block of $\hat{\mathbf{J}}_e$, $\mathbf{C}_D = \mathrm{diag}(\mathbf{C}_{11},\dots,\mathbf{C}_{ll}):X\to\mathbb{R}^{m\times m}$ is block diagonal whose diagonal blocks are $\mathbf{C}_{11},\dots,\mathbf{C}_{ll}$ starting from the top left corner, and $\mathbf{L} = [\mathbf{L}_{ij}]:X\to\mathbb{R}^{m\times m}$ is block lower triangular with $\mathbf{L}_{ab}:X\to\mathbb{R}^{m_a\times m_b}$.

We introduce four PIK solutions of this type.
Define objective functions $\boldsymbol{\pi}_\alpha = (\pi_{\alpha1},\dots,\pi_{\alpha l})$ for $\alpha\in\overline{1,4}$ as:
\begin{align*}
	\pi_{1a}(\mathbf{x},\mathbf{y},\mathsf{S}) &= \|\mathbf{r}_a'(\mathbf{x}) - \mathbf{J}_a(\mathbf{x})\mathbf{y}\|^2 + \lambda_a^2(\mathbf{x})\|\mathbf{P}_a(\mathbf{x})\mathbf{y}\|^2 \\
	\pi_{2a}(\mathbf{x},\mathbf{y},\mathsf{S}) &= \|\mathbf{J}_a^*(\mathbf{x})\mathbf{r}_a'(\mathbf{x}) - \mathbf{P}_a(\mathbf{x})\mathbf{y}\|^2 \\
	\pi_{3a}(\mathbf{x},\mathbf{y},\mathsf{S}) &= \|\mathbf{r}_a'(\mathbf{x}) - (\mathbf{C}_{aa}\hat{\mathbf{J}}_a)(\mathbf{x})\mathbf{y}\|^2 + \lambda_a^2(\mathbf{x})\|\mathbf{P}_a(\mathbf{x})\mathbf{y}\|^2 \\
	\pi_{4a}(\mathbf{x},\mathbf{y},\mathsf{S}) &= \|\mathbf{J}_a^T(\mathbf{x})\mathbf{r}_a'(\mathbf{x}) - \mathbf{P}_a(\mathbf{x})\mathbf{y}\|^2
\end{align*}
where the damping functions $\lambda_a:X\to[0,\infty]$ for $a\in\overline{1,l}$ are arbitrary and $\mathbf{J}_a^*(\mathbf{x})$ is the damped pseudoinverse of $\mathbf{J}_a(\mathbf{x})$ with the damping constant $\lambda_a(\mathbf{x})$ at $\mathbf{x}$. 
A choice of the damping functions could be
\begin{equation}
\label{eqn:damping_function}
\lambda_a^2(\mathbf{x}) = \begin{dcases*} 
0, & $\mu_a = 0$ \\
\frac{\mu_a^2}{|\mathbf{M}_a(\mathbf{x})|^\nu}, & $\mu_a>0,\,|\mathbf{M}_a(\mathbf{x})|^\nu > 0$ \\
\infty, & $\mu_a>0,\,|\mathbf{M}_a(\mathbf{x})|^\nu = 0$
\end{dcases*}
\end{equation}
where $\mathbf{M}_a = \mathbf{C}_{aa}\mathbf{C}_{aa}^T$ and $\mu_1,\dots,\mu_l,\nu\in[0,\infty)$.
It is not difficult to show that all objective functions satisfy (O1) and (O2) and $\boldsymbol{\pi}_4$ satisfies additionally (O3) by following the procedure shown in \cite{An2019}.
If $\lambda_a(\mathbf{x})\in[0,\infty)$ for all $(a,\mathbf{x})$, then $\boldsymbol{\pi}_\alpha$ for $\alpha\in\overline{1,3}$ also satisfies (O3).
Therefore, $\boldsymbol{\pi}_\alpha$ is proper for $[\mathsf{S}]$.
Let $\mathbf{C}:X\to\mathbb{R}^{m\times m}$ be the left ($m\times m$) block of $\mathbf{C}_e$ and $\mathbf{C}_L = \mathbf{C} - \mathbf{C}_D$.
Define $\mathbf{D}_a,\mathbf{H}_a:X\to\mathbb{R}^{m_a\times m_a}$ as:
\begin{align*}
	\mathbf{D}_a(\mathbf{x}) &= \begin{dcases*} 
			(\mathbf{C}_{aa}\mathbf{C}_{aa}^T + \lambda_a^2\mathbf{I}_{m_a})^+(\mathbf{x}), & $\lambda_a^2(\mathbf{x}) \in [0,\infty)$ \\
			\mathbf{0}, & $\lambda_a^2(\mathbf{x}) = \infty$
		\end{dcases*} \\
	\mathbf{H}_a(\mathbf{x}) &= \begin{dcases*} 
			(\mathbf{J}_a\mathbf{J}_a^T + \lambda_a^2\mathbf{I}_{m_a})^+(\mathbf{x}), & $\lambda_a^2(\mathbf{x}) \in [0,\infty)$ \\
			\mathbf{0}, & $\lambda_a^2(\mathbf{x}) = \infty$
		\end{dcases*}
\end{align*}
and let $\mathbf{D} = \mathrm{diag}(\mathbf{D}_1,\dots,\mathbf{D}_l)$ and $\mathbf{H} = \mathrm{diag}(\mathbf{H}_1,\dots,\mathbf{H}_l)$.
One can easily check that the damped pseudoinverse of $\mathbf{C}_{aa}(\mathbf{x})$ with the damping constant $\lambda_a(\mathbf{x})$ can be written as $\mathbf{C}_{aa}^* = \mathbf{C}_{aa}^T\mathbf{D}_a$.
Let $\mathbf{C}_D^\circledast = \mathrm{diag}(\mathbf{C}_{11}^*,\dots,\mathbf{C}_{ll}^*)$.
Then, we can formulate the $\boldsymbol{\pi}_\alpha$-PIK solution of $[\mathsf{S}]$ in the form of \eqref{eqn:class_of_pik_solutions} with
\begin{equation*}
	\mathbf{L} = \begin{dcases*}
		\mathbf{D}(\mathbf{I}_m + \mathbf{C}_L\mathbf{C}_D^\circledast)^{-1}, & $\alpha = 1$ \\
		\mathbf{H}, & $\alpha = 2$ \\
		\mathbf{D}, & $\alpha = 3$ \\
		\mathbf{I}_m, & $\alpha = 4$.
	\end{dcases*}
\end{equation*}

\section{Nonsmoothness of PIK Solutions}
\label{sec:nonsmoothness_of_pik_solutions}

Orthogonalization plays an important role when we derive PIK solutions, so nonsmoothness in the orthogonalization process is directly related to nonsmoothness of PIK solutions. 
Primitive questions are in what condition orthogonalization becomes nonsmooth and when nonsmooth orthogonalization induces nonsmooth PIK solutions. 
We discuss about them by defining \textit{purely nonsmooth orthogonal projector} and \textit{smooth minimum basis subset}.

\begin{definition}[Purely Nonsmooth Orthogonal Projector]
	Let $\mathrm{OP} = \{ \mathbf{P}:X\to\mathbb{R}^{n\times n}\mid \mathbf{P}(\mathbf{x}) = \mathbf{P}^T(\mathbf{x}) = \mathbf{P}^2(\mathbf{x}),\,\forall\mathbf{x}\in X\}$. 
	We say $\mathbf{P}\in\mathrm{OP}$ is $\bullet$-discontinuous at $\mathbf{x}_0$ \textit{purely} or $\mathbf{P}\notin C_{\mathbf{x}_0}^\bullet$ \textit{purely} or $\mathbf{P}\in\mathrm{OP}\setminus C_{\mathbf{x}_0}^\bullet$ \textit{purely} if $\mathbf{P}\not\in C_{\mathbf{x}_0}^\bullet$ and it cannot be written as $\mathbf{P} = (\mathbf{P}) + [\mathbf{P}]$ for every $(\mathbf{P})\in\mathrm{OP}\cap C_{\mathbf{x}_0}^\bullet$ and $[\mathbf{P}]\in\mathrm{OP}\setminus C_{\mathbf{x}_0}^\bullet$ satisfying $(\mathbf{P})[\mathbf{P}] = \mathbf{0}$ and $(\mathbf{P})(\mathbf{x}_0) \neq \mathbf{0}$. 
\end{definition}

For $\mathbf{P}\in\mathrm{OP}$, we can consider a decomposition, which we call \textit{$\bullet$-discontinuity decomposition}, $\mathbf{P} = (\mathbf{P})_{\mathbf{x}_0}^\bullet + [\mathbf{P}]_{\mathbf{x}_0}^\bullet$ where $(\mathbf{P})_{\mathbf{x}_0}^\bullet \in \mathrm{OP}\cap C_{\mathbf{x}_0}^\bullet$; $[\mathbf{P}]_{\mathbf{x}_0}^\bullet = \mathbf{0}$ or $[\mathbf{P}]_{\mathbf{x}_0}^\bullet\in\mathrm{OP}\setminus C_{\mathbf{x}_0}^\bullet$ purely; and $(\mathbf{P})_{\mathbf{x}_0}^\bullet[\mathbf{P}]_{\mathbf{x}_0}^\bullet = \mathbf{0}$. 
The discontinuity decomposition always exists. 
If $\mathbf{P}\in C_{\mathbf{x}_0}^\bullet$, then we can choose $(\mathbf{P})_{\mathbf{x}_0}^\bullet = \mathbf{P}$ and $[\mathbf{P}]_{\mathbf{x}_0}^\bullet = \mathbf{0}$. 
If $\mathbf{P}\notin C_{\mathbf{x}_0}^\bullet$ purely, then we can choose $(\mathbf{P})_{\mathbf{x}_0}^\bullet = \mathbf{0}$ and $[\mathbf{P}]_{\mathbf{x}_0}^\bullet = \mathbf{P}$. 
If $\mathbf{P}\notin C_{\mathbf{x}_0}^\bullet$ not purely, then $\mathbf{P} = (\mathbf{P})_1 + [\mathbf{P}]_1 = (\mathbf{P})_1 + (\mathbf{P})_2 + [\mathbf{P}]_2 = \cdots = \sum_{i=1}^j(\mathbf{P})_i + [\mathbf{P}]_j$, $j\le n$, by definition until we find $[\mathbf{P}]_{\mathbf{x}_0}^\bullet = [\mathbf{P}]_a \notin C_{\mathbf{x}_0}^\bullet$ purely because $\mathbf{P}\notin C_{\mathbf{x}_0}^\bullet$, $\mathrm{rank}((\mathbf{P})_i(\mathbf{x}_0))\ge1$, and $\mathrm{rank}(\mathbf{P}(\mathbf{x}_0)) = \sum_{i=1}^j\mathrm{rank}((\mathbf{P})_i(\mathbf{x}_0)) + \mathrm{rank}([\mathbf{P}]_a(\mathbf{x}_0)) \le n$ by $(\mathbf{P})_i[\mathbf{P}]_i = \mathbf{0}$. 
An obvious property of the discontinuity dicomposition is that $(\mathbf{P})_{\mathbf{x}_0}^\bullet$ has a local constant rank at $\mathbf{x}_0$ because $\|(\mathbf{P})_{\mathbf{x}_0}^\bullet(\mathbf{x}) - (\mathbf{P})_{\mathbf{x}_0}^\bullet(\mathbf{x}_0)\|_F \ge |\|(\mathbf{P})_{\mathbf{x}_0}^\bullet(\mathbf{x})\|_F - \|(\mathbf{P})_{\mathbf{x}_0}^\bullet(\mathbf{x}_0)\|_F| = |(\mathrm{rank}((\mathbf{P})_{\mathbf{x}_0}^\bullet(\mathbf{x})))^{1/2} - (\mathrm{rank}((\mathbf{P})_{\mathbf{x}_0}^\bullet(\mathbf{x}_0)))^{1/2}|$.

Orthogonalization of rows of $\mathbf{J}\in C^\bullet(X,\mathbb{R}^{m\times n})$ with $m\le n$ by Lemma \ref{lem:orthogonalization_of_J} can be written elementwise as
\begin{equation*}
	\underbrace{\begin{bmatrix} \mathbf{j}_1 \\ \vdots \\ \mathbf{j}_m \end{bmatrix}}_{\mathbf{J}: X\to\mathbb{R}^{m\times n}} = \underbrace{\begin{bmatrix} c_{11} & \cdots & 0 & \cdots & 0 \\ \vdots & \ddots & \vdots & \ddots & \vdots \\ c_{m1} & \cdots & c_{mm} & \cdots & 0 \end{bmatrix}}_{\mathbf{C}_e = [c_{ij}]: X\to\mathbb{R}^{m\times n}}\underbrace{\begin{bmatrix} \hat{\mathbf{j}}_1 \\ \vdots \\ \hat{\mathbf{j}}_n\end{bmatrix}}_{\hat{\mathbf{J}}_e: X\to\mathbb{R}^{n\times n}}.
\end{equation*}
For each $\mathbf{x}\in X$, $\{\hat{\mathbf{j}}_1(\mathbf{x}),\dots,\hat{\mathbf{j}}_n(\mathbf{x})\}$ is an orthonormal basis of $\mathbb{R}^{1\times n}$ in which only $\mathrm{rank}(\mathbf{J}(\mathbf{x}))$ orthonormal vectors can be uniquely determined. 
Let $\mathcal{A} = \{\mathbf{j}_1,\dots,\mathbf{j}_m\}$ and $\mathcal{B} = \{\hat{\mathbf{j}}_1,\dots,\hat{\mathbf{j}}_n\}$. 
Note that $\mathcal{A}$ and $\mathcal{B}$ are sets of vector-valued functions, $|\mathcal{A}| \le m$, and $|\mathcal{B}| = n$. 
We define $\mathbf{J}_{a:b} = \begin{bmatrix} \mathbf{j}_a^T & \cdots & \mathbf{j}_b^T \end{bmatrix}^T$, $\mathcal{A}_{a:b} = \{\mathbf{j}_a,\dots,\mathbf{j}_b\}$, and $\mathbf{C}_{a:a',b:b'}$ as the block of $\mathbf{C}_e = [c_{ij}]$ with the top left entry $c_{ab}$ and the bottom right entry $c_{a'b'}$. 
$\hat{\mathbf{J}}_{a:b}$ and $\mathcal{B}_{a:b}$ are defined similarly. 
Define an orthogonal-projector-valued function $\mathbf{P}:2^\mathcal{B}\times X\to\mathbb{R}^{n\times n}$ and a set-valued function $\mathcal{F}:2^\mathcal{B}\times X\to2^\mathcal{B}$ as:
\begin{align*}
	\mathbf{P}(\mathcal{S},\mathbf{x}) &= \mathbf{P}_\mathcal{S}(\mathbf{x}) = \begin{dcases*} \mathbf{0}, & $\mathcal{S}=\emptyset$ \\ \sum_{\hat{\mathbf{j}}\in\mathcal{S}}\hat{\mathbf{j}}^T(\mathbf{x})\hat{\mathbf{j}}(\mathbf{x}), & $\mathcal{S}\neq\emptyset$ \end{dcases*} \\
	\mathcal{F}(\mathcal{S},\mathbf{x}) &= \mathcal{F}_\mathcal{S}(\mathbf{x}) = \begin{dcases*} \emptyset & $\mathcal{S} = \emptyset$ \\ \{\hat{\mathbf{j}}_a \in \mathcal{S}\cap\mathcal{B}_{1:m} \mid c_{aa}(\mathbf{x}) \neq 0 \} & $\mathcal{S} \neq \emptyset$ \end{dcases*}
\end{align*}
where $\mathbf{P}_\mathcal{B} = \mathbf{I}_n$. 
We can easily check that if $\mathcal{S},\mathcal{T}\subset\mathcal{B}$, then $\mathbf{P}_{\mathcal{S}\cap \mathcal{T}} = \mathbf{P}_\mathcal{S}\mathbf{P}_\mathcal{T} = \mathbf{P}_\mathcal{T}\mathbf{P}_\mathcal{S}$, $\mathbf{P}_{\mathcal{S}\cup \mathcal{T}} = \mathbf{P}_{\mathcal{S}\setminus \mathcal{T}} + \mathbf{P}_{\mathcal{T}\setminus \mathcal{S}} + \mathbf{P}_{\mathcal{S}\cap \mathcal{T}}$, and $\mathbf{P}_{\mathcal{S}\setminus\mathcal{T}} = \mathbf{P}_\mathcal{S} - \mathbf{P}_{\mathcal{S}\cap\mathcal{T}}$.
\begin{lemma}
	\label{lem:functions_property}
	For $a,b\in\overline{1,m}$ with $a\le b$, $\mathbf{P}_{\mathcal{F}(\mathcal{B}_{a:b})} = (\mathbf{C}_{a:b,a:b}\hat{\mathbf{J}}_{a:b})^+(\mathbf{C}_{a:b,a:b}\hat{\mathbf{J}}_{a:b})$.
\end{lemma}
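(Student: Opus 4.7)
The plan is to fix an arbitrary $\mathbf{x}\in X$ and compare both sides as orthogonal projectors onto the same subspace of $\mathbb{R}^n$. Let $K = \{k\in\overline{a,b}\mid c_{kk}(\mathbf{x})\neq 0\}$ so that, by definition, $\mathcal{F}_{\mathcal{B}_{a:b}}(\mathbf{x}) = \{\hat{\mathbf{j}}_k(\mathbf{x})\mid k\in K\}$ and hence $\mathbf{P}_{\mathcal{F}(\mathcal{B}_{a:b})}(\mathbf{x}) = \sum_{k\in K}\hat{\mathbf{j}}_k^T(\mathbf{x})\hat{\mathbf{j}}_k(\mathbf{x})$. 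On the other side, the Moore--Penrose identity tells us that $\mathbf{M}^+\mathbf{M}$ is the orthogonal projector onto $\mathcal{R}(\mathbf{M}^T)$ for $\mathbf{M} = (\mathbf{C}_{a:b,a:b}\hat{\mathbf{J}}_{a:b})(\mathbf{x})$, so it suffices to prove $\mathcal{R}(\mathbf{M}^T) = \mathrm{span}\{\hat{\mathbf{j}}_k^T(\mathbf{x})\mid k\in K\}$ and then recognize the right hand side projector as $\sum_{k\in K}\hat{\mathbf{j}}_k^T(\mathbf{x})\hat{\mathbf{j}}_k(\mathbf{x})$ because $\{\hat{\mathbf{j}}_k(\mathbf{x})\}_{k=1}^n$ is orthonormal.

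For the inclusion $\mathcal{R}(\mathbf{M}^T)\subset\mathrm{span}\{\hat{\mathbf{j}}_k^T(\mathbf{x})\mid k\in K\}$, I would expand an arbitrary row of $\mathbf{M}$ (indexed by $k\in\overline{a,b}$) as $\sum_{j=a}^{k}c_{kj}(\mathbf{x})\hat{\mathbf{j}}_j(\mathbf{x})$, which is already truncated at $j=k$ by lower-triangularity of $\mathbf{C}_e$. The key observation, provided directly by Lemma \ref{lem:orthogonalization_of_J}, is that $c_{jj}(\mathbf{x})=0$ forces $c_{kj}(\mathbf{x})=0$ for every $k$; therefore only indices $j\in K$ contribute, and each row of $\mathbf{M}$ lies in the claimed span.

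For the reverse inclusion, I would induct on the ordering of $K = \{k_1<k_2<\cdots<k_p\}$. The $k_1$-th row of $\mathbf{M}$ reduces to $c_{k_1k_1}(\mathbf{x})\hat{\mathbf{j}}_{k_1}(\mathbf{x})$ by the previous observation applied to all $j<k_1$ (which are outside $K$), and dividing by the nonzero $c_{k_1k_1}(\mathbf{x})$ yields $\hat{\mathbf{j}}_{k_1}(\mathbf{x})\in\mathcal{R}(\mathbf{M}^T)$. Assuming $\hat{\mathbf{j}}_{k_1}(\mathbf{x}),\dots,\hat{\mathbf{j}}_{k_{i-1}}(\mathbf{x})\in\mathcal{R}(\mathbf{M}^T)$, the $k_i$-th row of $\mathbf{M}$ collapses to $\sum_{j\in K,\,j\le k_i}c_{k_ij}(\mathbf{x})\hat{\mathbf{j}}_j(\mathbf{x})$; subtracting the already-obtained terms with $j<k_i$ and dividing by $c_{k_ik_i}(\mathbf{x})\neq 0$ produces $\hat{\mathbf{j}}_{k_i}(\mathbf{x})\in\mathcal{R}(\mathbf{M}^T)$.

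Combining the two inclusions gives $\mathcal{R}(\mathbf{M}^T) = \mathrm{span}\{\hat{\mathbf{j}}_k^T(\mathbf{x})\mid k\in K\}$, and since $\{\hat{\mathbf{j}}_k(\mathbf{x})\}_{k\in K}$ is an orthonormal family (a subset of the orthonormal basis provided by $\hat{\mathbf{J}}_e(\mathbf{x})$), the orthogonal projector onto this span is exactly $\sum_{k\in K}\hat{\mathbf{j}}_k^T(\mathbf{x})\hat{\mathbf{j}}_k(\mathbf{x}) = \mathbf{P}_{\mathcal{F}(\mathcal{B}_{a:b})}(\mathbf{x})$. The only delicate step is the inductive extraction in the second inclusion; everything else is bookkeeping around Lemma \ref{lem:orthogonalization_of_J} and the defining property of the pseudoinverse.
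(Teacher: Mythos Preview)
Your proof is correct and takes essentially the same approach as the paper's: both fix $\mathbf{x}$, invoke Lemma~\ref{lem:orthogonalization_of_J} to discard the columns with $c_{jj}(\mathbf{x})=0$, and then use the nonzero diagonals on the surviving lower-triangular block to identify the row space of $(\mathbf{C}_{a:b,a:b}\hat{\mathbf{J}}_{a:b})(\mathbf{x})$ with $\mathrm{span}\{\hat{\mathbf{j}}_k^T(\mathbf{x}):k\in K\}$. The paper packages the second step as a full-rank factorization $\underline{\mathbf{C}}_{a:b}\underline{\hat{\mathbf{J}}}_{a:b}$ and reads off the projector from the pseudoinverse, while your inductive extraction is precisely what justifies the paper's asserted rank $\mathrm{rank}(\underline{\mathbf{C}}_{a:b}(\mathbf{x}))=|\mathcal{F}(\mathcal{B}_{a:b},\mathbf{x})|$; the substance is the same.
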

\begin{proof}
	Let $\mathbf{x}\in X$. 
	If $\mathcal{F}(\mathcal{B}_{a:b},\mathbf{x}) = \emptyset$, then $\mathbf{C}_{a:b,a:b}(\mathbf{x}) = \mathbf{0}$ by Lemma \ref{lem:orthogonalization_of_J}, so $\mathbf{P}_{\mathcal{F}(\mathcal{B}_{a:b},\mathbf{x})}(\mathbf{x}) = ((\mathbf{C}_{a:b,a:b}\hat{\mathbf{J}}_{a:b})^+(\mathbf{C}_{a:b,a:b}\hat{\mathbf{J}}_{a:b}))(\mathbf{x}) = \mathbf{0}$. 
	Assume $\mathcal{F}(\mathcal{B}_{a:b},\mathbf{x}) \neq \emptyset$. 
	Let $\mathbf{C}_{a:b,a:b} = \begin{bmatrix} \mathbf{c}_a & \cdots & \mathbf{c}_b\end{bmatrix}$. 
	Then, $(\mathbf{C}_{a:b,a:b}\hat{\mathbf{J}}_{a:b})(\mathbf{x}) = \sum_{i\in\overline{a,b},\,c_{ii}(\mathbf{x})\neq0}(\mathbf{c}_i\hat{\mathbf{j}}_i)(\mathbf{x}) = (\underline{\mathbf{C}}_{a:b}\underline{\hat{\mathbf{J}}}_{a:b})(\mathbf{x})$ where $\underline{\mathbf{C}}_{a:b}(\mathbf{x})\in\mathbb{R}^{b-a+1\times|\mathcal{F}(\mathcal{B}_{a:b},\mathbf{x})|}$, $\mathrm{rank}(\underline{\mathbf{C}}_{a:b}(\mathbf{x})) = |\mathcal{F}(\mathcal{B}_{a:b},\mathbf{x})| \le b-a+1$, and $\underline{\hat{\mathbf{J}}}_{a:b}(\mathbf{x}) \in \mathbb{R}^{|\mathcal{F}(\mathcal{B}_{a:b},\mathbf{x})|\times n}$. 
	Thus, $\mathbf{P}_{\mathcal{F}(\mathcal{B}_{a:b},\mathbf{x})}(\mathbf{x}) = \sum_{i\in\overline{a,b},\,c_{ii}(\mathbf{x})\neq0}(\hat{\mathbf{j}}_i^T\hat{\mathbf{j}_i})(\mathbf{x}) = (\underline{\hat{\mathbf{J}}}_{a:b}^T\underline{\hat{\mathbf{J}}}_{a:b})(\mathbf{x}) = ((\underline{\mathbf{C}}_{a:b}\underline{\hat{\mathbf{J}}_{a:b}})^+(\underline{\mathbf{C}}_{a:b}\underline{\hat{\mathbf{J}}}_{a:b}))(\mathbf{x}) = ((\mathbf{C}_{a:b,a:b}\hat{\mathbf{J}}_{a:b})^+(\mathbf{C}_{a:b,a:b}\hat{\mathbf{J}}_{a:b}))(\mathbf{x})$.
\end{proof}

Let $a,a'\in\overline{1,m}$, $b,b'\in\overline{1,n}$, $a\le a'$, $b\le b'$, and $\mathbf{x}_0\in X$. 
We are looking for the condition of $\mathbf{C}_{a:a',b:b'}\hat{\mathbf{J}}_{b:b'} = \mathbf{J}_{a:a'}\mathbf{P}_{\mathcal{B}_{b:b'}} \not\in C_{\mathbf{x}_0}^\bullet$. 
By Lemma \ref{lem:derivative_of_matrix_and_its_entries}, $\mathbf{J}_{a:a'}\mathbf{P}_{\mathcal{B}_{b:b'}} \not\in C_{\mathbf{x}_0}^\bullet$ if and only if there exists $\mathbf{j} \in \mathcal{A}_{a:a'}$ such that $\mathbf{P}_{\mathcal{B}_{b:b'}}\mathbf{j}^T\not\in C_{\mathbf{x}_0}^\bullet$. 
Since $\mathbf{j}^T\in C_{\mathbf{x}_0}^\bullet$ for all $\mathbf{j}\in\mathcal{A}_{a:a'}$, $\mathbf{P}_{\mathcal{B}_{b:b'}}\mathbf{j}^T\not\in C_{\mathbf{x}_0}^\bullet$ only if $\mathbf{P}_{\mathcal{B}_{b:b'}}\not\in C_{\mathbf{x}_0}^\bullet$. 
The orthogonal projector $\mathbf{P}_{\mathcal{B}_{b:b'}}$ is generated from the basis subset $\mathcal{B}_{b:b'}\subset\mathcal{B}$. 
Thus, there is a relation between $\bullet$-discontinuity of orthogonal projectors and basis subsets. 
We first find the relation and then connect it to the condition of $\mathbf{P}_{\mathcal{B}_{b:b'}}\mathbf{j}^T\not\in C_{\mathbf{x}_0}^\bullet$ later.

\begin{definition}[Smooth Minimum Basis Subset]
	We say that $\mathcal{S} \subset \mathcal{B}$ is a $\bullet$-continuous basis subset ($\bullet$-CBS) at $\mathbf{x}_0$ when $\mathbf{0}\neq\mathbf{P}_\mathcal{S}\in C_{\mathbf{x}_0}^\bullet$. 
	We say that $\mathcal{S}$ is a $\bullet$-continuous minimum basis subset ($\bullet$-CMBS) at $\mathbf{x}_0$ when $\mathcal{S}$ is a $\bullet$-CBS at $\mathbf{x}_0$ and any proper subset of $\mathcal{S}$ is not a $\bullet$-CBS at $\mathbf{x}_0$. 
	We define $\mathcal{B}_{\mathbf{x}_0}^\bullet$ as the set of all $\bullet$-CMBSs at $\mathbf{x}_0$. 
	For $\mathcal{T}\subset\mathcal{S}\in\mathcal{B}_{\mathbf{x}_0}^\bullet$, we define $((\mathbf{P}_\mathcal{T}))_{\mathbf{x}_0}^\bullet = (\mathbf{P}_\mathcal{T})_{\mathbf{x}_0}^\bullet + (\mathbf{P}_{\mathcal{S}\setminus\mathcal{T}})_{\mathbf{x}_0}^\bullet$ and $([\mathbf{P}_\mathcal{T}])_{\mathbf{x}_0}^\bullet = [\mathbf{P}_\mathcal{T}]_{\mathbf{x}_0}^\bullet + [\mathbf{P}_{\mathcal{S}\setminus\mathcal{T}}]_{\mathbf{x}_0}^\bullet$. 
	Obviously, $((\mathbf{P}_\mathcal{T}))_{\mathbf{x}_0}^\bullet \in C_{\mathbf{x}_0}^\bullet$ and $([\mathbf{P}_\mathcal{T}])_{\mathbf{x}_0}^\bullet = \mathbf{P}_\mathcal{S} - ((\mathbf{P}_\mathcal{T}))_{\mathbf{x}_0}^\bullet \in C_{\mathbf{x}_0}^\bullet$.
\end{definition}

\begin{proposition}
	\label{prp:properties_of_MCBS}
	Let $\mathcal{T}\subset\mathcal{S}\in\mathcal{B}_{\mathbf{x}_0}^\bullet$ and $\mathbf{v}\in C_{\mathbf{x}_0}^\bullet(X,\mathbb{R}^n)$.
	\begin{enumerate}
		\item $\mathcal{B}_{\mathbf{x}_0}^\bullet$ is a partition of $\mathcal{B}$.
		\label{prp:MCBS:partition}
		\item $\mathbf{P}_\mathcal{T}\mathbf{v}\in C_{\mathbf{x}_0}^\bullet \iff [\mathbf{P}_\mathcal{T}]_{\mathbf{x}_0}^\bullet\mathbf{v} \in C_{\mathbf{x}_0}^\bullet \iff [\mathbf{P}_{\mathcal{S}\setminus\mathcal{T}}]_{\mathbf{x}_0}^\bullet\mathbf{v}\in C_{\mathbf{x}_0}^\bullet \iff \mathbf{P}_{\mathcal{S}\setminus\mathcal{T}}\mathbf{v}\in C_{\mathbf{x}_0}^\bullet$.
		\label{prp:MCBS:iff_conditions}
		\item $\mathbf{P}_\mathcal{T}\mathbf{v}\notin C_{\mathbf{x}_0}^\bullet,\,\mathcal{U} \subset \mathcal{B}\setminus\mathcal{S} \implies (\mathbf{P}_\mathcal{T}+\mathbf{P}_\mathcal{U})\mathbf{v}\notin C_{\mathbf{x}_0}^\bullet$.
		\label{prp:MCBS:nonsmoothness_condition}
		\item $\mathbf{P}_\mathcal{T}\mathbf{v}\in C_{\mathbf{x}_0}^\bullet$ only if $\mathcal{T}\in\{\emptyset,\mathcal{S}\}$ or both (1) $(([\mathbf{P}_\mathcal{T}])_{\mathbf{x}_0}^\bullet \mathbf{v})(\mathbf{x}_0) = \mathbf{0}$ and (2) if $\bullet\in\{1_p,1\}$, there exists $\mathbf{M}\in\mathbb{R}^{n\times n}$ such that
		\begin{equation*}
			\label{eq:DifferentiabilityCondition-modified}
			\lim_{\mathbf{x}\to \mathbf{x}_0}\frac{\|([\mathbf{P}_\mathcal{T}]_{\mathbf{x}_0}^\bullet(\mathbf{x})\mathbf{A} - \mathbf{M})(\mathbf{x}-\mathbf{x}_0)\|}{\|\mathbf{x}-\mathbf{x}_0\|} = 0
		\end{equation*}
		where $\mathbf{A} = \mathsf{D}(([\mathbf{P}_\mathcal{T}])_{\mathbf{x}_0}^\bullet\mathbf{v})(\mathbf{x}_0)$. 
		If such $\mathbf{M}$ exists, then $\mathbf{M} = \mathsf{D}([\mathbf{P}_\mathcal{T}]_{\mathbf{x}_0}^\bullet\mathbf{v})(\mathbf{x}_0)$. 
		The statement becomes ``if and only if" for $\bullet \in \{0,L_p,1_p\}$.
		\label{prp:MCBS:necessary_conditions}
		\item $\mathbf{J}_{1:a} \mathbf{P}_{\mathcal{S}\cap\mathcal{B}_{1:a}}\in C_{\mathbf{x}_0}^\bullet$ for all $a\in\overline{1,m}$.
		\label{prp:MCBS:smoothness_condition}
		\item Let $\mathcal{S} = \{\hat{\mathbf{j}}_{s_1},\dots,\hat{\mathbf{j}}_{s_{|\mathcal{S}|}}\}$ and $s_1\le \cdots \le s_{|\mathcal{S}|}$. 
		Then, $c_{s_1s_1}\hat{\mathbf{j}}_{s_1}\in C_{\mathbf{x}_0}^\bullet$ and if $c_{s_1s_1}(\mathbf{x}_0) \neq 0$, then $\mathcal{S} = \{\hat{\mathbf{j}}_{s_1}\}$.
		\label{prp:MCBS:first_element_of_S_is_smooth}
	\end{enumerate}
\end{proposition}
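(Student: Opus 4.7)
The overall strategy leans on two algebraic facts about the orthogonal projectors $\mathbf{P}_\mathcal{S}$: multiplicativity $\mathbf{P}_{\mathcal{S}_1\cap\mathcal{S}_2}=\mathbf{P}_{\mathcal{S}_1}\mathbf{P}_{\mathcal{S}_2}$, and the discontinuity decomposition $\mathbf{P}_\mathcal{T}=(\mathbf{P}_\mathcal{T})_{\mathbf{x}_0}^\bullet+[\mathbf{P}_\mathcal{T}]_{\mathbf{x}_0}^\bullet$ whose two pieces are themselves mutually orthogonal projectors. For \ref{prp:MCBS:partition}, since $\mathbf{P}_\mathcal{B}=\mathbf{I}_n\in C^\bullet$ the set $\mathcal{B}$ is itself a $\bullet$-CBS, so for each $\hat{\mathbf{j}}_i$ the finite intersection $\mathcal{S}_i^\ast$ of all $\bullet$-CBSs containing $\hat{\mathbf{j}}_i$ is again a $\bullet$-CBS by multiplicativity; it is a CMBS because any proper $\bullet$-CBS $\mathcal{T}\subsetneq\mathcal{S}_i^\ast$ would either contain $\hat{\mathbf{j}}_i$ (contradicting minimality of the intersection) or leave $\hat{\mathbf{j}}_i\in\mathcal{S}_i^\ast\setminus\mathcal{T}$, where $\mathbf{P}_{\mathcal{S}_i^\ast\setminus\mathcal{T}}=\mathbf{P}_{\mathcal{S}_i^\ast}-\mathbf{P}_\mathcal{T}$ is again $\bullet$-continuous (same contradiction). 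Disjointness of CMBSs follows by the same multiplicativity trick. The equivalences \ref{prp:MCBS:iff_conditions} drop out of the decomposition coupled with $\mathbf{P}_\mathcal{S}=\mathbf{P}_\mathcal{T}+\mathbf{P}_{\mathcal{S}\setminus\mathcal{T}}\in C^\bullet$. For \ref{prp:MCBS:nonsmoothness_condition} I would left-multiply the supposed $\bullet$-continuous $(\mathbf{P}_\mathcal{T}+\mathbf{P}_\mathcal{U})\mathbf{v}$ by the $\bullet$-continuous $\mathbf{P}_\mathcal{S}$ and use $\mathbf{P}_\mathcal{S}\mathbf{P}_\mathcal{T}=\mathbf{P}_\mathcal{T}$ together with $\mathbf{P}_\mathcal{S}\mathbf{P}_\mathcal{U}=\mathbf{0}$ (because $\mathcal{U}\subset\mathcal{B}\setminus\mathcal{S}$) to recover $\mathbf{P}_\mathcal{T}\mathbf{v}\in C^\bullet$, contradicting the hypothesis.

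Part \ref{prp:MCBS:necessary_conditions} is the real obstacle. For $\mathcal{T}\notin\{\emptyset,\mathcal{S}\}$ the CMBS property forces $\mathbf{P}_\mathcal{T}\notin C_{\mathbf{x}_0}^\bullet$, so both $[\mathbf{P}_\mathcal{T}]_{\mathbf{x}_0}^\bullet$ and $[\mathbf{P}_{\mathcal{S}\setminus\mathcal{T}}]_{\mathbf{x}_0}^\bullet$ are nonzero and purely $\bullet$-discontinuous, while part \ref{prp:MCBS:iff_conditions} already makes $[\mathbf{P}_\mathcal{T}]_{\mathbf{x}_0}^\bullet\mathbf{v}$ and $[\mathbf{P}_{\mathcal{S}\setminus\mathcal{T}}]_{\mathbf{x}_0}^\bullet\mathbf{v}$ $\bullet$-continuous. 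The engine for (1) is an auxiliary statement I would establish first: \emph{if $\mathbf{Q}\in\mathrm{OP}\setminus C_{\mathbf{x}_0}^\bullet$ purely, $\mathbf{v}\in C_{\mathbf{x}_0}^\bullet$, and $\mathbf{Q}\mathbf{v}\in C_{\mathbf{x}_0}^\bullet$, then $\mathbf{Q}(\mathbf{x}_0)\mathbf{v}(\mathbf{x}_0)=\mathbf{0}$}. The proof is by contradiction: if $\mathbf{w}=\mathbf{Q}\mathbf{v}$ is $\bullet$-continuous and nonzero at $\mathbf{x}_0$, then the rank-one projector $\mathbf{R}=\mathbf{w}\mathbf{w}^T/\|\mathbf{w}\|^2$ is well-defined and lies in $\mathrm{OP}\cap C_{\mathbf{x}_0}^\bullet$ on a neighborhood of $\mathbf{x}_0$ (because $\|\mathbf{w}(\mathbf{x}_0)\|\neq 0$ keeps $1/\|\mathbf{w}\|^2$ inside the same regularity class); since $\mathbf{Q}\mathbf{w}=\mathbf{w}$, one checks $\mathbf{Q}\mathbf{R}=\mathbf{R}\mathbf{Q}=\mathbf{R}$, so $\mathbf{Q}=\mathbf{R}+(\mathbf{Q}-\mathbf{R})$ is a valid decomposition with $\mathbf{R}\in\mathrm{OP}\cap C_{\mathbf{x}_0}^\bullet$, $\mathbf{R}(\mathbf{x}_0)\neq\mathbf{0}$, and $\mathbf{R}(\mathbf{Q}-\mathbf{R})=\mathbf{0}$, contradicting that $\mathbf{Q}$ is purely $\bullet$-discontinuous. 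Applying this lemma to $[\mathbf{P}_\mathcal{T}]_{\mathbf{x}_0}^\bullet$ and to $[\mathbf{P}_{\mathcal{S}\setminus\mathcal{T}}]_{\mathbf{x}_0}^\bullet$ and summing yields (1). For (2) at $\bullet\in\{1_p,1\}$, differentiability of the smooth summand $(\mathbf{P}_\mathcal{T})_{\mathbf{x}_0}^\bullet\mathbf{v}$ reduces differentiability of $\mathbf{P}_\mathcal{T}\mathbf{v}$ to that of $[\mathbf{P}_\mathcal{T}]_{\mathbf{x}_0}^\bullet\mathbf{v}$, and the approximating matrix $\mathbf{M}$ satisfying the displayed limit is identified as $\mathsf{D}([\mathbf{P}_\mathcal{T}]_{\mathbf{x}_0}^\bullet\mathbf{v})(\mathbf{x}_0)$. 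The converse for $\bullet\in\{0,L_p,1_p\}$ reverses these arguments: once (1) makes $([\mathbf{P}_\mathcal{T}])_{\mathbf{x}_0}^\bullet\mathbf{v}$ vanish at $\mathbf{x}_0$, the orthogonality of the ranges of $[\mathbf{P}_\mathcal{T}]_{\mathbf{x}_0}^\bullet$ and $[\mathbf{P}_{\mathcal{S}\setminus\mathcal{T}}]_{\mathbf{x}_0}^\bullet$ gives $\|[\mathbf{P}_\mathcal{T}]_{\mathbf{x}_0}^\bullet\mathbf{v}\|^2+\|[\mathbf{P}_{\mathcal{S}\setminus\mathcal{T}}]_{\mathbf{x}_0}^\bullet\mathbf{v}\|^2=\|([\mathbf{P}_\mathcal{T}])_{\mathbf{x}_0}^\bullet\mathbf{v}\|^2$, which squeezes each summand into $C_{\mathbf{x}_0}^\bullet$ and therefore gives $\mathbf{P}_\mathcal{T}\mathbf{v}\in C^\bullet$ by part \ref{prp:MCBS:iff_conditions}; (2) plays the analogous role at the derivative level when $\bullet=1_p$. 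The delicate point throughout is verifying that the rank-one construction truly lies in $C_{\mathbf{x}_0}^\bullet$ across all five regularity classes, which is exactly where the assumption $\mathbf{w}(\mathbf{x}_0)\neq\mathbf{0}$ is indispensable.

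Parts \ref{prp:MCBS:smoothness_condition} and \ref{prp:MCBS:first_element_of_S_is_smooth} are pure algebra on top of the partition. Because $\mathbf{C}_e$ is lower triangular, the expansion $\mathbf{j}_i=\sum_{k=1}^{i}c_{ik}\hat{\mathbf{j}}_k$ yields $\mathbf{J}_{1:a}\mathbf{P}_{\mathcal{B}_{1:a}}=\mathbf{J}_{1:a}$, so multiplicativity gives
\begin{equation*}
\mathbf{J}_{1:a}\mathbf{P}_{\mathcal{S}\cap\mathcal{B}_{1:a}}=\mathbf{J}_{1:a}\mathbf{P}_{\mathcal{B}_{1:a}}\mathbf{P}_\mathcal{S}=\mathbf{J}_{1:a}\mathbf{P}_\mathcal{S}\in C_{\mathbf{x}_0}^\bullet,
\end{equation*}
which is \ref{prp:MCBS:smoothness_condition}. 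Specializing to $a=s_1$, the minimality of $s_1$ in $\mathcal{S}$ gives $\mathcal{S}\cap\mathcal{B}_{1:s_1}=\{\hat{\mathbf{j}}_{s_1}\}$ and $\mathbf{P}_{\mathcal{S}\cap\mathcal{B}_{1:s_1}}=\hat{\mathbf{j}}_{s_1}^T\hat{\mathbf{j}}_{s_1}$, so reading off the $s_1$-th row of the displayed matrix collapses to $c_{s_1 s_1}\hat{\mathbf{j}}_{s_1}\in C_{\mathbf{x}_0}^\bullet$. If $c_{s_1 s_1}(\mathbf{x}_0)\neq 0$, the nonnegativity from Lemma \ref{lem:orthogonalization_of_J} gives $\|c_{s_1 s_1}\hat{\mathbf{j}}_{s_1}\|=c_{s_1 s_1}$, which is then $\bullet$-continuous with nonzero value at $\mathbf{x}_0$; dividing, $\hat{\mathbf{j}}_{s_1}=(c_{s_1 s_1}\hat{\mathbf{j}}_{s_1})/c_{s_1 s_1}\in C_{\mathbf{x}_0}^\bullet$ and hence $\mathbf{P}_{\{\hat{\mathbf{j}}_{s_1}\}}=\hat{\mathbf{j}}_{s_1}^T\hat{\mathbf{j}}_{s_1}\in C_{\mathbf{x}_0}^\bullet$, so the singleton $\{\hat{\mathbf{j}}_{s_1}\}$ is a $\bullet$-CMBS containing $\hat{\mathbf{j}}_{s_1}$, and the partition guaranteed by \ref{prp:MCBS:partition} forces $\mathcal{S}=\{\hat{\mathbf{j}}_{s_1}\}$.
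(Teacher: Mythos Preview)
Your proposal is correct and follows essentially the same approach as the paper: multiplicativity of the $\mathbf{P}_\mathcal{S}$ for part~\ref{prp:MCBS:partition}, the decomposition chain for part~\ref{prp:MCBS:iff_conditions}, left-multiplication by $\mathbf{P}_\mathcal{S}$ for part~\ref{prp:MCBS:nonsmoothness_condition}, the rank-one projector contradiction against pure $\bullet$-discontinuity for part~\ref{prp:MCBS:necessary_conditions}, and the identity $\mathbf{J}_{1:a}\mathbf{P}_{\mathcal{S}\cap\mathcal{B}_{1:a}}=\mathbf{J}_{1:a}\mathbf{P}_\mathcal{S}$ for parts~\ref{prp:MCBS:smoothness_condition} and~\ref{prp:MCBS:first_element_of_S_is_smooth}. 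The only cosmetic difference is in part~\ref{prp:MCBS:partition}, where you build the CMBS containing $\hat{\mathbf{j}}_i$ constructively as the intersection of all CBSs through $\hat{\mathbf{j}}_i$, whereas the paper argues disjointness and covering separately by contradiction; both routes rest on the same multiplicativity identity and are equivalent in substance.
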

\begin{proof}
	\textit{\ref{prp:MCBS:partition}.} 
	Suppose that there exists $\mathcal{S}\neq\mathcal{S}'\in\mathcal{B}_{\mathbf{x}_0}^\bullet$ satisfying $\mathcal{U} = \mathcal{S}\cap\mathcal{S}'\neq\emptyset$. 
	Then, $\mathcal{U} \not\in \mathcal{B}_{\mathbf{x}_0}^\bullet$ and $C_{\mathbf{x}_0}^\bullet \ni \mathbf{P}_{\mathcal{S}}\mathbf{P}_{\mathcal{S}'} = \mathbf{P}_{\mathcal{S}\cap \mathcal{S}'} = \mathbf{P}_{\mathcal{U}}\notin C_{\mathbf{x}_0}^\bullet$, a contradiction. 
	Thus, $\mathcal{B}_{\mathbf{x}_0}^\bullet$ is mutually disjoint. 
	$\mathcal{C} = \bigcup_{\mathcal{S}'\in\mathcal{B}_{\mathbf{x}_0}^\bullet}\mathcal{S}' \subset \mathcal{B}$ because $\mathcal{S}' \subset \mathcal{B}$ if $\mathcal{S}' \in \mathcal{B}_{\mathbf{x}_0}^\bullet$. 
	Suppose $\mathcal{B} \not\subset\mathcal{C}$. 
	Then, $\emptyset \neq \mathcal{B}\setminus\mathcal{C} \notin\mathcal{B}_{\mathbf{x}_0}^\bullet$ and $C_{\mathbf{x}_0}^\bullet\not\ni\mathbf{P}_{\mathcal{B}\setminus \mathcal{C}} = \mathbf{P}_\mathcal{B} - \mathbf{P}_{\mathcal{B}\cap\mathcal{C}} = \mathbf{I}_n - \sum_{\mathcal{S}'\in\mathcal{B}_{\mathbf{x}_0}^\bullet}\mathbf{P}_{\mathcal{S}'} \in C_{\mathbf{x}_0}^\bullet$, a contradiction. 
	Thus, $\mathcal{B} = \bigcup_{\mathcal{S}'\in\mathcal{B}_{\mathbf{x}_0}^\bullet}\mathcal{S}'$.
	
	\textit{\ref{prp:MCBS:iff_conditions}.} 
	$\mathbf{P}_\mathcal{T}\mathbf{v}\in C_{\mathbf{x}_0}^\bullet \implies [\mathbf{P}_\mathcal{T}]_{\mathbf{x}_0}^\bullet\mathbf{v} = \mathbf{P}_\mathcal{T}\mathbf{v} - (\mathbf{P}_\mathcal{T})_{\mathbf{x}_0}^\bullet\mathbf{v} \in C_{\mathbf{x}_0}^\bullet \implies [\mathbf{P}_{\mathcal{S}\setminus\mathcal{T}}]_{\mathbf{x}_0}^\bullet\mathbf{v} = \mathbf{P}_\mathcal{S}\mathbf{v} - ((\mathbf{P}_\mathcal{T}))_{\mathbf{x}_0}^\bullet\mathbf{v} - [\mathbf{P}_\mathcal{T}]_{\mathbf{x}_0}^\bullet\mathbf{v} \in C_{\mathbf{x}_0}^\bullet \implies \mathbf{P}_{\mathcal{S}\setminus\mathcal{T}}\mathbf{v} = (\mathbf{P}_{\mathcal{S}\setminus\mathcal{T}})_{\mathbf{x}_0}^\bullet\mathbf{v} + [\mathbf{P}_{\mathcal{S}\setminus\mathcal{T}}]_{\mathbf{x}_0}^\bullet\mathbf{v} \in C_{\mathbf{x}_0}^\bullet \implies \mathbf{P}_\mathcal{T}\mathbf{v} = \mathbf{P}_\mathcal{S}\mathbf{v} - \mathbf{P}_{\mathcal{S}\setminus\mathcal{T}}\mathbf{v} \in C_{\mathbf{x}_0}^\bullet$. 
	
	\textit{\ref{prp:MCBS:nonsmoothness_condition}.} 
	If $(\mathbf{P}_\mathcal{T}+\mathbf{P}_\mathcal{U})\mathbf{v}\in C_{\mathbf{x}_0}^\bullet$, $\mathbf{P}_\mathcal{T}\mathbf{v} = \mathbf{P}_\mathcal{S}(\mathbf{P}_\mathcal{T}+\mathbf{P}_\mathcal{U})\mathbf{v} \in C_{\mathbf{x}_0}^\bullet$.
	
	\textit{\ref{prp:MCBS:necessary_conditions}.} 
	$\implies$: Suppose that $\mathcal{T}\notin\{\emptyset,\mathcal{S}\}$ and $(([\mathbf{P}_\mathcal{T}])_{\mathbf{x}_0}^\bullet\mathbf{v})(\mathbf{x}_0) \neq \mathbf{0}$. 
	Without loss of generality, $([\mathbf{P}_\mathcal{T}]_{\mathbf{x}_0}^\bullet\mathbf{v})(\mathbf{x}_0) \neq \mathbf{0}$. 
	Since $[\mathbf{P}_\mathcal{T}]_{\mathbf{x}_0}^\bullet\mathbf{v} \in C_{\mathbf{x}_0}^\bullet$, $\lim_{\mathbf{x}\to \mathbf{x}_0}\mathrm{rank}([\mathbf{P}_\mathcal{T}]_{\mathbf{x}_0}^\bullet\mathbf{v})(\mathbf{x}) = \mathrm{rank}([\mathbf{P}_\mathcal{T}]_{\mathbf{x}_0}^\bullet\mathbf{v})(\mathbf{x}_0) = 1$, so we can construct $(\mathbf{P}) = ([\mathbf{P}_\mathcal{T}]_{\mathbf{x}_0}^\bullet\mathbf{v})([\mathbf{P}_\mathcal{T}]_{\mathbf{x}_0}^\bullet\mathbf{v})^+ \in \mathrm{OP}\cap C_{\mathbf{x}_0}^\bullet$ with $(\mathbf{P})(\mathbf{x}_0) \neq \mathbf{0}$ by Lemma \ref{lem:ContinuityOfPseudoinverse-modified}. 
	Since $(\mathcal{V}) \subset [\mathcal{V}_\mathcal{T}]_{\mathbf{x}_0}^\bullet \notin C_{\mathbf{x}_0}^\bullet$, there must exist $[\mathcal{V}] = (\mathcal{V})^\perp\cap[\mathcal{V}_\mathcal{T}]_{\mathbf{x}_0}^\bullet \in\mathrm{VS}\setminus C_{\mathbf{x}_0}^\bullet$ such that $[\mathcal{V}_\mathcal{T}]_{\mathbf{x}_0}^\bullet = (\mathcal{V}) + [\mathcal{V}]$, $(\mathcal{V})\perp[\mathcal{V}]$, and $(\mathcal{V})(\mathbf{x}_0) \neq \{\mathbf{0}\}$, a contradiction that $[\mathcal{V}_\mathcal{T}]_{\mathbf{x}_0}^\bullet \notin C_{\mathbf{x}_0}^\bullet$ purely. 
	Let $\bullet \in \{1_p, 1\}$, $\mathcal{T}\notin\{\emptyset,\mathcal{S}\}$, and $(([\mathbf{P}_\mathcal{T}])_{\mathbf{x}_0}^\bullet\mathbf{v})(\mathbf{x}_0) = \mathbf{0}$. 
	Since $[\mathbf{P}_\mathcal{T}]_{\mathbf{x}_0}^\bullet\mathbf{v}\in C_{\mathbf{x}_0}^\bullet$, by letting $\mathbf{M} = \mathsf{D}([\mathbf{P}_\mathcal{T}]_{\mathbf{x}_0}^\bullet\mathbf{v})(\mathbf{x}_0)$ and using $(([\mathbf{P}_\mathcal{T}])_{\mathbf{x}_0}^\bullet\mathbf{v})(\mathbf{x}_0) = \mathbf{0}$, we find $\|([\mathbf{P}_\mathcal{T}]_{\mathbf{x}_0}^\bullet(\mathbf{x})\mathbf{A} - \mathbf{M})(\mathbf{x}-\mathbf{x}_0)\| \le \|(([\mathbf{P}_\mathcal{T}])_{\mathbf{x}_0}^\bullet\mathbf{v})(\mathbf{x}) - (([\mathbf{P}_\mathcal{T}])_{\mathbf{x}_0}^\bullet\mathbf{v})(\mathbf{x}_0) - \mathbf{A}(\mathbf{x}-\mathbf{x}_0)\| + \|([\mathbf{P}_\mathcal{T}]_{\mathbf{x}_0}^\bullet\mathbf{v})(\mathbf{x}) - ([\mathbf{P}_\mathcal{T}]_{\mathbf{x}_0}^\bullet\mathbf{v})(\mathbf{x}_0) - \mathbf{M}(\mathbf{x}-\mathbf{x}_0)\|$.

	$\impliedby$: Let $\bullet \in \{0,L_p,1_p\}$. 
	$\mathbf{P}_\mathcal{T}\mathbf{v}\in C_{\mathbf{x}_0}^\bullet$ if $\mathcal{T}\in\{\emptyset,\mathcal{S}\}$. 
	Let $\mathcal{T}\notin\{\emptyset,\mathcal{S}\}$. 
	Since $[\mathbf{P}_\mathcal{T}]_{\mathbf{x}_0}^\bullet[\mathbf{P}_{\mathcal{S}\setminus\mathcal{T}}]_{\mathbf{x}_0}^\bullet = \mathbf{0}$, $(([\mathbf{P}_\mathcal{T}])_{\mathbf{x}_0}^\bullet\mathbf{v})(\mathbf{x}_0) = \mathbf{0} \iff ([\mathbf{P}_\mathcal{T}]_{\mathbf{x}_0}^\bullet\mathbf{v})(\mathbf{x}_0) = ([\mathbf{P}_{\mathcal{S}\setminus\mathcal{T}}]_{\mathbf{x}_0}^\bullet\mathbf{v})(\mathbf{x}_0) = \mathbf{0}$. 
	So, $\|([\mathbf{P}_\mathcal{T}]_{\mathbf{x}_0}^\bullet\mathbf{v})(\mathbf{x}) - ([\mathbf{P}_\mathcal{T}]_{\mathbf{x}_0}^\bullet\mathbf{v})(\mathbf{x}_0)\| \le \sqrt{\|([\mathbf{P}_\mathcal{T}]_{\mathbf{x}_0}^\bullet\mathbf{v})(\mathbf{x})\|^2 + \|([\mathbf{P}_{\mathcal{S}\setminus\mathcal{T}}]_{\mathbf{x}_0}^\bullet\mathbf{v})(\mathbf{x})\|^2} = \|(([\mathbf{P}_\mathcal{T}])_{\mathbf{x}_0}^\bullet\mathbf{v})(\mathbf{x}) - (([\mathbf{P}_\mathcal{T}])_{\mathbf{x}_0}^\bullet\mathbf{v})(\mathbf{x}_0)\|$
	and $\|([\mathbf{P}_\mathcal{T}]_{\mathbf{x}_0}^{1_p}\mathbf{v})(\mathbf{x}) - ([\mathbf{P}_\mathcal{T}]_{\mathbf{x}_0}^{1_p}\mathbf{v})(\mathbf{x}_0) - \mathbf{M}(\mathbf{x} - \mathbf{x}_0)\| \le \|(([\mathbf{P}_\mathcal{T}])_{\mathbf{x}_0}^{1_p}\mathbf{v})(\mathbf{x}) - (([\mathbf{P}_\mathcal{T}])_{\mathbf{x}_0}^{1_p}\mathbf{v})(\mathbf{x}_0) - \mathbf{A}(\mathbf{x}-\mathbf{x}_0)\| + \|([\mathbf{P}_\mathcal{T}]_{\mathbf{x}_0}^{1_p}(\mathbf{x})\mathbf{A} - \mathbf{M})(\mathbf{x} - \mathbf{x}_0)\|$.
	It completes the proof.
	
	\textit{\ref{prp:MCBS:smoothness_condition}.} 
	By Lemma \ref{lem:derivative_of_matrix_and_its_entries}, $\mathbf{J}_{1:a}\in C_{\mathbf{x}_0}^\bullet$. 
	Then, $\mathbf{J}_{1:a} \mathbf{P}_{\mathcal{S}\cap\mathcal{B}_{1:a}} = \mathbf{J}_{1:a} (\mathbf{P}_{\mathcal{S}\cap\mathcal{B}_{1:a}} + \mathbf{P}_\emptyset) = \mathbf{J}_{1:a}(\mathbf{P}_{\mathcal{S}\cap\mathcal{B}_{1:a}} + \mathbf{P}_{\mathcal{B}_{1:a}}\mathbf{P}_{\mathcal{S}\setminus\mathcal{B}_{1:a}}) = \mathbf{J}_{1:a}(\mathbf{P}_{\mathcal{S}\cap\mathcal{B}_{1:a}} + \mathbf{P}_{\mathcal{S}\setminus\mathcal{B}_{1:a}}) = \mathbf{J}_{1:a}\mathbf{P}_\mathcal{S} \in C_{\mathbf{x}_0}^\bullet$.
	
	\textit{\ref{prp:MCBS:first_element_of_S_is_smooth}.} 
	By the property \ref{prp:MCBS:smoothness_condition}, $\mathbf{J}_{1:s_1}\mathbf{P}_{\mathcal{S}\cap\mathcal{B}_{1:s_1}} \in C_{\mathbf{x}_0}^\bullet$. 
	Since $\mathcal{S}\cap\mathcal{B}_{1:s_1} = \{\hat{\mathbf{j}}_{s_1}\}$, we have $c_{s_1s_1}\hat{\mathbf{j}}_{s_1} = \mathbf{j}_{s_1}\mathbf{P}_{\mathcal{S}\cap\mathcal{B}_{1:s_1}} \in C_{\mathbf{x}_0}^\bullet$ by Lemma \ref{lem:derivative_of_matrix_and_its_entries}. 
	If $c_{s_1s_1}(\mathbf{x}_0) \neq 0$, then $\mathbf{P}_{\{\hat{\mathbf{j}}_{s_1}\}} = (c_{s_1s_1}\hat{\mathbf{j}}_{s_1})^+(c_{s_1s_1}\hat{\mathbf{j}}_{s_1}) \in C_{\mathbf{x}_0}^\bullet$ by Lemma \ref{lem:ContinuityOfPseudoinverse-modified}, so $\mathcal{S} = \{\hat{\mathbf{j}}_{s_1}\}$.
\end{proof}

We are ready to find the condition of $\mathbf{C}_{a:a',b:b'}\hat{\mathbf{J}}_{b:b'} \not\in C_{\mathbf{x}_0}^\bullet$.

\begin{theorem}
	\label{thm:discontinuity-of-QRD-modified}
	Let $a,a'\in\overline{1,m}$, $b,b'\in\overline{1,n}$, $a\le a'$, $b\le b'$, and $\mathbf{J}\in C_{\mathbf{x}_0}^\bullet$. 
	Then, $\mathbf{C}_{a:a',b:b'}\hat{\mathbf{J}}_{b:b'} \notin C_{\mathbf{x}_0}^\bullet$ if and only if there exist $\mathbf{j}\in\mathcal{A}_{a:a'}$ and $\mathcal{S}\in\mathcal{B}_{\mathbf{x}_0}^\bullet$ such that $[\mathbf{P}_{\mathcal{S}\cap\mathcal{B}_{b:b'}}]_{\mathbf{x}_0}^\bullet\mathbf{j}^T \notin C_{\mathbf{x}_0}^\bullet$. 
	Also, $\mathbf{C}_{a:a',b:b'}\hat{\mathbf{J}}_{b:b'} \notin C_{\mathbf{x}_0}^\bullet$ if there exist $\mathbf{j}\in\mathcal{A}_{a:a'}$ and $\mathcal{S}\in\mathcal{B}_{\mathbf{x}_0}^\bullet$ such that $\mathcal{S}\cap\mathcal{B}_{b:b'}\not\in\{\emptyset,\mathcal{S}\}$ and either (1) $(([\mathbf{P}_{\mathcal{S}\cap\mathcal{B}_{b:b'}}])_{\mathbf{x}_0}^\bullet\mathbf{j}^T)(\mathbf{x}_0) \neq\mathbf{0}$ or (2) if $\bullet\in\{1_p,1\}$, for every $\mathbf{M}\in\mathbb{R}^{n\times n}$
	\begin{equation*}
		\limsup_{\mathbf{x}\to \mathbf{x}_0}\frac{\|([\mathbf{P}_{\mathcal{S}\cap\mathcal{B}_{b:b'}}]_{\mathbf{x}_0}^\bullet(\mathbf{x})\mathbf{A} - \mathbf{M})(\mathbf{x}-\mathbf{x}_0)\|}{\|\mathbf{x}-\mathbf{x}_0\|} > 0
		\label{eq:discontinuity-nonLipschitz-nondifferentiability-of-QRD-modified}
	\end{equation*}
	where $\mathbf{A} = \mathsf{D}(([\mathbf{P}_{\mathcal{S}\cap\mathcal{B}_{b:b'}}])_{\mathbf{x}_0}^\bullet\mathbf{j}^T)(\mathbf{x}_0)$. 
	The condition becomes if and only if for $\bullet \in \{0,L_p,1_p\}$.
\end{theorem}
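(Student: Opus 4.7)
The plan is to reduce the statement about the block $\mathbf{C}_{a:a',b:b'}\hat{\mathbf{J}}_{b:b'}$ to a statement about individual projections $\mathbf{P}_{\mathcal{B}_{b:b'}}\mathbf{j}^T$ and then decompose along the partition of $\mathcal{B}$ provided by $\mathcal{B}_{\mathbf{x}_0}^\bullet$, so that each piece can be handled by a single clause of Proposition~\ref{prp:properties_of_MCBS}.

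First I would observe the algebraic identity $\mathbf{C}_{a:a',b:b'}\hat{\mathbf{J}}_{b:b'} = \mathbf{J}_{a:a'}\mathbf{P}_{\mathcal{B}_{b:b'}}$, which follows from the blockwise version of Lemma~\ref{lem:orthogonalization_of_J} that was used earlier to write $\mathbf{C}_{ab}\hat{\mathbf{J}}_b = \mathbf{J}_a\mathbf{P}_b$. Applying Lemma~\ref{lem:derivative_of_matrix_and_its_entries} row by row, $\mathbf{J}_{a:a'}\mathbf{P}_{\mathcal{B}_{b:b'}}\notin C_{\mathbf{x}_0}^\bullet$ iff there exists $\mathbf{j}\in\mathcal{A}_{a:a'}$ with $\mathbf{P}_{\mathcal{B}_{b:b'}}\mathbf{j}^T\notin C_{\mathbf{x}_0}^\bullet$, since $\mathbf{j}^T\in C_{\mathbf{x}_0}^\bullet$. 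So the theorem reduces to characterising when $\mathbf{P}_{\mathcal{B}_{b:b'}}\mathbf{j}^T$ fails to be $\bullet$-continuous.

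For Part~1, I would use Proposition~\ref{prp:properties_of_MCBS}.\ref{prp:MCBS:partition} to write $\mathbf{P}_{\mathcal{B}_{b:b'}}=\sum_{\mathcal{S}\in\mathcal{B}_{\mathbf{x}_0}^\bullet}\mathbf{P}_{\mathcal{S}\cap\mathcal{B}_{b:b'}}$ as a finite sum, then apply Proposition~\ref{prp:properties_of_MCBS}.\ref{prp:MCBS:iff_conditions} to each summand to replace $\mathbf{P}_{\mathcal{S}\cap\mathcal{B}_{b:b'}}\mathbf{j}^T$ by $[\mathbf{P}_{\mathcal{S}\cap\mathcal{B}_{b:b'}}]_{\mathbf{x}_0}^\bullet\mathbf{j}^T$ up to a $C_{\mathbf{x}_0}^\bullet$-summand. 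For the ``if'' direction, if some $\mathcal{S}$ makes $[\mathbf{P}_{\mathcal{S}\cap\mathcal{B}_{b:b'}}]_{\mathbf{x}_0}^\bullet\mathbf{j}^T\notin C_{\mathbf{x}_0}^\bullet$, then $\mathbf{P}_{\mathcal{S}\cap\mathcal{B}_{b:b'}}\mathbf{j}^T\notin C_{\mathbf{x}_0}^\bullet$, and I would invoke Proposition~\ref{prp:properties_of_MCBS}.\ref{prp:MCBS:nonsmoothness_condition} with $\mathcal{T}=\mathcal{S}\cap\mathcal{B}_{b:b'}\subset\mathcal{S}$ and $\mathcal{U}=\mathcal{B}_{b:b'}\setminus\mathcal{S}\subset\mathcal{B}\setminus\mathcal{S}$ (so that $\mathbf{P}_\mathcal{T}+\mathbf{P}_\mathcal{U}=\mathbf{P}_{\mathcal{B}_{b:b'}}$) to conclude $\mathbf{P}_{\mathcal{B}_{b:b'}}\mathbf{j}^T\notin C_{\mathbf{x}_0}^\bullet$. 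For the ``only if'' direction, suppose every $\mathcal{S}\in\mathcal{B}_{\mathbf{x}_0}^\bullet$ yields $[\mathbf{P}_{\mathcal{S}\cap\mathcal{B}_{b:b'}}]_{\mathbf{x}_0}^\bullet\mathbf{j}^T\in C_{\mathbf{x}_0}^\bullet$; then each $\mathbf{P}_{\mathcal{S}\cap\mathcal{B}_{b:b'}}\mathbf{j}^T\in C_{\mathbf{x}_0}^\bullet$ by Proposition~\ref{prp:properties_of_MCBS}.\ref{prp:MCBS:iff_conditions}, and summing over the finite partition gives $\mathbf{P}_{\mathcal{B}_{b:b'}}\mathbf{j}^T\in C_{\mathbf{x}_0}^\bullet$, a contradiction.

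For Part~2, I would apply Proposition~\ref{prp:properties_of_MCBS}.\ref{prp:MCBS:necessary_conditions} directly with $\mathcal{T}=\mathcal{S}\cap\mathcal{B}_{b:b'}$ and $\mathbf{v}=\mathbf{j}^T$. The hypotheses $\mathcal{T}\notin\{\emptyset,\mathcal{S}\}$ together with the failure of (1) or (2) are the contrapositive of the conditions that make $\mathbf{P}_\mathcal{T}\mathbf{v}\in C_{\mathbf{x}_0}^\bullet$, so $\mathbf{P}_{\mathcal{S}\cap\mathcal{B}_{b:b'}}\mathbf{j}^T\notin C_{\mathbf{x}_0}^\bullet$, hence $[\mathbf{P}_{\mathcal{S}\cap\mathcal{B}_{b:b'}}]_{\mathbf{x}_0}^\bullet\mathbf{j}^T\notin C_{\mathbf{x}_0}^\bullet$ by Proposition~\ref{prp:properties_of_MCBS}.\ref{prp:MCBS:iff_conditions}, and Part~1 gives $\mathbf{C}_{a:a',b:b'}\hat{\mathbf{J}}_{b:b'}\notin C_{\mathbf{x}_0}^\bullet$. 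For the ``if and only if'' claim when $\bullet\in\{0,L_p,1_p\}$, I would invoke the biconditional version of Proposition~\ref{prp:properties_of_MCBS}.\ref{prp:MCBS:necessary_conditions} in the converse: from Part~1 pick $\mathbf{j},\mathcal{S}$ with $\mathbf{P}_{\mathcal{S}\cap\mathcal{B}_{b:b'}}\mathbf{j}^T\notin C_{\mathbf{x}_0}^\bullet$; since $\mathcal{T}\in\{\emptyset,\mathcal{S}\}$ would force $\mathbf{P}_\mathcal{T}\mathbf{v}\in C_{\mathbf{x}_0}^\bullet$, one has $\mathcal{S}\cap\mathcal{B}_{b:b'}\notin\{\emptyset,\mathcal{S}\}$, and either (1) or (2) must fail.

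The main obstacle I anticipate is not the logic but the bookkeeping around the decomposition $\mathbf{P}_\mathcal{T}=(\mathbf{P}_\mathcal{T})_{\mathbf{x}_0}^\bullet+[\mathbf{P}_\mathcal{T}]_{\mathbf{x}_0}^\bullet$ versus $((\mathbf{P}_\mathcal{T}))_{\mathbf{x}_0}^\bullet+([\mathbf{P}_\mathcal{T}])_{\mathbf{x}_0}^\bullet$: one has to be careful that the double-bracketed objects, which live at the scale of the whole CMBS $\mathcal{S}$, are the ones that appear in conditions (1) and (2), while the single-bracketed ones are what actually govern the $\bullet$-continuity of $\mathbf{P}_{\mathcal{S}\cap\mathcal{B}_{b:b'}}\mathbf{j}^T$. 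Once this correspondence is set up, the theorem is essentially a translation of Proposition~\ref{prp:properties_of_MCBS} through the identity $\mathbf{C}_{a:a',b:b'}\hat{\mathbf{J}}_{b:b'}=\mathbf{J}_{a:a'}\mathbf{P}_{\mathcal{B}_{b:b'}}$.
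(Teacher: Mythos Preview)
Your proposal is correct and follows essentially the same route as the paper's proof: reduce to rows via Lemma~\ref{lem:derivative_of_matrix_and_its_entries}, split $\mathbf{P}_{\mathcal{B}_{b:b'}}$ along the partition $\mathcal{B}_{\mathbf{x}_0}^\bullet$, use Proposition~\ref{prp:properties_of_MCBS}.\ref{prp:MCBS:nonsmoothness_condition} (or equivalently multiplication by $\mathbf{P}_\mathcal{S}$) and \ref{prp:MCBS:iff_conditions} to isolate a single bad $\mathcal{S}$, and finish Part~2 by the contrapositive of Proposition~\ref{prp:properties_of_MCBS}.\ref{prp:MCBS:necessary_conditions}. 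The paper's proof is terser (it compresses your argument into a four-line chain of equivalences citing Proposition~\ref{prp:properties_of_MCBS} wholesale), but the content is identical; your explicit identification of which clause is used at each step and your remark on the single- versus double-bracket bookkeeping are accurate and do not diverge from the paper's logic.
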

\begin{proof}
	By Lemma \ref{lem:derivative_of_matrix_and_its_entries} and Proposition \ref{prp:properties_of_MCBS},
	\begin{align*}
		&\mathbf{C}_{a:a',b:b'}\hat{\mathbf{J}}_{b:b'} = \mathbf{J}_{a:a'}\mathbf{P}_{\mathcal{B}_{b:b'}}\notin C_{\mathbf{x}_0}^\bullet \\
		&\iff \exists\mathbf{j}\in\mathcal{A}_{a:a'}:\mathbf{P}_{\mathcal{B}_{b:b'}}\mathbf{j}^T = \sum_{\mathcal{S}\in\mathcal{B}_{\mathbf{x}_0}^\bullet}\mathbf{P}_{\mathcal{S}\cap\mathcal{B}_{b:b'}}\mathbf{j}^T \notin C_{\mathbf{x}_0}^\bullet \\
		&\iff \exists \mathbf{j}\in\mathcal{A}_{a:a'},\,\mathcal{S}\in\mathcal{B}_{\mathbf{x}_0}^\bullet:\mathbf{P}_{\mathcal{S}\cap\mathcal{B}_{b:b'}}\mathbf{j}^T\notin C_{\mathbf{x}_0}^\bullet \\
		&\iff \exists \mathbf{j}\in\mathcal{A}_{a:a'},\,\mathcal{S}\in\mathcal{B}_{\mathbf{x}_0}^\bullet:[\mathbf{P}_{\mathcal{S}\cap\mathcal{B}_{b:b'}}]_{\mathbf{x}_0}^\bullet\mathbf{j}^T\notin C_{\mathbf{x}_0}^\bullet.
	\end{align*}
	The rest is the direct consequence of Proposition \ref{prp:properties_of_MCBS}.\ref{prp:MCBS:necessary_conditions}.
\end{proof}

Let $\mathbf{u}$ be a PIK solution of $[\mathsf{S}]^\bullet\subset\mathbb{S}^\bullet$ in the form of \eqref{eqn:class_of_pik_solutions} and fix $\mathsf{S}\in[\mathsf{S}]^\bullet$.
Then, $\mathbf{f}_t,\mathbf{F}_q,\mathbf{r}'\in C^\bullet$ by Lemma \ref{lem:derivative_of_matrix_and_its_entries}; $\mathbf{R}^{-1}, \mathbf{J} \in C^\bullet$ by Lemma \ref{lem:ContinuityOfPseudoinverse-modified}; and $\mathbf{u}(\cdot,\mathsf{S})\in C^\bullet$ if and only if $\mathbf{R}\mathbf{u}(\cdot,\mathsf{S}) = \hat{\mathbf{J}}^T\mathbf{C}_D^T\mathbf{L}\mathbf{r}'\in C^\bullet$ by Lemma \ref{lem:Af_continuous_iff_f_continuous}. 
Theorem \ref{thm:discontinuity-of-QRD-modified} discovers that $\mathbf{u}$ contains a source of $\bullet$-discontinuity that is $\hat{\mathbf{J}}^T\mathbf{C}_D^T$.
Define set-valued maps $\mathcal{G},\mathcal{G}^\bullet:\mathbb{S}^\bullet\to\bigcup_{i\in\mathbb{N}}2^{\mathbb{R}^i}$ as:
\begin{align*}
	\mathcal{G}(\mathsf{S}) &= \mathcal{G}_\mathsf{S} = \{\mathbf{x}\in X\mid \det(\mathbf{C}(\mathbf{x})) \neq 0\} \\
	\mathcal{G}^\bullet(\mathsf{S}) &= \mathcal{G}_\mathsf{S}^\bullet = \{\mathbf{x}\in X\mid \mathbf{C}_{aa}\hat{\mathbf{J}}_a \in C_{\mathbf{x}}^\bullet,\,a\in\overline{1,l}\}.
\end{align*}
Obviously, $\mathcal{G}([\mathsf{S}]^\bullet) = \mathcal{G}(\mathsf{S})$ and $\mathcal{G}^\bullet([\mathsf{S}]^\bullet) = \mathcal{G}^\bullet(\mathsf{S})$ for all $\mathsf{S}\in[\mathsf{S}]^\bullet$. 
$\mathcal{G}_\mathsf{S} = \mathrm{int}(\mathcal{G}_\mathsf{S}) \subset \mathcal{G}_\mathsf{S}^\bullet$ and $X\setminus\mathcal{G}_\mathsf{S}^\bullet \subset X\setminus\mathcal{G}_\mathsf{S} = \mathrm{cl}(X\setminus\mathcal{G}_\mathsf{S})$ because if $\mathbf{x}\in\mathcal{G}_\mathsf{S}$, then $c_{aa}(\mathbf{x}) > 0$, $c_{aa}\hat{\mathbf{j}}_a\in C_\mathbf{x}^\bullet$, and $\{\hat{\mathbf{j}}_a\}\in\mathcal{B}_\mathbf{x}^\bullet$ for all $a\in\overline{1,m}$ by Lemma \ref{lem:orthogonalization_of_J} and Proposition \ref{prp:properties_of_MCBS}.\ref{prp:MCBS:first_element_of_S_is_smooth}, so $c_{aa} = \langle c_{aa}\hat{\mathbf{j}}_a,c_{aa}\hat{\mathbf{j}}_a\rangle^{1/2}\in C_\mathbf{x}^\bullet$ and $c_{ab}\hat{\mathbf{j}}_b = \mathbf{j}_a\mathbf{P}({\{\hat{\mathbf{j}}_b\}})\in C_\mathbf{x}^\bullet$ for all $a,b\in\overline{1,m}$. 
It is obvious that $\bullet$-continuity of $\mathbf{u}(\cdot,\mathsf{S})$ cannot be guaranteed at $\mathbf{x}\in X\setminus\mathcal{G}_\mathsf{S}^\bullet$ in general.
If $\Omega = \{\mathbf{x}\in X\mid\mathbf{L}\in C_\mathbf{x}^\bullet\}$, then we can only guarantee $\mathbf{u}(\cdot,\mathsf{S})\in C_{\Omega\cap\mathcal{G}^\bullet(\mathsf{S})}^\bullet$.

Let $\mathbf{u}_\alpha$ for $\alpha\in\overline{1,4}$ be the $\boldsymbol{\pi}_\alpha$-PIK solution of $[\mathsf{S}]^\bullet$ with the damping functions given by \eqref{eqn:damping_function}.
Obviously, $\mathbf{u}_4 \in C_{\mathcal{G}^\bullet(\mathsf{S})}^\bullet$.
Let $\alpha\in\overline{1,3}$.
Let $\mu_1 = \cdots = \mu_l = 0$ and $\mathbf{x}_0\in\mathcal{G}_\mathsf{S}$.
Since $\mathbf{C}_{aa}\hat{\mathbf{J}}_a\in C_{\mathbf{x}_0}^\bullet$ and $\lim_{\mathbf{x}\to \mathbf{x}_0}\mathrm{rank}((\mathbf{C}_{aa}\hat{\mathbf{J}}_a)(\mathbf{x})) = \mathrm{rank}((\mathbf{C}_{aa}\hat{\mathbf{J}}_a)(\mathbf{x}_0)) = m_a$ by $\mathcal{G}_\mathsf{S} = \mathrm{int}(\mathcal{G}_\mathsf{S}) \subset \mathcal{G}_\mathsf{S}^\bullet$, we have $\hat{\mathbf{J}}_a^T\mathbf{C}_{aa}^* = (\mathbf{C}_{aa}\hat{\mathbf{J}}_a)^+\in C_{\mathbf{x}_0}^\bullet$ and $\mathbf{D}_a = (\mathbf{C}_{aa}\hat{\mathbf{J}}_a\hat{\mathbf{J}}_a^T\mathbf{C}_{aa}^T)^+\in C_{\mathbf{x}_0}^\bullet$ by Lemma \ref{lem:ContinuityOfPseudoinverse-modified}. 
Similarly, $\mathbf{H}_a\in C_{\mathbf{x}_0}^\bullet$.
It follows that $\mathbf{D},\mathbf{H},\hat{\mathbf{J}}^T\mathbf{C}_D^\circledast\in C_{\mathbf{x}_0}^\bullet$ and $\mathbf{C}_L\mathbf{C}_D^\circledast = (\mathbf{C}_L\hat{\mathbf{J}})(\hat{\mathbf{J}}^T\mathbf{C}_D^\circledast) = (\mathbf{J} - \mathbf{C}_D\hat{\mathbf{J}})(\hat{\mathbf{J}}^T\mathbf{C}_D^\circledast)\in C_{\mathbf{x}_0}^\bullet$.
Therefore, $\mathbf{u}_\alpha(\cdot,\mathsf{S}) \in C_{\mathbf{x}_0}^\bullet$.
If $\mathbf{x}_0\in\mathrm{bd}(\mathcal{G}_\mathsf{S})\cap\mathcal{G}_\mathsf{S}^\bullet$, then there exist $a\in\overline{1,l}$ and a sequence $\mathbf{x}_i\to\mathbf{x}_0$ such that $\lim_{i\to\infty}\|(\mathbf{C}_{aa}\mathbf{C}_{aa})^+(\mathbf{x}_i)\| = \infty$ or $\lim_{i\to\infty}\|(\mathbf{J}_a\mathbf{J}_a^T)^+(\mathbf{x}_i)\| = \infty$, so a regularization should be introduced to the $\boldsymbol{\pi}_\alpha$-PIK solution of $[\mathsf{S}]^\bullet$.
Let $\mu_1,\dots,\mu_l\in(0,\infty)$, $\nu\in\mathbb{N}\cup\{0\}$, and $\mathbf{x}_0\in\mathcal{G}_\mathsf{S}^\bullet$.
$\hat{\mathbf{J}}_a^T\mathbf{C}_{aa}^* = (\mathbf{C}_{aa}\hat{\mathbf{J}}_a)^*\in C_{\mathbf{x}_0}^\bullet$ by Lemma \ref{lem:smoothness_of_extended_damped_pseudoinverse}.
We can show $\mathbf{D}_a,\mathbf{H}_a\in C_{\mathbf{x}_0}^\bullet$ similarly as in the proof of Lemma \ref{lem:smoothness_of_extended_damped_pseudoinverse}.
It follows that $\mathbf{D},\mathbf{H},\mathbf{C}_L\mathbf{C}_D^\circledast\in C_{\mathbf{x}_0}^\bullet$.
Therefore, $\mathbf{u}_\alpha(\cdot,\mathsf{S})\in C_{\mathbf{x}_0}^\bullet$.
$\bullet$-continuity of $\mathbf{u}_\alpha(\cdot,\mathsf{S})$ on $X\setminus\mathcal{G}_\mathsf{S}^\bullet$ is not guaranteed.

Now, we have a canonical question: Can we always find a $\bullet$-continuous PIK solution of $[\mathsf{S}]^\bullet$?
It depends on the level of priority required.
For example, if $\mathbf{J}\neq\mathbf{0}$ and $\pi_a(\mathbf{x},\mathbf{y},\mathsf{S}) = \|\mathbf{P}_a(\mathbf{x})\mathbf{y}\|$ for $a\in\overline{1,l}$, then the $\boldsymbol{\pi}$-WPIK solution of $[\mathsf{S}]^\bullet$ becomes $\mathbf{u}(\cdot,\mathsf{S}) = \mathbf{0} \in C^\bullet$ for all $\mathsf{S}\in[\mathsf{S}]^\bullet$.
So, a more interesting question is: Can we always find a $\bullet$-continuous SPIK solution of $[\mathsf{S}]^\bullet$?
We prove by contradiction that there does not always exist a $\bullet$-continuous SPIK solution of $[\mathsf{S}]^\bullet$. 
Define $\mathcal{B}_a = \{\text{rows of $\hat{\mathbf{J}}_a$}\}$ for $a\in\overline{1,l}$ and assume that there exist $\mathbf{x}_0\in X$ and $\mathcal{S}\in\mathcal{B}_{\mathbf{x}_0}^\bullet$ satisfying $\mathcal{R}([\mathbf{P}_{\mathcal{S}\setminus\bigcup_{a=1}^{l-1}\mathcal{B}_b}]_{\mathbf{x}_0}^\bullet(\mathbf{x}_0)) \not\perp \mathcal{R}(\mathbf{P}_l(\mathbf{x}_0))$. 
Let $\mathcal{T} = \mathcal{S}\cap\bigcup_{a=1}^{l-1}\mathcal{B}_a$. 
Then, $\mathcal{S}\setminus\bigcup_{a=1}^{l-1}\mathcal{B}_a = \mathcal{S}\setminus\mathcal{T}$. 
Suppose that there exists a $\bullet$-continuous SPIK solution $\mathbf{u}$ of $[\mathsf{S}]^\bullet$. 
By Lemma \ref{lem:Af_continuous_iff_f_continuous}, $\mathbf{v}(\cdot,\mathsf{S}) = \mathbf{R}\mathbf{u}(\cdot,\mathsf{S})\in C^\bullet$ for all $\mathsf{S}\in[\mathsf{S}]^\bullet$. 
There exists a strongly proper objective function $\boldsymbol{\pi}$ for $[\mathsf{S}]^\bullet$ satisfying
\begin{align*}
	\mathbf{v}(\mathbf{x},\mathsf{S}) &= \mathbf{v}_1(\mathbf{x},\mathsf{S}) + \cdots + \mathbf{v}_l(\mathbf{x},\mathsf{S}) \\
	&= \arglexmin_{\mathbf{y}\in\mathbb{R}^n} (\pi_1(\mathbf{x},\mathbf{y},\mathsf{S}), \dots, \pi_l(\mathbf{x},\mathbf{y},\mathsf{S}), \|\mathbf{y}\|^2/2) 
\end{align*}
for every $(\mathbf{x},\mathsf{S})$ where $\mathbf{v}_a(\mathbf{x},\mathsf{S}) \in \mathcal{R}(\mathbf{P}_a(\mathbf{x}))$ for $a\in\overline{1,l}$. 
Since only the reference is different for each $\mathsf{S}\in[\mathsf{S}]^\bullet$, we can write $\mathbf{v}(\cdot,\mathsf{S}) = \mathbf{v}(\cdot,\mathbf{r})$ for $\mathbf{r}\in C^\bullet$. 
Then, $([\mathbf{P}_\mathcal{T}])_{\mathbf{x}_0}^\bullet\mathbf{v}(\cdot,\mathbf{r}) = [\mathbf{P}_\mathcal{T}]_{\mathbf{x}_0}^\bullet\mathbf{v}_{1:l-1}(\cdot,\mathbf{r}) + [\mathbf{P}_{\mathcal{S}\setminus\mathcal{T}}]_{\mathbf{x}_0}^\bullet\mathbf{v}_l(\cdot,\mathbf{r})\in C_{\mathbf{x}_0}^\bullet$ for all $\mathbf{r}\in C^\bullet$ where $\mathbf{v}_{1:l-1} = \mathbf{v}_1 + \cdots + \mathbf{v}_{l-1}$. 
Fix $\mathbf{r} = (\mathbf{r}_1,\dots,\mathbf{r}_l)\in C^\bullet$ and let $\tilde{\mathbf{r}} = (\mathbf{r}_1,\dots,\mathbf{r}_{l-1},\tilde{\mathbf{r}}_l)\in C^\bullet$. 
By (O1), $\mathbf{v}_{1:l-1}(\cdot,\tilde{\mathbf{r}}) = \mathbf{v}_{1:l-1}(\cdot,\mathbf{r})$ for all $\tilde{\mathbf{r}}_l\in C^\bullet$, so we have $([\mathbf{P}_\mathcal{T}])_{\mathbf{x}_0}^\bullet(\mathbf{v}(\cdot,\mathbf{r}) - \mathbf{v}(\cdot,\tilde{\mathbf{r}})) = [\mathbf{P}_{\mathcal{S}\setminus\mathcal{T}}]_{\mathbf{x}_0}^\bullet(\mathbf{v}(\cdot,\mathbf{r}) - \mathbf{v}(\cdot,\tilde{\mathbf{r}})) \in C_{\mathbf{x}_0}^\bullet$ for all $\tilde{\mathbf{r}}_l\in C^\bullet$. 
By Proposition \ref{prp:properties_of_MCBS}.\ref{prp:MCBS:necessary_conditions}, $[\mathbf{P}_{\mathcal{S}\setminus\mathcal{T}}]_{\mathbf{x}_0}^\bullet(\mathbf{x}_0)(\mathbf{v}(\mathbf{x}_0,\mathbf{r}) - \mathbf{v}(\mathbf{x}_0,\tilde{\mathbf{r}})) = [\mathbf{P}_{\mathcal{S}\setminus\mathcal{T}}]_{\mathbf{x}_0}^\bullet(\mathbf{x}_0)(\mathbf{v}_l(\mathbf{x}_0,\mathbf{r}) - \mathbf{v}_l(\mathbf{x}_0,\tilde{\mathbf{r}})) = \mathbf{0}$ for all $\tilde{\mathbf{r}}_l(\mathbf{x}_0)\in\mathbb{R}^{m_l}$. 
Since the condition $\mathcal{R}([\mathbf{P}_{\mathcal{S}\setminus\mathcal{T}}]_{\mathbf{x}_0}^\bullet(\mathbf{x}_0)) \not\perp \mathcal{R}(\mathbf{P}_l(\mathbf{x}_0))$ implies $\mathcal{R}(\mathbf{P}_l(\mathbf{x}_0))\not\subset\mathcal{N}([\mathbf{P}_{\mathcal{S}\setminus\mathcal{T}}]_{\mathbf{x}_0}^\bullet(\mathbf{x}_0))$, we find $\mathbf{v}_l(\mathbf{x}_0,\mathbf{r}) - \mathbf{v}_l(\mathbf{x}_0,\tilde{\mathbf{r}}) \in \mathcal{N}([\mathbf{P}_{\mathcal{S}\setminus\mathcal{T}}]_{\mathbf{x}_0}^\bullet(\mathbf{x}_0)) \cap \mathcal{R}(\mathbf{P}_l(\mathbf{x}_0)) \neq \mathcal{R}(\mathbf{P}_l(\mathbf{x}_0))$ for all $\tilde{\mathbf{r}}_l(\mathbf{x}_0)\in\mathbb{R}^{m_l}$, a contradiction of (O3) that the mapping $\tilde{\mathbf{r}}_l(\mathbf{x}_0)\mapsto\mathbf{v}_l(\mathbf{x}_0,\tilde{\mathbf{r}})$ of $\mathcal{R}((\mathbf{C}_{ll}\hat{\mathbf{J}}_l)(\mathbf{x}_0))$ into $\mathcal{R}(\mathbf{P}_l(\mathbf{x}_0))$ is one-to-one and onto. 
We summarize this discussion in Theorem \ref{thm:smoothness_of_the_PIK_solution}.

\begin{theorem}
	\label{thm:smoothness_of_the_PIK_solution}
	Let $[\mathsf{S}]^\bullet\subset\mathbb{S}^\bullet$.
	A PIK solution of $[\mathsf{S}]^\bullet$ in the form of \eqref{eqn:class_of_pik_solutions} is $\bullet$-continuous on $\mathcal{G}_\mathsf{S}^\bullet\cap\{\mathbf{x}\in X\mid \mathbf{L}\in C_\mathbf{x}^\bullet\}$.
	The $\boldsymbol{\pi}_\alpha$-PIK solution of $[\mathsf{S}]^\bullet$ for $\alpha\in\overline{1,3}$ with the damping functions given by \eqref{eqn:damping_function} is $\bullet$-continuous on $\mathcal{G}_\mathsf{S}$ if $\mu_1 = \cdots = \mu_l = 0$ and $\bullet$-continuous on $\mathcal{G}_\mathsf{S}^\bullet$ if $\mu_1,\dots,\mu_l\in(0,\infty)$ and $\nu\in\mathbb{N}\cup\{0\}$. 
	The $\boldsymbol{\pi}_4$-PIK solution of $[\mathsf{S}]^\bullet$ is $\bullet$-continuous on $\mathcal{G}_\mathsf{S}^\bullet$.
	Define $\mathcal{B}_a = \{\text{rows of $\hat{\mathbf{J}}_a$}\}$ for $a\in\overline{1,l}$. 
	If there exist $\mathbf{x}_0\in X$ and $\mathcal{S}\in\mathcal{B}_{\mathbf{x}_0}^\bullet$ satisfying 
	\begin{equation*}
		\mathcal{R}([\mathbf{P}_{\mathcal{S}\setminus\bigcup_{a=1}^{l-1}\mathcal{B}_b}]_{\mathbf{x}_0}^\bullet(\mathbf{x}_0)) \not\perp \mathcal{R}(\mathbf{P}_{\mathcal{F}(\mathcal{B}_l,\mathbf{x}_0)}(\mathbf{x}_0)),
	\end{equation*}
	then there does not exist a $\bullet$-continuous SPIK solution of $[\mathsf{S}]^\bullet$.
\end{theorem}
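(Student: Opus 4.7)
The plan is to handle the four claims in order. The first three are continuity bookkeeping built directly on the preceding lemmas, while the fourth is the only genuine non-existence argument and forms the substantive content. Throughout I use Lemma \ref{lem:Af_continuous_iff_f_continuous} to reduce $\bullet$-continuity of $\mathbf{u}(\cdot,\mathsf{S})$ to $\bullet$-continuity of $\mathbf{R}\mathbf{u}(\cdot,\mathsf{S}) = \hat{\mathbf{J}}^T\mathbf{C}_D^T\mathbf{L}\mathbf{r}'$, since $\mathbf{R}$ is invertible everywhere.

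For the first claim, on $\mathcal{G}_\mathsf{S}^\bullet$ the factor $\hat{\mathbf{J}}^T\mathbf{C}_D^T$ is $\bullet$-continuous by the very definition of $\mathcal{G}_\mathsf{S}^\bullet$ (combined with Lemma \ref{lem:derivative_of_matrix_and_its_entries}), and $\mathbf{r}'\in C^\bullet$ because $\mathsf{S}\in\mathbb{S}^\bullet$; multiplying by $\mathbf{L}$ at points where $\mathbf{L}\in C_\mathbf{x}^\bullet$ preserves $\bullet$-continuity, so the product lies in $C_\mathbf{x}^\bullet$ on $\mathcal{G}_\mathsf{S}^\bullet\cap\{\mathbf{x}\mid\mathbf{L}\in C_\mathbf{x}^\bullet\}$. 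For claims two and three, the only work is verifying $\mathbf{L}\in C_\mathbf{x}^\bullet$ on the claimed sets for each specific $\boldsymbol{\pi}_\alpha$. When $\mu_1=\cdots=\mu_l=0$ and $\mathbf{x}\in\mathcal{G}_\mathsf{S}$, the matrices $\mathbf{C}_{aa}(\mathbf{x})$ have full rank $m_a$, and since $\mathcal{G}_\mathsf{S}=\mathrm{int}(\mathcal{G}_\mathsf{S})\subset\mathcal{G}_\mathsf{S}^\bullet$, Lemma \ref{lem:ContinuityOfPseudoinverse-modified} yields $\bullet$-continuity of $\mathbf{D}_a=(\mathbf{C}_{aa}\hat{\mathbf{J}}_a\hat{\mathbf{J}}_a^T\mathbf{C}_{aa}^T)^+$, of $\mathbf{H}_a$, and of $\mathbf{C}_{aa}^*=\mathbf{C}_{aa}^T\mathbf{D}_a$. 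Rewriting $\mathbf{C}_L\mathbf{C}_D^\circledast = (\mathbf{J}-\mathbf{C}_D\hat{\mathbf{J}})\hat{\mathbf{J}}^T\mathbf{C}_D^\circledast$ makes it visibly $\bullet$-continuous, so each $\mathbf{L}$ for $\alpha\in\overline{1,3}$ is $\bullet$-continuous. When $\mu_a>0$, Lemma \ref{lem:smoothness_of_extended_damped_pseudoinverse} replaces Lemma \ref{lem:ContinuityOfPseudoinverse-modified}: the specific damping \eqref{eqn:damping_function} makes the Gram-type matrix inside the damped pseudoinverse nonsingular everywhere, so the same quantities are $\bullet$-continuous on all of $\mathcal{G}_\mathsf{S}^\bullet$. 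Claim three is immediate since $\mathbf{L}=\mathbf{I}_m$ for $\boldsymbol{\pi}_4$.

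The fourth claim is proved by contradiction. Let $\mathcal{T}=\mathcal{S}\cap\bigcup_{a=1}^{l-1}\mathcal{B}_a$, so $\mathcal{S}\setminus\mathcal{T}=\mathcal{S}\setminus\bigcup_{a=1}^{l-1}\mathcal{B}_a$, and suppose that a $\bullet$-continuous SPIK solution $\mathbf{u}$ with strongly-proper $\boldsymbol{\pi}$ exists. Write $\mathbf{v}(\cdot,\mathbf{r})=\mathbf{R}\mathbf{u}(\cdot,\mathsf{S})=\sum_{a=1}^l\mathbf{v}_a(\cdot,\mathbf{r})$ with $\mathbf{v}_a\in\mathcal{R}(\mathbf{P}_a)$; by Lemma \ref{lem:Af_continuous_iff_f_continuous}, $\mathbf{v}(\cdot,\mathbf{r})\in C^\bullet$ for every admissible $\mathbf{r}\in C^\bullet$. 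Vary only the last reference from $\mathbf{r}_l$ to $\tilde{\mathbf{r}}_l$: property (O1) gives $\mathbf{v}_{1:l-1}(\cdot,\tilde{\mathbf{r}})=\mathbf{v}_{1:l-1}(\cdot,\mathbf{r})$, so the difference $\mathbf{v}(\cdot,\mathbf{r})-\mathbf{v}(\cdot,\tilde{\mathbf{r}})=\mathbf{v}_l(\cdot,\mathbf{r})-\mathbf{v}_l(\cdot,\tilde{\mathbf{r}})$ lies in $\mathcal{R}(\mathbf{P}_l)\subset\mathcal{R}(\mathbf{P}_{\mathcal{B}\setminus\bigcup_{a=1}^{l-1}\mathcal{B}_a})$, hence is annihilated by $\mathbf{P}_\mathcal{T}$. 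Consequently $(([\mathbf{P}_\mathcal{T}])_{\mathbf{x}_0}^\bullet(\mathbf{v}(\cdot,\mathbf{r})-\mathbf{v}(\cdot,\tilde{\mathbf{r}})))(\mathbf{x}_0)=\mathbf{0}$, and applying Proposition \ref{prp:properties_of_MCBS}.\ref{prp:MCBS:necessary_conditions} to the $C_{\mathbf{x}_0}^\bullet$-function $\mathbf{v}(\cdot,\mathbf{r})-\mathbf{v}(\cdot,\tilde{\mathbf{r}})$ forces $[\mathbf{P}_{\mathcal{S}\setminus\mathcal{T}}]_{\mathbf{x}_0}^\bullet(\mathbf{x}_0)(\mathbf{v}_l(\mathbf{x}_0,\mathbf{r})-\mathbf{v}_l(\mathbf{x}_0,\tilde{\mathbf{r}}))=\mathbf{0}$. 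Using Lemma \ref{lem:functions_property} to identify $\mathcal{R}(\mathbf{P}_l(\mathbf{x}_0))=\mathcal{R}(\mathbf{P}_{\mathcal{F}(\mathcal{B}_l,\mathbf{x}_0)}(\mathbf{x}_0))$, property (O3) makes $\tilde{\mathbf{r}}_l(\mathbf{x}_0)\mapsto\mathbf{v}_l(\mathbf{x}_0,\tilde{\mathbf{r}})$ a bijection onto $\mathcal{R}(\mathbf{P}_l(\mathbf{x}_0))$, so ranging $\tilde{\mathbf{r}}_l(\mathbf{x}_0)$ over $\mathbb{R}^{m_l}$ forces $\mathcal{R}(\mathbf{P}_l(\mathbf{x}_0))\subset\mathcal{N}([\mathbf{P}_{\mathcal{S}\setminus\mathcal{T}}]_{\mathbf{x}_0}^\bullet(\mathbf{x}_0))$, contradicting the hypothesis.

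The step I expect to be the main obstacle is the bookkeeping in the contradiction: correctly using the $\bullet$-discontinuity decomposition $\mathbf{P}_\mathcal{S}=(\mathbf{P}_\mathcal{T})_{\mathbf{x}_0}^\bullet+[\mathbf{P}_\mathcal{T}]_{\mathbf{x}_0}^\bullet+(\mathbf{P}_{\mathcal{S}\setminus\mathcal{T}})_{\mathbf{x}_0}^\bullet+[\mathbf{P}_{\mathcal{S}\setminus\mathcal{T}}]_{\mathbf{x}_0}^\bullet$ rather than $\mathbf{P}_\mathcal{T}$ itself, and invoking exactly the right direction of Proposition \ref{prp:properties_of_MCBS}.\ref{prp:MCBS:necessary_conditions}, namely that a $\bullet$-continuous function whose $(([\mathbf{P}_\mathcal{T}]))_{\mathbf{x}_0}^\bullet$-part vanishes at $\mathbf{x}_0$ has its $[\mathbf{P}_{\mathcal{S}\setminus\mathcal{T}}]_{\mathbf{x}_0}^\bullet$-part vanishing at $\mathbf{x}_0$ as well. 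The rank/range identification from Lemma \ref{lem:functions_property} is what converts the non-orthogonality hypothesis, stated in terms of $\mathbf{P}_{\mathcal{F}(\mathcal{B}_l,\mathbf{x}_0)}$, into the concrete statement about $\mathbf{P}_l$ needed to conclude via (O3).
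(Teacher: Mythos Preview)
Your proposal is correct and follows essentially the same route as the paper's own argument: the first three claims are handled by the identical bookkeeping via Lemmas \ref{lem:Af_continuous_iff_f_continuous}, \ref{lem:ContinuityOfPseudoinverse-modified}, and \ref{lem:smoothness_of_extended_damped_pseudoinverse}, and the non-existence claim proceeds exactly as in the paper by setting $\mathcal{T}=\mathcal{S}\cap\bigcup_{a=1}^{l-1}\mathcal{B}_a$, varying only $\mathbf{r}_l$, using (O1) to cancel $\mathbf{v}_{1:l-1}$, invoking Proposition \ref{prp:properties_of_MCBS}.\ref{prp:MCBS:necessary_conditions} to force $[\mathbf{P}_{\mathcal{S}\setminus\mathcal{T}}]_{\mathbf{x}_0}^\bullet(\mathbf{x}_0)(\mathbf{v}_l(\mathbf{x}_0,\mathbf{r})-\mathbf{v}_l(\mathbf{x}_0,\tilde{\mathbf{r}}))=\mathbf{0}$, and contradicting (O3). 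Your explicit mention of Lemma \ref{lem:functions_property} to identify $\mathbf{P}_l$ with $\mathbf{P}_{\mathcal{F}(\mathcal{B}_l,\mathbf{x}_0)}$ is a clean way to reconcile the theorem's hypothesis with the quantity used in the contradiction, which the paper leaves implicit.
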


\section{Trajectory Existence}
\label{sec:trajectory_existence}

Once we find a PIK solution $\mathbf{u}$ of $[\mathsf{S}]\subset\mathbb{S}$ in the form of \eqref{eqn:class_of_pik_solutions}, we generate a joint trajectory $\mathbf{q}:[t_0,\infty)\to\mathbb{R}^n$ of a kinematic system $\mathsf{S}\in[\mathsf{S}]$ by solving the differential equation
\begin{equation}
	\label{eqn:differential_equation}
	\dot{\mathbf{q}} = \mathbf{u}(t,\mathbf{q},\mathsf{S})
\end{equation}
with an initial value $\mathbf{q}(t_0) = \mathbf{q}_0$.
Then, the joint trajectory is utilized to operate a mechanism in an environment according to a scenario.
Therefore, existence of a joint trajectory satisfying \eqref{eqn:differential_equation} is an important property we must check.
The classical solution of this initial value problem is a function $\mathbf{q}:[t_0,\infty)\to\mathbb{R}^n$ that is differentiable on $(t_0,\infty)$ and satisfies $\mathbf{q}(t_0) = \mathbf{q}_0$ and $\dot{\mathbf{q}}(t) = \mathbf{u}(t,\mathbf{q}(t),\mathsf{S})$ for all $t\in(t_0,\infty)$. 
The classical solution exists if $\mathbf{u}(\cdot,\mathsf{S})$ is continuous and linearly bounded on $X$ and the classical solution is unique if additionally $\mathbf{u}(\cdot,\mathsf{S})$ is locally Lipschitz on $X$ where $\mathbf{u}(\cdot,\mathsf{S})$ is said to be linearly bounded on $X$ if there exist $\gamma,c\in[0,\infty)$ such that $\|\mathbf{u}(t,\mathbf{q},\mathsf{S})\|\le\gamma\|\mathbf{q}\| + c$ for all $(t,\mathbf{q})\in X$ \cite[pp.178]{Clarke2008}. 
Unfortunately, even if we assume $\mathsf{S}\in\mathbb{S}^0$, continuity of $\mathbf{u}(\cdot,\mathsf{S})$ is not guaranteed in general as we can see in Appendix.
One way to resolve this existence problem is to extend the definition of the classical solution. 

The Carath\'{e}odory solution is a function $\mathbf{q}:[t_0,\infty)\to\mathbb{R}^n$ that is absolutely continuous on $[t_0,\infty)$ and satisfies $\mathbf{q}(t_0) = \mathbf{q}_0$ and $\dot{\mathbf{q}}(t) = \mathbf{u}(t,\mathbf{q}(t),\mathsf{S})$ for almost all $t\in(t_0,\infty)$. 
The Carath\'{e}odory solution exists if $\mathbf{u}(t,\cdot,\mathsf{S})$ is continuous on $\mathbb{R}^n$ for almost all $t\in[t_0,\infty)$; $\mathbf{u}(\cdot,\mathbf{q},\mathsf{S})$ is measurable in $\mathbb{R}$ for each $\mathbf{q}\in\mathbb{R}^n$; there exists a function $m(t)$ integrable for each finite interval of $[t_0,\infty)$ such that $\|\mathbf{u}(t,\mathbf{q},\mathsf{S})\|\le m(t)$; and $\mathbf{u}(\cdot,\mathsf{S})$ is linearly bounded on $X$ and the Carath\'{e}odory solution is unique if additionally for each compact set $A\subset X$, there exists an integrable function $l(t)$ such that $\|\mathbf{u}(t,\mathbf{q}_1,\mathsf{S}) - \mathbf{u}(t,\mathbf{q}_2,\mathsf{S})\| \le l(t)\|\mathbf{q}_1-\mathbf{q}_2\|$ for every $(t,\mathbf{q}_1), (t,\mathbf{q}_2)\in A$ \cite[\S1]{Filippov1988}. 
Obviously, if $\mathbf{u}(\cdot,\mathsf{S})$ is continuous and linearly bounded on $X$, then the Carath\'{e}odory solution coincides with the classical one. 
The Carath\'{e}odory solution allows discontinuity of $\mathbf{u}(\cdot,\mathsf{S})$ in $t$ but still requires continuity in $\mathbf{q}$ that is not guaranteed. 

Further extension that allows discontinuity of $\mathbf{u}(\cdot,\mathsf{S})$ in both $t$ and $\mathbf{q}$ can be given by Krasovskii \cite{Ceragioli1999}\cite{Cortes2008}. 
The Krasovskii regularization of $\mathbf{u}(\cdot,\mathsf{S})$ is given by
\begin{equation*}
	U(t,\mathbf{q},\mathsf{S}) = \bigcap_{\delta>0}\overline{\mathrm{co}}\,\mathbf{u}(t,\mathbf{q}+\delta B_n,\mathsf{S})
\end{equation*}
where $\overline{\mathrm{co}}$ stands for the convex closure.
The Krasovskii solution is a function $\mathbf{q}:[t_0,\infty)\to\mathbb{R}^n$ that is absolutely continuous on $[t_0,\infty)$ and satisfies $\mathbf{q}(t_0) = \mathbf{q}_0$ and the differential inclusion $\dot{\mathbf{q}}(t) \in U(t,\mathbf{q}(t),\mathsf{S})$ for almost all $t\in(t_0,\infty)$. 
The Krasovskii solution exists if $U(\mathbf{x},\mathsf{S})$ is a nonempty compact convex set for every $\mathbf{x}\in X$; $U(\cdot,\mathsf{S})$ is upper semicontinuous on $X$; and there exist $\gamma,c\in[0,\infty)$ such that $\|\mathbf{u}'\| \le \gamma\|\mathbf{q}\| + c$ for all $\mathbf{x}\in X$ and $\mathbf{u}'\in U(\mathbf{x},\mathsf{S})$ where $U(\cdot,\mathsf{S})$ is said to be upper semicontinuous at $\mathbf{x}_0\in X$ if for every open set $U_0\subset\mathbb{R}^n$ containing $U(\mathbf{x}_0,\mathsf{S})$ there exists a neighborhood $\Omega\subset X$ of $\mathbf{x}_0$ such that $U(\Omega,\mathsf{S}) = \bigcup_{\mathbf{x}\in\Omega}U(\mathbf{x},\mathsf{S})\subset U_0$ \cite[Corollary 1.12, Exercise 1.14]{Clarke2008}. 
If a Carath\'{e}odory solution exists, then it is also a Krasovskii solution because $\mathbf{u}(\mathbf{x},\mathsf{S}) \in U(\mathbf{x},\mathsf{S})$ for all $\mathbf{x}\in X$. 
If $\mathbf{u}$ is continuous on $X$, then $U(\mathbf{x},\mathsf{S}) = \{\mathbf{u}(\mathbf{x},\mathsf{S})\}$ for all $\mathbf{x}\in X$, so all solutions are identical.

\begin{lemma}
	\label{lem:existence_of_Krasovskii_solution}
	Let $\mathbf{u}$ be a PIK solution of $[\mathsf{S}]$ in the form of \eqref{eqn:class_of_pik_solutions} and $\mathsf{S}\in[\mathsf{S}]$.
	If $\mathbf{r}'$ is linearly bounded and $\mathbf{F}_q$, $\mathbf{R}^{-1}$, and $\mathbf{L}$ are bounded, then for every $(t_0,\mathbf{q}_0)\in X$ there exists a Krasovskii solution $\mathbf{q}:[t_0,\infty)\to\mathbb{R}^n$ of \eqref{eqn:differential_equation} satisfying $\mathbf{q}(t_0) = \mathbf{q}_0$.
\end{lemma}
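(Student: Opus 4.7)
The plan is to verify the three hypotheses of the Krasovskii existence theorem quoted earlier in the section: $U(\mathbf{x},\mathsf{S})$ is a nonempty compact convex subset of $\mathbb{R}^n$ for every $\mathbf{x}\in X$; $U(\cdot,\mathsf{S})$ is upper semicontinuous on $X$; and there exist $\gamma,c\in[0,\infty)$ with $\|\mathbf{u}'\|\le\gamma\|\mathbf{q}\|+c$ for all $\mathbf{x}\in X$ and $\mathbf{u}'\in U(\mathbf{x},\mathsf{S})$. The backbone of the argument is to establish that $\mathbf{u}(\cdot,\mathsf{S})$ itself is linearly bounded on $X$; once this is done, local boundedness delivers the set-valued properties and upper semicontinuity of $U$ through standard nested-compactness arguments, and the linear bound on elements of $U$ follows from a short convex-closure manipulation.

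The key calculation is the pointwise norm estimate on $\mathbf{u} = \mathbf{R}^{-1}\hat{\mathbf{J}}^T\mathbf{C}_D^T\mathbf{L}\mathbf{r}'$. Three of the five factors are bounded either by hypothesis or by inspection: $\mathbf{R}^{-1}$ and $\mathbf{L}$ by hypothesis, and $\hat{\mathbf{J}}^T$ because $\hat{\mathbf{J}}$ is the top $m\times n$ block of the orthogonal matrix $\hat{\mathbf{J}}_e$ from Lemma \ref{lem:orthogonalization_of_J}, giving $\|\hat{\mathbf{J}}^T\|\le 1$. For the remaining factor $\mathbf{C}_D^T$, I use that orthogonality of $\hat{\mathbf{J}}_e$ gives $\mathbf{C}_e = \mathbf{J}\hat{\mathbf{J}}_e^T$, whence $\|\mathbf{C}_e\|\le\|\mathbf{J}\|\le\|\mathbf{F}_q\|\|\mathbf{R}^{-1}\|$, which is bounded. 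Since $\mathbf{C}$ is an $m\times m$ submatrix of $\mathbf{C}_e$ and each diagonal block $\mathbf{C}_{aa}$ is a submatrix of $\mathbf{C}$, I obtain $\|\mathbf{C}_D\|=\max_a\|\mathbf{C}_{aa}\|\le\|\mathbf{C}\|\le\|\mathbf{C}_e\|$. Chaining these bounds with the linear bound $\|\mathbf{r}'(t,\mathbf{q})\|\le\gamma'\|\mathbf{q}\|+c'$ produces the required linear bound on $\|\mathbf{u}(t,\mathbf{q},\mathsf{S})\|$ on all of $X$.

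Finally, I translate this into the Krasovskii conditions. Local boundedness of $\mathbf{u}(\cdot,\mathsf{S})$ makes each $\overline{\mathrm{co}}\,\mathbf{u}(t,\mathbf{q}+\delta B_n,\mathsf{S})$ compact and convex, and the nested intersection over $\delta>0$ is therefore nonempty, compact, and convex. Upper semicontinuity follows from a standard nested-compactness argument: if an open $U_0$ contains $U(\mathbf{x}_0,\mathsf{S})$, then some $\overline{\mathrm{co}}\,\mathbf{u}(t_0,\mathbf{q}_0+\delta_0 B_n,\mathsf{S})$ is already contained in $U_0$, and restricting $\mathbf{x}$ to lie within $\delta_0/2$ of $\mathbf{x}_0$ traps $U(\mathbf{x},\mathsf{S})$ inside $\overline{\mathrm{co}}\,\mathbf{u}(t,\mathbf{q}+(\delta_0/2)B_n,\mathsf{S})\subset U_0$. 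For the linear bound on $U$, every element of $\mathbf{u}(t,\mathbf{q}+\delta B_n,\mathsf{S})$ has norm at most $\gamma(\|\mathbf{q}\|+\delta)+c$; this bound is preserved by convex closure, and letting $\delta\to 0$ in the nested intersection yields $\|\mathbf{u}'\|\le\gamma\|\mathbf{q}\|+c$ for all $\mathbf{u}'\in U(\mathbf{x},\mathsf{S})$. The existence theorem then delivers the Krasovskii solution on $[t_0,\infty)$.

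The main obstacle is not any single deep step but rather locating a bound on $\mathbf{C}_D$, which is not among the quantities explicitly assumed bounded in the statement of the lemma; the orthogonality of $\hat{\mathbf{J}}_e$ together with the identity $\mathbf{J}=\mathbf{F}_q\mathbf{R}^{-1}$ is what bridges this gap and allows the estimate to close.
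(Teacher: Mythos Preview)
Your overall strategy matches the paper's: establish that $\mathbf{u}(\cdot,\mathsf{S})$ is linearly bounded, then verify the three Krasovskii hypotheses. Your linear-bound calculation is in fact more explicit than the paper's (which simply asserts the bound ``by the assumption'' without spelling out why $\hat{\mathbf{J}}^T$ and $\mathbf{C}_D^T$ are bounded), and it is correct.

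There is, however, a genuine gap in your upper-semicontinuity step. The Krasovskii regularization as defined in the text freezes $t$ and only thickens the $\mathbf{q}$-variable: $U(t,\mathbf{q}) = \bigcap_{\delta>0}\overline{\mathrm{co}}\,\mathbf{u}(t,\mathbf{q}+\delta B_n)$. Your nested-compactness step correctly yields $\overline{\mathrm{co}}\,\mathbf{u}(t_0,\mathbf{q}_0+\delta_0 B_n)\subset U_0$, but the next claim, that $\overline{\mathrm{co}}\,\mathbf{u}(t,\mathbf{q}+(\delta_0/2)B_n)\subset U_0$ for all $(t,\mathbf{q})$ near $(t_0,\mathbf{q}_0)$, does not follow: that set is evaluated at time $t$, not $t_0$, and there is no inclusion $\{t\}\times(\mathbf{q}+(\delta_0/2)B_n)\subset\{t_0\}\times(\mathbf{q}_0+\delta_0 B_n)$ when $t\neq t_0$. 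A locally bounded $\mathbf{u}$ that jumps in $t$ alone furnishes a counterexample to the argument (though not to the lemma). The paper handles this by working with balls $\mathbf{x}+\delta B_X$ in the full space $X$, so the time direction is thickened as well; it then shows the graph of $U$ is closed and the image of any bounded set is relatively compact, and invokes the closed-graph characterization of upper semicontinuity for compact-valued maps (Smirnov, Proposition~2.2). Replacing $(t,\mathbf{q}+\delta B_n)$ by $\mathbf{x}+\delta B_X$ throughout your argument would also close the gap.
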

\begin{proof}
	By the assumption, there exist $\gamma,c\in[0,\infty)$ satisfying $\|\mathbf{u}(\mathbf{x},\mathsf{S})\| \le \gamma\|\mathbf{q}\| + c$ for all $\mathbf{x}\in X$.
	It follows that $U(\mathbf{x},\mathsf{S}) \subset (\gamma\|\mathbf{q}\| + c)B_n$.
	Define a set-valued map $A:X\times (0,\infty) \to 2^{\mathbb{R}^n}$ as $A(\mathbf{x},\delta) = \overline{\mathrm{co}}\, \mathbf{u}(\mathbf{x} + \delta B_X) = \overline{\mathrm{co}}\, \bigcup_{\mathbf{x}' \in \mathbf{x} + \delta B_X}\{\mathbf{u}(\mathbf{x}')\}$. 
	Then, $U(\mathbf{x}) = \bigcap_{\delta>0}A(\mathbf{x},\delta) = \{\mathbf{u}'\in \mathbb{R}^n\mid \mathbf{u}'\in A(\mathbf{x},\delta),\forall\delta>0\}$. 
	Fix $\mathbf{x}\in X$. Since $\mathbf{u}(\mathbf{x}) \in U(\mathbf{x})$, $U(\mathbf{x})$ is nonempty. 
	Let $\mathbf{u}_i'\in U(\mathbf{x})$, $i=1,2,\dots$, and $\mathbf{u}_i'\to \mathbf{u}_0$ as $i\to\infty$. 
	Suppose that $\mathbf{u}_0\not\in U(\mathbf{x})$. 
	Then, there exists $\delta_0>0$ such that $\mathbf{u}_0\notin A(\mathbf{x},\delta_0)$. 
	Since $0<\delta<\delta_0$ implies $A(\mathbf{x},\delta)\subset A(\mathbf{x},\delta_0)$, we have $\mathbf{u}_0\notin A(\mathbf{x},\delta)$ for all $0<\delta\le\delta_0$. 
	Let $\delta_0>\delta_1>\delta_2>\cdots$ and $\delta_i\to0$ as $i\to\infty$. 
	Since $A(\mathbf{x},\delta)$ is closed and $\mathbf{u}_0\not\in A(\mathbf{x},\delta_0) \supset A(\mathbf{x},\delta_1) \supset A(\mathbf{x},\delta_2) \supset \cdots$, we have $d(\mathbf{u}_0,A(\mathbf{x},\delta_i)) \ge d(\mathbf{u}_0,A(\mathbf{x},\delta_0)) > 0$ for all $i\in\mathbb{N}$, while $\mathbf{u}_i'\in U(\mathbf{x})\iff \mathbf{u}_i'\in A(\mathbf{x},\delta)$ for all $\delta>0$, so $d(\mathbf{u}_0,A(\mathbf{x},\delta_i)) \le \|\mathbf{u}_0 - \mathbf{u}_i'\| \to 0$, a contradiction. 
	Thus, $\mathbf{u}_0 \in U(\mathbf{x})$ and $U(\mathbf{x})$ is closed.
	Since $A(\mathbf{x},\delta)$ is convex, if $\mathbf{u}_1',\mathbf{u}_2'\in U(\mathbf{x})$, then $\beta\mathbf{u}_1' + (1-\beta)\mathbf{u}_2'\in A(\mathbf{x},\delta)$ for all $\beta\in[0,1]$ and $\delta\in(0,\infty)$, so $\beta\mathbf{u}_1' + (1-\beta)\mathbf{u}_2'\in U(\mathbf{x})$ for all $\beta\in[0,1]$ and $U(\mathbf{x})$ is convex.
	From $\mathbf{u}(\mathbf{x}') \in (\gamma\|\mathbf{q}'\| + c)B_n$ for all $\mathbf{x}' = (t',\mathbf{q}') \in X$, we find 
	\begin{align*}
	U(\mathbf{x}) &= \bigcap_{\delta>0}\overline{\mathrm{co}}\,\bigcup_{\mathbf{x}' \in \mathbf{x} + \delta B_X}\{\mathbf{u}(\mathbf{x}')\} \\
	&\subset \bigcap_{\delta>0}\overline{\mathrm{co}}\,\bigcup_{\mathbf{q}' \in \mathbf{q} + \delta B_n}(\gamma\|\mathbf{q}'\|+c)B_n \\
	&\subset \bigcap_{\delta>0}(\gamma\|\mathbf{q}\| + \gamma\delta + c)B_n \\
	&= (\gamma\|\mathbf{q}\|+c)B_n.
	\end{align*}
	Therefore, $U(\mathbf{x})$ is a nonempty compact convex set for every $\mathbf{x}\in X$ and there exist $\gamma,c\in[0,\infty)$ such that $\|\mathbf{u}'\| \le \gamma\|\mathbf{q}\| + c$ for all $\mathbf{x}\in X$ and $\mathbf{u}'\in U(\mathbf{x})$.
	
	The graph of $U$ is defined as $\mathrm{gr}U  = \{(\mathbf{x}',\mathbf{u}')\in X\times\mathbb{R}^n\mid \mathbf{u}'\in U(\mathbf{x}')\}$. 
	Let $(\mathbf{x}_i',\mathbf{u}_i')\in\mathrm{gr}U$ for $i=1,2,3,\dots$ and $(\mathbf{x}_i',\mathbf{u}_i')\to(\mathbf{x}_0,\mathbf{u}_0)$ as $i\to\infty$. 
	Suppose that $\mathbf{u}_0\not\in U(\mathbf{x}_0)$. 
	Then, there exists $\delta_0>0$ such that $\mathbf{u}_0\not\in A(\mathbf{x}_0,\delta_0)$ and $d(\mathbf{u}_0,A(\mathbf{x}_0,\delta_0))>0$. 
	Let $\delta_0>\delta_1>\delta_2>\cdots$ and $\delta_i\to0$ as $i\to\infty$. 
	Without loss of generality, $\mathbf{x}_i'+\delta_iB_X \subset \mathbf{x}_0+\delta_0B_X$ for all $i\in\mathbb{N}$. 
	Then, $\mathbf{u}_i' \in A(\mathbf{x}_i',\delta_i) = \overline{\mathrm{co}}\,\mathbf{u}(\mathbf{x}_i'+\delta_iB_n) \subset \overline{\mathrm{co}}\,\mathbf{u}(\mathbf{x}_0+\delta_0B_n) = A(\mathbf{x}_0,\delta_0)$ for all $i\in\mathbb{N}$, so $d(\mathbf{u}_0,A(\mathbf{x}_0,\delta_0)) \le \|\mathbf{u}_0-\mathbf{u}_i'\|\to0$ as $i\to\infty$, a contradiction. 
	Thus, $(\mathbf{x}_0,\mathbf{u}_0)\in\mathrm{gr}U$ and $\mathrm{gr}U$ is closed. 
	Let $(\mathbf{x},\delta)\in X\times(0,\infty)$. 
	Define $U_0 = \mathrm{cl}\,\bigcup_{\mathbf{x}' \in \mathbf{x} + \delta B_X}U(\mathbf{x}')$. 
	Then, $U_0 \subset \mathrm{cl}\,\bigcup_{\mathbf{q}' \in \mathbf{q} + \delta B_n}(\gamma\|\mathbf{q}'\|+c)B_n \subset (\gamma\|\mathbf{q}\|+\gamma\delta+c)B_n$, so $U_0$ is compact. 
	Therefore, $U$ is upper semicontinuous \cite[Proposition 2.2]{Smirnov2002} and there exists $\mathbf{q}\in\mathrm{AC}([t_0,\infty),\mathbb{R}^n)$ satisfying $\mathbf{q}(t_0) = \mathbf{q}_0$ and $\dot{\mathbf{q}}(t)\in U(t,\mathbf{q}(t))$ for almost all $t\in(t_0,\infty)$ \cite[Corollary 1.12, Exercise 1.14]{Clarke2008}.
\end{proof}

\begin{corollary}
	\label{cor:existence_of_krasovskii_solution}
	Let $\alpha\in\overline{1,4}$, $\mathbf{u}$ be the $\boldsymbol{\pi}_\alpha$-PIK solution of $[\mathsf{S}]$ with the damping functions given by \eqref{eqn:damping_function}, and $\mathsf{S}\in[\mathsf{S}]$.
	Assume $\mu_1,\dots,\mu_l\in(0,\infty)$ if $\alpha \in\overline{1,3}$; $\mathbf{r}'$ is linearly bounded; and $\mathbf{F}_q$ and $\mathbf{R}^{-1}$ are bounded.
	Then, for every $(t_0,\mathbf{q}_0)\in X$, there exists a Krasovskii solution $\mathbf{q}:[t_0,\infty)\to\mathbb{R}^n$ of \eqref{eqn:differential_equation} satisfying $\mathbf{q}(t_0) = \mathbf{q}_0$.
\end{corollary}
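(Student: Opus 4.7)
The strategy is to apply Lemma \ref{lem:existence_of_Krasovskii_solution} with the specific $\mathbf{L}$ corresponding to each $\boldsymbol{\pi}_\alpha$. The hypotheses about $\mathbf{r}'$, $\mathbf{F}_q$, and $\mathbf{R}^{-1}$ are already assumed, so the entire task reduces to verifying that $\mathbf{L}$ is bounded on $X$ for each $\alpha\in\overline{1,4}$.

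First I would establish a uniform bound on $\mathbf{C}_{aa}$. Since $\mathbf{J}=\mathbf{F}_q\mathbf{R}^{-1}$ is bounded and $\mathbf{J}=\mathbf{C}_e\hat{\mathbf{J}}_e$ with $\hat{\mathbf{J}}_e$ orthogonal, $\|\mathbf{C}_e\|=\|\mathbf{J}\|$ is bounded, hence every diagonal block $\mathbf{C}_{aa}$ and the strictly lower triangular part $\mathbf{C}_L$ are bounded. The case $\alpha=4$ is immediate since $\mathbf{L}=\mathbf{I}_m$. For $\alpha=3$, $\mathbf{L}=\mathbf{D}=\mathrm{diag}(\mathbf{D}_1,\dots,\mathbf{D}_l)$, and it suffices to bound each $\mathbf{D}_a$. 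When $\lambda_a^2(\mathbf{x})=\infty$, $\mathbf{D}_a(\mathbf{x})=\mathbf{0}$. When $\lambda_a^2(\mathbf{x})\in[0,\infty)$, positive-definiteness of $\mathbf{C}_{aa}\mathbf{C}_{aa}^T+\lambda_a^2\mathbf{I}$ (using $\mu_a>0$) gives $\|\mathbf{D}_a(\mathbf{x})\|\le 1/\lambda_a^2(\mathbf{x})=|\mathbf{M}_a(\mathbf{x})|^\nu/\mu_a^2\le \|\mathbf{C}_{aa}\|^{2\nu m_a}/\mu_a^2$, which is uniformly bounded. The case $\alpha=2$ is identical after replacing $\mathbf{C}_{aa}$ by $\mathbf{J}_a$ in the estimate and noting $\mathbf{J}_a$ is bounded as well.

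The main work is $\alpha=1$, where $\mathbf{L}=\mathbf{D}(\mathbf{I}_m+\mathbf{C}_L\mathbf{C}_D^\circledast)^{-1}$. I already know $\mathbf{D}$ is bounded. For $\mathbf{C}_D^\circledast=\mathrm{diag}(\mathbf{C}_{11}^*,\dots,\mathbf{C}_{ll}^*)$, the last sentence of Lemma \ref{lem:bound_of_extended_damped_pseudoinverse} gives $\|\mathbf{C}_{aa}^{*(\lambda_a)}\|\le\frac{1}{2\mu_a}\|\mathbf{C}_{aa}\|^{\nu m_a}$ (with the convention that $\mathbf{C}_{aa}^*=\mathbf{0}$ when $\lambda_a=\infty$), so $\mathbf{C}_D^\circledast$ is bounded. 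Hence $N:=\mathbf{C}_L\mathbf{C}_D^\circledast$ is bounded. The key algebraic observation is that $\mathbf{C}_L$ is strictly block lower triangular and $\mathbf{C}_D^\circledast$ is block diagonal with compatible partition, so $N$ is also strictly block lower triangular and therefore nilpotent with $N^l=\mathbf{0}$. The Neumann-type identity
\begin{equation*}
	(\mathbf{I}_m+N)^{-1}=\sum_{k=0}^{l-1}(-N)^k
\end{equation*}
then yields $\|(\mathbf{I}_m+\mathbf{C}_L\mathbf{C}_D^\circledast)^{-1}\|\le\sum_{k=0}^{l-1}\|N\|^k$, which is a finite sum of bounded quantities. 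Combining with the bound on $\mathbf{D}$ gives boundedness of $\mathbf{L}$.

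The main obstacle I anticipate is precisely the $\alpha=1$ case; writing out the block structure carefully to justify the nilpotency of $\mathbf{C}_L\mathbf{C}_D^\circledast$, and handling the mixed case where some $\lambda_a^2(\mathbf{x})$ equal $\infty$ (so the corresponding $\mathbf{C}_{aa}^*(\mathbf{x})=\mathbf{0}$) without breaking the uniform bound, is the delicate part. Once $\mathbf{L}$ is shown to be bounded, the conclusion is a direct invocation of Lemma \ref{lem:existence_of_Krasovskii_solution}.
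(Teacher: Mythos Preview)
Your proposal is correct and follows essentially the same approach as the paper: reduce to showing $\mathbf{L}$ is bounded in each case, bound $\mathbf{D}_a$ and $\mathbf{H}_a$ via $1/\lambda_a^2\le |\mathbf{C}_{aa}\mathbf{C}_{aa}^T|^\nu/\mu_a^2$, bound $\mathbf{C}_{aa}^*$ via Lemma~\ref{lem:bound_of_extended_damped_pseudoinverse}, and for $\alpha=1$ use the nilpotency of $\mathbf{C}_L\mathbf{C}_D^\circledast$ to write $(\mathbf{I}_m+\mathbf{C}_L\mathbf{C}_D^\circledast)^{-1}$ as the finite sum $\sum_{k=0}^{l-1}(-\mathbf{C}_L\mathbf{C}_D^\circledast)^k$, then invoke Lemma~\ref{lem:existence_of_Krasovskii_solution}. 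One minor remark: for $\alpha=2$ the damping function \eqref{eqn:damping_function} still uses $\mathbf{M}_a=\mathbf{C}_{aa}\mathbf{C}_{aa}^T$, not $\mathbf{J}_a\mathbf{J}_a^T$, so the bound $\|\mathbf{H}_a\|\le 1/\lambda_a^2$ is controlled by $\|\mathbf{C}_{aa}\|$ rather than $\|\mathbf{J}_a\|$---but this does not affect the argument since both are bounded.
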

\begin{proof}
	It will be sufficient to show that $\mathbf{L}$ is bounded.
	It is obvious for $\alpha = 4$.
	Let $\alpha\in\overline{1,3}$.
	If $|(\mathbf{C}_{aa}\mathbf{C}_{aa}^T)(\mathbf{x})|^\nu = 0$, then $\lambda_a^2(\mathbf{x}) = \infty$ and $\|\mathbf{D}_a(\mathbf{x})\| = \|\mathbf{H}_a(\mathbf{x})\| = 0$.
	If $|(\mathbf{C}_{aa}\mathbf{C}_{aa}^T)(\mathbf{x})|^\nu > 0$, then $\lambda_a^2(\mathbf{x})\in(0,\infty)$ and 
	\begin{equation*}
		\max\{\|\mathbf{D}_a(\mathbf{x})\|,\,\|\mathbf{H}_a(\mathbf{x})\|\} 
			\le \frac{1}{\lambda_a^2(\mathbf{x})} 
			\le \frac{\|\mathbf{J}(\mathbf{x})\|_F^{2\nu m_a}}{\mu_a^2}.
	\end{equation*}
	It follows that $\mathbf{D}$ and $\mathbf{H}$ are bounded.
	By Lemma \ref{lem:bound_of_extended_damped_pseudoinverse}, $\|\mathbf{C}_{aa}^*(\mathbf{x})\| \le \frac{1}{2\mu_a}\|\mathbf{C}_{aa}(\mathbf{x})\|^{\nu m_a}$ for all $(a,\mathbf{x})$. 
	Then, 
	\begin{align*}
		\|(\mathbf{I}_m + \mathbf{C}_L\mathbf{C}_D^\circledast)^{-1}(\mathbf{x})\|
			&\le \sum_{i=0}^{l-1}(\|\mathbf{C}_L(\mathbf{x})\|\|\mathbf{C}_D^\circledast(\mathbf{x})\|)^i \\
			&\le \sum_{i=0}^{l-1}\left(\sum_{a=1}^l\frac{\|\mathbf{J}(\mathbf{x})\|_F^{\nu m_a + 1}}{2\mu_a}\right)^i.
	\end{align*}
	Therefore, $\mathbf{L}$ is bounded for all $\alpha\in\overline{1,4}$ and a Krasovskii solution of \eqref{eqn:differential_equation} satisfying $\mathbf{q}(t_0) = \mathbf{q}_0$ exists by Lemma \ref{lem:existence_of_Krasovskii_solution}.
\end{proof}

The Krasovskii solution exists under mild conditions compared to the classical and Carath\'{e}odory solutions but raise difficulties to handle set-valued maps and nonsmooth analysis in studying properties of joint trajectories such as task convergence and stability. 
A more serious problem is that the Krasovskii solution does not guarantee priority relations between tasks.
For example, if a Krasovskii solution follows a manifold in which discontinuity of a PIK solution occurs, then the joint velocity tangent to the manifold may violate the priority relations.
On the other hand, the Carath\'{e}odory solution guarantees the priority relations almost everywhere.
Therefore, it would be more beneficial to confine ourselves to a set of PIK solutions and a set of initial values that guarantee existence of Carath\'{e}odory or classical solutions in the context of the PIK.
So, we find an alternative existence and uniqueness condition of the classical solution of \eqref{eqn:differential_equation}.
We will need the next technical lemma.

\begin{lemma}
	\label{lem:Lipschitz_continuity_of_linear_approximation}
	Let $[\mathsf{S}]^\bullet\subset\mathbb{S}^\bullet$ and $\mathbf{x}_0\in X\setminus\mathrm{int}(\mathcal{G}_\mathsf{S}^\bullet)\neq\emptyset$. If $\mathbf{J}\in C_{\mathbf{x}_0}^1(X,\mathbb{R}^{m\times n})$ and $\mathsf{D}\mathbf{J}\in C_{\mathbf{x}_0}^{L_p}(X,\mathbb{R}^{m\times n\times (n+1)})$, then there exist $a\in\overline{1,m}$ and $r,L\in(0,\infty)$ satisfying 
	\begin{align}
	c_{aa}(\mathbf{x}_0) = 0 \label{eqn:priority_singularity_condition1}\\
	c_{aa}\hat{\mathbf{j}}_a^T \in C_{\mathbf{x}_0+rB_X}^{1_p}(X,\mathbb{R}^n) \label{eqn:priority_singularity_condition2}\\
	\|\mathbf{h}(\mathbf{x})\| \le L\|\mathbf{x} - \mathbf{x}_0\|^2, \, \forall\mathbf{x}\in \mathbf{x}_0+rB_X \label{eqn:priority_singularity_condition3}
	\end{align}
	where $\mathbf{h}(\mathbf{x}) = (c_{aa}\hat{\mathbf{j}}_a^T)(\mathbf{x}) - \mathsf{D}(c_{aa}\hat{\mathbf{j}}_a^T)(\mathbf{x}_0)(\mathbf{x} - \mathbf{x}_0)$.
\end{lemma}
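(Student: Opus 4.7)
The plan is to select $a$ as the smallest index in $\overline{1,m}$ with $c_{aa}(\mathbf{x}_0) = 0$ and then track smoothness of $c_{aa}\hat{\mathbf{j}}_a^T$ through the explicit Gram-Schmidt recursion of Lemma \ref{lem:orthogonalization_of_J}.

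First I would verify that such an $a$ exists. Since $\mathbf{C}$ is lower-triangular with nonnegative diagonal, one has $\mathcal{G}_\mathsf{S} = \{\mathbf{x}\in X\mid c_{ii}(\mathbf{x}) > 0,\ i\in\overline{1,m}\}$, and by the inclusion $\mathcal{G}_\mathsf{S}\subset\mathrm{int}(\mathcal{G}_\mathsf{S}^\bullet)$ already recorded in Section \ref{sec:nonsmoothness_of_pik_solutions}, the hypothesis $\mathbf{x}_0\notin\mathrm{int}(\mathcal{G}_\mathsf{S}^\bullet)$ forces $\mathbf{x}_0\notin\mathcal{G}_\mathsf{S}$. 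So some diagonal entry vanishes at $\mathbf{x}_0$, and the minimal such index $a$ satisfies \eqref{eqn:priority_singularity_condition1}.

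Unwinding the algorithm in Lemma \ref{lem:orthogonalization_of_J}, I have the identity
\begin{equation*}
c_{aa}\hat{\mathbf{j}}_a^T = \mathbf{v}_a = \mathbf{j}_a^T - \sum_{i=1}^{a-1}\langle\hat{\mathbf{j}}_i^T,\mathbf{j}_a^T\rangle\hat{\mathbf{j}}_i^T,\qquad \hat{\mathbf{j}}_i^T = \mathbf{v}_i/c_{ii}\ \ (i<a),
\end{equation*}
which continues to hold at $\mathbf{x}_0$ because both sides vanish there. By the minimality of $a$ and continuity of each $c_{ii}$, there is a ball $\mathbf{x}_0 + rB_X$ on which $\mathbf{J}$ is differentiable (using $\mathbf{J}\in C^1_{\mathbf{x}_0}$) and each $c_{ii}$ for $i<a$ is uniformly bounded below by a positive constant. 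A straightforward induction on $i$ from $1$ to $a-1$ then yields differentiability of every $\hat{\mathbf{j}}_i^T$ on this ball, and consequently $\mathbf{v}_a = c_{aa}\hat{\mathbf{j}}_a^T \in C^{1_p}_{\mathbf{x}_0+rB_X}$, which is \eqref{eqn:priority_singularity_condition2}. Propagating the pointwise Lipschitz bound $\|\mathsf{D}\mathbf{J}(\mathbf{x}) - \mathsf{D}\mathbf{J}(\mathbf{x}_0)\|\le L_0\|\mathbf{x}-\mathbf{x}_0\|$ through the same recursion (shrinking $r$ if needed so $c_{ii}$ stays above a fixed positive threshold) gives $\mathsf{D}\mathbf{v}_a \in C^{L_p}_{\mathbf{x}_0}$ with some constant $L_1\ge 0$. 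Finally, using $\mathbf{v}_a(\mathbf{x}_0) = \mathbf{0}$, the integral mean value representation
\begin{equation*}
\mathbf{h}(\mathbf{x}) = \int_0^1\bigl[\mathsf{D}\mathbf{v}_a(\mathbf{x}_0 + t(\mathbf{x}-\mathbf{x}_0)) - \mathsf{D}\mathbf{v}_a(\mathbf{x}_0)\bigr](\mathbf{x}-\mathbf{x}_0)\,dt
\end{equation*}
combined with this Lipschitz bound yields $\|\mathbf{h}(\mathbf{x})\|\le (L_1/2)\|\mathbf{x}-\mathbf{x}_0\|^2$, so \eqref{eqn:priority_singularity_condition3} holds with $L = L_1/2$.

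The main obstacle is the bookkeeping required to propagate $\mathsf{D}\mathbf{J}\in C^{L_p}_{\mathbf{x}_0}$ through the Gram-Schmidt formulas. Since $C^{L_p}$ is only a pointwise assertion at $\mathbf{x}_0$, each step $\hat{\mathbf{j}}_i^T = \mathbf{v}_i/c_{ii}$ calls for shrinking the neighborhood so that $c_{ii}$ is simultaneously bounded below and has a first derivative whose Lipschitz constant at $\mathbf{x}_0$ can be controlled, then applying product, quotient, and chain rules to assemble the final constant $L_1$; after this finite ($\le m-1$) induction, the remainder of the argument is a routine Taylor-with-integral-remainder estimate.
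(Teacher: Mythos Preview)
Your proposal is correct and follows essentially the same skeleton as the paper: choose $a$ as the least index with $c_{aa}(\mathbf{x}_0)=0$, observe that $c_{aa}\hat{\mathbf{j}}_a^T$ equals the orthogonal projection of $\mathbf{j}_a^T$ onto the complement of the span of $\mathbf{j}_1,\dots,\mathbf{j}_{a-1}$, show this projection inherits $C^{1_p}$ on a ball and $\mathsf{D}(\cdot)\in C^{L_p}_{\mathbf{x}_0}$, and then read off the quadratic remainder.

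The difference is in how the smoothness is tracked. You induct through the Gram--Schmidt recursion $\hat{\mathbf{j}}_i^T=\mathbf{v}_i/c_{ii}$ and propagate product/quotient/chain-rule bounds step by step, which is elementary but, as you note, requires bookkeeping of constants across up to $m-1$ stages. The paper instead writes the projector in closed form as $\mathbf{I}_n-\mathbf{J}_{1:a-1}^+\mathbf{J}_{1:a-1}$ and invokes Lemma~\ref{lem:ContinuityOfPseudoinverse-modified} (which already contains the explicit formula for $\mathsf{D}\mathbf{A}^+$) to obtain $\mathbf{P}_\mathcal{S}\in C^{1_p}_{\mathbf{x}_0+r_0B_X}$ and $\mathsf{D}\mathbf{P}_\mathcal{S}\in C^{L_p}_{\mathbf{x}_0}$ in one stroke; then $c_{aa}\hat{\mathbf{j}}_a^T=\mathbf{P}_\mathcal{S}\mathbf{j}_a^T$. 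For the remainder estimate the paper applies the scalar mean value theorem componentwise, whereas your integral remainder representation is equally valid (the boundedness of $\mathsf{D}\mathbf{v}_a$ on the ball, which you need for the fundamental theorem of calculus along the segment, follows from your $L_p$ bound at $\mathbf{x}_0$). Your route is more self-contained; the paper's is shorter because the heavy lifting is already packaged in Lemma~\ref{lem:ContinuityOfPseudoinverse-modified}.
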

\begin{proof}
	$\det(\mathbf{C}(\mathbf{x}_0)) = \prod_{i=1}^mc_{ii}(\mathbf{x}_0) = 0$ because $\mathbf{x}_0 \in X\setminus\mathrm{int}(\mathcal{G}_\mathsf{S}^\bullet)\subset X\setminus\mathcal{G}_\mathsf{S}$. 
	So, there exists $a \in\{1,\dots,m\}$ satisfying $c_{aa}(\mathbf{x}_0) = 0$ and $c_{ii}(\mathbf{x}_0)>0$ if $1\le i<a$. 
	Let $\mathcal{S} = \{\hat{\mathbf{j}}_i\mid a\le i\le n\}$. 
	Assume $a > 1$. 
	$c_{ii} = \langle c_{ii}\hat{\mathbf{j}}_i,c_{ii}\hat{\mathbf{j}}_i\rangle^{1/2} \in C_{\mathbf{x}_0}^0$ for $i\in\overline{1,a-1}$ because $c_{ii}\hat{\mathbf{j}}_i \in C_{\mathbf{x}_0}^0$ by Proposition \ref{prp:properties_of_MCBS}.\ref{prp:MCBS:first_element_of_S_is_smooth}. 
	$\mathbf{J}_{1:a-1},\mathbf{j}_a\in C_{\mathbf{x}_0}^1$ and $\mathsf{D}\mathbf{J}_{1:a-1},\mathsf{D}\mathbf{j}_a\in C_{\mathbf{x}_0}^{L_p}$ by Lemma \ref{lem:derivative_of_matrix_and_its_entries}. 
	So, there exists $r_0>0$ such that $\mathbf{J}_{1:a-1},\mathbf{j}_a\in C_{\mathbf{x}_0+r_0B_X}^{1_p}$ and $\lim_{\mathbf{x}'\to\mathbf{x}}\mathrm{rank}(\mathbf{J}_{1:a-1}(\mathbf{x}')) = \mathrm{rank}(\mathbf{J}_{1:a-1}(\mathbf{x})) = a-1$ if $\mathbf{x} \in \mathbf{x}_0 + r_0B_X$. 
	By Lemma \ref{lem:ContinuityOfPseudoinverse-modified}, $\mathsf{D}\mathbf{J}_{1:a-1}^+(\mathbf{x}) = -(\mathbf{J}_{1:a-1}^+\mathsf{D}\mathbf{J}_{1:a-1} \mathbf{J}_{1:a-1}^+)(\mathbf{x}) + \big(\mathbf{J}_{1:a-1}^+\mathbf{J}_{1:a-1}^{+T}\mathsf{D}\mathbf{J}_{1:a-1}^T(\mathbf{I}_{a-1} - \mathbf{J}_{1:a-1} \mathbf{J}_{1:a-1}^+)\big)(\mathbf{x}) + \big((\mathbf{I}_n - \mathbf{J}_{1:a-1}^+\mathbf{J}_{1:a-1})\mathsf{D}\mathbf{J}_{1:a-1}^T\mathbf{J}_{1:a-1}^{+T}\mathbf{J}_{1:a-1}^+\big)(\mathbf{x})$
	for every $\mathbf{x}\in \mathbf{x}_0 + r_0B_X$.
	By Lemmas \ref{lem:derivative_of_matrix_and_its_entries} and \ref{lem:ContinuityOfPseudoinverse-modified}, $\mathbf{J}_{1:a-1}^+ \in C_{\mathbf{x}_0+r_0B_X}^{1_p}$ and $\mathsf{D}\mathbf{J}_{1:a-1}^+\in C_{\mathbf{x}_0}^{L_p}$.
	Therefore, $\mathbf{P}_\mathcal{S} = \sum_{i=a}^n\hat{\mathbf{j}}_i^T\hat{\mathbf{j}}_i = \mathbf{I}_n - \sum_{i=1}^{a-1}\hat{\mathbf{j}}_i^T\hat{\mathbf{j}}_i = \mathbf{I}_n - \mathbf{J}_{1:a-1}^+\mathbf{J}_{1:a-1} \in C_{\mathbf{x}_0+r_0B_X}^{1_p}$ and $\mathsf{D}\mathbf{P}_\mathcal{S} \in C_{\mathbf{x}_0}^{L_p}$. 
	If $a = 1$, then $\mathbf{P}_\mathcal{S} = \mathbf{I}_n$ and $\mathsf{D}\mathbf{P} = \mathbf{0}$, so we have the same result with arbitrary $r_0\in(0,\infty)$.
	
	Define $\mathbf{g} = (g_1,\dots,g_n) = c_{aa}\hat{\mathbf{j}}_a^T = (\hat{\mathbf{j}}_a^T\hat{\mathbf{j}}_a)\mathbf{j}_a^T = \mathbf{P}_\mathcal{S}\mathbf{j}_a^T$. 
	Since $\mathbf{g}\in C_{\mathbf{x}_0+r_0B_X}^{1_p}$ and $\mathsf{D}\mathbf{g}\in C_{\mathbf{x}_0}^{L_p}$, there exist $r \in (0,r_0)$ and $L\in(0,\infty)$ such that $\|\mathsf{D}\mathbf{g}(\mathbf{x}) - \mathsf{D}\mathbf{g}(\mathbf{x}_0)\| \le n^{-3/2}L\|\mathbf{x} - \mathbf{x}_0\|$ if $\mathbf{x}\in\mathbf{x}_0+rB_X$. 
	Let $\mathbf{x}_1\in(\mathbf{x}_0+rB_X)\setminus\{\mathbf{x}_0\}$ and define $\mathbf{x}:[0,1]\to\mathbf{x}_0+rB_X$ as $\mathbf{x}(s) = \mathbf{x}_0 + s(\mathbf{x}_1 - \mathbf{x}_0)$. 
	Since $\mathbf{x}$ is differentiable for all $s\in(0,1)$, we have
	\begin{equation*}
	\frac{d}{ds}g_i(\mathbf{x}(s)) = \mathsf{D}g_i(\mathbf{x}(s))\frac{d}{ds}\mathbf{x}(s) = \mathsf{D}g_i(\mathbf{x}(s))(\mathbf{x}_1-\mathbf{x}_0)
	\end{equation*}
	for all $s\in(0,1)$ by the chain rule.
	Since $s\mapsto g_i(\mathbf{x}(s))$ is continuous on $[0,1]$ and differentiable in $(0,1)$, there exists $s_i\in(0,1)$ such that
	\begin{equation*}
	g_i(\mathbf{x}_1) = g_i(\mathbf{x}(1)) - g_i(\mathbf{x}(0)) = (1-0)\frac{d}{ds}g_i(\mathbf{x}(s_i))
	\end{equation*}
	for all $i\in\overline{1,n}$ by the mean value theorem.
	Then,
	\begin{align*}
	&\|(c_{aa}\hat{\mathbf{j}}_a^T)(\mathbf{x}_1) - \mathsf{D}(c_{aa}\hat{\mathbf{j}}_a^T)(\mathbf{x}_0)(\mathbf{x}_1-\mathbf{x}_0)\| \\
	&\le \sum_{i=1}^n|g_i(\mathbf{x}_1) - \mathsf{D}g_i(\mathbf{x}_0)(\mathbf{x}_1-\mathbf{x}_0)| \\
	&\le \sum_{i=1}^n\|\mathsf{D}g_i(\mathbf{x}(s_i)) - \mathsf{D}g_i(\mathbf{x}_0)\|\|\mathbf{x}_1-\mathbf{x}_0\| \\
	&\le \sum_{i=1}^n\sqrt{n}\|\mathsf{D}\mathbf{g}(\mathbf{x}(s_i))-\mathsf{D}\mathbf{g}(\mathbf{x}_0)\|\|\mathbf{x}_1-\mathbf{x}_0\| \\
	&\le L\|\mathbf{x}_1-\mathbf{x}_0\|^2.
	\end{align*}
	If $\mathbf{x}_1 = \mathbf{x}_0$, then $\|(c_{aa}\hat{\mathbf{j}}_a^T)(\mathbf{x}_1) - \mathsf{D}(c_{aa}\hat{\mathbf{j}}_a^T)(\mathbf{x}_0)(\mathbf{x}_1-\mathbf{x}_0)\| = 0 = L\|\mathbf{x}_1-\mathbf{x}_0\|^2$.
\end{proof}

Now, we are ready to deliver an alternative existence and uniqueness theorem for the solution of \eqref{eqn:differential_equation}.
\begin{theorem}
	\label{thm:existence_of_joint_trajectory}
	Let $\mathbf{u}$ be a PIK solution of $[\mathsf{S}]^\bullet$ in the form of \eqref{eqn:class_of_pik_solutions} and $\mathsf{S}\in[\mathsf{S}]^\bullet$.
	Assume that
	\begin{enumerate}
		\item $\mathbf{r}'$ is linearly bounded;
		\item $\mathbf{F}_q$, $\mathbf{R}^{-1}$, and $\mathbf{L}$ are bounded;
		\item $\mathbf{F}_q(t,\cdot) = \mathbf{F}_q(0,\cdot)$ and $\mathbf{R}(t,\cdot) = \mathbf{R}(0,\cdot)$ for all $t\in\mathbb{R}$;
		\item $\mathbf{R}\in C^{1_p}$ and $\mathbf{L}\in C_{\mathcal{G}^\bullet(\mathsf{S})}^\bullet$;
		\item if $X\setminus\mathcal{G}_\mathsf{S}^\bullet\neq\emptyset$, then for every $\mathbf{x}=(t,\mathbf{q})\in X\setminus\mathcal{G}_\mathsf{S}^\bullet$ one of the followings holds:
		\begin{enumerate}
			\item there exist $r,L\in(0,\infty)$ such that $\|(\mathbf{L}\mathbf{r}')(\mathbf{x}')\| \le L\|\mathbf{q}'-\mathbf{q}\|$ for all $\mathbf{x}' = (t',\mathbf{q}')\in\mathbf{x} + rB_X$;
			\item there exist $a\in\overline{1,m}$ and $r,L\in(0,\infty)$ satisfying \eqref{eqn:priority_singularity_condition1} to \eqref{eqn:priority_singularity_condition3} such that $\mathbf{R}^T(\mathbf{x})\mathsf{D}_q(c_{aa}\hat{\mathbf{j}}_a^T)(\mathbf{x}) + (\mathsf{D}_q(c_{aa}\hat{\mathbf{j}}_a^T)(\mathbf{x}))^T\mathbf{R}(\mathbf{x})$ is either positive or negative definite and $\|\sum_{i=1}^b(\mathbf{L}_{bi}\mathbf{r}_i')(\mathbf{x}')\| \le L\|\mathbf{q}'-\mathbf{q}\|$ for all $\mathbf{x}'\in\mathbf{x}+rB_X$ and $\hat{\mathbf{j}}_a\hat{\mathbf{J}}_b^T\neq\mathbf{0}$.
		\end{enumerate}
		\label{thm:existence_of_joint_trajectory_condition_for_priority_singularity}
	\end{enumerate}
	Then, for each $(t_0,\mathbf{q}_0)\in\mathrm{int}(\mathcal{G}_\mathsf{S}^\bullet)$ there exists a classical solution $\mathbf{q}:[t_0,\infty)\to\mathrm{int}(\mathcal{G}_\mathsf{S}^\bullet)$ of \eqref{eqn:differential_equation} satisfying $\mathbf{q}(t_0) = \mathbf{q}_0$. 
	If $\bullet\in\{L,1\}$, then the solution is unique.
\end{theorem}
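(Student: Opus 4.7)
The plan is to argue by maximal continuation from Peano's local existence on the open set $\mathrm{int}(\mathcal{G}_\mathsf{S}^\bullet)$. By Theorem \ref{thm:smoothness_of_the_PIK_solution} together with assumption (4), the vector field $\mathbf{u}(\cdot,\mathsf{S}) = \mathbf{R}^{-1}\hat{\mathbf{J}}^T\mathbf{C}_D^T\mathbf{L}\mathbf{r}'$ is $\bullet$-continuous, hence in particular continuous, on $\mathcal{G}_\mathsf{S}^\bullet\supset\mathrm{int}(\mathcal{G}_\mathsf{S}^\bullet)$. Peano's theorem then yields a classical solution from any $(t_0,\mathbf{q}_0)\in\mathrm{int}(\mathcal{G}_\mathsf{S}^\bullet)$ on some maximal interval $[t_0,T)$ whose graph stays in $\mathrm{int}(\mathcal{G}_\mathsf{S}^\bullet)$. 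The bounds on $\mathbf{F}_q$, $\mathbf{R}^{-1}$, and $\mathbf{L}$ together with $\|\hat{\mathbf{J}}^T\mathbf{C}_D^T\|\le\|\mathbf{J}\|\le\|\mathbf{F}_q\|\|\mathbf{R}^{-1}\|$ and the linear bound on $\mathbf{r}'$ give $\|\mathbf{u}(t,\mathbf{q},\mathsf{S})\|\le K(\|\mathbf{q}\|+1)$ for some constant $K$, so Lemma \ref{lem:gronwall_inequality} applied to $t\mapsto\|\mathbf{q}(t)\|^2$ forces $\|\mathbf{q}\|$ to stay bounded on every compact subinterval of $[t_0,T)$; thus if $T<\infty$ the orbit must cluster at some $(t_\star,\mathbf{q}_\star)\in X\setminus\mathrm{int}(\mathcal{G}_\mathsf{S}^\bullet)$.

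The heart of the proof is to rule out such a cluster point. Under assumption (4), $\mathbf{u}(\cdot,\mathsf{S})$ is continuous on the whole of $\mathcal{G}_\mathsf{S}^\bullet$, so a cluster point in $\mathcal{G}_\mathsf{S}^\bullet$ would allow a further Peano extension, contradicting maximality; the truly problematic case is $(t_\star,\mathbf{q}_\star)\in X\setminus\mathcal{G}_\mathsf{S}^\bullet$, and this is exactly what condition (5) is designed to exclude. In case (5a), the hypothesis $\|(\mathbf{L}\mathbf{r}')(t',\mathbf{q}')\|\le L\|\mathbf{q}'-\mathbf{q}_\star\|$ on a neighbourhood $\mathbf{x}_\star+rB_X$, combined with the uniform bounds on $\mathbf{R}^{-1}$, $\mathbf{C}_D^T$, and the unitary $\hat{\mathbf{J}}^T$, gives $\|\mathbf{u}(t',\mathbf{q}',\mathsf{S})\|\le K'\|\mathbf{q}'-\mathbf{q}_\star\|$; writing $\phi(t)=\|\mathbf{q}(t)-\mathbf{q}_\star\|^2$, we get $\dot\phi(t)\ge-2K'\phi(t)$, and Lemma \ref{lem:gronwall_inequality} yields the lower bound $\phi(t)\ge\phi(t_1)e^{-2K'(t-t_1)}>0$ along any approach, contradicting $\mathbf{q}(t)\to\mathbf{q}_\star$.

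Case (5b) will be the main obstacle. Here $c_{aa}(\mathbf{x}_\star)=0$, and Lemma \ref{lem:Lipschitz_continuity_of_linear_approximation} gives the quadratic Taylor remainder $(c_{aa}\hat{\mathbf{j}}_a^T)(\mathbf{x}) = \mathsf{D}(c_{aa}\hat{\mathbf{j}}_a^T)(\mathbf{x}_\star)(\mathbf{x}-\mathbf{x}_\star)+\mathbf{h}(\mathbf{x})$ with $\|\mathbf{h}\|\le L\|\mathbf{x}-\mathbf{x}_\star\|^2$. The plan is to build the scalar Lyapunov candidate $V(t) = \langle(c_{aa}\hat{\mathbf{j}}_a^T)(t,\mathbf{q}(t)),\,\mathbf{R}(t,\mathbf{q}(t))(\mathbf{q}(t)-\mathbf{q}_\star)\rangle$; substituting the Taylor expansion and the PIK form of $\dot{\mathbf{q}}$, the leading second-order piece is $(\mathbf{q}(t)-\mathbf{q}_\star)^T\bigl(\mathbf{R}^T\mathsf{D}_q(c_{aa}\hat{\mathbf{j}}_a^T)+(\mathsf{D}_q(c_{aa}\hat{\mathbf{j}}_a^T))^T\mathbf{R}\bigr)(\mathbf{x}_\star)(\mathbf{q}(t)-\mathbf{q}_\star)$, while the hypothesized bound $\|\sum_{i=1}^b\mathbf{L}_{bi}\mathbf{r}_i'\|\le L\|\mathbf{q}'-\mathbf{q}_\star\|$ on each lower-triangular row (together with $\hat{\mathbf{j}}_a\hat{\mathbf{J}}_b^T\ne\mathbf{0}$ ensuring the relevant block of $\hat{\mathbf{J}}^T\mathbf{C}_D^T\mathbf{L}\mathbf{r}'$ is not annihilated by $c_{aa}\hat{\mathbf{j}}_a$) controls the remainder as $O(\|\mathbf{q}(t)-\mathbf{q}_\star\|^3)$. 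The positive (or negative) definiteness then produces a one-sided differential inequality $\dot V \gtrless C\|\mathbf{q}(t)-\mathbf{q}_\star\|^2 \pm O(\|\mathbf{q}(t)-\mathbf{q}_\star\|^3)$ that, via Lemma \ref{lem:gronwall_inequality}, rules out $\mathbf{q}(t)\to\mathbf{q}_\star$.

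Once existence is established, uniqueness for $\bullet\in\{L,1\}$ follows from the classical contraction argument: on $\mathrm{int}(\mathcal{G}_\mathsf{S}^\bullet)$, $\mathbf{u}(\cdot,\mathsf{S})$ is pointwise Lipschitz (or even $C^1$), and Lemma \ref{lem:relation_between_pointwise_and_local_Lipschitz_continuity} upgrades this to local Lipschitz on convex neighbourhoods, so Picard--Lindel\"of gives uniqueness on each compact subinterval and hence on $[t_0,\infty)$. The principal difficulty I anticipate is organising case (5b) into a clean Lyapunov estimate: the interplay between the block-triangular $\mathbf{L}$, the preconditioner $\mathbf{R}$, the rank-drop direction $c_{aa}\hat{\mathbf{j}}_a^T$, and the orthogonality $\hat{\mathbf{j}}_a\hat{\mathbf{J}}_b^T\ne\mathbf{0}$ must be packaged so that the definite quadratic form dominates the cubic remainder controlled by the hypothesised row-wise bound, and Lemma \ref{lem:Lipschitz_continuity_of_linear_approximation} is precisely the tool that makes that bookkeeping tractable.
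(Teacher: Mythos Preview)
Your overall architecture matches the paper's: obtain a solution that a priori lives in $\mathrm{int}(\mathcal{G}_\mathsf{S}^\bullet)$, then derive a contradiction from the assumption that it first touches $X\setminus\mathrm{int}(\mathcal{G}_\mathsf{S}^\bullet)$ at some $(t_\star,\mathbf{q}_\star)$, using a quadratic barrier and Gronwall. The paper differs in that it starts from the Krasovskii solution produced by Lemma~\ref{lem:existence_of_Krasovskii_solution} (which already lives on $[t_0,\infty)$) rather than from Peano plus maximal continuation; this sidesteps having to argue separately about cluster points in $\mathcal{G}_\mathsf{S}^\bullet\setminus\mathrm{int}(\mathcal{G}_\mathsf{S}^\bullet)$, where Peano is not directly available because $\mathbf{u}$ is only known to be continuous on $\mathcal{G}_\mathsf{S}^\bullet$, not on a full neighbourhood of such a point. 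Your case~(5a) and the uniqueness step are essentially the paper's.

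The real divergence is in case~(5b). The paper does \emph{not} use your candidate $V(t)=\langle (c_{aa}\hat{\mathbf{j}}_a^T)(\mathbf{x}(t)),\mathbf{R}(\mathbf{x}(t))(\mathbf{q}(t)-\mathbf{q}_\star)\rangle$; it uses the \emph{frozen} quadratic form $V(\mathbf{q}')=\tfrac12\langle \mathbf{q}'-\mathbf{q}_\star,\mathbf{A}(\mathbf{q}'-\mathbf{q}_\star)\rangle$ with $\mathbf{A}$ the symmetric part of $\mathbf{B}=\mathbf{R}^T(\mathbf{x}_\star)\mathsf{D}_q(c_{aa}\hat{\mathbf{j}}_a^T)(\mathbf{x}_\star)$. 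The point of freezing is that $\nabla V=\mathbf{A}(\mathbf{q}-\mathbf{q}_\star)$ is, up to an $O(\|\mathbf{q}-\mathbf{q}_\star\|^2)$ remainder $\mathbf{h}'$, exactly $\mathbf{R}^T c_{aa}\hat{\mathbf{j}}_a^T$; hence $\dot V=\langle \mathbf{R}^T c_{aa}\hat{\mathbf{j}}_a^T,\mathbf{u}\rangle-\langle\mathbf{h}',\mathbf{u}\rangle = c_{aa}\hat{\mathbf{j}}_a\,\mathbf{R}\mathbf{u}-\langle\mathbf{h}',\mathbf{u}\rangle$. Now the orthogonality $\hat{\mathbf{j}}_a\hat{\mathbf{J}}_c^T=\mathbf{0}$ for $c\neq b$ collapses $c_{aa}\hat{\mathbf{j}}_a\,\mathbf{R}\mathbf{u}=c_{aa}\hat{\mathbf{j}}_a\hat{\mathbf{J}}_b^T\mathbf{C}_{bb}^T\sum_{i\le b}\mathbf{L}_{bi}\mathbf{r}_i'$, and the row-wise hypothesis then gives $|\dot V|\le C\|\mathbf{q}-\mathbf{q}_\star\|^2\le C'V$. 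The contradiction is obtained from the \emph{two-sided} bound $\dot{(-V)}\le C'(-V)$ via Gronwall, which forces $V(t_\star)\ge V(t_1)e^{-C'(t_\star-t_1)}>0$; there is no claim that $\dot V$ has a definite sign $\gtrless C\|\mathbf{q}-\mathbf{q}_\star\|^2$.

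Your candidate runs into trouble precisely here. Differentiating your $V$ produces, among other terms, $\langle \mathsf{D}_q(c_{aa}\hat{\mathbf{j}}_a^T)(\mathbf{x}(t))\,\mathbf{u},\ \mathbf{R}(\mathbf{x}(t))(\mathbf{q}(t)-\mathbf{q}_\star)\rangle$. In case~(5b) the field $\mathbf{u}$ is only bounded, not $O(\|\mathbf{q}-\mathbf{q}_\star\|)$ (that was the whole point of~(5b) over~(5a)); so this term is merely $O(\|\mathbf{q}-\mathbf{q}_\star\|)$, and it is not killed by the orthogonality trick because the factor hitting $\mathbf{R}\mathbf{u}$ is now $(\mathbf{q}-\mathbf{q}_\star)^T\mathbf{R}^T\mathsf{D}_q(c_{aa}\hat{\mathbf{j}}_a^T)$ rather than $c_{aa}\hat{\mathbf{j}}_a$. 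Consequently neither your stated inequality $\dot V\gtrless C\|\mathbf{q}-\mathbf{q}_\star\|^2\pm O(\|\mathbf{q}-\mathbf{q}_\star\|^3)$ nor the needed bound $|\dot V|\le C'V$ follows. The fix is exactly the paper's: replace your $V$ by the frozen quadratic form so that the gradient, not the Hessian, carries the structure $c_{aa}\hat{\mathbf{j}}_a$ that the orthogonality and the hypothesis on $\sum_{i\le b}\mathbf{L}_{bi}\mathbf{r}_i'$ can exploit.
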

\begin{proof}
	Fix $\mathbf{x}_0 = (t_0,\mathbf{q}_0) \in \mathrm{int}(\mathcal{G}_\mathsf{S}^\bullet)$.
	By Lemma \ref{lem:existence_of_Krasovskii_solution}, there exists a Krasovskii solution $\mathbf{q}:[t_0,\infty)\to\mathbb{R}^n$ of \eqref{eqn:differential_equation} with the initial value $\mathbf{q}(t_0) = \mathbf{q}_0$. 
	If $X\setminus\mathrm{int}(\mathcal{G}_\mathsf{S}^\bullet) = \emptyset$, then $\mathcal{G}_\mathsf{S}^\bullet = X$ and $\mathbf{u}(\cdot,\mathsf{S}) \in C^\bullet\subset C^0$.
	Therefore, the Krasovskii solution coincides with the classical one. 
	Assume that $X\setminus\mathrm{int}(\mathcal{G}_\mathsf{S}^\bullet) \neq\emptyset$. 
	We show that $\mathbf{x}(t) = (t,\mathbf{q}(t)) \in \mathrm{int}(\mathcal{G}_\mathsf{S}^\bullet)$ for all $t\in[t_0,\infty)$ by contradiction. 
	Suppose that there exists $t_2\in(t_0,\infty)$ such that $\mathbf{x}(t) \in \mathrm{int}(\mathcal{G}_\mathsf{S}^\bullet)$ if $t\in[t_0,t_2)$ and $\mathbf{x}(t_2) = \mathbf{x}_2 = (t_2,\mathbf{q}_2) \in X\setminus\mathrm{int}(\mathcal{G}_\mathsf{S}^\bullet)$. 
	Since $\mathbf{u}(\cdot,\mathsf{S})$ is continuous on $\mathrm{int}(\mathcal{G}_\mathsf{S}^\bullet)\subset\mathcal{G}_\mathsf{S}^0$, $\dot{\mathbf{q}}(t)\in U(t,\mathbf{q}(t),\mathsf{S}) = \{\mathbf{u}(t,\mathbf{q}(t),\mathsf{S})\}$ for all $t\in(t_0,t_2)$. 
	
	We first assume that there exist $r,L\in(0,\infty)$ such that $\|(\mathbf{L}\mathbf{r}')(\mathbf{x}')\| \le L\|\mathbf{q}'-\mathbf{q}_2\|$ for all $\mathbf{x}'=(t',\mathbf{q}')\in\mathbf{x}_2+rB_X$.
	There exists $t_1\in[t_0,t_2)$ satisfying $\mathbf{x}(t)\in\mathbf{x}_2+rB_X$ for all $t\in[t_1,t_2]$.
	Since $\mathbf{F}_q$ and $\mathbf{R}^{-1}$ are assumed to be bounded, there exists $M_u\in(0,\infty)$ such that $\|\mathbf{u}(\mathbf{x}',\mathsf{S})\| \le M_u\|\mathbf{q}'-\mathbf{q}_2\|$ for all $\mathbf{x}'\in\mathbf{x}_2+rB_X$.
	Define $V:\mathbb{R}^n\to[0,\infty)$ as $V(\mathbf{q}') = \|\mathbf{q}'-\mathbf{q}_2\|^2/2$ and $\phi:[t_1,t_2]\to\mathbb{R}$ as $\phi(t) = -V(\mathbf{q}(t))$.
	Since $\mathbf{q}'\mapsto -V(\mathbf{q}')$ is Lipschitz on $\mathbf{q}_2+rB_n$ and $t\mapsto\mathbf{q}(t)$ is absolutely continuous on $[t_1,t_2]$, $\phi$ is absolutely continuous on $[t_1,t_2]$.
	Since $\dot{\mathbf{q}}(t) = \mathbf{u}(\mathbf{x}(t),\mathsf{S})$ for all $t\in(t_1,t_2)$, we have
	\begin{equation*}
	\dot{\phi}(t) = -\langle\mathbf{q}(t)-\mathbf{q}_2,\mathbf{u}(\mathbf{x}(t),\mathsf{S})\rangle \le -\alpha\phi(t)
	\end{equation*}
	for almost all $t\in[t_1,t_2]$ where $\alpha=2M_u\in(0,\infty)$.
	By the Gronwall's inequality, we find a contradiction that
	\begin{equation}
	\label{eqn:contradiction_of_V}
	0 = -V(\mathbf{q}(t_2)) \le -V(\mathbf{q}(t_1))e^{-\alpha(t_2-t_1)} < 0.
	\end{equation}
	Therefore, $\mathbf{x}(t) \in \mathrm{int}(\mathcal{G}^\bullet)$ for all $t\in[t_0,\infty)$. 
	Since $\mathbf{u}(\cdot,\mathsf{S})$ is continuous on $\mathrm{int}(\mathcal{G}^\bullet)$, $\dot{\mathbf{q}}(t) = \mathbf{u}(t,\mathbf{q}(t),\mathsf{S})$ for all $t\in(t_0,\infty)$.
	
	Next, we assume that there exist $a\in\overline{1,m}$ and $r,L\in(0,\infty)$ satisfying 
	\begin{align*}
	c_{aa}(\mathbf{x}_2) = 0 \\
	c_{aa}\hat{\mathbf{j}}_a^T \in C_{\mathbf{x}_2+rB_X}^{1_p} \\
	\|\mathbf{h}(\mathbf{x}')\| \le L\|\mathbf{x}' - \mathbf{x}_2\|^2 \\
	\mathbf{R}^T(\mathbf{x}_2)\mathsf{D}_q(c_{aa}\hat{\mathbf{j}}_a^T)(\mathbf{x}_2) + (\mathsf{D}_q(c_{aa}\hat{\mathbf{j}}_a^T)(\mathbf{x}_2))^T\mathbf{R}(\mathbf{x}_2)>0 \\
	\left\|\sum_{i=1}^b(\mathbf{L}_{bi}\mathbf{r}_i')(\mathbf{x}')\right\| \le L\|\mathbf{q}'-\mathbf{q}_2\| \\
	\|\mathbf{R}(\mathbf{x}') - \mathbf{R}(\mathbf{x}_2)\| \le L\|\mathbf{x}'-\mathbf{x}_2\|
	\end{align*}
	for all $\mathbf{x}' = (t',\mathbf{q}')\in\mathbf{x}_2 + rB_X$ where $\mathbf{h}(\mathbf{x}') = (c_{aa}\hat{\mathbf{j}}_a^T)(\mathbf{x}') - \mathsf{D}(c_{aa}\hat{\mathbf{j}}_a^T)(\mathbf{x}_2)(\mathbf{x}'-\mathbf{x}_2)$ and $b\in\overline{1,l}$ with $\hat{\mathbf{j}}_a\hat{\mathbf{J}}_b^T \neq \mathbf{0}$.
	Without loss of generality, $r \le 1$.
	Since $\mathbf{R}^Tc_{aa}\hat{\mathbf{j}}_a^T \in C_{\mathbf{x}_2}^{1_p}$, there exists $\mathbf{h}':X\to\mathbb{R}^n$ satisfying $\mathbf{h}'(\mathbf{x}_2) = \mathbf{0}$ and $\lim_{\mathbf{x}'\to\mathbf{x}_2}\|\mathbf{h}'(\mathbf{x}')\|/\|\mathbf{x}'-\mathbf{x}_2\| = 0$ such that
	\begin{equation*}
	(\mathbf{R}^Tc_{aa}\hat{\mathbf{j}}_a^T)(\mathbf{x}') = (\mathbf{R}^T\mathsf{D}(c_{aa}\hat{\mathbf{j}}_a^T))(\mathbf{x}_2)(\mathbf{x}'-\mathbf{x}_2) + \mathbf{h}'(\mathbf{x}')
	\end{equation*}
	for all $\mathbf{x}'\in X$.
	Let $\mathbf{B} = (\mathbf{R}^T\mathsf{D}_q(c_{aa}\hat{\mathbf{j}}_a^T))(\mathbf{x}_2)$ and $\mathbf{A} = \frac{1}{2}(\mathbf{B} + \mathbf{B}^T)$.
	The assumption $\mathbf{J}(t,\cdot) = \mathbf{J}(0,\cdot)$ for all $t\in\mathbb{R}$ implies that if $(t,\mathbf{q}') \in \mathcal{G}_\mathsf{S}^\bullet$, then $(t',\mathbf{q}') \in \mathcal{G}_\mathsf{S}^\bullet$ for all $t'\in\mathbb{R}$. 
	So, we can write $\mathcal{G}_\mathsf{S}^\bullet = \mathbb{R}\times\mathcal{H}_\mathsf{S}^\bullet$ and $\mathrm{int}(\mathcal{G}_\mathsf{S}^\bullet) = \mathbb{R}\times\mathrm{int}(\mathcal{H}_\mathsf{S}^\bullet)$ for some $\mathcal{H}_\mathsf{S}^\bullet\subset\mathbb{R}^n$. 
	Then, $\mathbf{q}(t)\in\mathrm{int}(\mathcal{H}_\mathsf{S}^\bullet)$ for all $t\in[t_0,t_2)$ and $\mathbf{q}(t_2) = \mathbf{q}_2\in\mathbb{R}^n\setminus\mathrm{int}(\mathcal{H}_\mathsf{S}^\bullet)$. 
	Define a function $V:\mathbb{R}^n\to[0,\infty)$ as
	\begin{equation*}
	V(\mathbf{q}') = \frac{1}{2}\langle\mathbf{q}'-\mathbf{q}_2,\mathbf{A}(\mathbf{q}'-\mathbf{q}_2)\rangle = \frac{1}{2}\langle\mathbf{q}'-\mathbf{q}_2,\mathbf{B}(\mathbf{q}'-\mathbf{q}_2)\rangle.
	\end{equation*}
	Let $\rho>0$ be the minimum eigenvalue of $\mathbf{A}$ such that
	\begin{equation*}
	\frac{\rho}{2}\|\mathbf{q}' - \mathbf{q}_2\|^2 \le V(\mathbf{q}')
	\end{equation*}
	for all $\mathbf{q}'\in \mathbb{R}^n$. 
	There exists $t_1\in[t_0,t_2)$ satisfying $\mathbf{x}(t) \in \mathbf{x}_2 + rB_X$ for all $t\in[t_1,t_2]$. 
	Obviously, $\mathbf{q}(t)\in\mathbf{q}_2+rB_n$ for all $t\in[t_1,t_2]$.
	Define $\phi:[t_1,t_2]\to\mathbb{R}$ as $\phi(t) = -V(\mathbf{q}(t))$.
	Since $\mathbf{q}'\mapsto -V(\mathbf{q}')$ is Lipschitz on $\mathbf{q}_2+rB_n$ and $t\mapsto\mathbf{q}(t)$ is absolutely continuous on $[t_1,t_2]$, $\phi$ is absolutely continuous on $[t_1,t_2]$.
	Since $\dot{\mathbf{q}}(t) = \mathbf{u}(\mathbf{x}(t),\mathsf{S})$ for all $t\in(t_1,t_2)$, we have
	\begin{align*}
	\dot{\phi}(t) 
	&= -\langle \mathbf{B}(\mathbf{q}(t) - \mathbf{q}_2),\mathbf{u}(\mathbf{x}(t))\rangle \\
	&= -\langle (\mathbf{R}^T\mathsf{D}(c_{aa}\hat{\mathbf{j}}_a^T))(\mathbf{x}_2)(\mathbf{x}(t) - \mathbf{x}_2), \mathbf{u}(\mathbf{x}(t))\rangle \\
	&= -\langle (\mathbf{R}^Tc_{aa}\hat{\mathbf{j}}_a^T - \mathbf{h}')(\mathbf{x}(t)), \mathbf{u}(\mathbf{x}(t))\rangle \\
	&= -(c_{aa}\hat{\mathbf{j}}_a\mathbf{v}_b)(\mathbf{x}(t)) + \langle \mathbf{h}'(\mathbf{x}(t)), \mathbf{u}(\mathbf{x}(t))\rangle
	\end{align*}
	for almost all $t\in[t_1,t_2]$ where $\mathbf{v}_b = \hat{\mathbf{J}}_b^T\mathbf{C}_{bb}^T\sum_{i=1}^b\mathbf{L}_{bi}\mathbf{r}_i'$.
	
	We find various upper bounds.
	Since $\mathbf{J}(t,\cdot) = \mathbf{J}(0,\cdot)$ and $\mathbf{R}(t,\cdot) = \mathbf{R}(0,\cdot)$ for all $t\in\mathbb{R}$, we have
	\begin{align*}
		\|(c_{aa}\hat{\mathbf{j}}_a^T)(\mathbf{x}')\| 
		&\le \|\mathsf{D}(c_{aa}\hat{\mathbf{j}}_a^T)(\mathbf{x}_2)(\mathbf{x}'-\mathbf{x}_2)\| + \|\mathbf{h}(\mathbf{x}')\| \\
		&\le \|\mathsf{D}_q(c_{aa}\hat{\mathbf{j}}_a^T)(\mathbf{x}_2)\|\|\mathbf{q}'-\mathbf{q}_2\| + \|\mathbf{h}(t_2,\mathbf{q}')\| \\
		&\le (\|\mathsf{D}_q(c_{aa}\hat{\mathbf{j}}_a^T)(\mathbf{x}_2)\| + rL)\|\mathbf{q}'-\mathbf{q}_2\| \\
		&= M_1\|\mathbf{q}'-\mathbf{q}_2\|
	\end{align*}
	and
	\begin{align*}
	\|\mathbf{h}'(\mathbf{x}')\|
	&\le \|\mathbf{R}^T(t_2,\mathbf{q}') - \mathbf{R}^T(t_2,\mathbf{q}_2)\|\|(c_{aa}\hat{\mathbf{j}}_a^T)(\mathbf{x}')\| \\
	&\quad + \|\mathbf{R}^T(\mathbf{x}_2)\|\|\mathbf{h}(t_2,\mathbf{q}')\| \\
	&\le L(M_1 + \|\mathbf{R}^T(\mathbf{x}_2)\|)\|\mathbf{q}'-\mathbf{q}_2\|^2 \\
	&= M_2\|\mathbf{q}' - \mathbf{q}_2\|^2
	\end{align*}
	for all $\mathbf{x}'\in\mathbf{x}_2+rB_X$. 
	Since $\mathbf{F}_q$, $\mathbf{R}^{-1}$, and $\mathbf{L}$ are bounded and $\mathbf{r}'\in C^0$, there exists $M_3\in(0,\infty)$ satisfying $\max\{\|\mathbf{J}(\mathbf{x}')\|_F,\|\mathbf{u}(\mathbf{x}')\|\} \le M_3$ for all $\mathbf{x}'\in\mathbf{x}_2+rB_X$. 
	Then,
	\begin{equation*}
		\|\mathbf{v}_b(\mathbf{x}')\| 
			\le \|\mathbf{J}(\mathbf{x}')\|_F\left\|\sum_{i=1}^b(\mathbf{L}_{bi}\mathbf{r}_i')(\mathbf{x}')\right\| 
			\le LM_3\|\mathbf{q}' - \mathbf{q}_2\|
	\end{equation*}
	for all $\mathbf{x}'\in\mathbf{x}_2+rB_X$.
	
	We can find the upper bound of $\dot{\phi}(t)$ by using the above inequalities as
	\begin{align*}
		\dot{\phi}(t)
			&\le \|(c_{aa}\hat{\mathbf{j}}_a)(\mathbf{x}(t))\|\|\mathbf{v}_b(\mathbf{x}(t))\| + \|\mathbf{h}'(\mathbf{x}(t))\|\|\mathbf{u}(\mathbf{x}(t))\| \\
			&\le M_3(LM_1+M_2)\|\mathbf{q}(t)-\mathbf{q}_2\|^2 \\
			&\le \frac{2M_3(LM_1+M_2)}{\rho}V(\mathbf{q}(t)) \\
			&\le -\alpha\phi(t)
	\end{align*}
	for almost all $t\in[t_1,t_2]$ where $\alpha\in(0,\infty)$. 
	By the Gronwall's inequality, we find the same contradiction \eqref{eqn:contradiction_of_V}.
	
	If $\mathbf{B}+\mathbf{B}^T<0$, we define $V(\mathbf{q}') = -\frac{1}{2}\langle\mathbf{q}'-\mathbf{q}_2,\mathbf{B}(\mathbf{q}'-\mathbf{q}_2)\rangle$ and reach to the same contradiction \eqref{eqn:contradiction_of_V}.
	
	Let $\bullet\in\{L,1\}$ and assume that there are two classical solutions $\mathbf{q}_1$ and $\mathbf{q}_2$ of \eqref{eqn:differential_equation} with the initial value $\mathbf{q}_1(t_0) = \mathbf{q}_2(t_0) = \mathbf{q}_0$. 
	Fix $T\in(0,\infty)$. 
	Since $(t,\mathbf{q}_1(t)),(t,\mathbf{q}_2(t))\in\mathrm{int}(\mathcal{G}_\mathsf{S}^\bullet)$ for all $t\in[t_0,\infty)$, there exists a compact set $\Omega \subset \mathrm{int}(\mathcal{G}_\mathsf{S}^\bullet)\subset\mathrm{int}(\mathcal{G}_\mathsf{S}^L)$ satisfying $\{(t,\mathbf{q}_i(t))\mid i\in\{1,2\},\,t\in[t_0,t_0+T]\}\subset\Omega$. 
	Since $\mathbf{u}$ is locally Lipschitz on the compact set $\Omega$, $\mathbf{u}$ is Lipschitz on $\Omega$ with a Lipschitz constant $L\in[0,\infty)$. 
	Then,
	\begin{align*}
		\|\mathbf{q}_1(t) - \mathbf{q}_2(t)\| 
			&\le \int_{t_0}^t\|\mathbf{u}(t,\mathbf{q}_1(s)) - \mathbf{u}(t,\mathbf{q}_2(s))\|ds \\
			&\le L\int_{t_0}^t\|\mathbf{q}_1(s) - \mathbf{q}_2(s)\|ds
	\end{align*}
	for all $t\in[t_0,t_0+T]$. 
	By the Gronwall's inequality, $\mathbf{q}_1(t) = \mathbf{q}_2(t)$ for all $t\in[t_0,t_0+T]$. 
	Applying the same procedure repeatedly with the interval $[t_0+iT,t_0+(i+1)T]$ for $i=1,2,3,\dots$ proves that $\mathbf{q}_1(t) = \mathbf{q}_2(t)$ for all $t\in[t_0,\infty)$.
\end{proof}

\begin{corollary}
	\label{cor:existence_of_joint_trajectory}
	Let $\alpha\in\{1,4\}$, $\mathbf{u}$ be the $\boldsymbol{\pi}_\alpha$-PIK solution of $[\mathsf{S}]^\bullet$ with the damping functions given by \eqref{eqn:damping_function}, and $\mathsf{S}\in[\mathsf{S}]^\bullet$.
	Assume that
	\begin{enumerate}
		\item if $\alpha\in\{1,3\}$, then $\mu_1,\dots,\mu_l\in(0,\infty)$ and $\nu\in\mathbb{N}\cup\{0\}$;
		\item $\mathbf{r}'$ is linearly bounded;
		\item $\mathbf{F}_q$ and $\mathbf{R}^{-1}$ are bounded;
		\item $\mathbf{F}_q(t,\cdot) = \mathbf{F}_q(0,\cdot)$ and $\mathbf{R}(t,\cdot) = \mathbf{R}(0,\cdot)$ for all $t\in\mathbb{R}$;
		\item $\mathbf{R}\in C^{1_p}$;
		\item if $X\setminus\mathcal{G}_\mathsf{S}^\bullet\neq\emptyset$, then for every $\mathbf{x}=(t,\mathbf{q})\in X\setminus\mathcal{G}_\mathsf{S}^\bullet$ one of the followings holds:
		\begin{enumerate}
			\item there exist $r,L\in(0,\infty)$ such that $\|\mathbf{r}'(\mathbf{x}')\| \le L\|\mathbf{q}'-\mathbf{q}\|$ for all $\mathbf{x}' = (t',\mathbf{q}')\in\mathbf{x} + rB_X$;
			\item there exist $a\in\overline{1,m}$ and $r,L\in(0,\infty)$ satisfying \eqref{eqn:priority_singularity_condition1} to \eqref{eqn:priority_singularity_condition3} such that $\mathbf{R}^T(\mathbf{x})\mathsf{D}_q(c_{aa}\hat{\mathbf{j}}_a^T)(\mathbf{x}) + (\mathsf{D}_q(c_{aa}\hat{\mathbf{j}}_a^T)(\mathbf{x}))^T\mathbf{R}(\mathbf{x})$ is either positive or negative definite and 
			\begin{enumerate}
				\item if $\alpha\in\overline{1,3}$, then $\nu\in\mathbb{N}$;
				\item if $\alpha = 4$, then $\|\mathbf{r}_b'(\mathbf{x}')\| \le L\|\mathbf{q}'-\mathbf{q}\|$ for all $\mathbf{x}'\in\mathbf{x}+rB_X$ and $\hat{\mathbf{j}}_a\hat{\mathbf{J}}_b^T\neq\mathbf{0}$.
			\end{enumerate}
		\end{enumerate}
		\label{cor:existence_of_joint_trajectory_condition_for_priority_singularity}
	\end{enumerate}
	Then, for each $(t_0,\mathbf{q}_0)\in\mathrm{int}(\mathcal{G}_\mathsf{S}^\bullet)$ there exists a classical solution $\mathbf{q}:[t_0,\infty)\to\mathrm{int}(\mathcal{G}_\mathsf{S}^\bullet)$ of \eqref{eqn:differential_equation} satisfying $\mathbf{q}(t_0) = \mathbf{q}_0$. 
	If $\bullet\in\{L,1\}$, then the solution is unique.
\end{corollary}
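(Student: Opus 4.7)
The plan is to reduce the corollary to Theorem \ref{thm:existence_of_joint_trajectory} by verifying each of its hypotheses. Hypotheses (1)--(4) of the theorem follow quickly from the corollary's assumptions together with already-established facts: the linear boundedness of $\mathbf{r}'$, the boundedness of $\mathbf{F}_q$ and $\mathbf{R}^{-1}$, the time-invariance of $\mathbf{F}_q$ and $\mathbf{R}$, and the smoothness $\mathbf{R}\in C^{1_p}$ are assumed; boundedness of $\mathbf{L}$ was already established in the proof of Corollary \ref{cor:existence_of_krasovskii_solution}; and $\mathbf{L}\in C_{\mathcal{G}^\bullet(\mathsf{S})}^\bullet$ follows from the argument preceding Theorem \ref{thm:smoothness_of_the_PIK_solution} (trivially for $\alpha=4$ where $\mathbf{L}=\mathbf{I}_m$, and by Lemma \ref{lem:smoothness_of_extended_damped_pseudoinverse} for $\alpha=1$ since $\mu_b>0$).

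The substantive work is to verify hypothesis (5) of the theorem at a point $\mathbf{x}\in X\setminus\mathcal{G}_\mathsf{S}^\bullet$. Case (a) is immediate: a uniform bound $\|\mathbf{L}\|\le M$ combined with the corollary's $\|\mathbf{r}'(\mathbf{x}')\|\le L\|\mathbf{q}'-\mathbf{q}\|$ yields $\|(\mathbf{L}\mathbf{r}')(\mathbf{x}')\|\le ML\|\mathbf{q}'-\mathbf{q}\|$, exactly the theorem's case (a). For case (b), note first that the indices $b$ constrained by $\hat{\mathbf{j}}_a\hat{\mathbf{J}}_b^T\ne\mathbf{0}$ in both the theorem and the corollary coincide: since the rows of $\hat{\mathbf{J}}_e$ are orthonormal, this condition forces $a$ to lie in block $b$. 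For $\alpha=4$ we have $\mathbf{L}=\mathbf{I}_m$, so $\sum_{i=1}^b(\mathbf{L}_{bi}\mathbf{r}_i')(\mathbf{x}')=\mathbf{r}_b'(\mathbf{x}')$, and the corollary's bound on $\mathbf{r}_b'$ transfers verbatim.

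The main obstacle lies in case (b) for $\alpha=1$ with $\nu\ge1$, where no Lipschitz bound on $\mathbf{r}_b'$ is available and the required bound must come entirely from the damping. Since $c_{aa}$ sits on the diagonal of $\mathbf{C}_{bb}$, the factorization $|\mathbf{M}_b|=|\mathbf{C}_{bb}|^2$ gives $|\mathbf{M}_b(\mathbf{x}')|\le K\,c_{aa}^2(\mathbf{x}')$ on a bounded neighborhood of $\mathbf{x}$. From \eqref{eqn:priority_singularity_condition1}--\eqref{eqn:priority_singularity_condition3} together with the time-independence $\mathbf{J}(t,\cdot)=\mathbf{J}(0,\cdot)$, the first-order expansion of $c_{aa}\hat{\mathbf{j}}_a^T$ at $\mathbf{x}$ (with vanishing value) yields $|c_{aa}(\mathbf{x}')|\le\|(c_{aa}\hat{\mathbf{j}}_a^T)(\mathbf{x}')\|\le M'\|\mathbf{q}'-\mathbf{q}\|$ locally. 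Hence $|\mathbf{M}_b(\mathbf{x}')|^\nu\le K'\|\mathbf{q}'-\mathbf{q}\|^{2\nu}$, and Lemma \ref{lem:bound_of_extended_damped_pseudoinverse} produces $\|\mathbf{D}_b(\mathbf{x}')\|\le K'\|\mathbf{q}'-\mathbf{q}\|^{2\nu}/\mu_b^2$. Because $\nu\ge1$, the factor $\|\mathbf{q}'-\mathbf{q}\|^{2\nu}$ collapses to $\|\mathbf{q}'-\mathbf{q}\|$ up to a constant on a bounded neighborhood, so $\|\mathbf{D}_b(\mathbf{x}')\|\le K''\|\mathbf{q}'-\mathbf{q}\|$. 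Since the $b$-th block row of $\mathbf{L}=\mathbf{D}(\mathbf{I}_m+\mathbf{C}_L\mathbf{C}_D^\circledast)^{-1}$ is left-multiplied by $\mathbf{D}_b$ and the inverse is bounded (by the estimate in the proof of Corollary \ref{cor:existence_of_krasovskii_solution}), we obtain $\|\mathbf{L}_{bi}(\mathbf{x}')\|\le K'''\|\mathbf{q}'-\mathbf{q}\|$; continuity of $\mathbf{r}'$ then yields the theorem's case (b) bound $\|\sum_{i=1}^b(\mathbf{L}_{bi}\mathbf{r}_i')(\mathbf{x}')\|\le L'''\|\mathbf{q}'-\mathbf{q}\|$. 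With all hypotheses met, Theorem \ref{thm:existence_of_joint_trajectory} delivers existence, and uniqueness when $\bullet\in\{L,1\}$.
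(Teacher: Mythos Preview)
Your proof is correct and follows essentially the same route as the paper: reduce to Theorem \ref{thm:existence_of_joint_trajectory} by checking its hypotheses, using the boundedness of $\mathbf{L}$ and of $(\mathbf{I}_m+\mathbf{C}_L\mathbf{C}_D^\circledast)^{-1}$ from Corollary \ref{cor:existence_of_krasovskii_solution}, and in case (b) for $\alpha=1$ exploit that $c_{aa}$ is a diagonal entry of the triangular block $\mathbf{C}_{bb}$ so that $|\mathbf{C}_{bb}\mathbf{C}_{bb}^T|^\nu$ inherits the local estimate $\lesssim\|\mathbf{q}'-\mathbf{q}\|^{2\nu}\le\|\mathbf{q}'-\mathbf{q}\|$ (for $\nu\ge1$ on a ball of radius $\le1$), forcing $\|\mathbf{D}_b(\mathbf{x}')\|\lesssim\|\mathbf{q}'-\mathbf{q}\|$ and hence the required bound on the $b$-th block row of $\mathbf{L}$. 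One small remark: your appeal to Lemma \ref{lem:bound_of_extended_damped_pseudoinverse} is not quite the right citation, since that lemma bounds $\mathbf{A}^{*(\lambda)}$ rather than $\mathbf{D}_b$; the paper instead uses the direct estimate $\|\mathbf{D}_b\|\le1/\lambda_b^2=|\mathbf{C}_{bb}\mathbf{C}_{bb}^T|^\nu/\mu_b^2$, which is immediate from the definition of $\mathbf{D}_b$.
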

\begin{proof}
	We showed in the proof of Corollary \ref{cor:existence_of_krasovskii_solution} that $(\mathbf{I}_m + \mathbf{C}_L\mathbf{C}_D^\circledast)^{-1}$ and $\mathbf{L}$ are bounded.
	Thus, \ref{cor:existence_of_joint_trajectory_condition_for_priority_singularity}-a) implies \ref{thm:existence_of_joint_trajectory_condition_for_priority_singularity}-a) of Theorem \ref{thm:existence_of_joint_trajectory}.
	We can show $\mathbf{D}_a,\mathbf{H}_a\in C_{\mathcal{G}^\bullet(\mathsf{S})}^\bullet$ similarly as in the proof of Lemma \ref{lem:smoothness_of_extended_damped_pseudoinverse}.
	It follows that $\mathbf{D},\mathbf{H},\mathbf{C}_L\mathbf{C}_D^\circledast\in C_{\mathcal{G}^\bullet(\mathsf{S})}^\bullet$ and $\mathbf{L}\in C_{\mathcal{G}^\bullet(\mathsf{S})}^\bullet$.
	We will show that the assumption \ref{cor:existence_of_joint_trajectory_condition_for_priority_singularity}-b) implies that for every $\mathbf{x}=(t,\mathbf{q})\in X\setminus\mathcal{G}_\mathsf{S}^\bullet\neq\emptyset$ there exist $r' \in(0,r]$ and $L' \in [L,\infty)$ such that $\|(\sum_{i=1}^b\mathbf{L}_{bi}\mathbf{r}_i')(\mathbf{x}')\|\le L'\|\mathbf{q}'-\mathbf{q}\|$ for all $\mathbf{x}'\in\mathbf{x}+r'B_X$ and $\hat{\mathbf{j}}_a\hat{\mathbf{J}}_b^T \neq\mathbf{0}$.
	Then, the assumption \ref{cor:existence_of_joint_trajectory_condition_for_priority_singularity}-b) will hold with $r'$ and $L'$ and the proof will be completed by Theorem \ref{thm:existence_of_joint_trajectory}.
	Assume that there exist $a\in\overline{1,m}$ and $r,L\in(0,\infty)$ satisfying \eqref{eqn:priority_singularity_condition1} to \eqref{eqn:priority_singularity_condition3} for $\mathbf{x}=(t,\mathbf{q})\in X\setminus\mathcal{G}_\mathsf{S}^\bullet\neq\emptyset$.
	Let $r'=\min\{1,r\}$.
	Since $(\mathbf{I}_m + \mathbf{C}_L\mathbf{C}_D^\circledast)^{-1}$ and $\mathbf{J}$ are bounded and $\mathbf{r}'\in C^0$, there exists $M_1\in(0,\infty)$ satisfying 
	\begin{equation*}
	\max\{\|(\mathbf{I}_m + \mathbf{C}_L\mathbf{C}_D^\circledast)^{-1}(\mathbf{x}')\|,\,\|\mathbf{J}(\mathbf{x}')\|_F,\,\|\mathbf{r}'(\mathbf{x}')\|\} \le M_1
	\end{equation*}
	for all $\mathbf{x}'\in\mathbf{x}+r'B_X$.
	We showed in the proof of Theorem \ref{thm:existence_of_joint_trajectory} that there exists $M_2\in(0,\infty)$ satisfying $\|(c_{aa}\hat{\mathbf{j}}_a^T)(\mathbf{x}')\| \le M_2\|\mathbf{q}' - \mathbf{q}\|$ for all $\mathbf{x}' \in \mathbf{x} + r'B_X$.
	Let $b\in\overline{1,l}$ be such that $\hat{\mathbf{j}}_a\hat{\mathbf{J}}_b^T\neq\mathbf{0}$.
	Let $\sigma_i(\mathbf{C}_{bb}(\mathbf{x}'))$ and $\mathrm{diag}_i(\mathbf{C}_{bb}(\mathbf{x}'))$ be the $i$-th singular value and the $i$-th diagonal entry of $\mathbf{C}_{bb}(\mathbf{x}')$, respectively. 
	By the Weyl's product inequality \cite[Problems 7.3.P17]{Horn2013},
	\begin{align*}
		&\max\{\|\mathbf{D}_b(\mathbf{x}')\|,\,\|\mathbf{H}_b(\mathbf{x}')\|\} \\
		&\quad \le \frac{1}{\mu_b^2}|(\mathbf{C}_{bb}\mathbf{C}_{bb})(\mathbf{x}')|^\nu \\
		&\quad = \frac{1}{\mu_b^2}\prod_{i=1}^{m_b}\sigma_i^{2\nu}(\mathbf{C}_{bb}(\mathbf{x}')) \\
		&\quad = \frac{1}{\mu_b^2}\prod_{i=1}^{m_b}\mathrm{diag}_i^{2\nu}(\mathbf{C}_{bb}(\mathbf{x}')) \\
		&\quad \le \frac{\|\mathbf{J}(\mathbf{x}')\|_F^{2\nu(m_b-1)}}{\mu_b^2}\|(c_{aa}\hat{\mathbf{j}}_a^T)(\mathbf{x}')\|^{2\nu} \\
		&\quad \le \frac{M_1^{2\nu(m_b-1)}M_2^{2\nu}}{\mu_b^2}\|\mathbf{q}' - \mathbf{q}\| \\
		&\quad = M_3\|\mathbf{q}' - \mathbf{q}\|
	\end{align*}
	for all $\mathbf{x}'\in \mathbf{x}+r'B_X$.
	Let $\mathbf{x}'\in\mathbf{x}+rB_X$.
	If $\alpha = 1$, let $\mathbf{A} = \begin{bmatrix} \mathbf{0} & \mathbf{D}_b & \mathbf{0}\end{bmatrix}:X\to\mathbb{R}^{m_b\times m}$ be the block of $\mathbf{D}$ containing $\mathbf{D}_b$.
	Then, $\|(\sum_{i=1}^b\mathbf{L}_{bi}\mathbf{r}_i')(\mathbf{x}')\| = \|(\mathbf{A}(\mathbf{I}_m + \mathbf{C}_L\mathbf{C}_D^\circledast)^{-1}\mathbf{r}')(\mathbf{x}')\| \le M_1^2\|\mathbf{D}_b(\mathbf{x}')\| \le M_1^2M_3\|\mathbf{q}' - \mathbf{q}\|$.
	If $\alpha\in\{2,3\}$, then $\|(\sum_{i=1}^b\mathbf{L}_{bi}\mathbf{r}'_i)(\mathbf{x}')\| = \|(\mathbf{L}_{bb}\mathbf{r}'_b)(\mathbf{x}')\|\le M_1M_3\|\mathbf{q}' - \mathbf{q}\|$.
	If $\alpha = 4$, $\|(\sum_{i=1}^b\mathbf{L}_{bi}\mathbf{r}_i')(\mathbf{x}')\| = \|\mathbf{r}_b'(\mathbf{x}')\| \le L\|\mathbf{q}' - \mathbf{q}\|$.
	Let $L' = \max\{M_1^2M_3,\,M_1M_3,\,L\}$.
\end{proof}

\begin{remark}
	If we only assume \ref{thm:existence_of_joint_trajectory_condition_for_priority_singularity}-a) in Theorem \ref{thm:existence_of_joint_trajectory} and \ref{cor:existence_of_joint_trajectory_condition_for_priority_singularity}-a) in Corollary \ref{cor:existence_of_joint_trajectory}, then we can easily check that $\mathbf{u}(\cdot,\mathsf{S})$ becomes continuous on $X$ and $\mathbf{u}(\mathbf{x},\mathsf{S}) = \mathbf{0}$ for all $\mathbf{x}\in X\setminus\mathcal{G}_\mathsf{S}^\bullet$.
	It implies that the joint trajectory moves slowly in the vicinity of every $\mathbf{x}\in X\setminus\mathcal{G}_\mathsf{S}^\bullet$.
	On the other hand, the assumptions \ref{thm:existence_of_joint_trajectory_condition_for_priority_singularity}-b) in Theorem \ref{thm:existence_of_joint_trajectory} and \ref{cor:existence_of_joint_trajectory_condition_for_priority_singularity}-b) in Corollary \ref{cor:existence_of_joint_trajectory} allow us $\mathbf{u}(\mathbf{x},\mathsf{S}) \neq \mathbf{0}$ for some $\mathbf{x}\in X\setminus\mathcal{G}_\mathsf{S}^\bullet$.
	Specifically, if \ref{thm:existence_of_joint_trajectory_condition_for_priority_singularity}-b) in Theorem \ref{thm:existence_of_joint_trajectory} or \ref{cor:existence_of_joint_trajectory_condition_for_priority_singularity}-b) in Corollary \ref{cor:existence_of_joint_trajectory} holds for some $\mathbf{x}\in X\setminus\mathcal{G}_\mathsf{S}^\bullet$, then it is possible that $\mathbf{r}_i'(\mathbf{x})\neq\mathbf{0}$ for all $i\in\{j\in\overline{1,l}\mid\hat{\mathbf{j}}_a\hat{\mathbf{J}}_j^T =\mathbf{0}\}$; thus $\mathbf{u}(\mathbf{x},\mathsf{S}) \neq\mathbf{0}$ is possible.
	It gives us the fast movement of the joint trajectory in the vicinity of some $\mathbf{x}\in X\setminus\mathcal{G}_\mathsf{S}^\bullet$, which is a great advantage in many practical applications.
	We will show an example that satisfies assumptions \ref{thm:existence_of_joint_trajectory_condition_for_priority_singularity}-b) of Theorem \ref{thm:existence_of_joint_trajectory} and \ref{cor:existence_of_joint_trajectory_condition_for_priority_singularity}-b) of Corollary \ref{cor:existence_of_joint_trajectory} in Section \ref{sec:example}.
\end{remark}

\section{Task Convergence}
\label{sec:task_convergence}

In many practical cases, a kinematic system is given as
\begin{equation}
	\label{eqn:task_convergence_kinematic_system}
	\mathsf{S} = (l,\mathbf{m},n,\mathsf{D}\mathbf{f},\mathbf{R},\boldsymbol{\Psi}(\dot{\mathbf{p}} + \mathbf{K}(\mathbf{p} - \mathbf{f})))\in\mathbb{S}^\bullet
\end{equation}
where $\mathbf{f} = (\mathbf{f}_1,\dots,\mathbf{f}_l)\in C^{1_p}(X,\mathbb{R}^m)$ is the forward kinematic function, $\mathbf{p} = (\mathbf{p}_1,\dots,\mathbf{p}_l)\in C^{1_p}(\mathbb{R},\mathbb{R}^m)$ satisfying $\dot{\mathbf{p}}\in C^\bullet$ is the trajectory for the task position $\mathbf{f}(\mathbf{x})$ to be desired to follow, $\mathbf{K} = \mathrm{diag}(k_1\mathbf{I}_{m_1},\dots,k_l\mathbf{I}_{m_l})\in\mathbb{R}^{m\times m}$ with $k_a\in(0,\infty)$ is the feedback gain matrix, and $\boldsymbol{\Psi} = \mathrm{diag}(\psi_1\mathbf{I}_{m_1},\dots,\psi_l\mathbf{I}_{m_l})\in C^\bullet(X,\mathbb{R}^{m\times m})$ with $\psi_a(\mathbf{x})\in[0,1]$ is the activation function that can be used to activate or deactivate the term $\dot{\mathbf{p}} + \mathbf{K}(\mathbf{p} - \mathbf{f})$ \cite{Chiacchio1991}. 
Then, a PIK solution of $\mathsf{S}$ can be considered as an output tracking control law of the dynamical system
\begin{align*}
	\dot{\mathbf{q}} &= \mathbf{u} \\
	\mathbf{p}_a &= \mathbf{f}_a(t,\mathbf{q}),\quad a\in\overline{1,l} \\
	\mathsf{T}_1&\prec\cdots\prec\mathsf{T}_l
\end{align*}
where $\mathbf{q}\in\mathbb{R}^n$ is the state, $\mathbf{u}\in\mathbb{R}^n$ is the control input, and $\mathsf{T}_1\prec\cdots\prec\mathsf{T}_l$ represents the priority relations between multiple outputs $\mathbf{p}_a\in\mathbb{R}^{m_a}$ for $a\in\overline{1,l}$ in this case.
Let $\mathbf{u}$ be a PIK solution of $[\mathsf{S}]^\bullet$ in the form of \eqref{eqn:class_of_pik_solutions}.
Since the reference is fixed to $\mathbf{r} = \boldsymbol{\Psi}(\dot{\mathbf{p}} + \mathbf{K}(\mathbf{p}-\mathbf{f}))$, we may write $\mathbf{u}(t,\mathbf{q}) = \mathbf{u}(t,\mathbf{q},\mathsf{S})$.
Let $\mathbf{q}\in\mathrm{AC}([t_0,\infty),\mathbb{R}^n)$ be a Carath\'{e}odory solution of \eqref{eqn:differential_equation} with an initial value $(t_0,\mathbf{q}_0)\in\mathbb{R}\times\mathbb{R}^n$.
The existence condition can be given by \cite[\S1]{Filippov1988} or Theorem \ref{thm:existence_of_joint_trajectory}. 
Denote $\mathbf{x}(t) = (t,\mathbf{q}(t))$.
Obviously, $\mathbf{x}\in\mathrm{AC}([t_0,\infty),X)$.

The $a$-th reference $\mathbf{r}_a$ contains a feedforward $\dot{\mathbf{p}}_a$ and a feedback $k_a\mathbf{e}_a$ where $\mathbf{e}_a(t,\mathbf{q}) = \mathbf{p}_a(t) - \mathbf{f}_a(t,\mathbf{q})$ is the $a$-th task error or tracking error.
$\mathbf{u}$ minimizes the residuals $\mathbf{e}_a^\mathrm{res} = \mathbf{r}_a - \dot{\mathbf{f}}_a = \mathbf{r}_a' - \mathbf{J}_a\mathbf{R}\dot{\mathbf{q}}$ for $a\in\overline{1,l}$ in some sense under the priority relations.
So, we may expect 
\begin{equation}
\label{eqn:convergence_of_the_task_error}
\lim_{t\to\infty}\|\mathbf{e}_a(\mathbf{x}(t))\| = 0
\end{equation}
and find out conditions for \eqref{eqn:convergence_of_the_task_error}. 
However, demanding \eqref{eqn:convergence_of_the_task_error} is too restrictive in the general cases by the following reasons:
\begin{itemize}
	\item $\mathbf{p}_a(t)$ is not always located in $\mathbf{f}_a(t,\mathbb{R}^n)$, so that $\liminf_{t\to\infty}\inf_{\mathbf{q}'\in\mathbb{R}^n}\|\mathbf{e}_a(t,\mathbf{q}')\|>0$ is possible. 
	\item Even if $\mathbf{p}_a(t)\in\mathbf{f}_a(t,\mathbb{R}^n)$ for all $t\in[t_0,\infty)$, $\mathbf{q}(t)$ may converge to a singularity in which the $a$-th task loses DOF necessary for achieving \eqref{eqn:convergence_of_the_task_error}.
	\item Even if $\mathbf{p}_a(t)\in\mathbf{f}_a(t,\mathbb{R}^n)$ and $\mathrm{rank}(\mathbf{J}_a(\mathbf{x}(t))) = m_a$ for all $(a,t)\in\overline{1,l}\times[t_0,\infty)$, $\mathbf{q}(t)$ may converge to an algorithmic singularity in which there is a conflict between the $a$-th and $b$-th tasks in achieving both $\lim_{t\to\infty}\|\mathbf{e}_a(\mathbf{x}(t))\| = 0$ and $\lim_{t\to\infty}\|\mathbf{e}_b(\mathbf{x}(t))\| = 0$.
\end{itemize}
Therefore, we need to determine alternative convergence criteria instead of \eqref{eqn:convergence_of_the_task_error} that can be used in the general cases and find out conditions for those criteria.

Denote $\mathbf{A}_{ab} = \sum_{i=b}^a\mathbf{C}_{ai}\mathbf{C}_{ii}^T\mathbf{L}_{ib}$ and $\mathbf{b}_a = \dot{\mathbf{p}}_a - \mathbf{f}_{ta} - \sum_{b=1}^a\mathbf{A}_{ab}(\psi_b\dot{\mathbf{p}}_b - \mathbf{f}_{tb}) - \sum_{b=1}^{a-1}k_b\psi_b\mathbf{A}_{ab}\mathbf{e}_b$ for $1\le b\le a \le l$.
By differentiating $\mathbf{e}_a$ with respect to $t$, we can formulate the $a$-th error dynamics as
\begin{equation*}
	\dot{\mathbf{e}}_a + k_a\psi_a\mathbf{A}_{aa}\mathbf{e}_a = \mathbf{b}_a.
\end{equation*}
Define $\phi_a,\eta_a,\rho_a,\gamma_a:[t_0,\infty)\to\mathbb{R}$ for $a\in\overline{1,l}$ as 
\begin{align*}
	\phi_a(t) &= \|\mathbf{e}_a(\mathbf{x}(t))\| \\
	\eta_a(t) &= \|(\psi_a\mathbf{C}_{aa}^T\mathbf{L}_{aa}\mathbf{e}_a)(\mathbf{x}(t))\| \\ 
	\rho_a(t) &= k_a\psi_a(\mathbf{x}(t))\phi_a^{+2}(t)\langle \mathbf{e}_a(\mathbf{x}(t)),(\mathbf{A}_{aa}\mathbf{e}_a)(\mathbf{x}(t))\rangle \\
	\gamma_a(t) &= \phi_a^+(t)\langle \mathbf{e}_a(\mathbf{x}(t)),\mathbf{b}_a(\mathbf{x}(t))\rangle
\end{align*}
where $\phi_a^+(t) = 0$ if $\phi_a(t) = 0$ and $\phi_a^+(t) = 1/\phi_a(t)$ if $\phi_a(t)\neq 0$. 
We will need following assumptions:
\begin{description}
	\item[(A1)] $\mathbf{C}$ is bounded and $\mathbf{C}(\mathbf{x}(\cdot))$ is measurable in $[t_0,\infty)$;
	\item[(A2)] $\mathbf{L}$ is bounded and $\mathbf{L}(\mathbf{x}(\cdot))$ is measurable in $[t_0,\infty)$;
	\item[(A3)] $\mathbf{f}_{ta}$ and $\dot{\mathbf{p}}_a$ are bounded, $\int_{t_0}^\infty\|\mathbf{f}_{ta}(\mathbf{x}(t))\|dt <\infty$, and $\int_{t_0}^\infty\|\dot{\mathbf{p}}_a(t)\|dt <\infty$ for all $a\in\overline{1,l}$;
	\item[(A4)] $\mathbf{L}_{aa}(\mathbf{x}) = \mathbf{L}_{aa}^T(\mathbf{x})\ge 0$ for all $a\in\overline{1,l}$ and $\mathbf{x}\in X$;
	\item[(A5)] $\mathbf{C}_{aa}\mathbf{C}_{aa}^T\mathbf{L}_{aa} = \mathbf{L}_{aa}\mathbf{C}_{aa}\mathbf{C}_{aa}^T$ for all $a\in\overline{1,l}$;
	\item[(A6)] there exists $\mathbf{M}_{ab}:X\to\mathbb{R}^{m_a\times m_b}$ bounded and satisfying $\mathbf{C}_{aa}^T\mathbf{L}_{ab} = \mathbf{M}_{ab}\mathbf{C}_{bb}^T\mathbf{L}_{bb}$ for all $a,b\in\overline{1,l}$.
\end{description}
Note that if the trajectory existence is guaranteed by Theorem \ref{thm:existence_of_joint_trajectory}, then (A1) and (A2) are met.
One can easily verify that (A4) and (A5) imply that $(\mathbf{C}_{aa}\mathbf{C}_{aa}^T\mathbf{L}_{aa})(\mathbf{x})$ is symmetric and positive semidefinite and $(\mathbf{C}_{aa}\mathbf{C}_{aa}^T)^{1/2}\mathbf{L}_{aa}^{1/2} = \mathbf{L}_{aa}^{1/2}(\mathbf{C}_{aa}\mathbf{C}_{aa}^T)^{1/2}$ for all $a\in\overline{1,l}$ and $\mathbf{x}\in X$.

\begin{lemma}
	\label{lem:differential_inequality}
	If (A1) and (A2) hold, then for every $a\in\overline{1,l}$ and $t_0\le t_1 < t_2 < \infty$, $\phi_a$ is absolutely continuous on $[t_1,t_2]$; $\dot{\phi}_a(t) = -\rho_a(t)\phi_a(t) + \gamma_a(t)$ for almost all $t\in[t_1,t_2]$; and $\eta_a$, $\rho_a$, and $\gamma_a$ are integrable on $[t_1,t_2]$.
\end{lemma}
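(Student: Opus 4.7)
The plan is to chain three observations. First, $\phi_a$ inherits absolute continuity from $\mathbf{e}_a\circ\mathbf{x}$. Because $\mathsf{S}\in\mathbb{S}^\bullet\subset\mathbb{S}^0$ forces $\mathsf{D}\mathbf{f}\in C^0$, the map $\mathbf{f}_a\in C^1$ is locally Lipschitz; together with $\mathbf{p}_a\in C^1$ (from $\mathbf{p}\in C^{1_p}$ and $\dot{\mathbf{p}}\in C^\bullet\subset C^0$) and $\mathbf{x}\in\mathrm{AC}([t_0,\infty),X)$, the composition $\mathbf{e}_a\circ\mathbf{x} = \mathbf{p}_a - \mathbf{f}_a\circ\mathbf{x}$ is AC on $[t_1,t_2]$. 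Postcomposing with the $1$-Lipschitz map $\|\cdot\|$ then yields $\phi_a\in\mathrm{AC}([t_1,t_2],\mathbb{R})$.

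Next, I would apply the a.e.\ chain rule to get $\frac{d}{dt}(\mathbf{e}_a\circ\mathbf{x})(t) = \dot{\mathbf{p}}_a(t) - \mathbf{f}_{ta}(\mathbf{x}(t)) - \mathbf{F}_{qa}(\mathbf{x}(t))\dot{\mathbf{q}}(t)$ for a.e.\ $t$, and then substitute $\mathbf{R}\dot{\mathbf{q}} = \hat{\mathbf{J}}^T\mathbf{C}_D^T\mathbf{L}\mathbf{r}'$. The orthogonality identity $\mathbf{J}\hat{\mathbf{J}}^T = \mathbf{C}$ (a consequence of $\hat{\mathbf{J}}_e$ being orthogonal with $\hat{\mathbf{J}}$ its top $m\times n$ block), together with the block-lower-triangular structure of $\mathbf{L}$, collapses $\mathbf{F}_{qa}\dot{\mathbf{q}} = \mathbf{J}_a\hat{\mathbf{J}}^T\mathbf{C}_D^T\mathbf{L}\mathbf{r}'$ to $\sum_{c=1}^{a}\mathbf{A}_{ac}\mathbf{r}_c'$; expanding $\mathbf{r}_c' = \psi_c(\dot{\mathbf{p}}_c+k_c\mathbf{e}_c)-\mathbf{f}_{tc}$ and separating the $c=a$ term from the error sum produces exactly the error dynamics $\dot{\mathbf{e}}_a + k_a\psi_a\mathbf{A}_{aa}\mathbf{e}_a = \mathbf{b}_a$. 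Now differentiating $\phi_a^2=\langle\mathbf{e}_a,\mathbf{e}_a\rangle$ yields $2\phi_a\dot{\phi}_a = 2\langle\mathbf{e}_a,\mathbf{b}_a-k_a\psi_a\mathbf{A}_{aa}\mathbf{e}_a\rangle$ a.e., which on the open set $\{t:\phi_a(t)>0\}$ rearranges directly to $\dot{\phi}_a = -\rho_a\phi_a + \gamma_a$.

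The main obstacle is the zero set $N = \{t\in[t_1,t_2] : \phi_a(t)=0\}$, where the factor $1/\phi_a$ in the naive formula is undefined. I would handle it via the standard fact that an AC function has a.e.-zero derivative on any level set, so $\dot{\phi}_a = 0$ a.e.\ on $N$; meanwhile $\mathbf{e}_a = \mathbf{0}$ on $N$ kills $\gamma_a$ through $\phi_a^+ = 0$, and the factor $\phi_a^{+2}\phi_a$ inside $\rho_a\phi_a$ vanishes as well, so the equation holds trivially a.e.\ on $N$ too. Finally, for the integrability claims I would note that on the compact interval $[t_1,t_2]$ the assumptions (A1)--(A2) give boundedness and measurability of $\mathbf{C}\circ\mathbf{x}$ and $\mathbf{L}\circ\mathbf{x}$, hence of $\mathbf{A}_{ab}\circ\mathbf{x}$ and $\mathbf{b}_a\circ\mathbf{x}$ (the remaining factors $\psi_a,\mathbf{f}_{ta},\dot{\mathbf{p}}_a,\mathbf{e}_a\circ\mathbf{x}$ being continuous on a compact set, hence bounded). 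The elementary estimates $\phi_a^+\phi_a\le 1$ and $\phi_a^{+2}\phi_a^2\le 1$ then bound $|\gamma_a|$, $|\rho_a|$, and $\eta_a$ uniformly on $[t_1,t_2]$, so all three are bounded measurable and thus integrable.
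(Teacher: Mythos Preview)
Your proposal is correct and follows essentially the same approach as the paper: absolute continuity via the Lipschitz composition $\|\cdot\|\circ\mathbf{e}_a\circ\mathbf{x}$, the error dynamics $\dot{\mathbf{e}}_a+k_a\psi_a\mathbf{A}_{aa}\mathbf{e}_a=\mathbf{b}_a$ substituted into $\dot{\phi}_a$, a separate treatment of the zero set, and boundedness plus measurability on the compact interval for integrability. Two minor differences worth noting: on the zero set the paper argues pointwise (if $\phi_a(t)=0$ and $\phi_a$ is differentiable there, then $t$ is a minimum of a nonnegative function, so $\dot{\phi}_a(t)=0$), whereas you invoke the a.e.\ level-set fact; and the paper spends a paragraph proving measurability of $\phi_a^+$ directly from the topology of $\phi_a^{-1}(0)$, which you leave implicit---your ``bounded measurable'' conclusion for $\rho_a$ and $\gamma_a$ does require this, so either cite that $s\mapsto s^+$ is Borel on $\mathbb{R}$ or reproduce the paper's short argument.
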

\begin{proof}
	Let $[t_0,\infty)$ be a metric space with a distance $d(t_1,t_2) = |t_1-t_2|$.
	Fix $a\in\overline{1,l}$ and $t_0\le t_1 < t_2 < \infty$. 
	$\phi_a$ is absolutely continuous on $[t_1,t_2]$ because $\mathbf{x}\in\mathrm{AC}([t_0,\infty),X)$ and $\|\mathbf{e}_a\|\in C^L(X,\mathbb{R})$.
	Let $t\in[t_1,t_2]$ be such that $\phi_a$ and $\mathbf{e}_a(\mathbf{x}(\cdot))$ are differentiable at $t$.
	If $\phi_a(t) = 0$, then $\dot{\phi}_a(t) = 0 = -\rho_a(t)\phi_a(t) + \gamma_a(t)$ because $\phi_a([t_0,\infty))\subset [0,\infty)$.
	If $\phi_a(t) > 0$, then $\dot{\phi}_a(t) = \phi_a^+(t)\langle \mathbf{e}_a(\mathbf{x}(t)),\dot{\mathbf{e}}_a(\mathbf{x}(t))\rangle = -\rho_a(t)\phi_a(t) + \gamma_a(t)$.
	Since $\dot{\mathbf{p}}_b$, $\mathbf{f}_{tb}(\mathbf{x}(\cdot))$, $\mathbf{e}_b(\mathbf{x}(\cdot))$, and $\psi_b(\mathbf{x}(\cdot))$ are continuous on the compact set $[t_1,t_2]$ for all $b\in\overline{1,a}$ and $\mathbf{C}$ and $\mathbf{L}$ are bounded, $\eta_a$, $\rho_a$, and $\gamma_a$ are bounded on $[t_1,t_2]$.

	Since $\phi_a$ is continuous on $[t_0,\infty)$ and $\phi_a(t) = 0$ if and only if $\phi_a^+(t) = 0$, $\phi_a^{-1}(0) = (\phi_a^+)^{-1}(0)$ is closed \cite[Corollary 4.8]{Rudin1964}, $I = [t_0,\infty)\setminus (\phi_a^+)^{-1}(0)$ is open, and $\phi_a^+$ is continuous at every $t\in I$. 
	Let $A\subset \mathbb{R}$ be an arbitrary open set.
	For every $t\in (\phi_a^+)^{-1}(A\setminus\{0\})\subset I$ there exists a neighborhood $N\subset[t_0,\infty)$ of $t$ satisfying $\phi_a^+(N) \subset A\setminus\{0\}$ because $A\setminus\{0\}$ is open and $\phi_a^+$ is continuous at $t$.
	Thus, $(\phi_a^+)^{-1}(A\setminus\{0\})$ is open.
	Since Borel sets are Lebesgue measurable, $(\phi_a^+)^{-1}(A) = (\phi_a^+)^{-1}(0) \cup (\phi_a^+)^{-1}(A\setminus\{0\})$ is measurable. 
	So, $\phi_a^+$ is measurable in $[t_0,\infty)$.
	It follows that $\eta_a$, $\rho_a$, and $\gamma_a$ are measurable in $[t_0,\infty)$ because those are continuous functions of measurable functions.
	Therefore, $\eta_a$, $\rho_a$, and $\gamma_a$ are integrable on $[t_1,t_2]$.
\end{proof}

\begin{theorem}
	\label{thm:task_convergence}
	Let $\mathsf{S}\in\mathbb{S}^\bullet$ be as in \eqref{eqn:task_convergence_kinematic_system}, $\mathbf{u}$ be a PIK solution of $[\mathsf{S}]^\bullet$ in the form of \eqref{eqn:class_of_pik_solutions}, and $\mathbf{q}:[t_0,\infty)\to\mathbb{R}^n$ be a Carath\'{e}odory solution of \eqref{eqn:differential_equation} with an initial value $\mathbf{q}(t_0) = \mathbf{q}_0$.
	Define $\eta_0:[t_0,\infty)\to\mathbb{R}$ as $\eta_0(t) = 0$.
	Assume (A1) to (A6).
	Then, for every $a\in\overline{1,l}$
	\begin{itemize}
		\item if $\sum_{b=0}^{a-1}\int_{t_0}^\infty\eta_b(t)dt <\infty$ or $\mathbf{C}_{ab} = \mathbf{L}_{ab} = \mathbf{0}$ for all $1\le b<a$, then $\int_{t_0}^\infty\eta_a^2(t)dt<\infty$;
		\item if additionally $\inf_{t\in[t_0,\infty)}\sigma_{\min}((\psi_a\mathbf{A}_{aa})(\mathbf{x}(t)))>0$ holds, then $\int_{t_0}^\infty\eta_a(t)dt<\infty$, $\int_{t_0}^\infty\phi_a(t)dt<\infty$, and $\lim_{t\to\infty}\eta_a(t) = \lim_{t\to\infty}\phi_a(t) = 0$.
	\end{itemize}
	If $\sum_{a=0}^l\int_{t_0}^\infty\eta_a(t)dt<\infty$ and $\mathbf{R}^{-1}$ is bounded, then $\int_{t_0}^\infty\|\mathbf{u}(\mathbf{x}(t))\|dt < \infty$.
\end{theorem}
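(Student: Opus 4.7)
The plan is to use the Lyapunov candidate $V_a = \tfrac{1}{2}\phi_a^2$, whose derivative along the Carath\'{e}odory solution is, by Lemma~\ref{lem:differential_inequality}, $\dot V_a = -\rho_a\phi_a^2 + \gamma_a\phi_a = -k_a\psi_a\langle\mathbf{e}_a,\mathbf{A}_{aa}\mathbf{e}_a\rangle + \langle\mathbf{e}_a,\mathbf{b}_a\rangle$ for almost every $t$. Two algebraic bounds will drive the rest. First, (A4) and (A5) make $\mathbf{C}_{aa}\mathbf{C}_{aa}^T$ and $\mathbf{L}_{aa}$ commuting symmetric positive semidefinite matrices; simultaneous diagonalization then yields $\|\mathbf{C}_{aa}^T\mathbf{L}_{aa}\mathbf{e}_a\|^2 \le \|\mathbf{L}_{aa}\|\,\langle\mathbf{e}_a,\mathbf{A}_{aa}\mathbf{e}_a\rangle$, so with $\psi_a\in[0,1]$ and boundedness of $\mathbf{L}$ from (A2), $\eta_a^2 \le C_a\rho_a\phi_a^2$ for some constant $C_a$. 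Second, expanding $\mathbf{A}_{ab}=\sum_{i=b}^a\mathbf{C}_{ai}\mathbf{C}_{ii}^T\mathbf{L}_{ib}$ via (A6), together with (A1) and boundedness of the $\mathbf{M}_{ib}$, gives $\|\mathbf{b}_a(\mathbf{x}(t))\| \le g_a(t) + M\sum_{b<a}\eta_b(t)$, where $g_a$ is a bounded linear combination of $\|\dot{\mathbf{p}}_b\|$ and $\|\mathbf{f}_{tb}\|$ ($b\le a$) and so lies in $L^1([t_0,\infty))$ by (A3); in the alternative case $\mathbf{C}_{ab}=\mathbf{L}_{ab}=\mathbf{0}$ for $b<a$, one checks $\mathbf{A}_{ab}=\mathbf{0}$ for $b<a$ and only $g_a$ survives.

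For the first bullet I would first bound $\phi_a$ by integrating the crude inequality $\dot\phi_a \le |\gamma_a| \le \|\mathbf{b}_a\|$, which gives $\phi_a(t)\le K_a := \phi_a(t_0)+\int_{t_0}^\infty\|\mathbf{b}_a(\mathbf{x}(s))\|\,ds<\infty$ under either hypothesis. Integrating $\dot V_a$ over $[t_0,t]$ then yields $\int_{t_0}^\infty \rho_a\phi_a^2\,ds \le V_a(t_0) + K_a\int_{t_0}^\infty\|\mathbf{b}_a(\mathbf{x}(s))\|\,ds<\infty$, and the algebraic bound above gives $\int_{t_0}^\infty\eta_a^2\,ds<\infty$. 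For the second bullet, symmetry of $\mathbf{A}_{aa}$ makes $\sigma_{\min}(\psi_a\mathbf{A}_{aa})$ equal to its minimum eigenvalue, so the new hypothesis yields $\rho_a(t)\ge k_a\delta$ for some $\delta>0$. Lemma~\ref{lem:gronwall_inequality} applied to $\dot\phi_a \le -k_a\delta\phi_a + |\gamma_a|$, followed by Fubini, gives $\int_{t_0}^\infty\phi_a\,dt \le (k_a\delta)^{-1}\bigl(\phi_a(t_0)+\int_{t_0}^\infty|\gamma_a|\,dt\bigr)<\infty$; since $\eta_a\le\|\mathbf{C}_{aa}\|\|\mathbf{L}_{aa}\|\phi_a$, also $\int_{t_0}^\infty\eta_a\,dt<\infty$. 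For the limits I would invoke Barbalat's lemma: boundedness of $\rho_a\phi_a$ and $|\gamma_a|$ (the latter using boundedness of $\phi_b$ for $b<a$, which comes from the first-bullet argument at indices $c<a$ whose hypotheses are implied by the one at $a$) makes $\dot\phi_a$ bounded, so $\phi_a$ is uniformly continuous; being nonnegative and integrable, $\phi_a(t)\to 0$, and $\eta_a\le C'\phi_a$ then gives $\eta_a(t)\to 0$.

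For the final statement, the identity $(\mathbf{C}_D^T\mathbf{L}\mathbf{r}')_a = \sum_{b\le a}\mathbf{M}_{ab}\mathbf{C}_{bb}^T\mathbf{L}_{bb}\mathbf{r}_b'$ via (A6), combined with $\mathbf{r}_b' = \psi_b\dot{\mathbf{p}}_b + k_b\psi_b\mathbf{e}_b - \mathbf{f}_{tb}$, reduces the question to integrability of $\|\dot{\mathbf{p}}_b\|$ (A3), $k_b\eta_b$ (the standing hypothesis $\sum_a\int\eta_a<\infty$), and $\|\mathbf{f}_{tb}\|$ (A3); boundedness of $\mathbf{R}^{-1}$ and orthonormality of the rows of $\hat{\mathbf{J}}$ then yield $\int_{t_0}^\infty\|\mathbf{u}(\mathbf{x}(t))\|\,dt<\infty$. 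The main delicate point, which dictates the order above, will be avoiding circularity in the first bullet: one must bound $\phi_a$ before exploiting the dissipative term $-\rho_a\phi_a^2$, because the right-hand side of $\int\rho_a\phi_a^2\,ds$ contains $\int\gamma_a\phi_a\,ds$, which is only controllable once $\phi_a$ is known to be bounded.
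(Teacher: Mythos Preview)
Your proposal is correct and follows essentially the same route as the paper: bound $\gamma_a$ via (A6) by $g_a+M\sum_{b<a}\eta_b$ with $g_a\in L^1$, bound $\phi_a$ first (the paper does this by Gronwall on $\dot\phi_a=-\rho_a\phi_a+\gamma_a$, you by the cruder $\dot\phi_a\le|\gamma_a|$), then extract $\int\rho_a\phi_a^2<\infty$ and hence $\int\eta_a^2<\infty$ from the commutation bound $\eta_a^2\le C_a\rho_a\phi_a^2$. For the second bullet the paper bounds $\int\sqrt{\rho_a}\phi_a$ and then uses $\sqrt{\rho_a}\ge\sqrt{k_a\sigma}$, whereas you apply Gronwall directly with $\rho_a\ge k_a\delta$ and integrate; both yield $\int\phi_a<\infty$, and the convergence $\phi_a\to0$ is obtained by the same Barbalat-type argument (the paper writes out the $\epsilon$--$\delta$ version). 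The final statement is handled identically via the factorization through (A6).
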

\begin{proof}
	Construct $\mathbf{A} = [\mathbf{A}_{ij}]$ and $\mathbf{M} = [\mathbf{M}_{ij}]$ by letting $\mathbf{A}_{ab} = \mathbf{M}_{ab} = \mathbf{0}$ for $1\le a < b\le l$.
	By the assumptions, there exists $M\in[1,\infty)$ satisfying 
	\begin{equation*}
		\max\{\|\mathbf{C}(\mathbf{x})\|_F,\|\mathbf{L}(\mathbf{x})\|_F,\|\mathbf{A}(\mathbf{x})\|_F,\|\mathbf{M}(\mathbf{x})\|_F,\|\mathbf{K}\|_F\} \le M
	\end{equation*}
	for all $\mathbf{x}\in X$.
	By (A4) to (A6), we have $\mathbf{A}_{ab}\mathbf{e}_b = \sum_{i=b}^a\mathbf{C}_{ai}\mathbf{M}_{ib}\mathbf{C}_{bb}^T\mathbf{L}_{bb}\mathbf{e}_b$ and $\gamma_a \le (1+M)\sum_{b=1}^a(\|\dot{\mathbf{p}}_b\| + \|\mathbf{f}_{tb}\|) + aM^3\sum_{b=1}^{a-1}\eta_b$.
	If $\mathbf{C}_{ab} = \mathbf{L}_{ab} = \mathbf{0}$ for all $1\le b < a$, then $\mathbf{A}_{ab} = \mathbf{0}$ for all $1\le b<a$ and $\gamma_a \le (1+M)(\|\dot{\mathbf{p}}_a\|+\|\mathbf{f}_{ta}\|)$.
	
	Fix $a\in\overline{1,l}$ and assume $\sum_{b=0}^{a-1}\int_{t_0}^\infty\eta_b(t)dt<\infty$ or $\mathbf{C}_{ab} = \mathbf{L}_{ab} = \mathbf{0}$ for all $1\le b < a$.
	If $\sum_{b=0}^{a-1}\int_{t_0}^\infty\eta_b(t)dt<\infty$, then
	\begin{align*}
		\phi_a(t) &= \phi_a(t_0)e^{-\int_{t_0}^t\rho_a(s)ds} + \int_{t_0}^t\gamma_a(s)e^{-\int_s^t\rho_a(r)dr}ds \\
			&\le \phi_a(t_0) + (1+M)\sum_{b=1}^a\int_{t_0}^\infty (\|\dot{\mathbf{p}}_b\|+\|\mathbf{f}_{tb}\|)(\mathbf{x}(s))ds \\
			&\quad + aM^3\sum_{b=1}^{a-1}\int_{t_0}^\infty\eta_b(s)ds \\
			&= M_1 < \infty
	\end{align*}
	and
	\begin{align*}
		\int_{t_0}^t\eta_a^2(s)ds 
			&\le \frac{MM_1}{k_a}\int_{t_0}^t \rho_a(s)\phi_a(s)ds \\
			&\le \frac{MM_1}{k_a}\left(\int_{t_0}^t\gamma_a(s)ds + \phi_a(t) + \phi_a(t_0) \right) \\
			&\le \frac{2MM_1^2}{k_a} < \infty
	\end{align*}
	for all $t\in[t_0,\infty)$ by Lemma \ref{lem:differential_inequality}.
	If $\mathbf{C}_{ab} = \mathbf{L}_{ab} = \mathbf{0}$ for all $1\le b < a$, then $\int_{t_0}^\infty\eta_a^2(t)dt<\infty$ follows from
	\begin{equation*}
		\phi_a(t) \le \phi_a(t_0) + (1+M)\int_{t_0}^\infty (\|\dot{\mathbf{p}}_a\|+\|\mathbf{f}_{ta}\|)(\mathbf{x}(s))ds < \infty.
	\end{equation*}
	
	Assume $\sigma = \inf_{t\in[t_0,\infty)}\sigma_{\min}((\psi_a\mathbf{A}_{aa})(\mathbf{x}(t)))>0$ additionally.
	Then, $0 < \sqrt{k_a\sigma} \le \sqrt{\rho_a(t)}$,
	\begin{equation*}
		\int_{t_0}^t\eta_a(s)ds = \frac{1}{\sqrt{k_a}}\int_{t_0}^t \sqrt{\rho_a(s)}\phi_a(s)ds \le \frac{2M_1}{k_a\sqrt{\sigma}} < \infty,
	\end{equation*}
	and
	\begin{equation*}
		\int_{t_0}^t\phi_a(s)ds \le \frac{1}{\sqrt{\sigma}}\int_{t_0}^t\eta_a(s)ds \le \frac{2M_1}{k_a\sigma} <\infty
	\end{equation*}
	for all $t\in[t_0,\infty)$.
	We can find, similarly as before, that if $\sum_{b=0}^{a-1}\int_{t_0}^\infty\eta_b(t)dt < \infty$, then $\phi_b$ and $\eta_b$ are bounded on $[t_0,\infty)$ for all $b\in\overline{1,a}$.
	Therefore, there exists $L\in(0,\infty)$ satisfying $|\dot{\phi}_a(t)| \le \rho_a(t)\phi_a(t) + \gamma_a(t) \le L$ for all $t\in[t_0,\infty)$.
	Then, $\phi_a$ is Lipschitz on $[t_0,\infty)$ with the Lipschitz constant $L$.
	Suppose that there exists $\epsilon>0$ such that for every $T\in[t_0,\infty)$ there exists $t\ge T$ satisfying $\phi_a(t)\ge \epsilon$.
	Fix $T\in(0,\infty)$ and let $t_0\le t_1<t_2<\cdots$ satisfying $t_{i+1} - t_i \ge T$ and $\phi_a(t_i)\ge \epsilon$ for all $i\in\mathbb{N}$.
	Let $0<\delta\le\min\{\epsilon/(2L),T/2\}$.
	Then, $\phi_a(t) \ge \phi_a(t_i) - |\phi_a(t) - \phi_a(t_i)| \ge \phi_a(t_i) - L|t - t_i| \ge \epsilon/2$ for every $|t-t_i|\le \delta$.
	Thus, we find a contradiction
	\begin{equation*}
		\infty > \int_{t_0}^\infty \phi_a(t)dt \ge \sum_{i=1}^\infty\int_{t_i-\delta}^{t_i+\delta}\phi_a(t)dt \ge \sum_{i=1}^\infty\epsilon\delta = \infty.
	\end{equation*}
	Therefore, $\eta_a(t) \le \sqrt{M}\phi_a(t) \to 0$ as $t\to\infty$.
	
	Assume $\sum_{a=0}^l\int_{t_0}^\infty\eta_a(t)dt<\infty$ and $\sup_{\mathbf{x}\in X}\|\mathbf{R}^{-1}(\mathbf{x})\| = M_2<\infty$.
	Let $\mathbf{L}_D = \mathrm{diag}(\mathbf{L}_{11},\dots,\mathbf{L}_{ll})$.
	Then, we have $\mathbf{u} = \mathbf{R}^{-1}\hat{\mathbf{J}}^T\mathbf{M}\mathbf{C}_D^T\mathbf{L}_D(\boldsymbol{\Psi}(\dot{\mathbf{p}} + \mathbf{K}\mathbf{e}) - \mathbf{f}_t)$ and $\int_{t_0}^\infty\|\mathbf{u}(\mathbf{x}(t))\|dt \le M_2M^3\sum_{a=1}^l\int_{t_0}^\infty(\|\dot{\mathbf{p}}_a(t)\| + \|\mathbf{f}_{ta}(\mathbf{x}(t))\| + \eta_a(t))dt
	< \infty$.
\end{proof}

\begin{corollary}
	\label{cor:task_convergence}
	Let $\mathsf{S}\in \mathbb{S}^\bullet$ be as in \eqref{eqn:task_convergence_kinematic_system}, $\alpha\in\overline{1,4}$, $\mathbf{u}$ be the $\boldsymbol{\pi}_\alpha$-PIK solution of $[\mathsf{S}]^\bullet$ with the damping functions given by \eqref{eqn:damping_function}, and $\mathbf{q}:[t_0,\infty)\to\mathbb{R}^n$ be a Carath\'{e}odory solution of \eqref{eqn:differential_equation} with an initial value $\mathbf{q}(t_0) = \mathbf{q}_0$.
	Define $\eta_0:[t_0,\infty)\to\mathbb{R}$ as $\eta_0(t) = 0$.
	Assume $\mu_1,\dots,\mu_l\in(0,\infty)$ and $\nu\in\mathbb{N}\cup\{0\}$ if $\alpha\in\overline{1,3}$; $\mathbf{F}_q$ and $\mathbf{R}^{-1}$ are bounded; $\mathbf{C}(\mathbf{x}(\cdot))$ is measurable in $[t_0,\infty)$; (A3); and (A5) if $\alpha = 2$.
	Then, for every $a\in\overline{1,l}$
	\begin{itemize}
		\item if $\sum_{b=0}^{a-1}\int_{t_0}^\infty\eta_b(t)dt <\infty$ or $\mathbf{C}_{ab} = \mathbf{0}$ for all $1\le b<a$, then $\int_{t_0}^\infty\eta_a^2(t)dt<\infty$;
		\item if additionally $\inf_{t\in[t_0,\infty)}\sigma_{\min}((\psi_a\mathbf{C}_{aa})(\mathbf{x}(t)))>0$ holds, then $\int_{t_0}^\infty\eta_a(t)dt<\infty$, $\int_{t_0}^\infty\phi_a(t)dt<\infty$, and $\lim_{t\to\infty}\eta_a(t) = \lim_{t\to\infty}\phi_a(t) = 0$.
	\end{itemize}
	If $\sum_{a=0}^l\int_{t_0}^\infty\eta_a(t)dt<\infty$ and $\mathbf{R}^{-1}$ is bounded, then $\int_{t_0}^\infty\|\mathbf{u}(\mathbf{x}(t))\|dt < \infty$.
\end{corollary}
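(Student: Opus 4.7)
The strategy is to derive the corollary from Theorem \ref{thm:task_convergence} by verifying its hypotheses (A1)--(A6) for the $\boldsymbol{\pi}_\alpha$-PIK solution, and by upgrading the assumed spectral bound on $\mathbf{C}_{aa}$ to the corresponding bound on $\mathbf{A}_{aa}=\mathbf{C}_{aa}\mathbf{C}_{aa}^T\mathbf{L}_{aa}$ required by the theorem. The gap between the corollary's condition $\mathbf{C}_{ab}=\mathbf{0}$ and the theorem's $\mathbf{C}_{ab}=\mathbf{L}_{ab}=\mathbf{0}$ will close because for $\alpha\in\{2,3,4\}$ the matrix $\mathbf{L}$ is block diagonal, and for $\alpha=1$ the expansion of $(\mathbf{I}_m+\mathbf{C}_L\mathbf{C}_D^\circledast)^{-1}$ used below shows that $\mathbf{L}_{ab}$ vanishes whenever the $a$-th block row of $\mathbf{C}_L$ does.

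Conditions (A1), (A2), (A4), (A5) will be routine. Boundedness of $\mathbf{C}$ follows from $\|\mathbf{C}\|_F=\|\mathbf{J}\|_F=\|\mathbf{F}_q\mathbf{R}^{-1}\|_F$ via orthogonality of $\hat{\mathbf{J}}_e$; boundedness of $\mathbf{L}$ was already established inside Corollary \ref{cor:existence_of_krasovskii_solution}; and measurability of $\mathbf{L}(\mathbf{x}(\cdot))$ is inherited from that of $\mathbf{C}(\mathbf{x}(\cdot))$ through the continuous Moore--Penrose and matrix operations used to assemble $\mathbf{L}$. Each $\mathbf{L}_{aa}\in\{\mathbf{I}_{m_a},\mathbf{D}_a,\mathbf{H}_a\}$ is symmetric positive semidefinite, and $\mathbf{D}_a$ is a scalar-valued function of $\mathbf{C}_{aa}\mathbf{C}_{aa}^T$, so it commutes with that matrix; the case $\alpha=2$ is the standing hypothesis, while (A3) is assumed throughout.

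The main obstacle is (A6) for $\alpha=1$, where $\mathbf{L}$ is not block diagonal. The plan is to exploit strict block lower triangularity of $\mathbf{K}=\mathbf{C}_L\mathbf{C}_D^\circledast$ to write $\mathbf{N}=(\mathbf{I}_m+\mathbf{K})^{-1}=\sum_{k=0}^{l-1}(-\mathbf{K})^k$ as a finite sum. Expanding $\mathbf{K}^k$ along chains $a=i_0>i_1>\cdots>i_k=b$ shows that the $(a,b)$-block of $\mathbf{N}$ with $a>b$ has the form $\tilde{\mathbf{N}}_{ab}\mathbf{C}_{bb}^\circledast=\tilde{\mathbf{N}}_{ab}\mathbf{C}_{bb}^T\mathbf{L}_{bb}$ for a bounded $\tilde{\mathbf{N}}_{ab}$, whence $\mathbf{C}_{aa}^T\mathbf{L}_{ab}=\mathbf{M}_{ab}\mathbf{C}_{bb}^T\mathbf{L}_{bb}$ with $\mathbf{M}_{ab}=\mathbf{C}_{aa}^T\mathbf{D}_a\tilde{\mathbf{N}}_{ab}$ bounded by Lemma \ref{lem:bound_of_extended_damped_pseudoinverse} together with boundedness of $\mathbf{C}$; for $a=b$ one takes $\mathbf{M}_{aa}=\mathbf{I}_{m_a}$. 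For $\alpha\in\{2,3,4\}$ (A6) is immediate.

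For the spectral upgrade, $\psi_a\le 1$ gives $\sigma_{\min}(\mathbf{C}_{aa}(\mathbf{x}(t)))\ge c$ whenever $\psi_a\sigma_{\min}(\mathbf{C}_{aa})\ge c$, hence $|\mathbf{C}_{aa}\mathbf{C}_{aa}^T|\ge c^{2m_a}$ and $\lambda_a^2$ is uniformly bounded above. For $\alpha\in\{1,3\}$ the eigenvalues of $\mathbf{A}_{aa}$ are $\sigma_i^2/(\sigma_i^2+\lambda_a^2)$ and are bounded below; for $\alpha=4$ one simply gets $\sigma_{\min}(\psi_a\mathbf{A}_{aa})\ge c^2$; the delicate case is $\alpha=2$, since $\mathbf{L}_{aa}=\mathbf{H}_a$ is built from $\mathbf{J}_a\mathbf{J}_a^T$ rather than $\mathbf{C}_{aa}\mathbf{C}_{aa}^T$, but (A5) together with invertibility of $\mathbf{H}_a$ (afforded by $\mu_a>0$) forces these symmetric matrices to share an orthonormal eigenbasis, and $\mathbf{J}_a\mathbf{J}_a^T=\sum_{i=1}^a\mathbf{C}_{ai}\mathbf{C}_{ai}^T$ together with boundedness of $\mathbf{J}$ sandwiches the corresponding eigenvalues $\tau_i$ between $c^2$ and $\|\mathbf{J}_a\|^2$, so the eigenvalues $\sigma_i^2/(\tau_i+\lambda_a^2)$ of $\mathbf{A}_{aa}$ again admit a uniform positive lower bound. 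Theorem \ref{thm:task_convergence} then applies and, with $\mathbf{R}^{-1}$ bounded, yields every stated conclusion, including the integrability of $\|\mathbf{u}(\mathbf{x}(\cdot))\|$.
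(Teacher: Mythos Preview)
Your proposal is correct and follows essentially the same route as the paper: reduce to Theorem \ref{thm:task_convergence} by verifying (A1)--(A6) for the four $\boldsymbol{\pi}_\alpha$-PIK solutions and by upgrading the spectral bound on $\psi_a\mathbf{C}_{aa}$ to one on $\psi_a\mathbf{A}_{aa}$. The only notable differences are in execution: for (A6) with $\alpha=1$ the paper uses the push-through identity $\mathbf{C}_D^\circledast(\mathbf{I}_m+\mathbf{C}_L\mathbf{C}_D^\circledast)^{-1}=(\mathbf{I}_m+\mathbf{C}_D^\circledast\mathbf{C}_L)^{-1}\mathbf{C}_D^\circledast$ in one line, whereas you unroll the Neumann series along chains; and for the spectral bound the paper invokes a singular-value product inequality (citing \cite{Merikoski2004}) uniformly for all $\alpha$, while you compute eigenvalues explicitly and, for $\alpha=2$, argue via simultaneous diagonalization from (A5)---a nice touch that makes the role of (A5) in that case transparent.
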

\begin{proof}
	Since $\mathbf{F}_q$ and $\mathbf{R}^{-1}$ are bounded, $\mathbf{J} = \mathbf{F}_q\mathbf{R}^{-1}$ and $\mathbf{C} = \mathbf{J}\hat{\mathbf{J}}^T$ are bounded.
	We showed in the proof of Corollary \ref{cor:existence_of_krasovskii_solution} that $\mathbf{L}$ is bounded.
	We show that $\mathbf{L}(\mathbf{x}(\cdot))$ is measurable in $[t_0,\infty)$.
	Define $\mathbf{G}_1:\mathbb{R}^{m_a\times m_a}\to\mathbb{R}^{m_a\times m_a}$ and $\mathbf{G}_2:\mathbb{R}^{m_a\times m_a}\times\mathbb{R}^{m_a\times n}\to\mathbb{R}^{m_a\times m_a}$ as $\mathbf{G}_1(\mathbf{X}) = |\mathbf{X}\mathbf{X}^T|^\nu(|\mathbf{X}\mathbf{X}^T|^\nu\mathbf{X}\mathbf{X}^T + \mu_a^2\mathbf{I}_{m_a})^{-1}$ and $\mathbf{G}_2(\mathbf{X},\mathbf{Y}) = |\mathbf{X}\mathbf{X}^T|^\nu(|\mathbf{X}\mathbf{X}^T|^\nu\mathbf{Y}\mathbf{Y}^T + \mu_a^2\mathbf{I}_{m_a})^{-1}$. 
	$\mathbf{G}_1$ and $\mathbf{G}_2$ are continuous on $\mathbb{R}^{m_a\times m_a}$ and $\mathbb{R}^{m_a\times m_a}\times\mathbb{R}^{m_a\times n}$, respectively, because $|\mathbf{X}\mathbf{X}^T|$ can be written as a polynomial of entries of $\mathbf{X}$ and $\mathrm{rank}(|\mathbf{X}\mathbf{X}^T|^\nu\mathbf{X}\mathbf{X}^T + \mu_a^2\mathbf{I}_{m_a}) = \mathrm{rank}(|\mathbf{X}\mathbf{X}^T|^\nu\mathbf{Y}\mathbf{Y}^T + \mu_a^2\mathbf{I}_{m_a}) = m_a$ for all $(\mathbf{X},\mathbf{Y})\in\mathbb{R}^{m_a\times m_a}\times\mathbb{R}^{m_a\times n}$. 
	Since $\mathbf{C}(\mathbf{x}(\cdot))$ is measurable in $[t_0,\infty)$ and $\mathbf{J}$ is continuous on $X$, $\mathbf{D}_a(\mathbf{x}(\cdot)) = \mathbf{G}_1(\mathbf{C}_{aa}(\mathbf{x}(\cdot)))$, $\mathbf{H}_a(\mathbf{x}(\cdot)) = \mathbf{G}_2(\mathbf{C}_{aa}(\mathbf{x}(\cdot)),\mathbf{J}_a(\mathbf{x}(\cdot)))$, and $\mathbf{C}_{aa}^*(\mathbf{x}(\cdot)) = (\mathbf{C}_{aa}^T\mathbf{D}_a)(\mathbf{x}(\cdot))$ are measurable in $[t_0,\infty)$ \cite[Theorem 1.7, Theorem 1.8, Exercises 1.3]{Rudin1987}. 
	If follows that $\mathbf{D}(\mathbf{x}(\cdot))$, $\mathbf{H}(\mathbf{x}(\cdot))$, and $(\mathbf{D}(\mathbf{I}_m + \mathbf{C}_L\mathbf{C}_D^\circledast)^{-1})(\mathbf{x}(\cdot)) = (\mathbf{D}(\mathbf{I}_m - \mathbf{C}_L\mathbf{C}_D^\circledast + \cdots + (-\mathbf{C}_L\mathbf{C}_D^\circledast)^{l-1}))(\mathbf{x}(\cdot))$ are measurable in $[t_0,\infty)$.
	Therefore, (A1) and (A2) hold.
	Since $\mathbf{D}_a(\mathbf{x}) = \mathbf{D}_a^T(\mathbf{x})\ge 0$ and $\mathbf{H}_a(\mathbf{x}) = \mathbf{H}_a^T(\mathbf{x})\ge0$ for all $a\in\overline{1,l}$ and $\mathbf{x}\in X$, (A4) holds for all $\alpha\in\overline{1,4}$.
	Since $\mathbf{C}_{aa}\mathbf{C}_{aa}^T\mathbf{D}_a = \mathbf{C}_{aa}\mathbf{C}_{aa}^* = \mathbf{D}_a\mathbf{C}_{aa}\mathbf{C}_{aa}^T$, (A5) holds for all $\alpha\in\{1,3,4\}$.
	It is obvious that (A6) holds for $\alpha\in\overline{2,4}$ because $\mathbf{L}$ is block diagonal.
	If $\alpha = 1$, (A6) follows from
	\begin{align*}
	\mathbf{C}_D^T\mathbf{L} 
	&= \mathbf{C}_D^\circledast(\mathbf{I}_m - \mathbf{C}_L\mathbf{C}_D^\circledast + \cdots + (-\mathbf{C}_L\mathbf{C}_D^\circledast)^{l-1}) \\
	&= (\mathbf{I}_m - \mathbf{C}_D^\circledast\mathbf{C}_L + \cdots + (-\mathbf{C}_D^\circledast\mathbf{C}_L)^{l-1})\mathbf{C}_D^\circledast \\
	&= (\mathbf{I}_m + \mathbf{C}_D^\circledast\mathbf{C}_L)^{-1}\mathbf{C}_D^T\mathbf{D}.
	\end{align*}
	One can easily check that $\mathbf{C}_{ab} = \mathbf{0}$ for all $1\le b < a$ implies $\mathbf{L}_{ab} = \mathbf{0}$ for all $1\le b<a$ from the above equation.
	Assume $\inf_{t\in[t_0,\infty)}\sigma_{\min}(\mathbf{C}_{aa}(\mathbf{x}(t))) \ge \inf_{t\in[t_0,\infty)}\sigma_{\min}((\psi_a\mathbf{C}_{aa})(\mathbf{x}(t))) = \sigma > 0$.
	Then, $|(\mathbf{C}_{aa}\mathbf{C}_{aa}^T)(\mathbf{x}(t))|^\nu = \prod_{i=1}^{m_a}\sigma_i^{2\nu}(\mathbf{C}_{aa}(\mathbf{x}(t))) \ge \sigma^{2m_a\nu}$ and $\lambda_a^2(\mathbf{x}(t)) \le \mu_a^2/\sigma^{2m_a\nu}$.
	Let $M = \sup_{\mathbf{x}\in X}\|\mathbf{J}(\mathbf{x})\|_F<\infty$.
	Then, $\sigma_{\min}(\mathbf{D}_a(\mathbf{x}(t))) = (\|\mathbf{C}_{aa}\|^2 + \lambda_a^2)^{-1}(\mathbf{x}(t)) \ge M_1 = (M^2 + \mu_a^2/\sigma^{2m_a\nu})^{-1}$ and $\sigma_{\min}(\mathbf{H}_a(\mathbf{x}(t))) = (\|\mathbf{J}_a\|^2 + \lambda_a^2)^{-1}(\mathbf{x}(t)) \ge M_1$.
	By \cite{Merikoski2004}, $\sigma_{\min}((\psi_a\mathbf{A}_{aa})(\mathbf{x}(t))) \ge \sigma_{\min}^2((\psi_a\mathbf{C}_{aa})(\mathbf{x}(t)))\sigma_{\min}(\mathbf{L}_{aa}(\mathbf{x}(t)))\ge \sigma^2M_1 > 0$
	for all $t\in[t_0,\infty)$.
	The proof is completed by Theorem \ref{thm:task_convergence}.
\end{proof}

\begin{lemma}
	\label{lem:joint_convergence}
	If $\int_{t_0}^\infty\|\mathbf{u}(\mathbf{x}(t))\|dt < \infty$, then there exists $\mathbf{q}_\infty\in\mathbb{R}^n$ such that $\lim_{t\to\infty}\|\mathbf{q}(t) - \mathbf{q}_\infty\| = 0$.
	If additionally there exists $t_\infty\in[t_0,\infty)$ such that $\mathbf{u}$ is continuous at $(t_\infty,\mathbf{q}_\infty)$ and $\mathbf{u}(t,\cdot) = \mathbf{u}(t_\infty,\cdot)$ for all $t\in[t_\infty,\infty)$, then $\lim_{t\to\infty}\|\mathbf{u}(\mathbf{x}(t))\| = \|\mathbf{u}(t_\infty,\mathbf{q}_\infty)\| = 0$.
\end{lemma}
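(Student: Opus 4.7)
The plan is as follows. For the first claim, I would exploit the fact that $\mathbf{q}$ is absolutely continuous on every compact subinterval of $[t_0,\infty)$, so from the characterization of $\mathrm{AC}([t_0,\infty),\mathbb{R}^n)$ recalled in Section \ref{sec:preliminary} we have
\begin{equation*}
\mathbf{q}(t) = \mathbf{q}_0 + \int_{t_0}^t \dot{\mathbf{q}}(s)\,ds = \mathbf{q}_0 + \int_{t_0}^t \mathbf{u}(\mathbf{x}(s))\,ds
\end{equation*}
for all $t\in[t_0,\infty)$, where the last equality uses that $\dot{\mathbf{q}}(s)=\mathbf{u}(\mathbf{x}(s))$ for almost all $s$. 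Because $\int_{t_0}^\infty\|\mathbf{u}(\mathbf{x}(t))\|\,dt<\infty$, the vector-valued integral $\int_{t_0}^t\mathbf{u}(\mathbf{x}(s))\,ds$ satisfies the Cauchy criterion as $t\to\infty$, so its limit exists in $\mathbb{R}^n$. Defining $\mathbf{q}_\infty = \mathbf{q}_0 + \int_{t_0}^\infty\mathbf{u}(\mathbf{x}(s))\,ds$ and using $\|\mathbf{q}(t)-\mathbf{q}_\infty\|\le\int_t^\infty\|\mathbf{u}(\mathbf{x}(s))\|\,ds\to 0$ gives the first conclusion.

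For the second claim, I would first combine the two hypotheses: for $t\ge t_\infty$, $\mathbf{u}(\mathbf{x}(t)) = \mathbf{u}(t_\infty,\mathbf{q}(t))$, and since $\mathbf{u}$ is continuous at $(t_\infty,\mathbf{q}_\infty)$, the restricted map $\mathbf{q}\mapsto\mathbf{u}(t_\infty,\mathbf{q})$ is continuous at $\mathbf{q}_\infty$. Combined with $\mathbf{q}(t)\to\mathbf{q}_\infty$ from the first part, this yields $\lim_{t\to\infty}\mathbf{u}(\mathbf{x}(t)) = \mathbf{u}(t_\infty,\mathbf{q}_\infty)$ in norm as well.

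It then remains to identify this limit as zero. I would argue by contradiction: if $\|\mathbf{u}(t_\infty,\mathbf{q}_\infty)\| = c > 0$, the established convergence $\|\mathbf{u}(\mathbf{x}(t))\|\to c$ produces some $T\ge t_\infty$ with $\|\mathbf{u}(\mathbf{x}(t))\|\ge c/2$ for all $t\ge T$, forcing $\int_{t_0}^\infty\|\mathbf{u}(\mathbf{x}(t))\|\,dt = \infty$ and contradicting the hypothesis. Hence $\mathbf{u}(t_\infty,\mathbf{q}_\infty)=\mathbf{0}$ and the full statement $\lim_{t\to\infty}\|\mathbf{u}(\mathbf{x}(t))\| = \|\mathbf{u}(t_\infty,\mathbf{q}_\infty)\| = 0$ follows.

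There is no substantive obstacle here; the only subtle point is being careful that the continuity hypothesis is on the joint point $(t_\infty,\mathbf{q}_\infty)$ rather than only on the spatial slice, but the time-invariance assumption $\mathbf{u}(t,\cdot)=\mathbf{u}(t_\infty,\cdot)$ for $t\ge t_\infty$ collapses the limit in $t$ to a spatial limit, so joint continuity at $(t_\infty,\mathbf{q}_\infty)$ is more than enough.
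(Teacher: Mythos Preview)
Your proposal is correct and essentially matches the paper's proof. The only cosmetic difference is in the first claim: the paper phrases the Cauchy argument in terms of an arbitrary divergent sequence $\{t_i\}$ and the scalar integrals $u_i=\int_{t_0}^{t_i}\|\mathbf{u}(\mathbf{x}(t))\|\,dt$, whereas you define $\mathbf{q}_\infty$ directly as $\mathbf{q}_0+\int_{t_0}^\infty\mathbf{u}(\mathbf{x}(s))\,ds$ and bound the tail; the second claim (continuity plus time-invariance, followed by the $c/2$ contradiction) is identical to the paper's argument.
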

\begin{proof}
	Assume $u_\infty = \int_{t_0}^\infty\|\mathbf{u}(\mathbf{x}(t))\|dt < \infty$.
	Let $t_1,t_2,t_3,\dots$ be a divergent sequence in $[t_0,\infty)$.
	Define $u_i = \int_{t_0}^{t_i}\|\mathbf{u}(\mathbf{x}(t))\|dt$ for $i = 1,2,\dots$.
	Let $\epsilon>0$ be arbitrary.
	Since $\lim_{i\to\infty}u_i = u_\infty$, there exists $N\in\mathbb{N}$ such that $|u_i - u_\infty| < \epsilon/2$ for all $i > N$.
	Then, for every $i,j\in\mathbb{N}\setminus\overline{1,N}$ we have $\|\mathbf{q}(t_i) - \mathbf{q}(t_j)\| \le \left|\int_{t_i}^{t_j}\|\mathbf{u}(t,\mathbf{q}(t))\|dt\right| = |u_i - u_j| \le |u_i - u_\infty| + |u_j - u_\infty| < \epsilon$.
	Therefore, $\{\mathbf{q}(t_i)\}$ converges in $\mathbb{R}^n$ because it is a Cauchy sequence \cite[Theorem 3.11]{Rudin1964}.
	Since it holds for every divergnet sequence $\{t_i\}_{i=1}^\infty$, $\mathbf{q}(t)$ converges to a point $\mathbf{q}_\infty\in\mathbb{R}^n$.
	Assume additionally that there exists $t_\infty\in[t_0,\infty)$ such that $\mathbf{u}\in C_{(t_\infty,\mathbf{q}_\infty)}^0$ and $\mathbf{u}(t,\cdot) = \mathbf{u}(t_\infty,\cdot)$ for all $t\in[t_\infty,\infty)$.
	Since $\lim_{t\to\infty}\|\mathbf{q}(t) - \mathbf{q}_\infty\| = 0$, we have $\lim_{t\to\infty}\|\mathbf{u}(t,\mathbf{q}(t)) - \mathbf{u}(t_\infty,\mathbf{q}_\infty)\| = \lim_{\substack{t\to\infty\\t\in[t_\infty,\infty)}}\|\mathbf{u}(t_\infty,\mathbf{q}(t)) - \mathbf{u}(t_\infty,\mathbf{q}_\infty)\| = 0$.
	Suppose $c = \|\mathbf{u}(t_\infty,\mathbf{q}_\infty)\| > 0$.
	Then, there exists $T\in[t_\infty,\infty)$ satisfying $\|\mathbf{u}(t,\mathbf{q}(t))\| \ge c/2$ for all $t>T$.
	It follows that $\infty>\int_{t_0}^\infty\|\mathbf{u}(t,\mathbf{q}(t))\|dt \ge \int_T^\infty\|\mathbf{u}(t,\mathbf{q}(t))\|dt \ge \int_T^\infty c/2dt = \infty$, a contradiction.
	Therefore, $\lim_{t\to\infty}\|\mathbf{u}(\mathbf{x}(t))\| = \|\mathbf{u}(t_\infty,\mathbf{q}_\infty)\| = 0$.
\end{proof}

\begin{remark}
	A practically useful result we can get from Theorem \ref{thm:task_convergence} and Corollary \ref{cor:task_convergence} is that if the assumption $\inf_{t\in[t_0,\infty)}\sigma_{\min}((\psi_a\mathbf{A}_{aa})(\mathbf{x}(t)))>0$ holds for all $a\in\overline{1,l}$, then all task errors converge to zero and also the joint trajectory converges to a point in $\mathbb{R}^n$.
	However, we will need an extra work to find conditions on the desired task trajectory $\mathbf{p}$, the initial value $(t_0,\mathbf{q}_0)$, the feedback gain matrix $\mathbf{K}$, and the activation function $\boldsymbol{\Psi}$ in order to guarantee that assumption.
	It would be a meaningful work to find such conditions for the practical applications, but in this paper we rather show in Section \ref{sec:example} that we can still analyze the task convergence in the general case that the joint trajectory converges to or passes through singularity.
\end{remark}

\section{Stability}
\label{sec:stability}

In some practical cases, a kinematic system is given as
\begin{equation}
	\label{eqn:stability_kinematic_system}
	\mathsf{S} = (l,\mathbf{m},n,\mathsf{D}\mathbf{f},\mathbf{R},\boldsymbol{\Psi}\mathbf{K}(\mathbf{p}-\mathbf{f}))\in\mathbb{S}^\bullet
\end{equation}
that satisfies $\mathbf{f}(t,\cdot) = \mathbf{f}(0,\cdot)$ and $\mathbf{R}(t,\cdot) = \mathbf{R}(0,\cdot)$ for all $t\in\mathbb{R}$ where $\mathbf{p} = (\mathbf{p}_1,\dots,\mathbf{p}_l)\in\mathbb{R}^m$ is the point for the task position $\mathbf{f}(t,\mathbf{q})$ to be desired to reach, $\mathbf{K} = \mathrm{diag}(k_1\mathbf{I}_{m_1},\dots,k_l\mathbf{I}_{m_l})\in\mathbb{R}^{m\times m}$ with $k_a\in(0,\infty)$ is the feedback gain matrix, and $\boldsymbol{\Psi} = \mathrm{diag}(\psi_1\mathbf{I}_{m_1},\dots,\psi_l\mathbf{I}_{m_l})\in C^\bullet(X,\mathbb{R}^{m\times m})$ with $\psi_a(t,\mathbf{q}) = \psi_a(0,\mathbf{q})\in[0,1]$ for all $(a,\mathbf{x})$ is the activation function that can be used to activate or deactivate the term $\mathbf{K}(\mathbf{p}-\mathbf{f})$.
Let $\mathbf{u}$ be a PIK solution of $[\mathsf{S}]^\bullet$ in the form of \eqref{eqn:class_of_pik_solutions} satisfying $\mathbf{L}(t,\cdot) = \mathbf{L}(0,\cdot)$ for all $t$.
For the sake of simplicity in the notation, we may write $\mathbf{u}(\mathbf{q}) = \mathbf{u}(t,\mathbf{q})$ and other functions too.
In this section, we study stability of the autonomous system
\begin{equation}
\label{eqn:autonomous_system}
\dot{\mathbf{q}} = \mathbf{u}(\mathbf{q}).
\end{equation}
Define $S(\mathbf{q}_0)$ as the set of all Carath\'{e}odory solutions $\mathbf{q} \in \mathrm{AC}([0,\infty),\mathbb{R}^n)$ of \eqref{eqn:autonomous_system} with the initial value $\mathbf{q}(0) = \mathbf{q}_0$.
There are various notions of stability.
An equilibrium point $\mathbf{q}_\infty\in\mathbf{u}^{-1}(\mathbf{0}) = \{\mathbf{q}'\in\mathbb{R}^n\mid\mathbf{u}(\mathbf{q}') = \mathbf{0}\}$ is said to be
\begin{itemize}
	\item \textit{(Lyapunov) stable} if for every $\epsilon>0$ there exists $\delta>0$ such that for every $\mathbf{q}_0\in\mathbf{q}_\infty + \delta B_n$, $\mathbf{q}\in S(\mathbf{q}_0)\neq\emptyset$, and $t\in[0,\infty)$ we have $\|\mathbf{q}(t) - \mathbf{q}_\infty\| < \epsilon$;
	\item \textit{semistable} if $\mathbf{q}_\infty$ is stable and there exists $\delta>0$ such that for every $\mathbf{q}_0\in\mathbf{q}_\infty+\delta B_n$ and $\mathbf{q}\in S(\mathbf{q}_0)\neq\emptyset$ there exists a stable equilibrium point $\mathbf{q}_\infty'\in\mathbf{u}^{-1}(\mathbf{0})$ satisfying $\lim_{t\to\infty}\|\mathbf{q}(t)-\mathbf{q}_\infty'\| = 0$;
	\item \textit{asymptotically stable} if $\mathbf{q}_\infty$ is stable and there exists $\delta>0$ such that for every $\mathbf{q}_0\in\mathbf{q}_\infty+\delta B_n$ and $\mathbf{q}\in S(\mathbf{q}_0)\neq\emptyset$ we have $\lim_{t\to\infty}\|\mathbf{q}(t) - \mathbf{q}_\infty\| = 0$.
\end{itemize}
Note that the definition of stability includes existence of Carath\'{e}odory solutions in the vicinity of the equilibrium point.
A motivation of introducing semistability is to handle continuum of equilibria \cite{Hui2008}\cite{Hui2009}.
If $m<n$, then $\mathbf{f}^{-1}(\mathbf{p})\subset\mathbf{u}^{-1}(\mathbf{0})$ might form a continuum of equilibruim points such that any  $\mathbf{q}_\infty\in\mathbf{f}^{-1}(\mathbf{p})$ is not asymptotically stable.
If $\mathbf{q}_\infty\in\mathbf{f}^{-1}(\mathbf{p})$ is semistable, then we can guarantee that every joint trajectory starting from a certain neighborhood of $\mathbf{q}_\infty$ will stay in the vicinity of $\mathbf{q}_\infty$ and converge to a stable equilibrium point $\mathbf{q}_\infty'\in\mathbf{u}^{-1}(\mathbf{0})$, while if $\mathbf{q}_\infty$ is only stable, then there could be endless joint motions such as peoriodic motions.
If $\mathbf{p}_\infty$ is an isolated point of $\mathbf{u}^{-1}(\mathbf{0})$, then semistability coinsides with asymptotic stability.
Define $\mathcal{H}(\mathsf{S}) = \mathcal{H}_\mathsf{S} = \{\mathbf{q}\in\mathbb{R}^n\mid\mathrm{rank}(\mathbf{J}(\mathbf{q})) = m\}$.
Since $\mathbf{J}\in C^0$, $\mathcal{H}_\mathsf{S}$ is open.

\begin{theorem}
	\label{thm:stability}
	Let $\mathsf{S}\in\mathbb{S}^\bullet$ be as in \eqref{eqn:stability_kinematic_system} and $\mathbf{u}$ be a PIK solution of $[\mathsf{S}]^\bullet$ in the form of \eqref{eqn:class_of_pik_solutions}.
	Assume (A4), (A5), $\mathbf{L}\in C_{\mathcal{H}(\mathsf{S})}^0$, $\mathbf{q}_\infty\in\mathbf{f}^{-1}(\mathbf{p})$, and $\mathrm{rank}((\mathbf{C}_D^T\mathbf{L}\boldsymbol{\Psi})(\mathbf{q}_\infty)) = m$.
	Then, the equilibrium point $\mathbf{q}_\infty$ of \eqref{eqn:autonomous_system} is semistable. If $m = n$, then $\mathbf{q}_\infty$ is asymptotically stable.
\end{theorem}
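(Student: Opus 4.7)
The plan is to construct a strict Lyapunov function on the task errors, complemented by a geometric description of the local equilibrium set near $\mathbf{q}_\infty$. The hypothesis $\mathrm{rank}((\mathbf{C}_D^T\mathbf{L}\boldsymbol{\Psi})(\mathbf{q}_\infty)) = m$ forces each $\mathbf{C}_{aa}(\mathbf{q}_\infty)$ and each $\mathbf{L}_{aa}(\mathbf{q}_\infty)$ to have rank $m_a$ and $\psi_a(\mathbf{q}_\infty)>0$, so $\mathbf{q}_\infty\in\mathcal{H}_\mathsf{S}$. Lemma \ref{lem:ContinuityOfPseudoinverse-modified} combined with $\mathbf{L}\in C_{\mathcal{H}(\mathsf{S})}^0$ then makes $\mathbf{u}$ continuous on an open neighborhood $U\subset\mathcal{H}_\mathsf{S}$ of $\mathbf{q}_\infty$, so Peano's theorem yields $S(\mathbf{q}_0)\ne\emptyset$ for every $\mathbf{q}_0\in U$. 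Because $\mathbf{F}_q(\mathbf{q}_\infty) = (\mathbf{C}\hat{\mathbf{J}}\mathbf{R})(\mathbf{q}_\infty)$ has rank $m$, the implicit function theorem shows that $\mathbf{f}^{-1}(\mathbf{p})\cap U$ is a smooth $(n-m)$-dimensional submanifold with tangent space $\ker\mathbf{F}_q(\mathbf{q}_\infty)$ at $\mathbf{q}_\infty$, and injectivity of $\hat{\mathbf{J}}^T$ together with invertibility of $\mathbf{C}_D^T\mathbf{L}\boldsymbol{\Psi}\mathbf{K}$ on $U$ identifies this submanifold with $\mathbf{u}^{-1}(\mathbf{0})\cap U$.

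Shrinking $U$ if necessary, continuity yields uniform estimates $\sigma_{\min}\bigl((\psi_a\mathbf{C}_{aa}\mathbf{C}_{aa}^T\mathbf{L}_{aa})(\mathbf{q})\bigr)\ge\alpha_a>0$ (here (A4) and (A5) are critical: they make this product symmetric and, combined with the rank assumption, positive definite at $\mathbf{q}_\infty$, after which continuity of eigenvalues supplies the uniform lower bound), $\|\mathbf{A}_{ab}(\mathbf{q})\|\le\beta_{ab}$, and $\|\mathbf{u}(\mathbf{q})\|\le C\|\mathbf{e}(\mathbf{q})\|$. Specializing Section \ref{sec:task_convergence} to $\dot{\mathbf{p}}=\mathbf{f}_t=\mathbf{0}$ yields the triangular error dynamics $\dot{\mathbf{e}}_a = -\sum_{b\le a}k_b\psi_b\mathbf{A}_{ab}\mathbf{e}_b$, and on a trajectory lying in $U$ the candidate $V(\mathbf{q}) = \sum_a\gamma_a\|\mathbf{e}_a(\mathbf{q})\|^2$ satisfies
\begin{equation*}
\dot V = -2\sum_a\gamma_a k_a\psi_a \mathbf{e}_a^T\mathbf{A}_{aa}\mathbf{e}_a - 2\sum_{b<a}\gamma_a k_b\psi_b\mathbf{e}_a^T\mathbf{A}_{ab}\mathbf{e}_b.
\end{equation*}
Controlling the cross terms by $2\gamma_a k_b\psi_b\lvert\mathbf{e}_a^T\mathbf{A}_{ab}\mathbf{e}_b\rvert\le\gamma_a k_b\psi_b\beta_{ab}\bigl(\delta\|\mathbf{e}_a\|^2+\delta^{-1}\|\mathbf{e}_b\|^2\bigr)$, then picking $\delta$ small and cascading $\gamma_1\gg\gamma_2\gg\cdots\gg\gamma_l$ backward so that $\gamma_a\alpha_a\delta$ dominates $\sum_{b>a}\gamma_b\beta_{ba}$ for every $a$, I obtain $\dot V\le -\lambda V$ on $U$ with some $\lambda>0$, hence $\|\mathbf{e}(\mathbf{q}(t))\|\le c_1 e^{-\lambda t/2}\|\mathbf{q}_0-\mathbf{q}_\infty\|$ as long as the trajectory remains in $U$.

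Combining this decay with $\|\dot{\mathbf{q}}\|\le C\|\mathbf{e}\|$ gives $\int_0^\infty\|\dot{\mathbf{q}}(t)\|dt\le c_2\|\mathbf{q}_0-\mathbf{q}_\infty\|$, from which three conclusions follow at once: the trajectory never leaves $U$ once $\|\mathbf{q}_0-\mathbf{q}_\infty\|$ is sufficiently small, so $\mathbf{q}_\infty$ is Lyapunov stable with $\delta=\epsilon/(1+c_2)$; Lemma \ref{lem:joint_convergence} produces a limit $\mathbf{q}_\infty' = \lim_{t\to\infty}\mathbf{q}(t)\in\mathbf{f}^{-1}(\mathbf{p})\cap U$; and the same Lyapunov estimate applied around $\mathbf{q}_\infty'$ (which still satisfies the rank condition by continuity, since $U\subset\mathcal{H}_\mathsf{S}$ and the diagonal blocks stay full rank) shows that $\mathbf{q}_\infty'$ is itself stable, closing the definition of semistability. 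When $m=n$ the implicit function theorem shrinks $\mathbf{f}^{-1}(\mathbf{p})\cap U$ to the singleton $\{\mathbf{q}_\infty\}$, so $\mathbf{q}_\infty' = \mathbf{q}_\infty$ and semistability upgrades to asymptotic stability. The main obstacle is the backward cascade choice of weights: it succeeds precisely because the error system is block lower triangular---a structural reflection of the priority property (P1)---and because (A4)--(A5) promote each diagonal block $\mathbf{C}_{aa}\mathbf{C}_{aa}^T\mathbf{L}_{aa}$ to a symmetric positive definite matrix with a uniform spectral lower bound in the relevant neighborhood.
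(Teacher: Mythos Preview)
Your proposal is correct and follows essentially the same strategy as the paper: a weighted task-error Lyapunov function $V(\mathbf{q})=\sum_a \gamma_a\|\mathbf{e}_a(\mathbf{q})\|^2$, with weights chosen so that the lower-triangular coupling from $\mathbf{A}_{ab}$ ($b<a$) is absorbed by the diagonal terms, yielding $\dot V\le -\lambda V$ and hence an integrable bound on $\|\dot{\mathbf{q}}\|$.

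The only noteworthy differences are presentational. The paper selects the weights by building a symmetric matrix $\mathbf{Q}\in\mathbb{R}^{l\times l}$ with $q_{aa}=p_a\phi_{aa}$ and $q_{ab}=q_{ba}=-p_b\phi_{ba}$ ($a<b$) and forcing $\mathbf{Q}>0$ via an explicit Cholesky recursion, whereas you use Young's inequality and a backward cascade $\gamma_1\gg\cdots\gg\gamma_l$; these are equivalent mechanisms for the same quadratic form estimate. The paper also proves Lyapunov stability by contradiction (sequences escaping a ball) rather than your direct choice $\delta=\epsilon/(1+c_2)$, and it does not invoke the implicit function theorem to describe $\mathbf{u}^{-1}(\mathbf{0})\cap U$ as a submanifold---that extra geometry in your argument is correct but unnecessary for the stated conclusion. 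In both proofs the endgame is identical: exponential decay of $\|\mathbf{e}\|$ plus $\|\mathbf{u}\|\le C\|\mathbf{e}\|$ gives a Cauchy trajectory converging to some $\mathbf{q}_\infty'\in\mathbf{f}^{-1}(\mathbf{p})$ that inherits the rank condition, hence is itself stable; and $m=n$ isolates $\mathbf{q}_\infty$ by the inverse function theorem.
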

\begin{proof}
	We first prove that $\mathbf{q}_\infty$ is stable by contradiction.
	Suppose that there exists $\epsilon_1>0$ such that for every $\delta>0$ there exists $\mathbf{q}_0\in\mathbf{q}_\infty + \delta B_n$ such that either $S(\mathbf{q}_0) = \emptyset$ or there exists $\mathbf{q}\in S(\mathbf{q}_0) \neq\emptyset$ and $T\in[0,\infty)$ satisfying $\|\mathbf{q}(T) - \mathbf{q}_\infty\| \ge \epsilon_1$.
	Since $\mathrm{rank}((\mathbf{C}_D^T\mathbf{L}\boldsymbol{\Psi})(\mathbf{q}_\infty)) = m$ and $\mathbf{C}_D^T\mathbf{L}\boldsymbol{\Psi}\in C_{\mathcal{H}(\mathsf{S})}^0$, there exists $\epsilon_2>0$ satisfying $\mathrm{rank}((\mathbf{C}_D^T\mathbf{L}\boldsymbol{\Psi})(\mathbf{q})) = m$ for all $\mathbf{q}\in\mathbf{q}_\infty+\epsilon_2B_n$ such that $\mathbf{u}\in C_{\mathbf{q}_\infty+\epsilon_2B_n}^0$.
	Thus, for every $\mathbf{q}_0\in\mathbf{q}_\infty+\epsilon_2B_n$ either $S(\mathbf{q}_0) \neq \emptyset$ or there exists $T\in[0,\infty)$ and $\mathbf{q}\in C_{(0,T)}^1([0,T],\mathbb{R}^n)$ satisfying $\mathbf{q}(0) = \mathbf{q}_0$, $\|\mathbf{q}(T) - \mathbf{q}_0\| = \epsilon_2$, and $\dot{\mathbf{q}}(t) = \mathbf{u}(\mathbf{q}(t))$ for all $t\in(0,T)$.
	Let $\epsilon_0=\min\{\epsilon_1,\epsilon_2\} > \delta_1 > \delta_2 > \cdots > 0$ be such that $\delta_i\to0$ as $i\to\infty$.
	Then, for every $i\in\mathbb{N}$ there exists $t_i\in(0,\infty)$ and $\mathbf{q}_i\in C_{(0,t_i)}^1([0,t_i],\mathbb{R}^n)$ satisfying $\|\mathbf{q}_i(0) - \mathbf{q}_\infty\|\le\delta_i$, $\|\mathbf{q}_i(t) - \mathbf{q}_\infty\|<\epsilon_0$ for all $t\in[0,t_i)$, $\|\mathbf{q}_i(t_i) - \mathbf{q}_\infty\| = \epsilon_0$, and $\dot{\mathbf{q}}_i(t) = \mathbf{u}(\mathbf{q}_i(t))$ for all $t\in(0,t_i)$.
	
	Let $\mathbf{P} = \mathrm{diag}(p_1\mathbf{I}_{m_1},\dots,p_l\mathbf{I}_{m_l})\in\mathbb{R}^{m\times m}$ be arbitrary and $\mathbf{M} = [\mathbf{M}_{ij}] = \mathbf{C}\mathbf{C}_D^T\mathbf{L}\boldsymbol{\Psi}\mathbf{K}$ where $\mathbf{M}_{ab}:\mathbb{R}^n\to\mathbb{R}^{m_a\times m_b}$ is the $(a,b)$-th block of $\mathbf{M}$ for $a,b\in\overline{1,l}$.
	By the assumptions, $\mathbf{M}_{aa}(\mathbf{q}) = k_a(\psi_a\mathbf{C}_{aa}\mathbf{C}_{aa}^T\mathbf{L}_{aa})(\mathbf{q}) = \mathbf{M}_{aa}^T(\mathbf{q}) > 0$ for all $a\in\overline{1,l}$ and $\mathbf{q}\in\mathbf{q}_\infty+\epsilon_0B_n$.
	Since $\mathbf{M}_{ab}\in C_{\mathbf{q}_\infty+\epsilon_0B_n}^0$ for all $a,b\in\overline{1,l}$, there exist $\phi_{aa} = \min\{\sigma_{\min}(\mathbf{M}_{aa}(\mathbf{q}))\mid\mathbf{q}\in\mathbf{q}_\infty+\epsilon_0B_n\} \in (0,\infty)$ for $a\in\overline{1,l}$ and $\phi_{ab} = \frac{1}{2}\max\{\sigma_{\max}(\mathbf{M}_{ab}(\mathbf{q}))\mid \mathbf{q}\in\mathbf{q}_\infty+\epsilon_0B_n\}\in[0,\infty)$ for $1\le b<a\le l$.
	Define $\mathbf{Q} = [q_{ij}] \in\mathbb{R}^{l\times l}$ as $q_{aa} = p_a\phi_{aa}$ for $a\in\overline{1,l}$ and $q_{ab} = q_{ba} = -p_b\phi_{ba}$ for $1\le a < b\le l$.
	The symmetric matrix $\mathbf{Q}$ is positive definite if and only if there exists a lower triangular matrix $\mathbf{X} = [x_{ij}]\in\mathbb{R}^{l\times l}$ with positive diagonals such that $\mathbf{Q} = \mathbf{X}\mathbf{X}^T$ \cite[Corollary 7.2.9]{Horn2013}.
	By comparing entries of $\mathbf{Q} = \mathbf{X}\mathbf{X}^T$, we can find $\mathbf{X}$ as
	\begin{equation*}
		x_{aa} = \left( p_a\phi_{aa} - \sum_{b=1}^{a-1}x_{ab}^2\right)^{1/2}
	\end{equation*}
	and
	\begin{equation*}
		x_{ab} = -\frac{1}{x_{bb}}\left( p_a\phi_{ab} + \sum_{i=1}^{b-1}x_{bi}x_{ai} \right), \quad a\in\overline{b+1,l}
	\end{equation*}
	under the condition $p_1>0$ and $p_a > \sum_{b=1}^{a-1}x_{ab}^2/\phi_{aa}$ for $a\in\overline{2,l}$.
	Fix $p_1,\dots,p_l\in(0,\infty)$ such that $\mathbf{Q}$ is positive definite.
	
	Define $V:\mathbb{R}^n\to[0,\infty)$ as
	\begin{equation*}
		V(\mathbf{q}) = \frac{1}{2}\langle\mathbf{e}(\mathbf{q}),\mathbf{P}\mathbf{e}(\mathbf{q})\rangle = \frac{1}{2}\sum_{a=1}^lp_a\|\mathbf{e}_a(\mathbf{q})\|^2
	\end{equation*}
	where $\mathbf{e}_a = \mathbf{p}_a - \mathbf{f}_a$.
	Let $\rho_1 = \min\{p_1,\dots,p_l\}$, $\rho_2 = \max\{p_1,\dots,p_l\}$, $\rho_3 = \sigma_{\min}(\mathbf{Q})$, and $\rho=\rho_3/(2\rho_2)$.
	Then, $2\rho_1\|\mathbf{e}(\mathbf{q})\|^2 \le V(\mathbf{q}) \le 2\rho_2\|\mathbf{e}(\mathbf{q})\|^2$ for all $\mathbf{q}\in\mathbb{R}^n$ and
	\begin{align*}
		\dot{V}(\mathbf{q}_i(t)) &= -\langle\mathbf{e}(\mathbf{q}_i(t)), \mathbf{P}\mathbf{M}(\mathbf{q}_i(t))\mathbf{e}(\mathbf{q}_i(t))\rangle \\
			&\le -\left\langle \begin{bmatrix} \|\mathbf{e}_1(\mathbf{q}_i(t))\| \\ \vdots \\ \|\mathbf{e}_l(\mathbf{q}_i(t))\| \end{bmatrix}, \mathbf{Q}\begin{bmatrix} \|\mathbf{e}_1(\mathbf{q}_i(t))\| \\ \vdots \\ \|\mathbf{e}_l(\mathbf{q}_i(t))\| \end{bmatrix} \right\rangle \\
			&\le -\rho_3\sum_{a=1}^l\|\mathbf{e}_a(\mathbf{q}_i(t))\|^2 \\
			&\le -\rho V(\mathbf{q}_i(t))
	\end{align*}
	for all $i\in\mathbb{N}$ and $t\in(0,t_i)$.
	By the Gronwall's inequality, $V(\mathbf{q}_i(t))\le V(\mathbf{q}_i(0))e^{-\rho t}$ for all $i\in\mathbb{N}$ and $t\in[0,t_i]$.
	Since $\mathsf{D}\mathbf{f}$ and $\mathbf{R}^{-1}(\mathbf{C}_D\hat{\mathbf{J}})^T\mathbf{L}\boldsymbol{\Psi}\mathbf{K}$ are continuous on $\mathbf{q}_\infty+\epsilon_0B_n$, there exist $L,M\in[0,\infty)$ satisfying
	\begin{align}
		\|\mathbf{f}(\mathbf{q}) - \mathbf{f}(\mathbf{q}_\infty)\| &\le L\|\mathbf{q} - \mathbf{q}_\infty\| \label{eqn:theorem_stability_lipschitz} \\
		\|(\mathbf{R}^{-1}\hat{\mathbf{J}}^T\mathbf{C}_D^T\mathbf{L}\boldsymbol{\Psi}\mathbf{K})(\mathbf{q})\| &\le M \label{eqn:theorem_stability_boundedness}
	\end{align}
	for all $\mathbf{q}\in\mathbf{q}_\infty+\epsilon_0B_n$.
	Then, we can derive
	\begin{align*}
		\epsilon_0 
			&\le \|\mathbf{q}_i(t_i) - \mathbf{q}_i(0)\| + \|\mathbf{q}_i(0) - \mathbf{q}_\infty\| \\
			&\le \int_0^{t_i}\|\mathbf{u}(\mathbf{q}_i(t))\|dt + \delta_i \\
			&\le M\sqrt{\frac{V(\mathbf{q}_i(0))}{2\rho_1}}\int_0^{t_i}e^{-\rho t/2}dt + \delta_i \\
			&\le \frac{2M}{\rho}\sqrt{\frac{\rho_2}{\rho_1}}\|\mathbf{f}(\mathbf{q}_i(0)) - \mathbf{f}(\mathbf{q}_\infty)\| + \delta_i \\
			&\le \left(1 + \frac{2LM}{\rho}\sqrt{\frac{\rho_2}{\rho_1}}\right)\delta_i
	\end{align*}
	for all $i\in\mathbb{N}$.
	Since $\lim_{i\to\infty}\delta_i = 0$, there exists $N\in\mathbb{N}$ such that
	\begin{equation*}
	\epsilon_0 \le \left(1 + \frac{2LM}{\rho}\sqrt{\frac{\rho_2}{\rho_1}}\right)\delta_i < \epsilon_0
	\end{equation*}
	for all $i>N$, a contradiction.
	Therefore, $\mathbf{q}_\infty$ is stable.
	
	We prove that $\mathbf{q}_\infty$ is semistable.
	Let $\epsilon\in(0,\infty)$ be such that $\mathrm{rank}((\mathbf{C}_D^T\mathbf{L}\boldsymbol{\Psi})(\mathbf{q})) = m$ for all $\mathbf{q}\in\mathbf{q}_\infty+\epsilon B_n$.
	Since $\mathbf{q}_\infty$ is stable, there exists $\delta>0$ such that $\|\mathbf{q}(t) - \mathbf{q}_\infty\|<\epsilon$ for all $\mathbf{q}_0\in\mathbf{q}_\infty + \delta B_n$, $\mathbf{q}\in S(\mathbf{q}_0) \neq \emptyset$, and $t\in[0,\infty)$.
	Fix $\mathbf{q}_0\in\mathbf{q}_\infty + \delta B_n$ and $\mathbf{q}\in S(\mathbf{q}_0)$.
	Let $0\le t_1<t_2<\cdots$ be an arbitrary divergent sequence and $\mathbf{q}_i = \mathbf{q}(t_i)$.
	There exist $L,M\in[0,\infty)$ satisfying \eqref{eqn:theorem_stability_lipschitz} and \eqref{eqn:theorem_stability_boundedness} on $\mathbf{q}_\infty + \epsilon B_n$.
	Then,
	\begin{align*}
	\|\mathbf{q}_i - \mathbf{q}_j\| &\le \left|\int_{t_i}^{t_j}\|\mathbf{u}(\mathbf{q}(t))\|dt\right| \\
	&\le M\left|\int_{t_i}^{t_j}\|\mathbf{e}(\mathbf{q}(t))\|dt\right| \\
	&\le M\sqrt{\frac{V(\mathbf{q}(0))}{2\rho_1}}\left|\int_{t_i}^{t_j}e^{-\rho t/2}dt\right| \\
	&\le \frac{2\delta LM}{\rho}\sqrt{\frac{\rho_2}{\rho_1}}(e^{-\rho t_i/2} + e^{-\rho t_j/2})
	\end{align*}
	for all $i,j\in\mathbb{N}$.
	For all $\epsilon'>0$ there exists $N\in\mathbb{N}$ such that $\|\mathbf{q}_i-\mathbf{q}_j\|<\epsilon'$ if $i,j>N$.
	So, $\{\mathbf{q}_i\}$ is Cauchy and converges to a point in $\mathbf{q}_0+\epsilon B_n$.
	Since it holds for an arbitrary divergent sequence $\{t_i\}$, $\mathbf{q}(t)$ converges to a point $\mathbf{q}_\infty' \in \mathbf{q}_\infty + \epsilon B_n$.
	Since $\mathbf{f}^{-1}(\mathbf{p})$ is closed and
	\begin{equation*}
		\lim_{t\to\infty}\|\mathbf{p} - \mathbf{f}(\mathbf{q}(t))\| 
			\le \sqrt{\frac{V(\mathbf{q}(0))}{2\rho_1}}\lim_{t\to\infty}e^{-\rho t/2} 
			= 0,
	\end{equation*}
	we have $\mathbf{q}_\infty'\in\mathbf{f}^{-1}(\mathbf{p})$ and $\mathrm{rank}((\mathbf{C}_D^T\mathbf{L}\boldsymbol{\Psi})(\mathbf{q}_\infty')) = m$.
	By the first part of the proof, we see that $\mathbf{q}_\infty'$ is a stable equilibrium point.
	Therefore, $\mathbf{q}_\infty$ is semistable.
	
	If $m = n$, then $\mathbf{f}^{-1}(\mathbf{p}) = \{\mathbf{q}_\infty\}$ by the inverse function theorem \cite[Theorem 9.24]{Rudin1964}, so semistability coincides with asymptotic stability.
\end{proof}

\begin{corollary}
	\label{cor:stability}
	Let $\mathsf{S}\in\mathbb{S}^\bullet$ be as in \eqref{eqn:stability_kinematic_system}, $\alpha\in\overline{1,4}$, $\mathbf{u}$ be the $\boldsymbol{\pi}_\alpha$-PIK solution of $[\mathsf{S}]^\bullet$ with the damping functions given by \eqref{eqn:damping_function}.
	Assume $\mu_1,\dots,\mu_l,\nu\in[0,\infty)$ if $\alpha\in\overline{1,3}$, (A5) if $\alpha = 2$, $\mathbf{q}_\infty\in\mathbf{f}^{-1}(\mathbf{p})\cap\mathcal{H}_\mathsf{S}$, and $\mathrm{rank}(\boldsymbol{\Psi}(\mathbf{q}_\infty)) = m$.
	Then, the equilibrium point $\mathbf{q}_\infty$ of \eqref{eqn:autonomous_system} is semistable. If $m = n$, then $\mathbf{q}_\infty$ is asymptotically stable.
\end{corollary}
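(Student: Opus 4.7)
The plan is to apply Theorem \ref{thm:stability}, so I must verify its four hypotheses (A4), (A5), $\mathbf{L} \in C^0_{\mathcal{H}(\mathsf{S})}$, and $\mathrm{rank}((\mathbf{C}_D^T\mathbf{L}\boldsymbol{\Psi})(\mathbf{q}_\infty)) = m$ for each choice $\alpha \in \overline{1,4}$. Reading off the explicit closed forms of $\mathbf{L}$ given after \eqref{eqn:damping_function}, the diagonal blocks are $\mathbf{L}_{aa} = \mathbf{D}_a$ for $\alpha \in \{1,3\}$, $\mathbf{L}_{aa} = \mathbf{H}_a$ for $\alpha = 2$, and $\mathbf{L}_{aa} = \mathbf{I}_{m_a}$ for $\alpha = 4$; each is symmetric and positive semidefinite, so (A4) holds. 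For (A5), the cases $\alpha \in \{1,3\}$ are immediate because $\mathbf{D}_a$ is a spectral function of $\mathbf{C}_{aa}\mathbf{C}_{aa}^T$ and hence commutes with it; $\alpha = 4$ is trivial; $\alpha = 2$ is the assumed hypothesis.

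For continuity of $\mathbf{L}$ on $\mathcal{H}_\mathsf{S}$, the key observation is that $\mathbf{J}\mathbf{J}^T = \mathbf{C}\mathbf{C}^T$ together with the block-lower-triangular structure of $\mathbf{C}$ forces $\det(\mathbf{C}_{aa}(\mathbf{q})) \neq 0$ for every $a$ whenever $\mathrm{rank}(\mathbf{J}(\mathbf{q})) = m$. Since $\mathcal{H}_\mathsf{S}$ is open, both $\mathbf{C}_{aa}$ and $\mathbf{J}_a$ have locally constant rank on $\mathcal{H}_\mathsf{S}$. Lemma \ref{lem:ContinuityOfPseudoinverse-modified} then gives continuity of $\mathbf{D}_a$ and $\mathbf{H}_a$ in the undamped case $\mu_a = 0$, and Lemma \ref{lem:smoothness_of_extended_damped_pseudoinverse} covers $\mu_a > 0$. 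Consequently $\mathbf{D}$, $\mathbf{H}$, and $\mathbf{C}_D^\circledast$ are continuous on $\mathcal{H}_\mathsf{S}$, from which $\mathbf{L} \in C^0_{\mathcal{H}(\mathsf{S})}$ follows for $\alpha \in \{2,3,4\}$ directly, and for $\alpha = 1$ by expanding $(\mathbf{I}_m + \mathbf{C}_L\mathbf{C}_D^\circledast)^{-1}$ as the finite Neumann sum $\sum_{i=0}^{l-1}(-\mathbf{C}_L\mathbf{C}_D^\circledast)^i$ as in the proof of Corollary \ref{cor:existence_of_krasovskii_solution}.

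It remains to verify the rank condition at $\mathbf{q}_\infty$. Since $\mathbf{q}_\infty \in \mathcal{H}_\mathsf{S}$, the matrix $\mathbf{C}_D(\mathbf{q}_\infty)$ is invertible, and $\mathrm{rank}(\mathbf{J}(\mathbf{q}_\infty)) = m$ forces $\mathrm{rank}(\mathbf{J}_a(\mathbf{q}_\infty)) = m_a$ for every $a$ via $m = \mathrm{rank}(\mathbf{J}(\mathbf{q}_\infty)) \le \sum_a \mathrm{rank}(\mathbf{J}_a(\mathbf{q}_\infty)) \le \sum_a m_a = m$. Hence every diagonal block of $\mathbf{L}(\mathbf{q}_\infty)$ is invertible — $(\mathbf{C}_{aa}\mathbf{C}_{aa}^T + \lambda_a^2\mathbf{I}_{m_a})^{-1}$ for $\alpha \in \{1,3\}$, $(\mathbf{J}_a\mathbf{J}_a^T + \lambda_a^2\mathbf{I}_{m_a})^{-1}$ for $\alpha = 2$, the identity for $\alpha = 4$ — and since $\mathbf{L}(\mathbf{q}_\infty)$ is block lower triangular with invertible diagonals, it is itself invertible. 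Combined with invertibility of the square diagonal matrix $\boldsymbol{\Psi}(\mathbf{q}_\infty)$, which follows from $\mathrm{rank}(\boldsymbol{\Psi}(\mathbf{q}_\infty)) = m$, we conclude $\mathrm{rank}((\mathbf{C}_D^T\mathbf{L}\boldsymbol{\Psi})(\mathbf{q}_\infty)) = m$. All hypotheses of Theorem \ref{thm:stability} are therefore met, yielding semistability of $\mathbf{q}_\infty$ and asymptotic stability when $m = n$. The only delicate point is the case $\mu_a = 0$ with $\alpha \in \overline{1,3}$, where $\mathbf{D}_a$ and $\mathbf{H}_a$ reduce to ordinary pseudoinverses whose continuity is not automatic; it is rescued precisely by the hypothesis $\mathbf{q}_\infty \in \mathcal{H}_\mathsf{S}$, which is also exactly what secures the full-rank conclusions used above.
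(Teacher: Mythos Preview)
Your proof is correct and follows essentially the same approach as the paper: verify (A4), (A5), $\mathbf{L}\in C^0_{\mathcal{H}(\mathsf{S})}$, and the rank condition, then invoke Theorem~\ref{thm:stability}. One minor technical slip: your appeal to Lemma~\ref{lem:smoothness_of_extended_damped_pseudoinverse} for the case $\mu_a>0$ is not quite licit, since that lemma assumes $\nu\in\mathbb{N}\cup\{0\}$ whereas this corollary allows any $\nu\in[0,\infty)$; the paper instead argues directly that on $\mathcal{H}_\mathsf{S}$ one has $|\mathbf{C}_{aa}\mathbf{C}_{aa}^T|>0$, so $\lambda_a^2=\mu_a^2/|\mathbf{C}_{aa}\mathbf{C}_{aa}^T|^\nu$ is finite and continuous for every $\nu\ge0$, whence $\mathbf{D}_a,\mathbf{H}_a\in C^0_{\mathcal{H}(\mathsf{S})}$.
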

\begin{proof}
	Since $\mathbf{D}_a(\mathbf{q}) = \mathbf{D}_a^T(\mathbf{q})\ge 0$ and $\mathbf{H}_a(\mathbf{q}) = \mathbf{H}_a^T(\mathbf{q})\ge0$ for all $a\in\overline{1,l}$ and $\mathbf{q}\in \mathbb{R}^n$, (A4) holds for all $\alpha\in\overline{1,4}$.
	Since $\mathbf{C}_{aa}\mathbf{C}_{aa}^T\mathbf{D}_a = \mathbf{C}_{aa}\mathbf{C}_{aa}^* = \mathbf{D}_a\mathbf{C}_{aa}\mathbf{C}_{aa}^T$, (A5) holds for all $\alpha\in\{1,3,4\}$.
	Since $\mathbf{C}_{aa}\hat{\mathbf{J}}_a\in C_{\mathcal{H}(\mathsf{S})}^0$ and $\mathrm{rank}(\mathbf{C}_{aa}(\mathbf{q})) = m_a$ for all $\mathbf{q}\in\mathcal{H}_\mathcal{S}$, we have $\lambda_a^2 = \mu_a^2/|\mathbf{C}_{aa}\mathbf{C}_{aa}^T|^\nu\in C_{\mathcal{H}(\mathsf{S})}^0$ and $\mathbf{D}_a,\mathbf{H}_a\in C_{\mathcal{H}(\mathsf{S})}^0$.
	It follows that $\mathbf{L}\in C_{\mathcal{H}(\mathsf{S})}^0$ and $\mathcal{H}_\mathsf{S} \subset \{\mathbf{q}\in\mathbb{R}^n\mid \mathrm{rank}(\mathbf{L}(\mathbf{q})) = m\}$.
	The proof is completed by Theorem \ref{thm:stability}.
\end{proof}

\section{Example}
\label{sec:example}

A minimal example that shows $\bullet$-discontinuity of PIK solutions is a two-link manipulator whose forward kinematic function is given as
\begin{equation*}
\mathbf{f}(t,\mathbf{q}) = \begin{bmatrix} f_1(t,\mathbf{q}) \\ f_2(t,\mathbf{q}) \end{bmatrix} = \begin{bmatrix} \mathsf{L}_1\cos(q_1) + \mathsf{L}_2\cos(q_1+q_2) \\ \mathsf{L}_1\sin(q_1) + \mathsf{L}_2\sin(q_1+q_2) \end{bmatrix}
\end{equation*}
where $\mathsf{L}_1$ and $\mathsf{L}_2$ are link lengths, $q_1$ and $q_2$ are joint angles, $\mathbf{q} = (q_1,q_2) \in \mathbb{R}^2$, and $(x,y) = (f_1(t,\mathbf{q}),f_2(t,\mathbf{q}))\in\mathbb{R}^2$ is the position of the end-effector in the $xy$-plane. 
Let $\bullet\in\mathcal{I}$ and denote $\mathsf{L} = \mathsf{L}_1+\mathsf{L}_2$, $\mathsf{L}'=|\mathsf{L}_1-\mathsf{L}_2|$, and $\mathbf{J} = \begin{bmatrix} \mathbf{j}_1^T & \mathbf{j}_2^T\end{bmatrix}^T = \mathsf{D}_q\mathbf{f}$.
Assign priority to the $x$-directional motion over the $y$-directional motion of the end-effector. 
We can find the QR decomposition of $\mathbf{J}^T$ given by Lemma \ref{lem:orthogonalization_of_J} as
\begin{align*}
	\mathbf{J}
		&= \begin{bmatrix} c_{11} & 0 \\ c_{21} & c_{22}\end{bmatrix} \begin{bmatrix} \hat{\mathbf{j}}_1 \\ \hat{\mathbf{j}}_2\end{bmatrix} \\
		&= \begin{dcases*}
		\begin{bmatrix} 0 & 0 \\ 0 & \sqrt{\mathbf{j}_2\mathbf{j}_2^T} \end{bmatrix} \begin{bmatrix} \hat{\mathbf{j}}_1 \\ \frac{\mathbf{j}_2}{\sqrt{\mathbf{j}_2\mathbf{j}_2^T}} \end{bmatrix}, & $\begin{array}{c}q_1=0\\q_2=0\end{array}$
		\\
		\begin{bmatrix} \sqrt{\mathbf{j}_1\mathbf{j}_1^T} & 0 \\ \frac{\mathbf{j}_1\mathbf{j}_2^T}{\sqrt{\mathbf{j}_1\mathbf{j}_1^T}} & 0 \end{bmatrix} \begin{bmatrix} \frac{\mathbf{j}_1}{\sqrt{\mathbf{j}_1\mathbf{j}_1^T}} \\ \hat{\mathbf{j}}_2 \end{bmatrix}, & $\begin{array}{c}q_1\neq0\\q_2=0\end{array}$
		\\
		\begin{bmatrix} \sqrt{\mathbf{j}_1\mathbf{j}_1^T} & 0 \\ \frac{\mathbf{j}_1\mathbf{j}_2^T}{\sqrt{\mathbf{j}_1\mathbf{j}_1^T}} & \sqrt{\mathbf{j}_2\mathbf{N}_1\mathbf{j}_2^T} \end{bmatrix} \begin{bmatrix} \frac{\mathbf{j}_1}{\sqrt{\mathbf{j}_1\mathbf{j}_1^T}} \\ \frac{\mathbf{j}_2\mathbf{N}_1}{\sqrt{\mathbf{j}_2\mathbf{N}_1\mathbf{j}_2^T}} \end{bmatrix}, & $\begin{array}{c}q_2\neq0\end{array}$
		\end{dcases*}
\end{align*}
for all $\mathbf{q}\in\Omega = [-\frac{\pi}{2},\frac{\pi}{2}]^2$ where $\mathbf{N}_1 = \mathbf{I}_2 - (\mathbf{j}_1^T\mathbf{j}_1)/(\mathbf{j}_1\mathbf{j}_1^T)$. 
Note that $\hat{\mathbf{j}}_1^T(t,\mathbf{0})$ and $\hat{\mathbf{j}}_2^T(t,\mathbf{q})$ for $\mathbf{q}\in\{(q_1,q_2)\in\Omega\mid q_1\neq0,\,q_2=0\}$ should be chosen from $\mathcal{N}(\mathbf{J}(t,\mathbf{0}))$ and $\mathcal{N}(\mathbf{J}(t,\mathbf{q}))$, respectively. 
If $\mathbf{q}_0\in\Omega\setminus\{\mathbf{0}\}$, then there exists $\epsilon>0$ such that $\mathbf{j}_1(\mathbf{x})\neq\mathbf{0}$ and $\hat{\mathbf{j}}_1(\mathbf{x}) = (\mathbf{j}_1/\sqrt{\mathbf{j}_1\mathbf{j}_1^T})(\mathbf{x})$ for every $\mathbf{x}\in\mathbb{R}\times(\mathbf{q}_0+\epsilon B_2)$. 
So, $\hat{\mathbf{j}}_1\in C_{(t,\mathbf{q}_0)}^\bullet$ and $\mathcal{B}_{(t,\mathbf{q}_0)}^\bullet = \{\{\hat{\mathbf{j}}_1\},\{\hat{\mathbf{j}}_2\}\}$ for all $t$ by Proposition \ref{prp:properties_of_MCBS}. 
Let $t_0\in\mathbb{R}$ be arbitrary and $\mathbf{x}_0 = (t_0,\mathbf{0})$.
We find that $\mathcal{B}_{\mathbf{x}_0}^\bullet = \{\{\hat{\mathbf{j}}_1,\hat{\mathbf{j}}_2\}\}$ from
\begin{equation*}
(c_{22}\hat{\mathbf{j}}_2^T)(\mathbf{x}) = \begin{dcases*}
(\mathsf{L}_1+\mathsf{L}_2, \mathsf{L}_2), & $q_1 = q_2 = 0$ \\
(0, 0), & $q_1\neq0,\,q_2=0$ \\
(\mathsf{L}_1/2,-\mathsf{L}_1/2), & $q_1=0,\,q_2\neq0$ \\
(0,\mathsf{L}_2), & $q_1 = -q_2 \neq 0$.
\end{dcases*}
\end{equation*}
Observe that $\mathbf{P}(\{\hat{\mathbf{j}}_2\})$ is purely $\bullet$-discontinuous at $\mathbf{x}_0$ and $[\mathbf{P}(\{\hat{\mathbf{j}}_2\})]_{\mathbf{x}_0}^\bullet(\mathbf{x}_0) = \mathbf{P}(\{\hat{\mathbf{j}}_2\},\mathbf{x}_0)$. 
Since $\mathcal{F}(\{\hat{\mathbf{j}}_2\},\mathbf{x}_0) = \{\hat{\mathbf{j}}_2\}$, we have $[\mathbf{P}(\{\hat{\mathbf{j}}_2\})]_{\mathbf{x}_0}^\bullet(\mathbf{x}_0) = \mathbf{P}(\mathcal{F}(\{\hat{\mathbf{j}}_2\},\mathbf{x}_0),\mathbf{x}_0)$. 
Therefore, there does not exist a $\bullet$-continuous SPIK solution of the equivalence class of the kinematic system $\mathsf{S}_0 = (2,(1,1),2,\mathsf{D}\mathbf{f},\mathbf{I}_2,\mathbf{0})$ by Theorem \ref{thm:smoothness_of_the_PIK_solution}. 

We showed $(t,\mathbf{0})\not\in\mathcal{G}_{\mathsf{S}_0}^\bullet$ for all $t\in\mathbb{R}$ and $\bullet\in\mathcal{I}$.
Observe $\mathsf{D}_qf_1(t,\mathbf{0}) = (c_{11}\hat{\mathbf{j}}_1)(t,\mathbf{0}) = \mathbf{0}$.
Indeed, $X\setminus\mathcal{G}_{\mathsf{S}_0}^\bullet = \{\mathbf{x}\in X\mid c_{11}(\mathbf{x}) = 0\} = \mathbb{R}\times\pi\mathbb{Z}^2$ and $\mathcal{G}_{\mathsf{S}_0}^\bullet = \mathrm{int}(\mathcal{G}_{\mathsf{S}_0}^\bullet) = \mathbb{R}\times(\mathbb{R}^2\setminus\pi\mathbb{Z}^2)$ for all $\bullet\in\mathcal{I}$.
One can easily check that $f_1$ has its maximum value at $\mathbf{q}\in2\pi\mathbb{Z}^2$ and its minimum value at $\mathbf{q}\in(\pi,0) + 2\pi\mathbb{Z}^2$ from the Hessian matrix of $f_1$ at each $\mathbf{q}\in\pi\mathbb{Z}^2$
\begin{equation*}
\mathsf{D}_q(c_{11}\hat{\mathbf{j}}_1^T)(\mathbf{x})
= \begin{dcases*}
\begin{bmatrix} -\mathsf{L}_1-\mathsf{L}_2 & -\mathsf{L}_2 \\ -\mathsf{L}_2 & -\mathsf{L}_2\end{bmatrix}, & $\mathbf{q}\in\begin{bmatrix}0\\0\end{bmatrix}+2\pi\mathbb{Z}^2$ \\
\begin{bmatrix} \mathsf{L}_1+\mathsf{L}_2 & \mathsf{L}_2 \\ \mathsf{L}_2 & \mathsf{L}_2\end{bmatrix}, & $\mathbf{q}\in\begin{bmatrix}\pi\\0\end{bmatrix}+2\pi\mathbb{Z}^2$ \\
\begin{bmatrix} -\mathsf{L}_1+\mathsf{L}_2 & \mathsf{L}_2 \\ \mathsf{L}_2 & \mathsf{L}_2 \end{bmatrix}, & $\mathbf{q}\in\begin{bmatrix}0\\\pi\end{bmatrix}+2\pi\mathbb{Z}^2$ \\
\begin{bmatrix} \mathsf{L}_1-\mathsf{L}_2 & -\mathsf{L}_2 \\ -\mathsf{L}_2 & -\mathsf{L}_2 \end{bmatrix}, & $\mathbf{q}\in\begin{bmatrix}\pi\\\pi\end{bmatrix}+2\pi\mathbb{Z}^2$.
\end{dcases*}
\end{equation*}
Let $Y_1 = \pi\mathbb{Z}\times 2\pi\mathbb{Z}$ and $Y_2 = \pi\mathbb{Z}\times(\pi + 2\pi\mathbb{Z})$.
$f_1(X) = [-\mathsf{L},\mathsf{L}]$, $f_1^{-1}(\{\mathsf{L},-\mathsf{L}\}) = \mathbb{R}\times Y_1$, and $f_1^{-1}(\{\mathsf{L}',-\mathsf{L}'\} \supset \mathbb{R}\times Y_2$.
$\mathsf{D}_q(c_{11}\hat{\mathbf{j}}_1^T)$ is symmetric and positive or negative definite on $\mathbb{R}\times Y_1$.
Positive or negative definiteness of $\mathsf{D}_q(c_{11}\hat{\mathbf{j}}_1^T)$ on $\mathbb{R}\times Y_2$ depends on the values of $\mathsf{L}_1$ and $\mathsf{L}_2$.
Let $\mathbf{x}\in X\setminus\mathcal{G}_{\mathsf{S}_0}^\bullet$.
Since $\mathbf{J}\in C_\mathbf{x}^1$ and $\mathsf{D}\mathbf{J}\in C_\mathbf{x}^{L_p}$, there exist $a_\mathbf{x}\in\overline{1,2}$ and $r_\mathbf{x},L_\mathbf{x}\in(0,\infty)$ satisfying \eqref{eqn:priority_singularity_condition1} to \eqref{eqn:priority_singularity_condition3} by Lemma \ref{lem:Lipschitz_continuity_of_linear_approximation}.
$c_{11}(\mathbf{x}) = c_{21}(\mathbf{x}) = 0$ by Lemma \ref{lem:orthogonalization_of_J}.
Since $\mathbf{J}(\mathbf{x}) \neq \mathbf{0}$, we have $c_{22}(\mathbf{x}) \neq 0$ and $a_\mathbf{x} = 1$.
Indeed, if $a_\mathbf{x} = 1$, then we can choose any $r_\mathbf{x},L_\mathbf{x}\in(0,\infty)$ provided $\|\mathsf{D}\mathbf{j}_1^T(\mathbf{x}') - \mathsf{D}\mathbf{j}_1^T(\mathbf{x})\| \le n^{-3/2}L_\mathbf{x}\|\mathbf{x}' - \mathbf{x}\|$ for all $\mathbf{x}' \in \mathbf{x} + r_\mathbf{x} B_X$; see the proof of Lemma \ref{lem:Lipschitz_continuity_of_linear_approximation}.
Since $\mathbf{j}_1$ is periodic, we can let $r_1 = \min\{r_\mathbf{x}\mid \mathbf{x}\in X\setminus\mathcal{G}_{\mathsf{S}_0}^\bullet\}$ and $L_1 = \max\{L_\mathbf{x}\mid\mathbf{x}\in X\setminus\mathcal{G}_{\mathsf{S}_0}^\bullet\}$.
Let $\alpha \in\overline{1,4}$ and $\mathbf{u}$ be the $\boldsymbol{\pi}_\alpha$-PIK solution of $[\mathsf{S}_0]^\bullet$ with the damping functions given by \eqref{eqn:damping_function}.
Let $\mu_1,\mu_2\in(0,\infty)$ and $\nu\in\mathbb{N}$ if $\alpha\in\overline{1,3}$.
We construct a desired end-effector trajectory $\mathbf{p} = (p_1,p_2)\in C^{1_p}(\mathbb{R},\mathbb{R}^2)$ under the conditions that $\dot{\mathbf{p}}$ is bounded and $\int_{-\infty}^\infty\|\dot{\mathbf{p}}(t)\|dt<\infty$ and select the feedback gain matrix $\mathbf{K} = \mathrm{diag}(k_1,k_2)\in\mathbb{R}^{2\times 2}$ satisfying $k_1,k_2\in(0,\infty)$.
Since $\mathbf{f}$ is bounded, there exists $L_2\in(0,\infty)$ such that $\|\dot{\mathbf{p}}(t) + \mathbf{K}(\mathbf{p}(t) - \mathbf{f}(\mathbf{x}))\| \le \|\dot{\mathbf{p}}(t)\| + \|\mathbf{K}\|\left(\|\mathbf{p}(t_0)\| + \int_{-\infty}^\infty\|\dot{\mathbf{p}}(s)\|ds + \|\mathbf{f}(\mathbf{x})\|\right) \le L_2$ for all $\mathbf{x}\in X$.
Let $r_2\in(0,\infty)$ be arbitrary and design the activation function $\boldsymbol{\Psi} = \mathrm{diag}(\psi_1,\psi_2)\in C^\bullet(X,\mathbb{R}^{2\times 2})$ with $\psi_1,\psi_2:X\to[0,1]$ satisfying
\begin{itemize}
	\item $\psi_a(\mathbf{x}) = \psi_a(0,\mathbf{q}+(2\pi,2\pi))>0$ for all $a\in\{1,2\}$ and $\mathbf{x}\in \mathcal{G}_{\mathsf{S}_0}^\bullet$;
	\item $\psi_2(t',\mathbf{q}') \le \|\mathbf{q}'-\mathbf{q}\|$ for every $\mathbf{x}\in\mathbb{R}\times Y_2$ and $(t',\mathbf{q}')\in\mathbf{x}+r_2B_X$;
	\item if $\alpha = 4$, then $\psi_1(t',\mathbf{q}') \le \|\mathbf{q}'-\mathbf{q}\|$ for every $\mathbf{x}\in\mathbb{R}\times Y_1$ and $(t',\mathbf{q}')\in\mathbf{x}+r_2B_X$.
\end{itemize}
Let $r = \min\{r_1,r_2\}$, $L = \max\{L_1,L_2\}$, and $\mathbf{r} = \boldsymbol{\Psi}(\dot{\mathbf{p}}+\mathbf{K}(\mathbf{p}-\mathbf{f}))$.
Then, we see that the kinematic system $\mathsf{S} = (2,(1,1),2,\mathsf{D}\mathbf{f},\mathbf{I}_2,\mathbf{r})\in[\mathsf{S}_0]^\bullet$ satisfies the assumptions of Corollary \ref{cor:existence_of_joint_trajectory}.
Therefore, for each $(t_0,\mathbf{q}_0)\in\mathcal{G}_\mathsf{S}^\bullet$ there exists a classical solution $\mathbf{q}:[t_0,\infty)\to\mathcal{G}_\mathsf{S}^\bullet$ of \eqref{eqn:differential_equation} satisfying $\mathbf{q}(t_0) = \mathbf{q}_0$.
If $\dot{\mathbf{p}}\in C^L$, then the solution is unique.
Now, we are ready to investigate the task convergence of the $\boldsymbol{\pi}_\alpha$-PIK solution of $\mathsf{S}$.

Fix $(t_0,\mathbf{q}_0)\in\mathcal{G}_{\mathsf{S}_0}^\bullet$ and let $\mathbf{q}:[t_0,\infty)\to\mathbb{R}^2$ be a classical solution of \eqref{eqn:differential_equation} satisfying $\mathbf{q}(t_0) = \mathbf{q}_0$ and $(t,\mathbf{q}(t))\in\mathcal{G}_{\mathsf{S}_0}^\bullet$ for all $t\in[t_0,\infty)$.
Define $I_1 = \dot{\phi}_1^{-1}((0,\infty))$ and $I_2 = [t_0,\infty)\setminus I_1$.
Since $\dot{\phi}_1$ is measurable in $[t_0,\infty)$ and $(0,\infty)$ is open, $I_1$ and $I_2$ are measurable.
Then, $\int_{I_1}\dot{\phi}_1(t)dt \le \int_{t_0}^\infty|\gamma_1(t)|dt <\infty$ and $\int_{I_2}\dot{\phi}_1(t)dt = \int_{t_0}^\infty\dot{\phi}_1(t)dt - \int_{I_1}\dot{\phi}_1(t)dt\le \int_{t_0}^\infty|\gamma_1(t)|dt - \int_{I_1}\dot{\phi}_1(t)dt<\infty$.
Thus, $\int_{t_0}^\infty|\dot{\phi}_1(t)|dt = \int_{I_1}\dot{\phi}_1(t)dt - \int_{I_2}\dot{\phi}_1(t)dt<\infty$.
Define $I_3 = \{t\in[t_0,\infty)\mid e_1(\mathbf{x}(t))\neq 0\}$ and $I_4 = [t_0,\infty)\setminus I_3$.
Since $|\dot{\phi}_1(t)| = |\phi_1^+(t)e_1(\mathbf{x}(t))\dot{e}_1(\mathbf{x}(t))| = |\dot{e}_1(\mathbf{x}(t))|$ for all $t\in I_3$ and $\int_{I_4}|\dot{e}_1(\mathbf{x}(t))|dt = 0$, we have $\int_{t_0}^\infty|\dot{e}_1(\mathbf{x}(t))|dt = \int_{I_3}|\dot{\phi}_1(t)|dt < \infty$.
Let $t_0\le t_1<t_2<\cdots$ be a divergent sequence and $\epsilon>0$.
Since $\int_{t_0}^\infty|\dot{e}_1(\mathbf{x}(t))|dt<\infty$, there exists $N$ such that $|e_1(\mathbf{x}(t_i)) - e_1(\mathbf{x}(t_j))|\le\int_{t_N}^\infty|\dot{e}_1(\mathbf{x}(t))|dt<\epsilon$ for all $i,j\ge N$.
Thus, $\{e_1(\mathbf{x}(t_i))\}$ is Cauchy and $e_1(\mathbf{x}(t_i))\to e_1^*\in\mathbb{R}$ as $i\to\infty$.
Since it holds for every divergent sequence $\{t_i\}$, $\lim_{t\to\infty}e_1(\mathbf{x}(t)) = e_1^*$.
Since $\int_{t_0}^\infty|\dot{p}_a(t)|dt<\infty$, we also have $\lim_{t\to\infty}p_a(t) = p_a^*\in\mathbb{R}$.
It follows that $\lim_{t\to\infty}f_1(\mathbf{x}(t)) = f_1^*\in [-\mathsf{L},\mathsf{L}]$.
Since the kinematic system $\mathsf{S}$ satisfies all the assumptions of Corollary \ref{cor:task_convergence}, we have $\int_{t_0}^\infty(\psi_1c_{11}l_{11}e_1)^2(\mathbf{x}(t))dt<\infty$.

\textit{1)} 
Assume $f_1^*\in\{\mathsf{L},-\mathsf{L}\}$. 
Since $f_1^{-1}(\{\mathsf{L},-\mathsf{L}\}) = \mathbb{R}\times Y_1$, $\lim_{t\to\infty}\mathbf{q}(t) = \mathbf{q}_\infty\in Y_1$ and $\lim_{t\to\infty}f_2(\mathbf{x}(t)) = 0$.

\textit{2)}
Assume $f_1^*\in(-\mathsf{L},\mathsf{L})\setminus\{\mathsf{L}',-\mathsf{L}'\}$.
Since $c_{11}$ and $\psi_1$ are periodic, there exist $T\in[t_0,\infty)$ and $\sigma>0$ satisfying $(\psi_1c_{11})(\mathbf{x}(t))\ge \sigma$ for all $t\in[T,\infty)$.
Let $l_{ab}$ be the $(a,b)$-th entry of $\mathbf{L}$.
By Corollary \ref{cor:task_convergence}, we have $\int_{t_0}^\infty|e_1(\mathbf{x}(t))|dt < \infty$, $\lim_{t\to\infty}e_1(\mathbf{x}(t)) = 0$, and $\int_{t_0}^\infty(\psi_2c_{22}l_{22}e_2)^2(\mathbf{x}(t))dt<\infty$.
By the same manner, $\lim_{t\to\infty}f_2(\mathbf{x}(t)) = f_2^*\in[-\mathsf{L},\mathsf{L}]$ and $\lim_{t\to\infty}\mathbf{q}(t) = \mathbf{q}_\infty\in\mathbf{f}^{-1}(f_1^*,f_2^*)$.
If $\mathsf{L}'^2<(f_1^*)^2 + (f_2^*)^2<\mathsf{L}^2$, then there exist $T\in[t_0,\infty)$ and $\sigma>0$ satisfying $(\psi_2c_{22})(\mathbf{x}(t))\ge\sigma$ for all $t\in[T,\infty)$.
By Corollary \ref{cor:task_convergence}, $\int_{t_0}^\infty|e_2(\mathbf{x}(t))|dt<\infty$ and $\lim_{t\to\infty}e_2(\mathbf{x}(t)) = 0$.
If $(f_1^*)^2+(f_2^*)^2 \in \{\mathsf{L}'^2,\mathsf{L}^2\}$, then we can only guarantee $\lim_{t\to\infty}\eta_2(t) = 0$.

\textit{3)} 
Assume $f_1^*\in\{\mathsf{L}',-\mathsf{L}'\}$.
If there exist $r>0$ and $T\in[t_0,\infty)$ such that $|f_2(\mathbf{x}(t))|\ge r$ for all $t\in[T,\infty)$, then we have the same results of the case 2).
Assume that there exists $t_0\le t_1<t_2<\cdots$ such that $t_i\to\infty$ and $f_2(\mathbf{x}(t_i))\to0$ as $i\to\infty$.
We prove $\lim_{t\to\infty}f_2(\mathbf{x}(t)) = 0$ by contradiction.
Suppose that there exists $r_0>0$ such that for every $T\in[t_0,\infty)$ there exists $t\in[T,\infty)$ satisfying $f_2(\mathbf{x}(t)) \ge r_0$; the case $f_2(\mathbf{x}(t)) \le -r_0$ can be proven similarly.
Let $r\in(0,\min\{r_0,\mathsf{L}/2\}]$ be arbitrary.
There exists $t_0\le t_1'<t_2'<\cdots$ satisfying $t_i'\to\infty$ as $i\to\infty$ and $f_2(\mathbf{x}(t_i')) = r$ for all $i\in\mathbb{N}$.
Without loss of generality, $t_i < t_i' < t_{i+1}$ and $f_2(\mathbf{x}(t_i)) < r/2$ for all $i\in\mathbb{N}$.
Since $\mathbf{u}$ is bounded, there exists $\delta_0\in(0,\infty)$ such that $|f_2(\mathbf{x}(t+\delta_0)) - f_2(\mathbf{x}(t))|<r/2$ for all $t\in[t_0,\infty)$.
Observe that $\int_{t_0}^\infty\eta_1^2(t)dt<\infty$ implies $\lim_{t\to\infty}\int_t^{t+\delta}\eta_1(s)ds = 0$ for all $\delta\in(0,\infty)$.
Let $\delta\in(0,\delta_0]$ be arbitrary.
There exist $t_i^-\in[t_i,t_i']$ and $t_i^+\in[t_i',t_{i+1}]$ satisfying $f_2(\mathbf{x}([t_i^-,t_i^-+\delta])),f_2(\mathbf{x}([t_i^+,t_i^++\delta]))\subset[r/2,r]$, $f_2(\mathbf{x}(t_i^-+\delta)) - f_2(\mathbf{x}(t_i^-)) > 0$, and $f_2(\mathbf{x}(t_i^++\delta)) - f_2(\mathbf{x}(t_i^+)) < 0$.
Denote $\alpha = k_2\psi_2c_{22}^2l_{22}$ and $\beta = \alpha(p_2 - p_2^*) + \psi_1(c_{21}c_{11}l_{11} + c_{22}^2l_{21})\dot{p}_1 + \psi_2c_{22}^2l_{22}\dot{p}_2 + k_1(c_{21} + c_{22}m_{21})\eta_1$ where $m_{21}$ is as in (A6).
Then, $df_2/dt = \mathsf{D}_qf_2\mathbf{u} = \alpha (p_2^* - f_2) + \beta$ and $\lim_{t\to\infty}\int_t^{t+\delta}\beta(\mathbf{x}(s))ds = 0$.
Since $f_1(\mathbf{x}(t)) \to f_1^*\in\{\mathsf{L}',-\mathsf{L}'\}$ as $t\to\infty$, there exists $N\in\mathbb{N}$ such that $0<\alpha_1 \le \alpha(\mathbf{x}(t)) \le \alpha_2<\infty$ for all $t\in[t_i^-,t_i^-+\delta]\cup[t_i^+,t_i^++\delta]$ and $i\ge N$.
It follows that $\lim_{i\to\infty}\int_{t_i^-}^{t_i^-+\delta}(p_2^* - f_2(\mathbf{x}(t)))dt \ge 0$ and $\lim_{i\to\infty}\int_{t_i^+}^{t_i^++\delta}(p_2^* - f_2(\mathbf{x}(t)))dt \le 0$.
So, we find a contradiction $r/2\le p_2^*\le r$ that $r\in(0,\min\{r_0,\mathsf{L}/2\}]$ is arbitrary.
Therefore, $\lim_{t\to\infty}f_2(\mathbf{x}(t)) = 0$ and if $\mathsf{L}_1\neq\mathsf{L}_2$, then $\lim_{t\to\infty}\mathbf{q}(t) = \mathbf{q}_\infty\in Y_2$.

In the analysis of the task convergence, we observed that every task trajectory $\mathbf{p}(t)$ and joint trajectory $\mathbf{q}(t)$ satisfying aforementioned conditions converge to points $\mathbf{p}_\infty\in\mathbb{R}^2$ and $\mathbf{q}_\infty\in\mathbb{R}^2$, respectively.
However, the convergence of the task error $\mathbf{e}(\mathbf{x}(t))$ to zero is guaranteed only when $\mathbf{q}_\infty = \mathbf{f}^{-1}(\mathbf{p}_\infty)\in \mathcal{H}_{\mathsf{S}_0}$.
Since the kinematic system $\mathsf{S}_1 = (2,(1,1),2,\mathsf{D}\mathbf{f},\mathbf{I}_2,\boldsymbol{\Psi}\mathbf{K}(\mathbf{p}_\infty - \mathbf{f}))$ satisfies all the assumptions of Corollary \ref{cor:stability}, we see that $\mathbf{q}_\infty\in\mathcal{H}_{\mathsf{S}_0}$ is an asymptotically stable equilibrium point of \eqref{eqn:autonomous_system}.

\section{Conclusion}
\label{sec:conclusion}

We have presented various theoretical properties of a class of PIK solutions related to nonsmoothness, trajectory existence, task convergence, and stability.
We found a sufficient condition in which for every strongly proper objective function there exists a smooth reference such that the PIK solution is nonsmooth.
For nonsmooth PIK solutions, we constructed an alternative existence and uniqueness theorem of a joint trajectory by using structural information of PIK solutions.
We found a few task convergence properties of PIK solutions when the tasks are designed to follow some desired task trajectories.
We analyzed stability of the differential equation whose right hand side is a PIK solution when the tasks are designed to reach some desired task positions.
We applied our findings to a two-link manipulator in order to show how a PIK solution can be designed to guarantee trajectory existence, task convergence, and stability in the existence of nonsmoothness.

\ifCLASSOPTIONcaptionsoff
  \newpage
\fi


\begin{IEEEbiography}[{\includegraphics[width=1in,height=1.25in,clip,keepaspectratio]{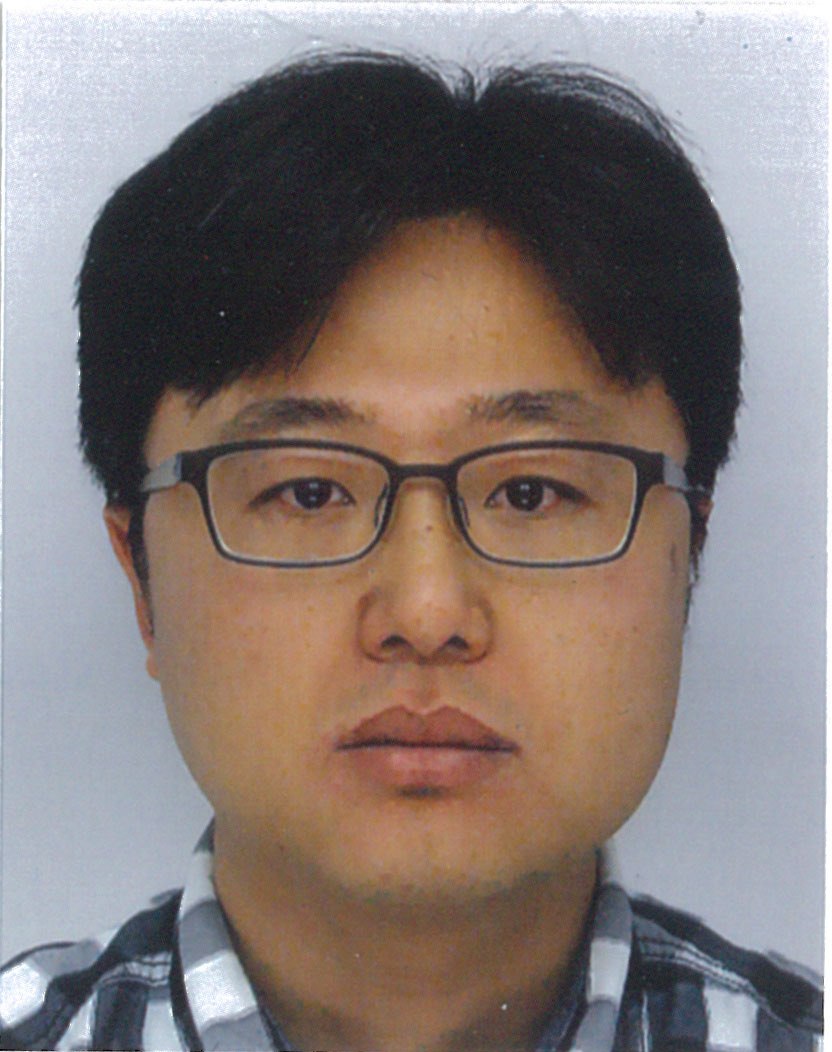}}]{Sang-ik An}
	received the B.S. in both mechanical and electronic engineerings from Korea Aerospace University, Korea in 2008 and the M.S. in mechanical engineering from Korea Advanced Institute of Science and Technology, Korea in 2010. He is currently working towards the Ph.D. degree in electrical and computer engineering at Technical University of Munich, Germany.
\end{IEEEbiography}
\begin{IEEEbiography}[{\includegraphics[width=1in,height=1.25in,clip,keepaspectratio]{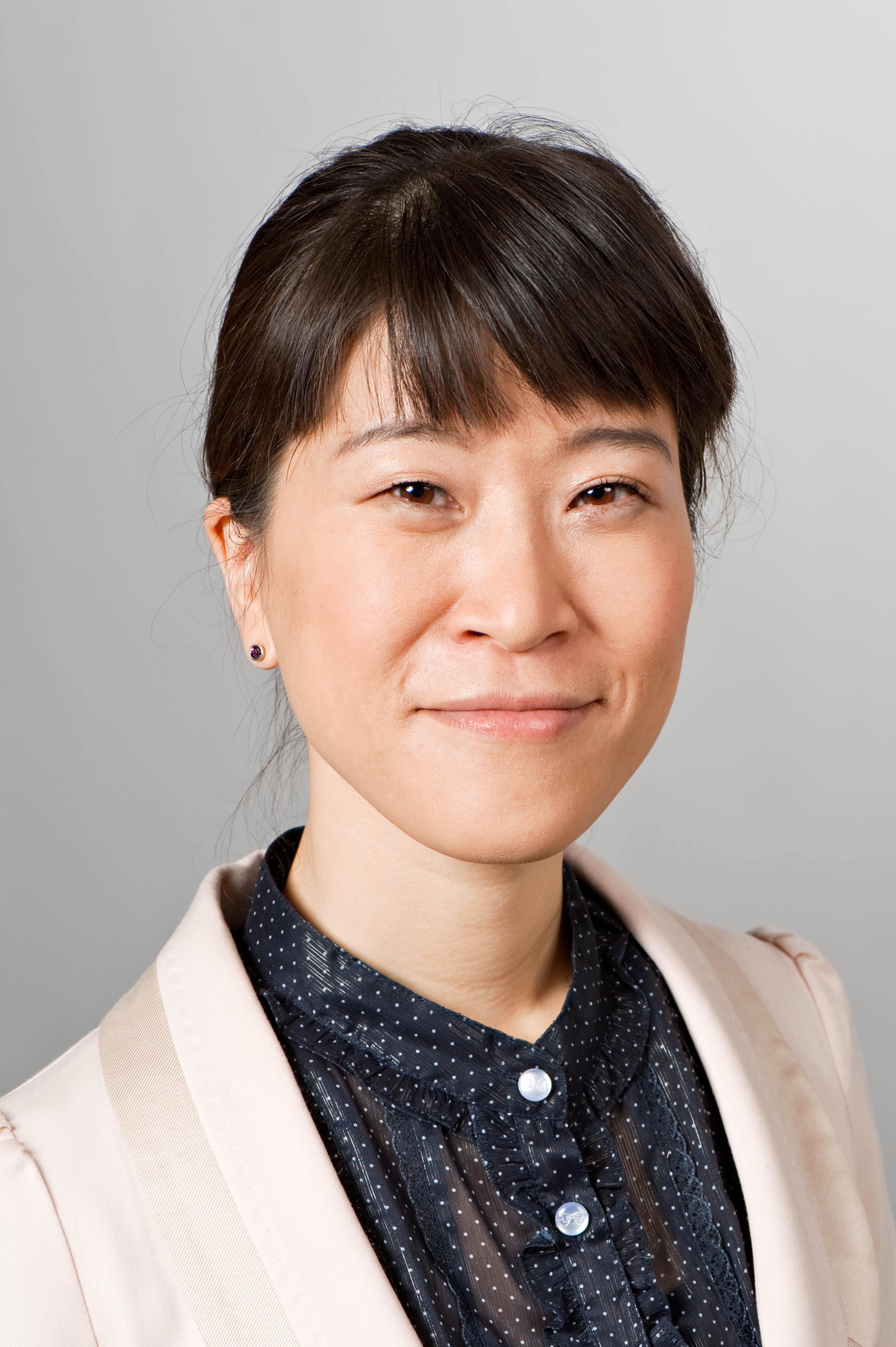}}]{Dongheui Lee}
	is Associate Professor of Human-centered Assistive Robotics at the TUM Department of Electrical and Computer Engineering. She is also director of a Human-centered assistive robotics group at the German Aerospace Center (DLR). Her research interests include human motion understanding, human robot interaction, machine learning in robotics, and assistive robotics.
	
	Prior to her appointment as Associate Professor, she was an Assistant Professor at TUM (2009-2017), Project Assistant Professor at the University of Tokyo (2007-2009), and a research scientist at the Korea Institute of Science and Technology (KIST) (2001-2004). After completing her B.S. (2001) and M.S. (2003) degrees in mechanical engineering at Kyung Hee University, Korea, she went on to obtain a PhD degree from the department of Mechano-Informatics, University of Tokyo, Japan in 2007. She was awarded a Carl von Linde Fellowship at the TUM Institute for Advanced Study (2011) and a Helmholtz professorship prize (2015). She is coordinator of both the euRobotics Topic Group on physical Human Robot Interaction and of the TUM Center of Competence Robotics, Autonomy and Interaction.
\end{IEEEbiography}
\vfill

\end{document}